\documentclass[11pt]{article}
\usepackage[english]{babel}
\usepackage[a4paper,
    top=2.5cm,
    bottom=3cm,
    left=3.1cm,
    right=2.9cm,
    marginparwidth=1.75cm]{geometry}
\usepackage[T1]{fontenc}

\usepackage{amsmath, amssymb, amsthm}
\usepackage{graphicx}
\graphicspath{{./}{figures/}}

\usepackage{subcaption}
\usepackage{cite}
\usepackage{hyperref}
\hypersetup{
    pdftitle={Asymptotic analysis of a Fredholm determinant occurring %
        in the description of the dynamical correlation functions %
        of the Lieb--Liniger Bose gas},
    pdfauthor={Frank G\"{o}hmann, Karol K. Kozlowski, Mikhail D. Minin},
}

\newtheorem{theorem}{Theorem}
\newtheorem*{conjecture}{Conjecture}
\newtheorem{proposition}{Proposition}
\newtheorem{RHp}{Riemann--Hilbert Problem}
\newtheorem*{assumptions}{Assumptions}

\theoremstyle{definition}
\newtheorem*{remark}{Remark}
\newtheorem*{remarks}{Remarks}
\newtheorem{lemma}{Lemma}

\usepackage{enumitem}

\numberwithin{equation}{section}

\usepackage{physics}

\DeclareMathOperator*{\res}{res}
\DeclareMathOperator*{\sgn}{sgn}

\DeclareMathOperator{\Real}{Re}
\DeclareMathOperator{\Imag}{Im}
\DeclareMathOperator{\Li}{Li}
\DeclareMathOperator{\Ext}{Ext}

\usepackage{dsfont}
\newcommand{\indicator}[1]{\mathds{1}_{#1}}

\def\id{\mathrm{id}}
\def\rmi{\mathrm{i}}
\def\rme{\mathrm{e}}
\def\Ocal{\mathrm{O}}
\def\ocal{\mathrm{o}}

\def\rmA{\mathrm{A}}
\def\rmC{\mathrm{C}}
\def\rmD{\mathrm{D}}
\def\rmG{\mathrm{G}}
\def\rmI{\mathrm{I}}
\def\rmL{\mathrm{L}}
\def\rmM{\mathrm{M}}
\def\rmP{\mathrm{P}}
\def\rmR{\mathrm{R}}
\def\rmS{\mathrm{S}}
\def\rmV{\mathrm{V}}

\def\Jcal{\mathcal{J}}
\def\Acal{\mathcal{A}}
\def\Ccal{\mathcal{C}}
\def\Lcal{\mathcal{L}}
\def\Pcal{\mathcal{P}}
\def\Rcal{\mathcal{R}}
\def\Scal{\mathcal{S}}
\def\Ucal{\mathcal{U}}
\def\Vcal{\mathcal{V}}

\allowdisplaybreaks

\title{Asymptotic analysis of a Fredholm determinant occurring\\
    in the description of the dynamical correlation functions\\
    of the Lieb--Liniger Bose gas}

\date{}

\newcommand{\inputfig}[1]{\includegraphics{#1.pdf}}

\begin{document}

\maketitle
\vspace{-50pt}

\begin{center}
    Frank G\"{o}hmann\textsuperscript{1$\star$},
    Karol K. Kozlowski\textsuperscript{2$\dagger$},
    Mikhail D. Minin\textsuperscript{1$\ddagger$}
\end{center}

\begin{center}
    1 Fakult\"{a}t für Mathematik und Naturwissenschaften,
    Bergische Universit\"{a}t Wuppertal, 42097 Wuppertal,
    Germany
    \\
    2 Univ Lyon, ENS de Lyon, Univ Claude Bernard, CNRS,
    Laboratoire de Physique, F-69342 Lyon, France
    \\[\baselineskip]
    $\star$
    \href{mailto:goehmann@uni-wuppertal.de}{%
        \small goehmann@uni-wuppertal.de}%
    \,,\quad
    $\dagger$
    \href{mailto:karol.kozlowski@ens-lyon.fr}{%
        \small karol.kozlowski@ens-lyon.fr}%
    \,,\\
    $\ddagger$
    \href{mailto:minin@uni-wuppertal.de}{\small minin@uni-wuppertal.de}
\end{center}

\begin{abstract}
    We perform a large-$x$ asymptotic analysis of the Fredholm
    determinant of an integrable integral operator with generalized
    sine kernel, where $x$ controls the strength of the oscillations
    along the integration contour of the operator and will play the
    role of the distance variable in applications to the correlation
    functions of integrable quantum systems. Our generalized sine kernel
    involves a number of functional parameters that will allow us to
    adapt it to the analysis of the dynamical correlation functions
    of the Lieb--Liniger model at finite temperature and for all
    positive values of the coupling constant. It will also allow
    us to consider a class of equilibrium correlation functions that
    are governed by generalized Gibbs ensembles. Our work is based
    on the analysis of a matrix Riemann--Hilbert problem that is canonically
    connected with our integrable integral operator.
\end{abstract}

\tableofcontents
\clearpage

\section{Introduction}
Dynamical two-point correlation functions of local operators in
many-body quantum systems describe how local perturbations spread
out in space and time. These functions or their Fourier transforms
are what is typically measured in physics laboratories. Correlation
functions are defined as ensemble averages of quantum mechanical
expectation values of products of local operators that act on a Hilbert
space. Such entities are hard to control or to evaluate. Often a first
step in attempts to study their dependence on parameters, such as the
spatial or temporal separation of the two points involved, is to
expand them into form-factor series. In the case of integrable
many-body systems and quantum field theories form-factor series
may take rather explicit forms, typically as series of multiple
integrals with integrands of rapidly oscillating type. In our
understanding the form-factor series of integrable many-body systems
and quantum field theories define a class of special functions that
ought to be studied. For us a particularly interesting aspect is
the study of the asymptotic behaviour of the correlation functions,
since this is probably the only way to make them, at least to some
extent, explicit. Moreover, the asymptotic behaviour may be argued
to be `generic' or `universal', i.e., to a certain extent independent
of the precise form of the short-range interactions that define the
Hamiltonian of the system under consideration.

An important sub-class of the integrable many-body systems and
quantum field theories are those which are spectrally equivalent
to systems of free Fermions. It includes, in particular, the impenetrable
Bose gas \cite{Lenard64} and the Heisenberg XY model \cite{LSM61}.
For such models some of their two-point correlation functions can
be described by Fredholm determinants of integrable integral
operators \cite{IIKS-90,Sakhnovich-68,Deift-99}. The latter are
connected in a canonical way to a certain matrix Riemann--Hilbert
problem, which makes them accessible to a systematic asymptotic
analysis by a `nonlinear steepest descent method' \cite{DZ-93}.

The kernels of the integrable operators that are relevant for the
various correlation functions of the impenetrable Bose gas and of the
XY model and its variants are all generalizations of the so-called
sine kernel which also appears prominently in random matrix theory
\cite{Gaudin1961}. The original sine kernel applies to the case
of time-independent quantum correlation functions at zero temperature.
Generalizations were required in order to incorporate the time and temperature
dependencies. For the impenetrable Bose gas those were constructed in
\cite{IIKS-90} and for the isotropic XY model in \cite{CIKT92,CIKT93}.
The subsequent asymptotic long-time large-distance analysis at finite
temperature was carried out in the seminal paper \cite{IIKV-92} for
the impenetrable Bose gas and also, in parts, for the isotropic XY model
in \cite{IIKS93b,Jie98,GKS20b}.

Integrable models that do not map to non-interacting Fermions are
less well understood. In particular, very little is known about their
dynamical correlation functions in the experimentally most relevant
equilibrium settings, when the systems are in contact with a heat
bath at temperature $T$ or have relaxed to a state described by
a generalized Gibbs ensemble. A general method for the asymptotic
analysis of some of the two-point correlation functions of the
Lieb--Liniger model at finite coupling was devised in
\cite{KMS11a,KoTe11,Kozlowski15b}. It exploits the fact that the
form-factor series in the finite coupling case can be seen as
deformations of Fredholm series. Its most essential input is the
explicit large-$x$ asymptotics of the Fredholm determinant
of an auxiliary integral operator that depends on certain functional
parameters and has the interpretation as a generating function of
the form-factor series.

The auxiliary integral operator is defined by its kernel and its 
integration contour. The kernels required to study the correlation
functions of the Lieb--Liniger model at finite coupling are again
generalized sine kernels that now depend on additional functional
parameters. Their precise form, the number of the functional parameters
involved, and the choice of the integration contour depend on the details
of the respective correlation functions under consideration. The
analysis that facilitates to calculate the asymptotics of the static
ground-state correlation functions was performed in \cite{KKMST-09-RHp},
the finite-temperature static case was treated in \cite{S-10}, while
the asymptotic analysis underlying the study of the dynamical ground-state
(i.e., zero temperature) correlation functions was worked out in \cite{K-11}.

Our work generalizes the three aforementioned cases. It is devoted to
the asymptotic analysis of the Fredholm determinant of an integrable
integral operator with an even more general kernel. The motivation
for this work lies in subsequent applications to the Lieb--Liniger Bose
gas at finite coupling. However, we will be dealing with an independent
clearly defined mathematical problem that is of some interest in itself.
It can be solved with full rigour and will have many more applications,
some of which will be pointed out below.

We shall apply the nonlinear steepest descent method of Deift and
Zhou \cite{DZ-93} to the Fredholm determinant of an integrable integral
operator whose kernel depends on four functional parameters: a phase
function $u$ that allows us to adapt the dispersion relation, e.g.,
to the dressed finite-temperature case, a filling fraction $\vartheta$
used to model different equilibrium states, and two functions $g$ and
$\nu$ that are needed in the formalism of \cite{Kozlowski15b} to
connect the Fredholm determinant with the underlying form-factor series
at finite coupling. All functional parameters will be characterized
by their analytic properties. Certain aspects of our analysis were
already developed in \cite{IS-99}.

Our main results are summarized in three theorems in
Section~\ref{sec:main_results}. In these theorems we describe the
leading large-$x$ asymptotics of the Fredholm determinant. We
provide explicit expressions for the leading exponential decay,
the logarithmic corrections, and the constant term as functionals of
the parameter functions. The form of the constant and of the
sub-leading corrections (that are worked out explicitly in one case)
depends on the number of poles on the real axis of a certain combination
of the functional parameters. Sections~\ref{sec:the-initial-RHp}-%
\ref{sec:n-poles-on-the-real-axis}
are devoted to the derivation of the main results with some of the
more technical details and some auxiliary material deferred to
several appendices. In our summary Section~\ref{sec:discussion}
we point out more applications and connections with existing works.

\section{Statement of the problem and main results} \label{sec:main_results}
\subsection{The Fredholm determinant}
We shall study the large-$x$ asymptotic behaviour of the Fredholm
determinant
\begin{equation} \label{eq:Fred}
    \det_{\mathbb R} (\id + \rmV)
\end{equation}
of an integral operator $\rmV$ acting on $L^2 ({\mathbb R})$ and
depending parametrically on $x \in {\mathbb R}$. The kernel
$V: {\mathbb R}^2 \rightarrow {\mathbb C}$ that defines the
integral operator is of the form
\begin{equation}
    \label{eq:kernel-V-as-scalar-product}
    V (\lambda, \mu)
    = \frac{
        \mathbf{E}_L^\intercal (\lambda) \cdot \mathbf{E}_R (\mu) }{
        \lambda - \mu },
\end{equation}
where $\intercal$ denotes the transposition and where $\mathbf{E}_L,
\mathbf{E}_R: {\mathbb R} \rightarrow {\mathbb C}^2$ depend on
four functional parameters $u, g, \nu$, and $\vartheta$ and on
$x \in {\mathbb R}$. The function $u$ is supposed to depend on 
an additional parameter $\lambda_0 \in {\mathbb R}_+$. We will
sometimes make this explicit by writing $u(\cdot|\lambda_0)$, but
most of the time suppress the dependence on $\lambda_0$ in our
notation.

The four functional parameters $u, g, \nu$ and $\vartheta$ are defined
by their analytic properties in a strip
\begin{equation}
    \Omega = \bigl\{z \in \mathbb{C}\big| \abs{\Imag z} < w\bigr\},
\end{equation}
wherein $u, g$ and $\nu$ are holomorphic, whereas $\vartheta$ is
meromorphic. Further properties of these functions will be
specified below.

Throughout this work our initial functional parameters will
mostly appear in certain combinations. We, first of all, define
a function $e:\Omega \rightarrow {\mathbb C}$ by
\begin{equation}
\label{eq:function-e}
    e (\lambda)
    = \exp\biggl\{
        - \frac{\rmi x}{2} u (\lambda|\lambda_0)
        - \frac{g (\lambda)}{2} \biggr\},
\end{equation}
which is at the heart of our asymptotic analysis, since it contains
the parameter $x$. This function will be complemented by
a function $E: {\mathbb R} \rightarrow {\mathbb C}$,
\begin{equation}
\label{eq:function-E}
    E (\lambda)
    = 
    - e (\lambda) \mathrm{C}_- (\lambda)
    + \frac{ e^{-1} (\lambda) }{
        \exp(- 2 \pi \mathrm{i} \nu (\lambda)) - 1},
\end{equation}
where
\begin{equation}
    \label{eq:Cauchy-transform}
    \mathrm{C} (\lambda)
    = \int\limits_{\mathbb R} \frac{\dd \mu}{2 \pi \mathrm{i}}
    \frac{ e^{-2} (\mu) }{ \mu - \lambda}
\end{equation}
is the Cauchy transform of $e^{-2}$ and $\rmC_-$ its boundary value from
the right hand side of $\mathbb R$ conceived a contour oriented from
$- \infty$ to $+ \infty$.

The functions $\mathbf{E}_L$ and $\mathbf{E}_R$ in the expression
\eqref{eq:kernel-V-as-scalar-product} for the kernel can then be
introduced as
\begin{equation}
    \label{eq:vectors-E}
    \mathbf E_L (\lambda)
    = \sin[ \pi \nu (\lambda) ]
    \begin{pmatrix}
        - e (\lambda) \\ E (\lambda)
    \end{pmatrix},
    \qquad
    \mathbf E_R (\lambda)
    = \frac{
        4 \vartheta (\lambda) \sin[ \pi \nu (\lambda) ] }{
        2 \pi \mathrm{i} }
    \begin{pmatrix}
        E (\lambda) \\ e (\lambda)
    \end{pmatrix}.
\end{equation}
With this the kernel \eqref{eq:kernel-V-as-scalar-product} is formally
defined. The assumptions on the functional pa\-ra\-me\-ters below will
guarantee that the integral defining the Cauchy transform in 
\eqref{eq:Cauchy-transform} exists and that the series defining
the Fredholm determinant \eqref{eq:Fred} is absolutely convergent (see
Appendix~\ref{app:existence_of_Fred}).

\subsection{Main notations and assumptions}
\label{sec:assumptions}
In the asymptotic analysis carried out below, and consequentially in our
main results as well, the functions $\nu$ and $\vartheta$ will often occur
in specific combinations defining the functions
\begin{equation}
\label{eq:definition-of-Lcal-ell-and-Lcal-r}
    \Lcal_\ell (\lambda)
    = - \frac{1}{ 2 \pi \rmi }
    \ln\big[
        1 + \vartheta (\lambda)
        \big( \rme^{2 \pi \rmi \nu (\lambda)} - 1 \big)
    \big],
    \quad
    \Lcal_r (\lambda)
    = \frac{1}{ 2 \pi \rmi }
    \ln\big[
        1 + \vartheta (\lambda)
        \big( \rme^{- 2 \pi \rmi \nu (\lambda)} - 1 \big)
    \big],
\end{equation}
where we understand the logarithms as
\begin{equation}
\label{eq:complex-logarithms}
    \ln\big[
        1 + \vartheta (\lambda)
        \big( \rme^{\pm 2 \pi \rmi \nu (\lambda)} - 1 \big)
    \big]
    = \int\limits_{- \infty}^\lambda
    \dd{\ln\big[
        1 + \vartheta (\mu)
        \big( \rme^{\pm 2 \pi \rmi \nu (\mu)} - 1 \big)
    \big]}.
\end{equation}
Here the integral is along a curve $\Ccal_{\lambda_0}$
that agrees with the real axis, except in small vicinities
of the poles of the integrand on the real axis
that are circumvented on small semicircles,
see Figure~\ref{fig:contour-Ccal-lambda-0},
which have to be chosen in
a way compatible with the assumptions \ref{item:assumption-5}
and \ref{as:number6} below.
\begin{figure}[t]
    \centering
    \inputfig{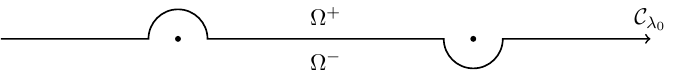}
    \caption{
        Sketch of the integration contour $\Ccal_{\lambda_0}$ in the strip
        $\Omega = \bigl\{z \in \mathbb{C}\big| \abs{\Imag z} < w\bigr\}$.
        The contour goes along the real
        axis and circumvents possible poles of $\Lcal_\ell'$
        and $\Lcal_r'$ on small semicircles
        (here there are two such poles on the real axis),
        while equations \eqref{eq:condition-tau-at-lambda-0} and 
        \eqref{eq:monodromy-condition} are satisfied.
        This implies that the precise form of the contour
        depends on the location of $\lambda_0$.
        Equation~\eqref{eq:saddle_point_not_on_pole} guarantees
        that $\lambda_0 \in \Ccal_{\lambda_0}$. The contour
        $\Ccal_{\lambda_0}$ divides $\Omega$ into two parts: a part
        $\Omega^+$ above the contour and a part $\Omega^-$ below the contour.
        }
    \label{fig:contour-Ccal-lambda-0}
\end{figure}
For later convenience we further
introduce the functions
\begin{subequations}
\label{eq:definition-of-tau-Lcal-and-Lcal-prime}
\begin{align}
    \label{eq:definition-of-tau}
    \tau (\lambda) & = \Lcal_\ell (\lambda) - \Lcal_r (\lambda), \\[1ex]
    \label{eq:definition-of-Lcal}
    \Lcal (\lambda | \lambda_0)
    & = \Lcal_\ell (\lambda)
    \cdot \indicator{\Real(\lambda - \lambda_0) < 0} (\lambda)
    + \Lcal_r (\lambda)
    \cdot \indicator{\Real(\lambda - \lambda_0) > 0} (\lambda),
    \\[1ex]
    \label{eq:definition-of-Lcal-prime}
    \Lcal' (\lambda | \lambda_0)
    & = \Lcal_\ell' (\lambda)
    \cdot \indicator{\Real(\lambda - \lambda_0) < 0} (\lambda)
    + \Lcal_r' (\lambda)
    \cdot \indicator{\Real(\lambda - \lambda_0) > 0} (\lambda),
\end{align}
\end{subequations}
where $\Lcal_\ell (\lambda)$ and $\Lcal_r (\lambda)$
are given by~\eqref{eq:definition-of-Lcal-ell-and-Lcal-r},
and $\indicator{}$ is the indicator function. We stress that
$\Lcal'(\lambda|\lambda_0)$ is not a distributional derivative,
but a point-wise derivative away from $\lambda_0$.
\begin{assumptions}
The functions
$u$, $\vartheta$, $\nu$, and $g$
have the following properties in $\Omega$:
\begin{enumerate}[label=(\arabic*)]
    \item
    The function $u(\cdot|\lambda_0)$ is real valued on $\mathbb{R}$.
    There exist $c, \varepsilon, \eta > 0$ such that
    \begin{equation}
    \label{eq:asymptotic-of-u}
        \Im u (\lambda \mp \rmi y|\lambda_0)
        \geq c y (\pm \lambda)^\eta
    \end{equation}
    for $\mathbb{R} \ni \lambda \to \pm \infty$ and $0 < y < \varepsilon$.
    \label{item:assumption-1}
    \item
    The point $\lambda = \lambda_0$ is the unique solution of the
    equation $u' (\lambda|\lambda_0) = 0$ in $\Omega$. We shall call
    it `the saddle point'. We assume that $u'' (\lambda_0|\lambda_0) < 0$.
    
    There is a function $p$, holomorphic in $\Omega$ such that
    \begin{equation}
        \lim_{\lambda_0 \rightarrow + \infty} u(\lambda|\lambda_0)
        = p(\lambda)
    \end{equation}
    pointwise in $\lambda$. There exist $c, \varepsilon > 0$ such that
    $\Im p(\lambda + \rmi y) \geq c y$ for all $\lambda \in {\mathbb R}$
    and $0 \leq y < \varepsilon$.
    
    We further assume that the saddle point satisfies
    \begin{equation} \label{eq:saddle_point_not_on_pole}
        1 + \vartheta (\lambda_0)
        (\rme^{\pm 2 \pi \rmi \nu (\lambda_0)} - 1)
        \neq 0.
    \end{equation}
    \label{item:assumption-2}
    \item
    \label{item:assumption-on-nu-and-g}
    The functions $\nu$ and $g$ along with their derivatives
    $\nu'$, $g'$ and $g''$ are bounded on $\Omega$.
    \item
    \label{item:assumption-on-vartheta}
    The meromorphic function $\vartheta (\lambda)$ is real, bounded
    and non-negative on $\mathbb{R}$. Furthermore, 
    $\vartheta^\frac{1}{4} \in L^1 ({\mathbb R})$, $\vartheta (\lambda)
    \rightarrow 0$ for $\Re \lambda \rightarrow \pm \infty$, and there
    exists $c > 0$ such that for all $\lambda \in {\mathbb R}$
    \begin{equation} \label{eq:theta_bounds_u}
        \vartheta^\frac{1}{4} (\lambda) \times
        \sup_{\nu \in [-\sigma, \sigma]}
        \bigl\{
            \bigl|u'(\lambda + \nu | \lambda_0)\bigr|
            + \bigl|u'(\lambda + \nu | \lambda_0)^2\bigr|
            + \bigl|u''(\lambda + \nu | \lambda_0)\bigr|\bigr\}
        \leq c,
    \end{equation}
    uniformly in $\lambda_0$ for some small fixed $\sigma > 0$.
    \item
    The function $\tau$ satisfies
    \begin{equation}
    \label{eq:condition-tau-at-lambda-0}
        \bigl|\Real \tau (\lambda)\bigr| < \flatfrac{1}{2}
    \end{equation}
    for all $\lambda \in \Ccal_{\lambda_0}$ with
    $\Real (\lambda - \lambda_0) \ge 0$.
    \label{item:assumption-5}
    \item \label{as:number6}
    All zeroes of the functions
    $1 + \vartheta (\mu) \big( \rme^{\pm 2 \pi \rmi \nu (\mu)} - 1 \big)$
    in $\Omega$ are of first order, and the `mono\-dromy conditions'
    \begin{equation}
    \label{eq:monodromy-condition}
        \int\limits_{\Ccal_{\lambda_0}}
        \dd{\ln\big[
            1 + \vartheta (\mu)
            \big( \rme^{\pm 2 \pi \rmi \nu (\mu)} - 1 \big)
        \big]}
        = 0
    \end{equation}
    are satisfied.
\end{enumerate}
\end{assumptions}
\begin{figure}[t]
    \centering
    \inputfig{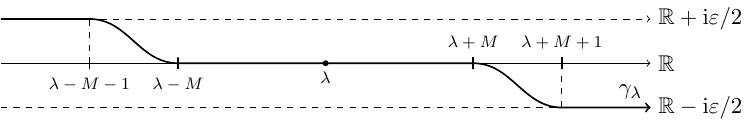}
    \caption{
        We define the Cauchy transform \eqref{eq:Cauchy-transform}
        as $\rmC (\lambda) = \lim_{M \rightarrow + \infty}
        \int_{\gamma_\lambda} \frac{\dd{\mu}}{2 \pi \mathrm{i}}
        \frac{ e^{-2} (\mu) }{ \mu - \lambda}$, where $\gamma_\lambda$
        is the sketched contour. Owing to Assumption~\ref{item:assumption-1}
        the integral over $\gamma_\lambda$ is absolutely convergent
        for given $M$ and $\varepsilon$. Hence, $\rmC (\lambda)$ is a
        smooth function on $\mathbb R$. Since $e$ is holomorphic in
        $\Omega$ the value of the integral over $\gamma_\lambda$ does
        not depend on $M$, and the integral over $\mathbb R$ in
        \eqref{eq:Cauchy-transform} is well defined.
        }
    \label{fig:contour-for-Cauchy-transform}
\end{figure}%
\begin{figure}[t]
    \centering
    \inputfig{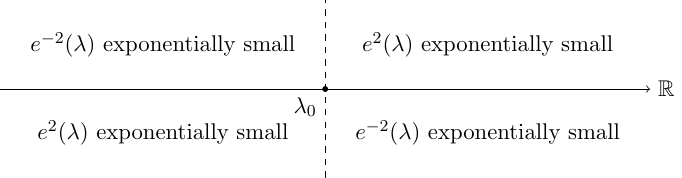}
    \caption{
        Sketch of the behaviour of the functions $e^{\pm 2} (\lambda)$
        in the complex plane as functions of the large parameter $x$.
        By virtue of Assumptions~\ref{item:assumption-1}-\ref{item:assumption-on-nu-and-g}
        the function $e^2$ becomes small for large $x$ northeast and
        southwest of the saddle point $\lambda_0$ and large northwest
        and southeast of it. The behaviour of $e^{-2}$
        is complementary. Since $\Im( u (\lambda) ) = 0$ for 
        $\lambda \in \mathbb{R}$ both, $e^2$ and $e^{-2}$, become
        rapidly oscillating functions of $\lambda \in {\mathbb R}$,
        once $x$ is large.
    }
    \label{fig:behaviour-of-function-e}
\end{figure}
\begin{remarks}
\begin{enumerate}
    \item 
    Assumption~\ref{item:assumption-1} guarantees that the
    Cauchy transform \eqref{eq:Cauchy-transform} is well defined
    and that its minus boundary value is a smooth function on
    $\mathbb R$ (see Figure~\ref{fig:contour-for-Cauchy-transform}).
    \item 
    Due to the properties of the functions $u$, $g$, $\nu$, and
    $\vartheta$ we may deform the contour in the definition
    \eqref{eq:Fred} of the Fredholm determinant and in the
    definition~\eqref{eq:Cauchy-transform} of the Cauchy transform
    from $\mathbb R$ to $\Ccal_{\lambda_0}$.
    \item
    Assumption~\ref{item:assumption-2} implies typical
    saddle-point behaviour of the functions $e^2$ and $e^{- 2}$
    in the vicinity of $\lambda_0$, see
    Figure~\ref{fig:behaviour-of-function-e}. However, since the
    two functions behave in a complementary way, a simple saddle-point
    analysis is not possible.
    \item
    Assumption~\ref{item:assumption-on-vartheta} allows us
    to prove the convergence of the series defining the
    Fredholm determinant \eqref{eq:Fred} by direct estimates
    and to connect our analysis to the `static limit' considered in
    \cite{KKMST-09-RHp,S-10} (see Appendix~\ref{app:existence_of_Fred}).
    \item
    The assumption in \ref{as:number6}, that all zeroes of 
    $1 + \vartheta (\mu) \big( \rme^{\pm 2 \pi \rmi \nu (\mu)} - 1 \big)$
    be of first order, might be relaxed, but this would require to
    consider an additional case when solving the system of linear
    algebraic equations that determine the pole contributions in
    Section~\ref{sec:n-poles-on-the-real-axis}.
\end{enumerate}
\end{remarks}

\subsection{Main results}
As we shall see, those poles of $\Lcal'(\lambda|\lambda_0)$ 
in $\Omega$ that are not poles of $\vartheta$ contribute to
the asymptotic expansion of the Fredholm determinant,
see Figure~\ref{fig:poles-contributing-to-the-AE}.
Their contribution is exponentially small in $x$,
unless they are located on the real axis. We shall denote
the set of those poles
\begin{equation} \label{def:Scal}
    \Scal = \bigl\{
        \lambda \in \Omega \big|
       1/\Lcal' (\lambda|\lambda_0) = 0;
       1/\vartheta(\lambda) \ne 0
    \bigr\}
\end{equation}
and simply refer to them as `the poles'.
Due to the Assumptions \ref{item:assumption-on-nu-and-g}
and~\ref{item:assumption-on-vartheta},
the poles are situated in a compact subset of the strip $\Omega$,
see Figure~\ref{fig:configuration-of-the-poles}.
\begin{figure}[t]
    \centering
    \inputfig{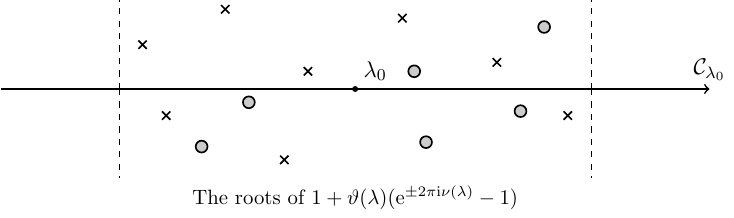}
    \caption{
        Those poles of $\Lcal_\ell' (\lambda)$
        that are roots of
        $1 + \vartheta (\lambda) (\rme^{2 \pi \rmi \nu (\lambda)} - 1)$
        are marked by crosses,
        while those poles of $\Lcal_r' (\lambda)$
        that are roots of
        $1 + \vartheta (\lambda) (\rme^{-2 \pi \rmi \nu (\lambda)} - 1)$
        are marked by circles.
        Here we sketch the case when
        all poles of $\Lcal_\ell' (\lambda)$ and $\Lcal_r' (\lambda)$
        are away from the real axis. Note that the poles are
        finite in number and located in a compact subset of $\Omega$.
        }
    \label{fig:configuration-of-the-poles}
\end{figure}

In order to formulate our results we split the set $\Scal$ of
poles into two subsets, $\Scal^+$ and $\Scal^-$, according to
the definition
\begin{equation}
\label{eq:definition-Scal-pm}
    \Scal^\pm
    = \bigl\{
        \lambda \in \Scal \cap \Omega^\pm
        \big|
        \Real \lambda < \lambda_0
    \bigr\}
    \cup
    \bigl\{
        \lambda \in \Scal \cap \Omega^\mp
        \big|
        \Real \lambda > \lambda_0
    \bigr\}.
\end{equation}
We denote the poles in these sets as $s_j^\pm$ for $j = 1, \dots, n^\pm$,
where $n^\pm$ are the cardinalities of the sets $\Scal^\pm$. Then
\begin{equation}
\label{eq:definition-Scal-pm-set}
    \Scal^\pm = \{ s_j^\pm \}_{j = 1}^{n^\pm}.
\end{equation}
\begin{figure}[t]
    \centering
    \inputfig{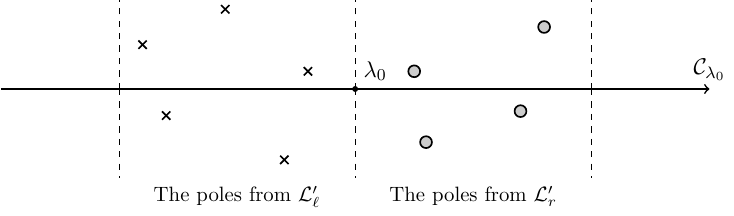}
    \caption{
        `The poles',
        i.e., those points in $\Omega$ that are poles of
        $\Lcal' (\cdot|\lambda_0)$ but not of $\vartheta$.
    }
    \label{fig:poles-contributing-to-the-AE}
\end{figure}

Anticipating the asymptotic analysis in Section~\ref{sec:transform_RH}
(cf.\ \eqref{eq:varkappa}) we introduce the function 
\begin{equation}
\label{eq:definition-of-varkappa-at-lambda-0}
    \varkappa (\lambda_0 | \lambda_0)
    = \exp\Bigg\{
        - \int\limits_{\Ccal_{\lambda_0}}
        \dd{\lambda}
        \Lcal' (\lambda | \lambda_0)
        \ln \left[
            (\lambda_0 - \lambda)
            \cdot \sgn (\Real (\lambda_0 - \lambda))
        \strut\right]
    \Bigg\}
\end{equation}
that will appear in several places in the theorems below.

\begin{theorem}
\label{thm:Fredholm-det-AE-no-poles}
    Let Assumptions \ref{item:assumption-1}--\ref{as:number6}
    be fulfilled, and let $\Scal \cap {\mathbb R} = \emptyset$.
    Then the Fredholm determinant
    of the integrable integral operator $\rmV$
    with kernel~\eqref{eq:kernel-V-as-scalar-product}
    has the large-$x$ asymptotic expansion
    \begin{multline}
    \label{eq:thm:Fredholm-det-AE-no-poles}
        \det_{\mathbb R} (\id + \rmV)
        = 
        \Acal (\lambda_0)
        \ x^{ - \flatfrac{\tau^2 (\lambda_0)}{2}}
        \\
        \times
        \exp\Bigg\{
            - \int\limits_{\Ccal_{\lambda_0}}
            \dd{\lambda} \Lcal (\lambda | \lambda_0)
            \left( \rmi x u' (\lambda) + g' (\lambda) \right)
        \Bigg\}
        \cdot
        \Big(
            1
            + \Ocal \Big( \frac{ (\ln x)^2 }{x} \Big)
        \Big),
    \end{multline}
    where $\tau$ was defined
    in~\eqref{eq:definition-of-tau},
    and where the constant $\Acal (\lambda_0)$ is
    \begin{multline}
    \label{eq:thm:integration-constant}
        \Acal (\lambda_0)
        =
        \frac{
            G (\tau (\lambda_0) + 1) }{
            (2 \pi)^{ \flatfrac{\tau (\lambda_0)}{2}} }
        \left( \rmi u'' (\lambda_0) \right)^{
            - \flatfrac{\tau^2 (\lambda_0)}{2} }
        \left(
            \varkappa (\lambda_0 | \lambda_0)
        \strut\right)^{\tau (\lambda_0)}
        \\
        \times
        \exp\Bigg\{
            \int\limits_{\Ccal_{\lambda_0}}
            \dd{\lambda}
            \Lcal (\lambda | \lambda_0)
            \cdot
            \partial_\lambda
            \ln \left[
                \vartheta (\lambda) \sin^2 (\pi \nu (\lambda))
            \right]
        \Bigg\}
        \\
        \times
        \exp\Bigg\{
            \frac12
            \int\limits_{\Ccal_{\lambda_0}}
            \dd{\lambda}
            \int\limits_{\Ccal_{\lambda_0}}
            \dd{\mu}
            \frac{
                \Lcal' (\lambda | \lambda_0) \Lcal (\mu | \lambda_0)
                - \Lcal (\lambda | \lambda_0) \Lcal' (\mu | \lambda_0) }{
                \lambda - \mu}
        \Bigg\}.
    \end{multline}
    The function $G$ is the Barnes $G$-function,
    the functions $\Lcal (\cdot | \lambda_0)$
    and $\Lcal' (\cdot | \lambda_0)$
    were defined in~\eqref{eq:definition-of-tau-Lcal-and-Lcal-prime},
    and $\varkappa (\lambda_0 | \lambda_0)$
    is given by~\eqref{eq:definition-of-varkappa-at-lambda-0}.
\end{theorem}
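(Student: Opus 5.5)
The proof follows the Deift--Zhou nonlinear steepest descent scheme applied to the $2\times 2$ Riemann--Hilbert problem canonically attached to the integrable operator $\rmV$.

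\emph{Step 1: set up the Riemann--Hilbert problem.} Since $V$ has the form \eqref{eq:kernel-V-as-scalar-product} with $\mathbf{E}_L^\intercal(\lambda)\cdot\mathbf{E}_R(\lambda)=0$, the resolvent is again integrable. By the Its--Izergin--Korepin--Slavnov theory, this gives a matrix $\chi$ with jump $I + 2\pi\rmi\,\mathbf{E}_R\mathbf{E}_L^\intercal$ on $\mathbb R$, normalised to $I$ at infinity. I will not use the $x$-derivative of $\ln\det$; instead I will use a variation formula in the functional parameters. I introduce a one-parameter deformation, either $\nu\to\gamma\nu$ or $\vartheta\to\gamma\vartheta$ with $\gamma\in[0,1]$. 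Then $\partial_\gamma\ln\det(\id+\rmV_\gamma)$ equals a contour integral of the trace of $\chi^{-1}\partial_\lambda\chi\,\partial_\gamma(\text{jump})$. The value at $\gamma=0$ is $\det=1$, which fixes the integration constant. This parallels what was done in \cite{KKMST-09-RHp,S-10,K-11}.

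\emph{Step 2: transform the jump.} After deforming $\mathbb R$ to $\Ccal_{\lambda_0}$, I factor the jump matrix differently on the two sides of $\lambda_0$:
\begin{itemize}
\item as an upper--lower triangular product for $\Real\lambda<\lambda_0$,
\item as a lower--upper triangular product for $\Real\lambda>\lambda_0$,
\end{itemize}
matched to the sign pattern of $e^{\pm2}$ shown in Figure~\ref{fig:behaviour-of-function-e}.

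The diagonal parts are absorbed by a scalar function $\alpha$ solving a scalar RHP. This $\alpha$ is built from $\exp\{\int \Lcal'(\mu|\lambda_0)\ln(\mu-\lambda)\,\dd\mu\}$. It has a singularity $(\lambda-\lambda_0)^{\tau(\lambda_0)}$ at the saddle point; this is where $\tau$ and $\varkappa(\lambda_0|\lambda_0)$ enter. Assumption~\ref{as:number6} (monodromy) makes $\alpha$ single valued, and Assumption~\ref{item:assumption-5} keeps the local exponent in the admissible range $|\Real\tau|<1/2$.

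Next I open lenses in $\Omega^\pm$. On the lenses the off-diagonal entries carry $e^{\pm2}$ and are exponentially small away from $\lambda_0$. Because $\Scal\cap\mathbb R=\emptyset$, the poles of $\Lcal'$ lie at a positive distance from $\Ccal_{\lambda_0}$. Their residue contributions from crossing them with the lens are therefore $\Ocal(\rme^{-cx})$, which can either be dropped or handled by keeping the lenses between the contour and the poles.

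\emph{Step 3: local parametrix at $\lambda_0$.} Near $\lambda_0$ I build a parametrix from confluent hypergeometric (Whittaker/parabolic cylinder) functions in the local variable $\sqrt{x}\,\zeta(\lambda)$, where $\zeta^2\sim -\rmi u''(\lambda_0)(\lambda-\lambda_0)^2/2$. Its matching to the outer solution gives an error of order $x^{-1/2}$ per term, and the leading corrections cancel by symmetry. The global parametrix is $\alpha^{\sigma_3}$, up to the factor $\vartheta\sin^2(\pi\nu)$ coming from the prefactors in \eqref{eq:vectors-E}. The residual RHP then has jumps $I+\Ocal(x^{-1/2+|\Real\tau|})$ with a logarithmic refinement. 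Standard small-norm theory gives the error $\Ocal((\ln x)^2/x)$ in the determinant. The logarithms come from $x^{\pm\tau}$ factors whose derivatives in the deformation parameter produce $\ln x$.

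\emph{Step 4: integrate the variation formula.} I substitute the asymptotic solution into the $\gamma$-derivative formula and integrate from $\gamma=0$ to $\gamma=1$. The bulk part yields:
\begin{itemize}
\item the exponential $\exp\{-\int\Lcal(\rmi xu'+g')\}$;
\item the $\partial\ln[\vartheta\sin^2\pi\nu]$ term;
\item the double integral with the antisymmetrised $\Lcal'\Lcal$ integrand.
\end{itemize}
The parametrix part yields $\partial_\gamma$ of $\ln[G(1+\tau)]$ together with the factors $x^{-\tau^2/2}$, $(\rmi u'')^{-\tau^2/2}$, $(2\pi)^{-\tau/2}$ and $\varkappa^\tau$, using the Gamma-function identities of the Whittaker connection coefficients and $\partial\ln G(1+z)$ expressed through $\ln\Gamma$.

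I expect the main obstacle to be Step 4. It requires keeping track of all constant terms, including branch choices on $\Ccal_{\lambda_0}$ and the $\sgn$ in $\varkappa$, so that they integrate exactly to the closed form of $\Acal(\lambda_0)$. It also requires checking uniformity of the error bounds along the $\gamma$-path, in particular that $|\Real\tau_\gamma|<1/2$ and the monodromy condition persist for all $\gamma$. If the deformation $\gamma\nu$ fails this, my first fallback is to deform $\vartheta$ instead, or to compare with the static limit $\lambda_0\to+\infty$ where $u\to p$, using the known result of \cite{S-10}.
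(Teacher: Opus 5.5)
Your Steps 2--3 are the paper's chain ($\chi\to\widetilde\chi\to\Xi\to\Upsilon\to\Phi=\rmS\Pi$ with the parabolic-cylinder parametrix). Steps 1 and 4, however, replace the paper's mechanism by a coupling-constant integration from the trivial point $\gamma=0$. That would give the constant without external input, but as it stands it has a genuine gap.

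The hypotheses of Theorem~\ref{thm:Fredholm-det-AE-no-poles} constrain only the given data. Nothing ensures that the interpolating data satisfy Assumptions~\ref{item:assumption-5}, \ref{as:number6} and $\Scal_\gamma\cap\mathbb R=\emptyset$ for every $\gamma\in(0,1)$. Yet Step 4 needs the no-pole expansion, with its error bounds, uniformly on that whole interval.

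For $\vartheta\to\gamma\vartheta$, a zero of $1+\gamma\vartheta(\rme^{2\pi\rmi\nu}-1)$ lies on $\mathbb R$ exactly when $\vartheta(\lambda)(1-\rme^{2\pi\rmi\nu(\lambda)})=1/\gamma\in(1,\infty)$ for some real $\lambda$. The hypotheses at $\gamma=1$ only exclude the value $1$ and fix the net winding. A counterexample: take $\nu$ real with a bump crossing $\frac12$ upwards at $\lambda_1$ and downwards at $\lambda_2$, where $\lambda_1<\lambda_2<\lambda_0$, and take $\vartheta(\lambda_1)\neq\vartheta(\lambda_2)$, both larger than $\frac12$.
\begin{itemize}
\item At $\gamma=1$ there are no real zeros, both windings vanish, and $\Real\tau=0$ on $[\lambda_0,\infty)$, so every assumption holds.
\item For $\gamma$ between $1/(2\vartheta(\lambda_1))$ and $1/(2\vartheta(\lambda_2))$, the winding of $1+\gamma\vartheta(\rme^{2\pi\rmi\nu}-1)$ along $\mathbb R$ is $\pm1$.
\end{itemize}
On that interval no choice of contour or branches keeps you in the regime of Theorem~\ref{thm:Fredholm-det-AE-no-poles}. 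Along $\mathbb R$, a normalised $\alpha$ forces $\tau(\lambda_0)$ to jump by $\pm1$ as $\gamma$ crosses the critical value, which violates Assumption~\ref{item:assumption-5}. If instead you deform the contour around the crossed zero, that zero ends up on the side of the contour where the relevant $e^{\pm2}$ is exponentially large, so its $h^\pm$ is not small. So the no-pole expansion does not describe $\det(\id+\rmV_\gamma)$ there, its nonvanishing (needed for your variation formula) is not guaranteed, and $\int_0^1\partial_\gamma\ln\det\,\dd{\gamma}$ cannot be evaluated from it.

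Deforming $\nu$ instead has the same defect: for real $\nu$, a real zero appears whenever the points where $\gamma\nu\in\frac12+\mathbb Z$ sweep across a point where $\vartheta=\frac12$. The path $\gamma\vartheta$ does work under an extra hypothesis such as $\vartheta(1-\rme^{\pm2\pi\rmi\nu})\notin[1,\infty)$ on $\mathbb R$. Then $1-\gamma\vartheta(1-\rme^{\pm2\pi\rmi\nu})$ never vanishes on $\mathbb R$, and monotonicity of its argument in $\gamma$ propagates $|\Real\tau|<\frac12$. That hypothesis is not implied by Assumption~\ref{as:number6}; the example above violates it.

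The paper never interpolates in the functional parameters. It uses Proposition~\ref{prop:log-der} with $\beta=x$, whose integrand lives on a loop off $\Ccal_{\lambda_0}$. After the transformations only $a_x$ and a residue at $\lambda_0$ computed from $\Pi_1,\Pi_2$ remain. Your $\gamma$-formula would instead need $\chi_\pm$ on the contour itself, inside the lenses and in $\Ucal_{\lambda_0}$, which is considerably heavier even where the path is admissible. Integrating in $x$ gives $C(\lambda_0)+a(x,\lambda_0)-\frac{\tau^2(\lambda_0)}{2}\ln x$.

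The paper then transports the constant in $\lambda_0$, which is the one family its hypotheses are designed for.
\begin{itemize}
\item The $\beta=\lambda_0$ version of the same formula produces $C'(\lambda_0)$. There $a^{(2)}_{\lambda_0}/x$ is of order $\ln x$; its $\ln x$ part matches $\partial_{\lambda_0}(-\frac{\tau^2}{2}\ln x)$ and the rest enters $C'$.
\item Assumption~\ref{item:assumption-5}, imposed on the whole half-line $\Real(\lambda-\lambda_0)\ge0$, keeps $|\Real\tau|<\frac12$ along $[\lambda_0,\infty)$. This makes $\int_{\lambda_0}^{\infty}\tau\tau'\psi(\tau)\,\dd{\lambda}$ a Barnes-$G$ value.
\item Assumptions~\ref{item:assumption-2} and~\ref{item:assumption-on-vartheta}, with Appendix~\ref{app:existence_of_Fred}, identify $\lim_{\lambda_0\to+\infty}\det(\id+\rmV)$ with a generalized sine kernel determinant. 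Its asymptotics, uniform in $q$ and for general $p$, are taken from \cite{KKMST-09-RHp}.
\end{itemize}
So your fallback is the actual proof, but it is more than a comparison at $\lambda_0=+\infty$. You need the $\lambda_0$-derivative asymptotics and the identities \eqref{eq:Barnes-G-function-integral}, \eqref{eq:integral-with-kappa} and \eqref{eq:a_vanishing_formula} to assemble $\Acal(\lambda_0)$. Also, \cite{S-10} alone is not the right static input, since it assumes $p(\lambda)=\lambda$.
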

\begin{remark}
    In Theorem~\ref{thm:Fredholm-det-AE-no-poles} the integration
    contour is $\Ccal_{\lambda_0} = {\mathbb R}$, since there are
    no poles on the real axis. This allows us to simplify the integrand
    in the expression~\eqref{eq:definition-of-varkappa-at-lambda-0},
    using that
    \begin{equation}
        \ln \left[
            (\lambda_0 - \lambda)
            \cdot \sgn (\Real (\lambda_0 - \lambda))
        \strut\right]
        = \ln \abs{\lambda_0 - \lambda}
        \qquad
        \forall \lambda \in \mathbb{R}.
    \end{equation}
\end{remark}

Next, we would like to formulate our results
which include possible contributions of the poles
to the asymptotics of the Fredholm determinant.
The poles in the set $\Scal^+$ enter the asymptotic formulae
through certain coefficients $h_j^+$ , $j = 1, \dots, n^+$,
defined as
\begin{subequations}
\label{eq:definition-of-h-plus-and-h-minus}
\begin{equation}
    h_j^+ =
    - \dfrac{
        e^{- 2} (s_j^+) (1 - \vartheta (s_j^+)) }{
        2 \pi \rmi \alpha^2 (s_j^+) }
    \cdot
    \begin{cases}
        \dfrac{
            \exp\big[ 2 \pi \rmi \Lcal_\ell (s_j^+) \big] }{
            \Lcal_\ell' (s_j^+) },
        & \Real(s_j^+) < \lambda_0,
        \\
        \dfrac{
            \exp\big[- 2 \pi \rmi \Lcal_r (s_j^+)\big] }{
            \Lcal_r' (s_j^+) },
        & \lambda_0 < \Real(s_j^+),
    \end{cases}
\end{equation}
whereas the contributions from the poles in the set $\Scal^-$
enter in terms of the coefficients $h_j^-$ for $j = 1, \dots, n^-$,
\begin{equation}
    h_j^- =
    - \dfrac{
        4 e^{2} (s_j^-)
        \alpha^2 (s_j^-)
        \vartheta (s_j^-)
        \sin^2 [\pi \nu (s_j^-)] }{
        2 \pi \rmi }
    \cdot
    \begin{cases}
        \dfrac{
            \exp\big[ 2 \pi \rmi \Lcal_\ell (s_j^-) \big] }{
            \Lcal_\ell' (s_j^-) },
        & \Real(s_j^-) < \lambda_0,
        \\
        \dfrac{
            \exp\big[ - 2 \pi \rmi \Lcal_r (s_j^-) \big] }{
            \Lcal_r' (s_j^-) },
        & \lambda_0 < \Real(s_j^-).
    \end{cases}
\end{equation}
\end{subequations}
Here the function $\alpha (\lambda)$ is given by
\begin{equation}
\label{eq:definition-of-alpha}
    \alpha (\lambda)
    = \exp\Bigg\{
        \int\limits_{\Ccal_{\lambda_0}}
        \dd{\mu}
        \frac{ \Lcal (\mu | \lambda_0) }{\mu - \lambda}
    \Bigg\}.
\end{equation}
In addition to the functions that we already introduced
we also need the following coefficients:
\begin{equation}
\label{eq:definition-of-b12-and-b21}
    b_{12} (\lambda_0)
    = \frac{
        \rmi \sqrt{\pi} 
        \rme^{\flatfrac{\pi \rmi }{4}}
        2^{\tau (\lambda_0)} }{
        \rme^{\flatfrac{ \pi \rmi \tau (\lambda_0) }{ 2 }}
        n (\lambda_0) \Gamma (\tau (\lambda_0))
    },
    \qquad
    b_{21} (\lambda_0)
    = \frac{ \rmi \tau (\lambda_0) }{2 b_{12} (\lambda_0)},
\end{equation}
where the function $n (\lambda_0)$ is given by
\begin{equation}
\label{eq:definition-of-n}
    n (\lambda_0)
    = - 4
    e^2 (\lambda_0)
    \big[ - \flatfrac{ x u'' (\lambda_0) }{2} \big]^{ - \tau (\lambda_0)}
    \varkappa^2 (\lambda_0 | \lambda_0)
    \vartheta (\lambda_0)
    \sin^2 [\pi \nu (\lambda_0)].
\end{equation}
The function $\varkappa (\lambda_0 | \lambda_0)$ was defined
in~\eqref{eq:definition-of-varkappa-at-lambda-0}.

Now we formulate our second main result, describing
the special case of two poles on the real axis.
All possible configurations of those poles
are illustrated in Figure~\ref{fig:4-configurations-of-the-poles}.
\begin{figure}[t]
    \centering
    \begin{subfigure}[]{0.5\textwidth}
        \centering
        \inputfig{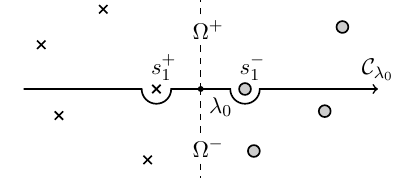}
        \caption{}
    \end{subfigure}%
    \begin{subfigure}[]{0.5\textwidth}
        \centering
        \inputfig{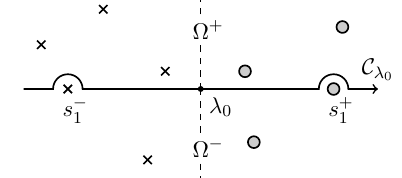}
        \caption{}
    \end{subfigure}%
    \newline%
    \begin{subfigure}[]{0.5\textwidth}
        \centering
        \inputfig{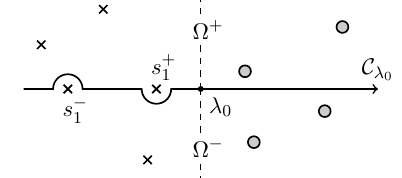}
        \caption{}
        \label{fig:sl-1}
    \end{subfigure}%
    \begin{subfigure}[]{0.5\textwidth}
        \centering
        \inputfig{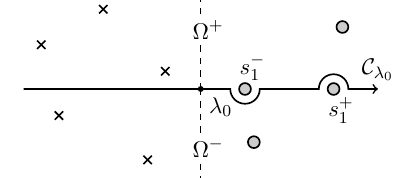}
        \caption{}
        \label{fig:sl-2}
    \end{subfigure}%
    \caption{
        Schematic illustration of the contour $\Ccal_{\lambda_0}$
        for the case of two poles on the real axis
        considered in Theorem~\ref{thm:Fredholm-det-AE-two-poles}.
        In Figures~\ref{fig:sl-1} and~\ref{fig:sl-2},
        the relative position of the poles
        ($s_1^- < s_1^+$ or $s_1^- > s_1^+$) does not matter
        (given that conditions \eqref{eq:condition-tau-at-lambda-0}, 
        \eqref{eq:monodromy-condition} are fulfilled).
    }
    \label{fig:4-configurations-of-the-poles}
\end{figure}

\begin{theorem}
\label{thm:Fredholm-det-AE-two-poles}
    Let Assumptions \ref{item:assumption-1}--\ref{as:number6}
    be satisfied.
    Let the sets $\Scal^\pm \cap {\mathbb R}$ both have cardinality one
    and set $\{s_1^\pm\} = \Scal^\pm \cap {\mathbb R}$,
    see Figure~\ref{fig:4-configurations-of-the-poles},
    and $(s_1^+ - s_1^-)^2 + h_1^+ h_1^- \neq 0$.
    Then the Fredholm determinant of the integrable integral operator
    $\rmV$ with kernel \eqref{eq:kernel-V-as-scalar-product}
    has the large-$x$ asymptotic expansion
    \begin{multline}
        \label{eq:thm:Fredholm-det-AE-two-poles}
        \det_{\mathbb R} (\id + \rmV)
        = 
        \Acal (\lambda_0)
        \ x^{ - \flatfrac{\tau^2 (\lambda_0)}{2}}
        \exp\Bigg\{
            - \int\limits_{\Ccal_{\lambda_0}}
            \dd{\lambda} \Lcal (\lambda | \lambda_0)
            \left( \rmi x u' (\lambda) + g' (\lambda) \right)
        \Bigg\}
        \\
        \times
        \Bigg\{
            1 + \frac{ h_1^+ h_1^- }{ (s_1^+ - s_1^-)^2 }
            + \frac{ 1 + \ocal(1) }{ x^{1/2} }
            \frac{ \sqrt{2} }{ \sqrt{- u'' (\lambda_0)} }
            \Bigg(
                \frac{ h_1^+ b_{21} (\lambda_0) }{ (\lambda_0 - s_1^+)^2 }
                + \frac{ h_1^- b_{12} (\lambda_0) }{ (\lambda_0 - s_1^-)^2 }
            \Bigg)
        \Bigg\},
    \end{multline}
    where the functions $\tau$
    and $\Lcal (\cdot| \lambda_0)$
    were defined by~\eqref{eq:definition-of-tau-Lcal-and-Lcal-prime}
    and the constant $\Acal (\lambda_0)$
    by~\eqref{eq:thm:integration-constant}.
    The coefficients $h_1^\pm$, $b_{12} (\lambda_0)$, and $b_{21} (\lambda_0)$
    were defined in~\eqref{eq:definition-of-h-plus-and-h-minus}
    and~\eqref{eq:definition-of-b12-and-b21}.
\end{theorem}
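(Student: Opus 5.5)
The argument is a Deift--Zhou nonlinear steepest descent analysis of the $2\times2$ Riemann--Hilbert problem attached to the integrable kernel \eqref{eq:kernel-V-as-scalar-product}, run the same way as for Theorem~\ref{thm:Fredholm-det-AE-no-poles}. The only change is that the two real poles $s_1^\pm$ cannot be discarded as exponentially small.

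We start from the IIKS correspondence. The resolvent of $\rmV$ is written through a matrix $\chi$ with jump $I + 2\pi\rmi\,\mathbf{E}_R\mathbf{E}_L^\intercal$ on $\Ccal_{\lambda_0}$. Because $\mathbf E_L^\intercal \mathbf E_R = 0$ (the vectors are orthogonal by construction), the jump has unit determinant. We then differentiate $\ln\det(\id+\rmV)$ with respect to $x$, or along a one-parameter deformation $\vartheta \to \gamma\vartheta$, and express the derivative through the first moment of $\chi$ at infinity.

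Next come the standard transformations.
\begin{enumerate}
\item Undo the Cauchy transform $\rmC$ in $E$ by conjugating with an upper triangular matrix. This puts the jump in the form $e^{\pm2}$ times functions of $\vartheta,\nu$.
\item Factor the jump into lower-upper form left of $\lambda_0$ and upper-lower form right of it. The diagonal parts $1+\vartheta(\rme^{\pm2\pi\rmi\nu}-1)$ are absorbed by a scalar RHP solved by $\alpha(\lambda)$ of \eqref{eq:definition-of-alpha}, using $\Lcal(\cdot|\lambda_0)$. The monodromy conditions \eqref{eq:monodromy-condition} make $\alpha$ single valued, and Assumption~\ref{item:assumption-5} keeps it integrable at $\lambda_0$, where $\alpha\sim(\lambda-\lambda_0)^{\pm\tau(\lambda_0)}$.
\item Open lenses into $\Omega^\pm$ so that the off-diagonal factors contain $e^{\pm2}$ on the sides where they decay, as in Figure~\ref{fig:behaviour-of-function-e}.
\end{enumerate}

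The poles of $\Lcal_\ell',\Lcal_r'$ lie inside the lenses. Off the real axis their residues are $\Ocal(\rme^{-cx})$. The real poles $s_1^\pm$, however, sit on $\Ccal_{\lambda_0}$, and there the factors $e^{\mp2}(s_1^\pm)$ are of modulus $\mathcal O(1)$ (merely oscillating). Hence after opening the lenses the piecewise-analytic matrix acquires genuine simple poles at $s_1^+$ (in one off-diagonal entry) and at $s_1^-$ (in the other). Their residue conditions have coefficients exactly $h_1^+$ and $h_1^-$ of \eqref{eq:definition-of-h-plus-and-h-minus}, as one checks by evaluating the lens factors and $\alpha^{\pm2}$ at the pole.

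We then build the parametrix in two pieces.
\begin{enumerate}
\item Near $\lambda_0$, a confluent hypergeometric (parabolic cylinder) local parametrix in the variable $\sqrt{x}(\lambda-\lambda_0)$ with exponent $\tau(\lambda_0)$. Its expansion at infinity has off-diagonal coefficients $b_{12},b_{21}$ of \eqref{eq:definition-of-b12-and-b21}, with the normalization $n(\lambda_0)$ carrying $\varkappa$, $\vartheta\sin^2\pi\nu$ and $x^{-\tau}$. This is the source of the $x^{-1/2}$ term.
\item Globally, a rational dressing $R(\lambda) = I + \sum A_\pm/(\lambda - s_1^\pm)$ that reproduces the residue conditions. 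The residue conditions form a $2\times2$ linear system whose determinant is proportional to $(s_1^+-s_1^-)^2+h_1^+h_1^-$, which the hypothesis assumes nonzero. The dressing is built on top of the no-pole solution, so the leading factor $\Acal(\lambda_0)x^{-\tau^2/2}\exp\{\dots\}$ is inherited verbatim from Theorem~\ref{thm:Fredholm-det-AE-no-poles}.
\end{enumerate}

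Combining the two parametrices, the jump of the ratio is $I + \Ocal(x^{-1/2})$ on the boundary of the disc around $\lambda_0$ and exponentially close to $I$ elsewhere, so small-norm theory applies. The pole contributions to the determinant then come from the ratio $\det(\text{dressed})/\det(\text{undressed})$. We compute this as a Schlesinger-type identity, or by integrating the logarithmic derivative of the determinant in the deformation parameter. The result should be $1 + h_1^+h_1^-/(s_1^+-s_1^-)^2$, the determinant of the $2\times2$ Cauchy-type matrix $\delta_{jk} + h_j/(s_j - s_k)$ off the diagonal.

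The first correction comes from the leading $x^{-1/2}$ term of the small-norm solution evaluated at $s_1^\pm$. It is $(2/(-u''(\lambda_0)x))^{1/2}\,b_{21}/(\lambda_0-s_1^+)^2$ from the local parametrix's $1/(\lambda-\lambda_0)$ term. The square arises because the correction enters via the derivative of the residue matrix. The same holds with $b_{12}$ at $s_1^-$.

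The main obstacle will be this last bookkeeping: tracking the oscillating (non-decaying) factors $e^{\pm2}(s_1^\pm)$ and $x^{\pm\tau}$ inside $b_{12},b_{21}$ so that only the combination shown survives at order $x^{-1/2}$. That order is why the error is stated only as $\ocal(1)$ relative to it. We will handle it by writing the linear system for $A_\pm$ exactly, perturbing it to first order in the small-norm correction, and expanding the determinant formula to first order.
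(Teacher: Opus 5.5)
Your steepest-descent architecture matches the paper's: the IIKS problem for $\chi$, removal of $\rmC$, the scalar problem for $\alpha$, lens opening, a parabolic-cylinder parametrix at $\lambda_0$, and $\Phi=\rmS\Pi$ with a rational $\rmS$ whose residues solve a $2\times2$ system of determinant $1+h_1^+h_1^-/(s_1^+-s_1^-)^2$ to leading order. The gap is the step from this dressed problem to the determinant.

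You assert that the prefactor $\Acal(\lambda_0)x^{-\tau^2(\lambda_0)/2}\exp\{\dots\}$ is ``inherited verbatim'' from Theorem~\ref{thm:Fredholm-det-AE-no-poles}, and that the pole factor is a ratio $\det(\text{dressed})/\det(\text{undressed})$. But the undressed problem for $\Pi$ is not the Riemann--Hilbert problem of a Fredholm determinant to which Theorem~\ref{thm:Fredholm-det-AE-no-poles} applies. That theorem requires $\Scal\cap\mathbb R=\emptyset$, which fails here, so the ratio has no reference object. Moreover, $\Acal(\lambda_0)$ now involves the contour $\Ccal_{\lambda_0}$ that circumvents $s_1^\pm$.

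Neither of your proposed tools closes this gap:
\begin{itemize}
\item For a non-linear phase $u$, $\partial_x\ln\det$ is not a first moment of $\chi$ at infinity. It is the contour integral of Proposition~\ref{prop:log-der} with weight $d_x=-\rmi u/2$.
\item The deformation $\vartheta\to\gamma\vartheta$ changes $\Lcal_{\ell/r}$ and moves the zeros of $1+\gamma\vartheta(\rme^{\pm2\pi\rmi\nu}-1)$. It can violate Assumptions~\ref{item:assumption-5}--\ref{as:number6} along the path, so it gives no uniform asymptotics.
\end{itemize}
As written, both the factor $1+h_1^+h_1^-/(s_1^+-s_1^-)^2$ (which you say it ``should be'') and the identification of the prefactor are asserted, not derived.

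The paper's mechanism is Proposition~\ref{prop:log-der-Fredholm-for-AE}. There $\partial_x\ln\det$ equals $a_x$, plus two integrals over $\gamma_0$, plus the residues at $s_1^\pm$ of $\operatorname{tr}\{\rmS'\Pi\sigma^z\Pi^{-1}\rmS^{-1}\}d_x$, which Proposition~\ref{prop:contribution-of-poles} evaluates. Inserting the exact solution of the $2\times2$ system, the $x^0$ part is $\partial_x\ln[(s_1^+-s_1^-)^2+h_1^+h_1^-]$, because $\partial_x h_1^\pm=\mp2d_x(s_1^\pm)h_1^\pm$.

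At order $x^{-1/2}$ the residues alone are \emph{not} a total derivative. The $\gamma_0$-integral of $\operatorname{tr}\{\rmS^{-1}\rmS'\Pi_1\sigma^z\}d_x$ has a simple pole at $\lambda_0$ and supplies the $d_x(\lambda_0)$ terms. Only then do the combinations $[d_x(\lambda_0)-d_x(s_1^+)]h_1^+b_{21}$ and $[d_x(\lambda_0)-d_x(s_1^-)]h_1^-b_{12}$ appear. These integrate, since $\partial_x b_{21}\approx 2d_x(\lambda_0)b_{21}$ and $\partial_x b_{12}\approx-2d_x(\lambda_0)b_{12}$. So attributing the $x^{-1/2}$ term solely to the small-norm correction at $s_1^\pm$ is incomplete. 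The oscillatory $\Ocal(1/x)$ terms are disposed of by \eqref{eq:how-to-integrate}.

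Integrating in $x$ leaves an unknown constant $C(\lambda_0)$. The paper fixes it by deforming $\nu\to\tilde\nu$ so that the poles leave the real axis. Then $h_1^\pm$ become exponentially small, Theorem~\ref{thm:Fredholm-det-AE-no-poles} applies, and $C=\tilde C-\ln(s_1^+-s_1^-)^2$. This matching produces both $\Acal(\lambda_0)$ on the deformed contour and the leading $1$ in $1+h_1^+h_1^-/(s_1^+-s_1^-)^2$. Your proposal needs this step, or a genuine substitute for it.
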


Finally, we formulate our third main result that covers
the case of $n$ poles in $\Omega$, possibly located
on the real axis, in the so-called `static limit'
$\lambda_0 \to + \infty$. We introduce two
matrices $\rmA^\pm$ with elements
\begin{equation}
\label{eq:matrices-A-pm}
    A_{jk}^- = \frac{h_k^-}{s_j^+ - s_k^-},
    \quad
    A_{kj}^+ = \frac{h_j^+}{s_k^- - s_j^+},
    \quad
    j = 1, \dots, n^+,
    \quad
    k = 1, \dots, n^-,
\end{equation}
where the coefficients $h_k^\pm$ for $k = 1, \dots, n^\pm$
were defined in~\eqref{eq:definition-of-h-plus-and-h-minus}.
For the $n \times n$ unit matrix we shall employ the
notation $\rmI_n$.
\begin{theorem}
    \label{thm:Fredholm-det-AE-static}
    Let Assumptions \ref{item:assumption-1}--\ref{as:number6}
    be fulfilled.
    Assume that the matrix $\rmI_{n^+} - \rmA_\infty$ with 
    $\rmA_\infty = \lim_{\lambda_0 \rightarrow + \infty} \rmA^- \rmA^+$
    and $\rmA^\pm$ given by~\eqref{eq:matrices-A-pm} is invertible.
    Then, as $\lambda_0 \to + \infty$, the Fredholm determinant
    of the integrable integral operator $\rmV$
    with kernel~\eqref{eq:kernel-V-as-scalar-product}
    has the large-$x$ asymptotic expansion
    \begin{multline}
        \label{eq:thm:Fredholm-det-AE-static}
        \lim_{\lambda_0 \rightarrow + \infty} \det_{\mathbb R} (\id + \rmV)
        =
        \Acal_\infty
        \det\big[ \mathrm{I}_{n^+} - \rmA_\infty \big]
        \\
        \times
        \exp\Bigg\{
            - \int\limits_{\Ccal_{\infty}}
            \dd{\lambda} \Lcal_\ell (\lambda)
            \left( \rmi x p' (\lambda) + g' (\lambda) \right)
        \Bigg\}
        \big( 1 + \Ocal (\rme^{-m x}) \big)
    \end{multline}
    for some $m > 0$.
    The constant $\Acal_\infty$ is given by
    \begin{multline}
        \Acal_\infty
        = \exp\Bigg\{
            \int\limits_{\Ccal_\infty}
            \dd{\lambda}
            \Lcal_\ell (\lambda) \partial_\lambda
            \ln\big( \rme^{-2 \pi \rmi \nu (\lambda)} - 1 \big)
        \Bigg\}
        \\
        \times
        \exp\Bigg\{
            \frac12
            \int\limits_{\Ccal_\infty}
            \dd{\lambda}
            \int\limits_{\Ccal_\infty}
            \dd{\mu}
            \frac{
                \Lcal_\ell' (\lambda) \Lcal_\ell (\mu)
                - \Lcal_\ell (\lambda) \Lcal_\ell' (\mu) }{
                \lambda - \mu}
        \Bigg\}.
    \end{multline}
    The function $\Lcal_\ell$
    was defined by~\eqref{eq:definition-of-Lcal-ell-and-Lcal-r}.
    If $n^+ = 0$ or $n^- = 0$,
    then the determinant on the right-hand side of
    \eqref{eq:thm:Fredholm-det-AE-static} is equal to $1$ by
    definition.
\end{theorem}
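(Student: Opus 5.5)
The plan is to run the Deift--Zhou nonlinear steepest descent on the $2\times 2$ Riemann--Hilbert problem attached to the integrable operator $\rmV$, specialised to the static limit $\lambda_0\to+\infty$. First I would set up the problem. By the IIKS theory, $\id+\rmV$ has resolvent kernel $\mathbf F_L^\intercal(\lambda)\mathbf F_R(\mu)/(\lambda-\mu)$. Here $\mathbf F_{L}=\chi_+\mathbf E_{L}$ and $\mathbf F_R=\chi_+^{-\intercal}\mathbf E_R$, where $\chi$ solves the RHP with jump $\rmI_2+2\pi\rmi\,\mathbf E_L\mathbf E_R^\intercal$ on $\Ccal_{\lambda_0}$. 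The log-derivative of the determinant with respect to an auxiliary deformation parameter is then a trace involving $\chi$. The most convenient choice is to interpolate $\nu\mapsto t\nu$, or equivalently $\vartheta\mapsto t\vartheta$, from $t=0$, where the determinant equals $1$, to $t=1$. Taking $\lambda_0\to+\infty$ first, at fixed $x$, replaces $u$ by $p$, $\Lcal$ by $\Lcal_\ell$ and $\Ccal_{\lambda_0}$ by $\Ccal_\infty$. There is then no saddle point on the contour; this is why no Barnes function or power of $x$ appears. In this limit $\Scal^-$ consists of poles below $\Ccal_\infty$ and $\Scal^+$ of poles above it. I would justify exchanging the limit with the Fredholm series by dominated convergence, using the uniform bound \eqref{eq:theta_bounds_u} and Appendix~\ref{app:existence_of_Fred}.

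Second, I would perform the transformations. Factor the jump into upper- and lower-triangular pieces, following \cite{S-10,KKMST-09-RHp}. Then introduce the scalar function $\alpha$ of \eqref{eq:definition-of-alpha}, which solves the scalar RHP $\alpha_+=\alpha_-(1+\vartheta(\rme^{2\pi\rmi\nu}-1))$ on $\Ccal_\infty$; the monodromy condition \eqref{eq:monodromy-condition} makes it single-valued. Conjugating by $\alpha^{\sigma_3}$ diagonalises the middle factor. Next open lenses: the factor carrying $e^{-2}$ is moved into $\Omega^+$ and the one carrying $e^{2}$ into $\Omega^-$. There they are exponentially small, because $\Imag p(\lambda+\rmi y)\ge cy$ gives decay $\rme^{-cxy}$. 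The lens boundaries cannot be moved past the zeros of $1+\vartheta(\rme^{\pm2\pi\rmi\nu}-1)$, since the triangular factors have poles there with residues giving exactly the $h_j^\pm$. These points are the poles of $\Lcal_\ell'$ that are not poles of $\vartheta$, i.e.\ the set $\Scal$.

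Third, I would treat the pole contributions. Rather than building a global parametrix, I would absorb these poles by a rational dressing of the solution, with simple poles at $s_j^+$ in one column and at $s_k^-$ in the other. Imposing the residue conditions produces a finite linear system. Eliminating the $s_k^-$ unknowns gives an $n^+\times n^+$ system with matrix $\rmI_{n^+}-\rmA^-\rmA^+$, which is solvable by the invertibility hypothesis. After this step the remaining jumps are $\rmI+\Ocal(\rme^{-mx})$ uniformly, so a small-norm argument gives $\chi$ up to $\Ocal(\rme^{-mx})$.

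Finally, I would integrate the $t$-derivative. The $\alpha$-part reproduces the exponential of $-\int\Lcal_\ell(\rmi xp'+g')$ and the constant $\Acal_\infty$; its double-integral term comes from the standard computation $\partial_t\ln$ of the scalar Wiener--Hopf factor, as in \cite{S-10}. The rational dressing contributes $\partial_t\ln\det[\rmI_{n^+}-\rmA]$ by Jacobi's formula, and at $t=0$ that determinant is $1$. The main obstacle is this last identification, showing that the dressing term is exactly a total derivative of $\ln\det[\rmI-\rmA_\infty]$. Poles can also cross the contour, or the system can become singular, along the $t$-path. To handle this I would first try a direct differential identity in the parameters $h_j^\pm$, in place of the $t$-interpolation: compare the solutions with and without the dressing, and write the determinant ratio as a finite-rank Fredholm determinant, which evaluates to $\det[\rmI-\rmA]$.
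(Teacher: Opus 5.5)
You follow the paper's overall architecture: static limit first, conjugation by $\alpha$, lenses, a rational dressing at the points of $\Scal$ whose linear system reduces to $\rmI_{n^+}-\rmA^-\rmA^+$, and a small-norm estimate. But the two steps that carry the content of the theorem are left open.

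\textbf{The pole contribution.} You claim the dressing contributes $\partial_t\ln\det[\rmI_{n^+}-\rmA]$ ``by Jacobi's formula''. Jacobi's formula only rewrites this as $-\tr\{(\rmI_{n^+}-\rmA)^{-1}\partial_t\rmA\}$. What the Riemann--Hilbert analysis actually gives you is a sum of residues at the points of $\Scal$ of a trace built from $\rmS^{-1}\rmS'$, i.e.\ a bilinear expression in the solution vectors $\mathbf{x}_k,\mathbf{y}_j$. Matching the two is the heart of the proof, and the paper needs two nontrivial results for it:
\begin{itemize}
\item Propositions~\ref{prop:residue-form} and~\ref{prop:contribution-of-poles} show that the $d_\beta'$-terms at the double poles cancel, using $\mathbf{y}_j^\intercal\sigma^y\mathbf{y}_j=0$ and the linear system. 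They also show that in the static limit the residue sum reduces, up to $\Ocal(\rme^{-mx})$, to $\rmi\sum_{j,k}\partial_x[A^+_{kj}A^-_{jk}]\,\mathbf{x}_k^\intercal\sigma^y\mathbf{y}_j$.
\item The Cauchy-type identity of Proposition~\ref{prop:Slavnov-formula-for-general-momentum} equates this bilinear form with $\partial_x\ln\det[\rmI_{n^+}-\rmA]$. It is proved in Appendix~\ref{app:combinatorial-identity} by Neumann expansion, symmetrisation over permutations of the poles, and analytic continuation in a scaling parameter.
\end{itemize}
Both steps use that the derivative acts only on the $h_j^\pm$, with the $s_j^\pm$ fixed. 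That holds for $\partial_x$ but not for your $\partial_t$, for three reasons. First, the points $s_j^\pm$ move with $t$ along either of your interpolations. Second, Proposition~\ref{prop:log-der} does not cover a parameter entering through $\vartheta$ or $\nu$ rather than through $d$. Third, the extra terms coming from $\partial_t s_j^\pm$ are never addressed. Your fallback, a finite-rank determinant for the ratio with and without dressing, restates the claim: you do not say which operator the undressed problem belongs to.

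\textbf{The $t$-path.} The hypotheses hold only at $t=1$. Integrating $\partial_t$ with a uniform $\Ocal(\rme^{-mx})$ remainder requires, for every $t\in[0,1]$, simple zeros that never meet $\Ccal_\infty$, and $\det[\rmI_{n^+}-\rmA(t)]\neq 0$. This already fails in the paper's own example. Take $\nu=1/2$ and the Fermi function with $h>0$; then two zeros of $1-2t\vartheta$ are $\pm\sqrt{h+T\ln(2t-1)}$.
\begin{itemize}
\item For $t>t_c=(1+\rme^{-h/T})/2$ they are real, and the monodromy condition forces $\Ccal_\infty$ to pass above one and below the other.
\item At $t=t_c$ they merge into a double zero at the origin and pinch the contour.
\end{itemize}
The path $\nu\mapsto t\nu$ happens to avoid this in the example. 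Nothing in the assumptions guarantees that in general, nor that $\det[\rmI_{n^+}-\rmA(t)]$ stays nonzero along it.

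The paper sidesteps all of this. It integrates $\partial_x$, under which only the $h_j^\pm$ vary. It then fixes the $x$-independent constant by perturbing $\nu$ so that each real pole leaves the axis toward the side of $\Ccal_\infty$ on which it already lies. After that, $\det[\rmI_{n^+}-\rmA_\infty]$ becomes $1$ up to $\Ocal(x^{-\infty})$. The constant is then read off from the pole-free static result, i.e.\ Theorem~\ref{thm:Fredholm-det-AE-no-poles} as $\lambda_0\to+\infty$, which rests on the generalized sine kernel asymptotics of \cite{KKMST-09-RHp} via Appendix~\ref{app:existence_of_Fred}.
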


\begin{remarks}
\begin{enumerate}
    \item
    Theorem~\ref{thm:Fredholm-det-AE-static}
    is a generalization of~\cite[Theorem 2.2]{S-10}
    to the case when the `momentum function' $p$ is
    different from $p (\lambda) = \lambda$.
    The contribution of those poles,
    that are away from the real axis,
    see Figure~\ref{fig:poles-contributing-to-the-AE},
    is exponentially small and can be neglected.

    \item
    An important example to which our above theorems apply is
    the impenetrable Bose gas. Its finite temperature dynamical
    correlation functions were studied in \cite{IIKS-90,IIKV-92}.
    In order to apply our theorems we have to set
    \begin{subequations}
    \begin{align}
        u(\lambda|\lambda_0) & = \lambda - \frac{\lambda^2}{2 \lambda_0}, \\
        g (\lambda) & = 0, \\
        \nu(\lambda) & = \frac12, \\ \label{eq:Fermi_distribution}
        \vartheta (\lambda) & = \frac{1}{1 + \rme^{\frac{\lambda^2 - h}{T}}},
    \end{align}
    \end{subequations}
    where $T > 0$ and $h \in {\mathbb R}$ are the temperature and
    the chemical potential, two parameters that control the
    thermal equilibrium in the grand canonical ensemble. It is
    straightforward to verify that the above functional parameters
    satisfy Assumptions~\ref{item:assumption-1}--\ref{as:number6}
    if the contour $\Ccal_{\lambda_0}$ is chosen appropriately.
    We will consider the example of the impenetrable Bose gas in
    more detail in separate work.

    \item
    When we apply our analysis to the impenetrable Bose gas
    in thermal equilibrium \cite{IIKS-90,IIKV-92} we will be
    dealing with either no pole on the real axis or with two
    poles on the real axis, corresponding to negative or
    positive chemical potential ($h < 0$ or $h > 0$ in 
    \eqref{eq:Fermi_distribution}), respectively. This is the
    reason why we considered these physically most interesting
    cases in Theorems~\ref{thm:Fredholm-det-AE-no-poles}
    and~\ref{thm:Fredholm-det-AE-two-poles}.

    \item
    In Section~\ref{sec:conjecture}
    we formulate a conjecture on the case of $n$ poles
    for ${\mathbb R} \ni \lambda_0 < + \infty$
    which generalizes Theorem~\ref{thm:Fredholm-det-AE-static}.
\end{enumerate}
\end{remarks}

The remainder of this work is devoted to the proofs of
Theorems~\ref{thm:Fredholm-det-AE-no-poles}--%
\ref{thm:Fredholm-det-AE-static}.

\section{The initial matrix Riemann--Hilbert problem
    and its relation to the Fredholm determinant}
\label{sec:the-initial-RHp}

We note that the functions $\mathbf{E}_L$ and $\mathbf{E}_R$ in the
definition of the kernel function \eqref{eq:kernel-V-as-scalar-product}
have the property that
\begin{equation}
\label{eq:orthogonality-of-vectors-E}
    \mathbf{E}_L^\intercal (\lambda) \cdot \mathbf{E}_R (\lambda)
    = 0.
\end{equation}
For this reason the kernel is non-singular at $\lambda = \mu$.
Integral operators of the form~\eqref{eq:kernel-V-as-scalar-product}
with $\mathbf{E}_L$, $\mathbf{E}_R$ satisfying the
condition~\eqref{eq:orthogonality-of-vectors-E} are called
integrable integral operators
or completely integrable operators~\cite{Sakhnovich-68,IIKS-90,Deift-99}.
One of the important properties of such operators
is that one can construct their resolvents
from the solution to an associated matrix Riemann--Hilbert
problem. This allows one to study the Fredholm determinants
of such operators analytically.

If $\det_{\Ccal_{\lambda_0}} (\id + \rmV) \neq 0$, one can define functions
$\mathbf{F}_L, \mathbf{F}_R: {\mathbb C} \rightarrow {\mathbb C}^2$
as the unique solutions of the linear integral equations
\begin{subequations}
\label{eq:vectors-F}
\begin{align}
    \label{eq:vector-F-L}
    \mathbf{F}_L (\lambda)
    + \int\limits_{\Ccal_{\lambda_0}} \dd{\mu}
    V (\lambda, \mu) \mathbf{F}_L (\mu)
    = \mathbf{E}_L (\lambda),
    \\
    \label{eq:vector-F-R}
    \mathbf{F}_R (\lambda)
    + \int\limits_{\Ccal_{\lambda_0}} \dd{\mu}
    \mathbf{F}_R (\mu) V (\mu, \lambda)
    = \mathbf{E}_R (\lambda).
\end{align}
\end{subequations}
In terms of these functions the kernel of the resolvent $\rmR$ of $\rmV$
can be expressed as~\cite{IIKS-90}
\begin{equation}
\label{eq:kernel-R}
    R (\lambda, \mu)
    = \frac{
        \mathbf{F}_L^\intercal (\lambda)
        \cdot \mathbf{F}_R (\mu) }{
        \lambda - \mu },
    \qquad
    \mathbf{F}_L^\intercal (\lambda) \cdot \mathbf{F}_R (\lambda)
    = 0,
\end{equation}
meaning that the resolvent of an integrable operator is an
integrable operator.

If we define, on the other hand, a matrix $\chi (\lambda)$ and
calculate it inverse as
\begin{equation}
\label{eq:chi-and-chi-inverse}
    \chi (\lambda)
    = \mathrm{I}_2
    - \int\limits_{\Ccal_{\lambda_0}} \dd{\mu}
    \frac{
        \mathbf{F}_R (\mu) \cdot \mathbf{E}_L^\intercal (\mu) }{
        \mu - \lambda},
    \qquad
    \chi^{-1} (\lambda)
    = \mathrm{I}_2
    + \int\limits_{\Ccal_{\lambda_0}} \dd{\mu}
    \frac{
        \mathbf{E}_R (\mu) \cdot \mathbf{F}_L^\intercal (\mu) }{
        \mu - \lambda},
\end{equation}
then $\mathbf{F}_L$ and $\mathbf{F}_R$ are determined algebraically
by the relations
\begin{equation}
\label{eq:F-via-E-and-chi}
    \mathbf{F}_R (\lambda)
    = \chi (\lambda) \mathbf{E}_R (\lambda),
    \qquad
    \mathbf{F}_L^\intercal (\lambda)
    = \mathbf{E}_L^\intercal (\lambda) \chi^{-1} (\lambda).
\end{equation}

The matrix $\chi (\lambda)$, in turn, is a solution of the following
matrix Riemann--Hilbert problem which is guaranteed to exists,
if $\det_{\Ccal_{\lambda_0}} (\id + \rmV) \neq 0$.
\begin{RHp}
\label{RHp:chi}
    Determine $\chi (\lambda) \in \mathbb{C}^{2 \times 2}$ such that
    \begin{enumerate}
        \item
        $\chi (\lambda)$ is analytic
        in $\mathbb{C} \backslash \mathcal{C}_{\lambda_0}$
        and has continuous $\pm$ boundary values on $\mathcal{C}_{\lambda_0}$
        (cf.\ Figure~\ref{fig:contour-Ccal-lambda-0}).
        
        \item
        On the contour $\mathcal{C}_{\lambda_0}$
        the boundary values $\chi_\pm (\lambda)$
        satisfy the jump condition
        $\chi_- (\lambda) = \chi_+ (\lambda) \rmG_\chi (\lambda)$
        with jump matrix $\rmG_\chi (\lambda)$ given by
        \begin{multline}
        \label{eq:jump-matrix-chi}
            \rmG_\chi (\lambda)
            = \mathrm{I}_2
            + 2 \pi \mathrm{i} \ \mathbf{E}_R (\lambda)
            \cdot \mathbf{E}_L^\intercal (\lambda)
            \\
            = \mathrm{I}_2
            + 4 \vartheta (\lambda)
            \sin^2 (\pi \nu (\lambda))
            \begin{pmatrix}
                - e (\lambda) E(\lambda) &  E^2 (\lambda)\\
                -  e^2 (\lambda) &  e (\lambda) E (\lambda)
            \end{pmatrix}.
        \end{multline}

        \item
        $\chi (\lambda)
            = \mathrm{I}_2
            + \lambda^{-1}
            \Ocal \big(
                \begin{smallmatrix}1 & 1 \\ 1 & 1\end{smallmatrix}
            \big)$
        as $\lambda \to \infty$
        up to tangential direction to $\mathcal{C}_{\lambda_0}$.
    \end{enumerate}
\end{RHp}
We recall that $\chi_\pm$ denote the boundary values
from the ``$\pm$'' side of the oriented contour.
The positive (negative) side of the contour
is the one to the left (right) of the contour,
when moving in the direction of the contour.
For example,
\begin{equation}
    \chi_\pm (\lambda)
    = \lim\limits_{\substack{\mu \to \lambda\\ \mu \in \Omega^\pm}}
    \chi (\mu),
\end{equation}
see Figure~\ref{fig:contour-Ccal-lambda-0}.
Also, we emphasize that the big $\Ocal$ symbol of a matrix $\rmM$,
$\Ocal (\rmM)$, should be understood entrywise,
\begin{equation}
    \Ocal (\rmM)
    = \begin{pmatrix}
        \Ocal (M_{11}) & \Ocal (M_{12})
        \\
        \Ocal (M_{21}) & \Ocal (M_{22})
    \end{pmatrix}.
\end{equation}

We note as well that $\det \rmG_\chi (\lambda) = 1$ on
$\Ccal_{\lambda_0}$ and that this fact together with the
asymptotic behaviour of $\chi (\lambda)$ implies that
\begin{equation}
    \det \chi (\lambda) = 1
\end{equation}
for all $\lambda \in {\mathbb C} \setminus \Ccal_{\lambda_0}$,
which then allows one to prove the uniqueness of the solution
of the Riemann--Hilbert Problem~\ref{RHp:chi} (cf.~\cite{IIKS-90}).

For our work it is of particular importance that the logarithmic
derivative of the Fredholm determinant of an integrable integral
operator with respect to its parameters can be directly expressed
in terms of the solution of the associated Riemann--Hilbert
Problem~\ref{RHp:chi} (see~\cite{KKMST-09-RHp, K-11,%
DeiftItsZhouSineKernelOnUnionOfIntervals}).
\begin{proposition} \cite{KKMST-09-RHp,K-11}
    \label{prop:log-der}
    Let $\eta > 0$
    and $\Gamma (\mathcal{C}_{\lambda_0})$ be a loop in $\Omega$
    surrounding the contour $\mathcal{C}_{\lambda_0}$
    in positive direction,
    see Figure~\ref{fig:contour-Gamma}.
    If $\det_{\Ccal_{\lambda_0}} (\id + \rmV) \neq 0$, then
    \begin{multline}
        \label{eq:prop:log-der}
        \partial_\beta \ln \det_{\mathcal{C}_{\lambda_0}}
        (\id + \mathrm{V})
        \\
        =
        - \int\limits_{\Gamma (\mathcal{C}_{\lambda_0})}
        \eval{
            \frac{ \dd{\lambda} }{2 \pi \rmi}
            \tr\left\{
                \chi^\prime (\lambda)
                \begin{pmatrix}
                    1 &  2 \mathrm{C} (\lambda) \\ 0 & -1
                \end{pmatrix}
                \chi^{-1} (\lambda)
            \right\}
            d_\beta (\lambda)
            \rme^{- \eta \lambda^2}
        }_{\eta = 0+},
    \end{multline}
    where
    \begin{equation}
    \label{eq:function-d}
        d_\beta (\lambda) = \partial_\beta d (\lambda), \qquad
        d (\lambda)
        = - \frac{\rmi x}{2} u (\lambda|\lambda_0) - \frac{1}{2} g (\lambda),
    \end{equation}
    for $\beta = x, \lambda_0$.
    $\mathrm{C} (z)$ is the Cauchy transform of $e^{-2} (\lambda)$,
    defined in equation~\eqref{eq:Cauchy-transform},
    $\chi (\lambda)$ is the unique solution of the Riemann--Hilbert
    Problem~\ref{RHp:chi}.
\end{proposition}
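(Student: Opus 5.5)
The plan is to use the Jacobi formula for the variation of a Fredholm determinant, $\partial_\beta \ln\det(\id+\rmV) = \tr\{(\id+\rmV)^{-1}\partial_\beta \rmV\}$, and to rewrite the right-hand side in terms of $\chi$ through the resolvent structure \eqref{eq:kernel-R} and the relations \eqref{eq:F-via-E-and-chi}. First, note how the parameter $\beta\in\{x,\lambda_0\}$ enters the kernel. It appears only through $d$: we have $e=\rme^{d}$, so $\partial_\beta e = d_\beta e$ and $\partial_\beta e^{-1} = -d_\beta e^{-1}$. The derivative of the Cauchy transform is
\[
\partial_\beta \rmC(\lambda) = -2\int \frac{\dd\mu}{2\pi\rmi}\frac{d_\beta(\mu)e^{-2}(\mu)}{\mu-\lambda}.
\]
This term is nonlocal, and I expect it to be the main obstacle.

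The key algebraic step is to write $\mathbf{E}_{L,R}$ as a constant-in-$\beta$ vector multiplied by a $\beta$-dependent matrix. Observe that $\mathbf{E}_L^\intercal$ is proportional to $(-e,\, E) = (-e^{-1}/(\rme^{-2\pi\rmi\nu}-1)\cdot\ldots)$. More usefully,
\[
\begin{pmatrix}-e\\ E\end{pmatrix}= \begin{pmatrix}1&0\\ -\rmC_-&1\end{pmatrix}\begin{pmatrix}-e\\ e^{-1}/(\rme^{-2\pi\rmi\nu}-1)\end{pmatrix},
\]
and $\mathbf{E}_R$ is treated similarly. Next, introduce the modified matrix $\Psi(\lambda) = \chi(\lambda)\begin{pmatrix}1&\rmC(\lambda)\\0&1\end{pmatrix}$ (up to the sign conventions fixed by the jump matrix \eqref{eq:jump-matrix-chi}), multiplied on the right by $\rme^{d(\lambda)\sigma_3}$. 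Its jump on $\Ccal_{\lambda_0}$ is then independent of $x$ and $\lambda_0$, because $\rmC_+-\rmC_-=e^{-2}$ absorbs the nonlocal piece. Standard IIKS theory then shows that $\partial_\beta\Psi\,\Psi^{-1}$ is holomorphic off the contour, and that $\partial_\beta\ln\det$ can be expressed as a contour integral of $\tr\{\chi^{-1}\chi'\,\partial_\beta(\ldots)\}$. The trace $\tr\{\chi'\,\mathrm{diag}\text{-like}\,\chi^{-1}\}$ with $\begin{pmatrix}1&2\rmC\\0&-1\end{pmatrix} = \begin{pmatrix}1&\rmC\\0&1\end{pmatrix}\sigma_3\begin{pmatrix}1&\rmC\\0&1\end{pmatrix}^{-1}$ arises precisely from conjugating $\sigma_3\, d_\beta$ by this triangular factor.

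Concretely, I will carry out the steps in this order.
\begin{enumerate}
\item Compute $\partial_\beta V(\lambda,\mu)$ and show that
\[
\tr\{(\id+\rmV)^{-1}\partial_\beta\rmV\} = \int_{\Ccal_{\lambda_0}}\dd\lambda\, d_\beta(\lambda)\,\mathbf{F}_R^\intercal(\lambda)\,\Sigma(\lambda)\,\mathbf{E}_L(\lambda)\text{-type terms},
\]
plus a double-integral term coming from $\partial_\beta\rmC$.
\item Insert the definition of $\chi$ from \eqref{eq:chi-and-chi-inverse}. This identifies the local terms with the residue/jump of $\tr\{\chi'\sigma_3\chi^{-1}\}d_\beta$ across the contour. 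Exchanging the order of integration in the nonlocal term, and using the Sokhotski–Plemelj formula for $\rmC$, converts it into the jump of $\tr\{\chi'\begin{pmatrix}0&2\rmC\\0&0\end{pmatrix}\chi^{-1}\}d_\beta$.
\item Use $\det\chi=1$ together with the jump relation $\chi_-=\chi_+\rmG_\chi$ to assemble the sum of the boundary-value differences into $-\oint_{\Gamma(\Ccal_{\lambda_0})}$ of the stated integrand. Here the contour is squeezed onto $\Ccal_{\lambda_0}$, which is allowed by analyticity in $\Omega$.
\end{enumerate}

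The regulator $\rme^{-\eta\lambda^2}$ is needed because $d_\beta$ grows polynomially (for example $d_x=-\rmi u/2$) while $\rmC$ and $\chi-\rmI_2$ decay only like $\lambda^{-1}$. Closing or deforming the loop to encircle $\Ccal_{\lambda_0}$ therefore needs a damping factor, and the limit $\eta\to0^+$ is justified by dominated convergence. This works because $\vartheta^{1/4}\in L^1$ and the bound \eqref{eq:theta_bounds_u} control the integrand on the contour, while the exponential decay of $e^{\pm2}$ off the axis from Assumption~\ref{item:assumption-1} controls it on the loop. The main technical difficulty I anticipate is exactly this bookkeeping of the nonlocal Cauchy-transform term and the interchange of integrals and limits, rather than the algebra itself. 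The algebra follows the template of \cite{KKMST-09-RHp,K-11}; the only change is the extra factors $\vartheta\sin^2(\pi\nu)$, which are $\beta$-independent and so pass through unchanged.
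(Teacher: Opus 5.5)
You start from the right place (the Jacobi formula $\partial_\beta\ln\det(\id+\rmV)=\tr\{(\id-\rmR)\,\partial_\beta\rmV\}$, with $\beta$ entering only through $d$). However, the step that actually produces the loop integral is missing, and the mechanism you put in its place does not work as described. Differentiating \eqref{eq:vectors-E} gives $\partial_\beta\mathbf{E}_L^\intercal=\mathbf{E}_L^\intercal(d_\beta\sigma^z+D_\beta\sigma^+)$ and $\partial_\beta\mathbf{E}_R=-(d_\beta\sigma^z+D_\beta\sigma^+)\mathbf{E}_R$, where $D_\beta=2d_\beta\rmC_-+\partial_\beta\rmC_-$. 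Hence
\[
\partial_\beta V(\lambda,\mu)=\mathbf{E}_L^\intercal(\lambda)\biggl[\frac{d_\beta(\lambda)-d_\beta(\mu)}{\lambda-\mu}\,\sigma^z+\frac{D_\beta(\lambda)-D_\beta(\mu)}{\lambda-\mu}\,\sigma^+\biggr]\mathbf{E}_R(\mu).
\]
Only divided differences occur. So $\tr\{(\id-\rmR)\partial_\beta\rmV\}$ is not ``a single integral of $d_\beta(\lambda)$ times local quantities plus a $\partial_\beta\rmC$ term'': the $\rmR$-part is a genuine double integral, and separating it produces principal-value kernels that you do not control. Your Step~2, matching pieces with boundary-value jumps of $\tr\{\chi'\sigma^z\chi^{-1}\}d_\beta$ and $\tr\{\chi'\,2\rmC\sigma^+\chi^{-1}\}d_\beta$, is asserted rather than derived. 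The proposed grouping is also off. The jump of the $\sigma^z$-part alone involves $\rmG_\chi$ and $\rmG_\chi'$, hence $\rmC_-$. Meanwhile the $2d_\beta\rmC_-$ piece, which is local in $d_\beta$, belongs together with $\partial_\beta\rmC_-$. Finally, $\widetilde\chi\,\rme^{-d\sigma^z}$ (note the sign) does have a $\beta$-independent jump, but that observation by itself gives no formula for the Fredholm determinant.

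The missing idea, used in \cite{KKMST-09-RHp,K-11}, is to write the divided differences themselves as integrals over the loop. For $\lambda,\mu\in\Ccal_{\lambda_0}$,
\[
\frac{d_\beta(\lambda)-d_\beta(\mu)}{\lambda-\mu}=\oint_{\Gamma(\Ccal_{\lambda_0})}\frac{\dd{z}}{2\pi\rmi}\,\frac{d_\beta(z)}{(z-\lambda)(z-\mu)},\qquad
D_\beta(\lambda)=\oint_{\Gamma(\Ccal_{\lambda_0})}\frac{\dd{z}}{2\pi\rmi}\,\frac{2\rmC(z)\,d_\beta(z)}{z-\lambda}.
\]
The second identity follows from $D_\beta(\lambda)=2\int\frac{\dd{\mu}}{2\pi\rmi}\frac{d_\beta(\lambda)-d_\beta(\mu)}{\mu-\lambda}\,e^{-2}(\mu)$. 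Thus
\[
\partial_\beta V(\lambda,\mu)=\oint\frac{\dd{z}}{2\pi\rmi}\,d_\beta(z)\,\frac{\mathbf{E}_L^\intercal(\lambda)M(z)\mathbf{E}_R(\mu)}{(z-\lambda)(z-\mu)},\qquad M=\sigma^z+2\rmC\sigma^+ .
\]
This is where the loop, the matrix $\bigl(\begin{smallmatrix}1&2\rmC\\0&-1\end{smallmatrix}\bigr)$ and the regulator come from. Because $\Gamma$ encloses an unbounded contour, these Cauchy representations hold for $d_\beta(z)\rme^{-\eta z^2}$, and $\eta\to0^+$ is taken afterwards by dominated convergence. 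Next exchange the $z$-integral with the $\lambda,\mu$-integrals and use $R(\mu,\lambda)=\mathbf{F}_L^\intercal(\mu)\mathbf{F}_R(\lambda)/(\mu-\lambda)$. This gives $\int\dd{\lambda}\,(\id-\rmR)(\mu,\lambda)\mathbf{E}_L^\intercal(\lambda)/(z-\lambda)=\mathbf{F}_L^\intercal(\mu)\chi(z)/(z-\mu)$ and $\int\dd{\mu}\,\mathbf{E}_R(\mu)\mathbf{F}_L^\intercal(\mu)/(z-\mu)^2=(\chi^{-1})'(z)$. The integrand is therefore $\tr\{\chi M(\chi^{-1})'\}d_\beta=-\tr\{\chi'M\chi^{-1}\}d_\beta$, and no jump computation or squeezing of $\Gamma$ is needed.

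There are also minor sign slips. The triangular factor should be $\bigl(\begin{smallmatrix}1&0\\ \rmC_-&1\end{smallmatrix}\bigr)$. The correct conjugation is $M=(\rmI_2-\rmC\sigma^+)\sigma^z(\rmI_2-\rmC\sigma^+)^{-1}$; your version gives $\sigma^z-2\rmC\sigma^+$. The corrected form is consistent with $\widetilde\chi=\chi(\rmI_2-\rmC\sigma^+)$.
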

The proof of this proposition is along the same lines
as in~\cite[Lemma 3.1]{KKMST-09-RHp}
and~\cite[Proposition 3.1]{K-11}, see~\cite[Appendix A]{M-25}.

Proposition~\ref{prop:log-der} reduces the large-$x$ asymptotic
analysis of the original Fredholm determinant to the large-$x$
asymptotic analysis of the solution $\chi$ of the matrix
Riemann--Hilbert Problem~\ref{RHp:chi} and a subsequent calculation
of the anti-derivatives involved in~\eqref{eq:prop:log-der}.
The asymptotic analysis of the Riemann--Hilbert problem can be
performed rather systematically within the so-called nonlinear
steepest descent method~\cite{DZ-93} and will be the subject
of the following section.

\begin{figure}[t]
    \centering
    \inputfig{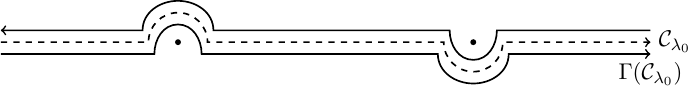}
    \caption{
        The integration contour $\Ccal_{\lambda_0}$ (dashed line)
        and the loop $\Gamma (\Ccal_{\lambda_0})$ (solid line)
        surrounding it in positive direction.
    }
    \label{fig:contour-Gamma}
\end{figure}

\section{Transforming the initial Riemann--Hilbert problem}
\label{sec:transform_RH}
In this section we employ the nonlinear steepest descent
method~\cite{BealsCoifmanScatteringInFirstOrderSystemsEquivalenceRHPSingIntEqnMention,%
BealsCoifmanScatteringInFirstOrderSystemsEquivalenceRHPPRoofRHPUniqueness,DZ-93}
for the large-$x$ asymptotic analysis of
the matrix Riemann--Hilbert Problem~\ref{RHp:chi}. This means
that we construct a chain of explicit bijections that maps $\chi (\lambda)$
to a product of two matrices $S(\lambda) \Pi(\lambda)$ which is
suitable for a direct asymptotic analysis through the Beals--Coifman
procedure~\cite{BealsCoifmanScatteringInFirstOrderSystemsEquivalenceRHPSingIntEqnMention,%
BealsCoifmanScatteringInFirstOrderSystemsEquivalenceRHPPRoofRHPUniqueness}.
While $\Pi$ satisfies a singular
integral equation that can be solved by iteration, if $x$ is large
enough, and that produces a convergent large-$x$ asymptotic series,
the matrix $S: \Omega \rightarrow {\mathbb C}^{2 \times 2}$ is a
meromorphic function with only first order poles whose residues
can be determined from a system of linear algebraic equations.

We then insert the explicit chain of transformation into the
integral on the right-hand side of equation~\eqref{eq:prop:log-der} in
Proposition~\ref{prop:log-der} and derive a representation of the
logarithmic derivatives of the Fredholm determinant involving
only $S$, $\Pi$ and explicit functions. This representation,
see Proposition~\ref{prop:log-der-Fredholm-for-AE}, is suitable
for completing the large-$x$ asymptotic analysis of the Fredholm
determinant \eqref{eq:Fred} in the following sections.

Finally,
we analyse the contribution of the poles
to the asymptotic expansion
of the logarithmic derivative of the Fredholm determinant
in Section~\ref{sec:analysis-of-pole-contribution},
which is formulated in Proposition~\ref{prop:contribution-of-poles}.

\subsection{First transformation}
Let
\begin{equation} \label{def:Pauli}
    \sigma^x =
    \begin{pmatrix}
        0 & 1 \\ 1 & 0
    \end{pmatrix}, \qquad
    \sigma^y =
    \begin{pmatrix}
        0 & - \rmi \\ \rmi & \mspace{14.mu} 0
    \end{pmatrix}, \qquad
    \sigma^z =
    \begin{pmatrix}
        1 & \mspace{12.mu} 0 \\ 0 & -1
    \end{pmatrix}
\end{equation}
denote the Pauli matrices and let
$\sigma^\pm = \frac12(\sigma^x \pm \rmi \sigma^y)$.
Let the $2 \times 2$ unit matrix be denoted by ${\rm I}_2$.

The transformation~\cite{IIKV-92},
\begin{equation}
    \label{eq:chi-tilde}
    \widetilde \chi (\lambda)
    = \chi (\lambda) (\mathrm{I}_2 - \mathrm{C} (\lambda) \sigma^+),
    \qquad
    \lambda \in \mathbb{C} \backslash \Ccal_{\lambda_0},
\end{equation}
removes the Cauchy transform from the right-hand side
of expression~\eqref{eq:prop:log-der} and from the jump 
matrix $\rmG_\chi$, see equation~\eqref{eq:jump-matrix-chi},
while leaving the determinant invariant. Define
\begin{multline}
\label{eq:jump-matrix-chi-tilde}
    \rmG_{\tilde \chi} (\lambda)
    = (\mathrm{I}_2 + \rmC_+ (\lambda) \sigma^+)
    \rmG_{\chi} (\lambda)
    (\mathrm{I}_2 - \rmC_- (\lambda) \sigma^+)
    \\
    = \begin{pmatrix}
        1 + \vartheta (\lambda)
        (\rme^{- 2 \pi \rmi \nu (\lambda)} - 1)
        & e^{-2} (\lambda) (1 - \vartheta (\lambda))
        \\
        - 4
        e^{2} (\lambda) \vartheta (\lambda)
        \sin^2 (\pi \nu (\lambda))
        & 1 + \vartheta (\lambda)
        (\rme^{2 \pi \rmi \nu (\lambda)} - 1)
    \end{pmatrix}.
\end{multline}
Then $\widetilde{\chi}$ is the unique solution of the following
matrix Riemann--Hilbert problem.
\begin{RHp}
\label{RHp:chi-tilde}
    Determine $\widetilde{\chi} (\lambda) \in \mathbb{C}^{2 \times 2}$
    such that
    \begin{enumerate}
        \item
        $\widetilde{\chi} (\lambda)$ is analytic
        in $\mathbb{C} \backslash \Ccal_{\lambda_0}$
        and has continuous $\pm$ boundary values on $\Ccal_{\lambda_0}$,
        see Figure~\ref{fig:contour-Ccal-lambda-0}.
        
        \item
        On the contour $\Ccal_{\lambda_0}$ the boundary values
        $\widetilde{\chi}_\pm (\lambda)$
        satisfy the jump condition
        $\widetilde{\chi}_- (\lambda)
            = \widetilde{\chi}_+ (\lambda) \rmG_{\tilde{\chi}} (\lambda)$
        with jump matrix $\rmG_{\tilde{\chi}} (\lambda)$
        according to~\eqref{eq:jump-matrix-chi-tilde}.

        \item
        $\widetilde{\chi} (\lambda)
            = \mathrm{I}_2
            + \lambda^{-1}
            \Ocal \big(
                \begin{smallmatrix}1 & 1 \\ 1 & 1\end{smallmatrix}
            \big)$
        as $\lambda \to \infty$
        up to tangential direction to $\Ccal_{\lambda_0}$.
    \end{enumerate}
\end{RHp}

The matrix defining the transformation~\eqref{eq:chi-tilde} is
invertible with determinant equal to one. The existence and
uniqueness of the solution of the Riemann--Hilbert
Problem~\ref{RHp:chi-tilde} hence follow from the corresponding
properties of $\chi$, and
\begin{equation}
    \det \widetilde{\chi} (\lambda) = 1.
\end{equation}
This will also hold for the subsequent transformations. They all
preserve the unimodularity of the original matrix-valued function
$\chi$.

\subsection{Scalar Riemann--Hilbert problem}
\label{sec:scalar-RHP}
The next transformation will map
the matrix Riemann--Hilbert Problem~\ref{RHp:chi-tilde}
to a matrix Riemann--Hilbert problem with unimodular
jump matrix of the form
\begin{equation}
    \label{eq:jump-matrix-for-factorization}
    \rmG (\lambda)
    = \begin{cases}
        \begin{pmatrix}
            * & *\\
            * & 1
        \end{pmatrix},
        & \Real (\lambda - \lambda_0) < 0,
        \\[3ex]
        \begin{pmatrix}
            1 & *\\
            * & *
        \end{pmatrix},
        & \Real (\lambda - \lambda_0) > 0.
    \end{cases}
\end{equation}
This can be achieved by a diagonal transformation
\begin{equation}
\label{eq:Xi}
    \Xi (\lambda)
    = \widetilde{\chi} (\lambda) \big[\alpha (\lambda) \big]^{\sigma^z},
\end{equation}
where $\alpha$ is a scalar function that is
analytic in $\mathbb{C}\backslash \Ccal_{\lambda_0}$.
The transformed matrix $\Xi (\lambda)$ has a jump discontinuity
across the contour $\Ccal_{\lambda_0}$ that is encoded by the jump matrix
\begin{equation}
    \label{eq:jump-matrix-Xi-with-alpha}
    \rmG_\Xi (\lambda)
    = \alpha_+^{- \sigma^z} (\lambda)
    \rmG_{\tilde \chi} (\lambda)
    \alpha_-^{\sigma^z} (\lambda)
    =
    \begin{pmatrix}
        \dfrac{ \alpha_- (\lambda) }{ \alpha_+ (\lambda)  }
        \left( \rmG_{\tilde \chi} \right)_{11} (\lambda)
        &
        \dfrac{ \left( \rmG_{\tilde \chi} \right)_{12} (\lambda) }
            { \alpha_- (\lambda) \alpha_+ (\lambda)  }
        \\
        \alpha_- (\lambda) \alpha_+ (\lambda)
        \left( \rmG_{\tilde \chi} \right)_{21} (\lambda)
        &
        \dfrac{ \alpha_+ (\lambda) }{ \alpha_- (\lambda)  }
        \left( \rmG_{\tilde \chi} \right)_{22} (\lambda)
    \end{pmatrix}.
\end{equation}
Thus, $G_\Xi$ assumes the intended
form~\eqref{eq:jump-matrix-for-factorization}, if
$\alpha$ is the solution of the following scalar Riemann--Hilbert
problem.
\begin{RHp}
\label{RHp:alpha}
    Determine $\alpha (\lambda) \in \mathbb{C}$ such that
    \begin{enumerate}
        \item
        $\alpha (\lambda)$ is analytic
        in $\mathbb{C} \backslash \Ccal_{\lambda_0}$
        and has continuous $\pm$ boundary values on
        $\Ccal_{\lambda_0} \setminus \{\lambda_0\}$.
        
        \item
        On the contour $\Ccal_{\lambda_0} \backslash\{\lambda_0\}$
        the boundary values
        $\alpha_\pm (\lambda)$
        satisfy the jump condition
        \begin{equation}
        \label{eq:jump-condition-alpha}
            \frac{\alpha_+ (\lambda)}{ \alpha_- (\lambda)}
            =
            \begin{cases}
                \left[
                    1 + \vartheta (\lambda)
                    \left(
                        \rme^{2 \pi \rmi \nu (\lambda)} - 1
                    \right)
                \right]^{-1},
                & \Real(\lambda - \lambda_0) < 0,
                \\[1ex]
                1 + \vartheta (\lambda)
                \left( \rme^{- 2 \pi \rmi \nu (\lambda)} - 1\right),
                & \Real(\lambda - \lambda_0) > 0.
            \end{cases}
        \end{equation}

        \item
        As $\lambda \to \lambda_0$
        \begin{equation}
        \label{eq:asymptotics-of-alpha-at-lambda-0}
            \alpha (\lambda)
            = \alpha_0 \cdot
            (\lambda - \lambda_0)^{\tau (\lambda)}
        \end{equation}
        for a piecewise constant $\alpha_0$
        and the function $\tau (\lambda)$,
        defined in~\eqref{eq:definition-of-tau}.

        \item
        $\alpha (\lambda) = 1 + \mathrm{O} (\lambda^{-1})$
        as $\lambda \to \infty$
        up to tangential direction to $\Ccal_{\lambda_0}$.
    \end{enumerate}
\end{RHp}
The solution of this scalar Riemann--Hilbert problem is the
origin of the functions $\Lcal_\ell$, $\Lcal_r$ and
$\Lcal (\cdot|\lambda_0)$ introduced
in~\eqref{eq:definition-of-Lcal-ell-and-Lcal-r}
and~\eqref{eq:definition-of-Lcal}. In fact, it is
straightforward to see that
\begin{equation}
    \alpha (\lambda)
    =
    \exp\Bigg\{
        \int\limits_{\Ccal_{\lambda_0}}
        \dd{\mu}
        \frac{ \Lcal (\mu | \lambda_0) }{\mu - \lambda}
    \Bigg\}
\end{equation}
is the unique solution of the Riemann--Hilbert Problem~\ref{RHp:alpha}.

The motivation for constructing this second transformation was that unimodular
matrices of the form \eqref{eq:jump-matrix-for-factorization} can be
factorized into products of upper and lower triangular matrices.

\subsubsection[Extracting the singular part
    of \texorpdfstring{$\alpha$}{alpha}]{\boldmath Extracting the singular part
    of $\alpha$}
The function $\alpha$ has a singularity at $\lambda_0$. For our subsequent
calculations we have to extract it explicitly. For $\varepsilon > 0$
we have the decomposition
\begin{multline}
    \alpha (\lambda)
    =
    \exp\Bigg\{
        \int\limits_{\Ccal_{\lambda_0}}
        \dd{\mu}
        \frac{
            \Lcal (\mu | \lambda_0)
            - \Lcal (\lambda | \lambda_0)
            \cdot \indicator{
                (\lambda_0 - \varepsilon, \lambda_0 + \varepsilon)} (\mu)
            }{\mu - \lambda}
    \Bigg\}
    \\
    \times
    \exp\Bigg\{
        \Lcal_\ell (\lambda)
        \int\limits_{\lambda_0 - \varepsilon}^{\lambda_0}
        \frac{\dd \mu}{\mu - \lambda}
        + \Lcal_r (\lambda)
        \int\limits_{\lambda_0}^{\lambda_0 + \varepsilon}
        \frac{\dd \mu}{\mu - \lambda}
    \Bigg\}.
\end{multline}
The last exponent has a cut and can be written as
\begin{multline}
    \exp\Bigg\{
        \Lcal_\ell (\lambda)
        \int\limits_{\lambda_0 - \varepsilon}^{\lambda_0}
        \frac{\dd \mu}{\mu - \lambda}
        + \Lcal_r (\lambda)
        \int\limits_{\lambda_0}^{\lambda_0 + \varepsilon}
        \frac{\dd \mu}{\mu - \lambda}
    \Bigg\}
    = \left(
        \frac{\lambda - \lambda_0}{ \lambda - \lambda_0 + \varepsilon}
    \right)^{\Lcal_\ell (\lambda)}
    \left(
        \frac{\lambda_0 - \lambda + \varepsilon}{ \lambda_0 - \lambda}
    \right)^{\Lcal_r (\lambda)}
    \\
    = \frac{
        \left(
            \lambda_0 - \lambda + \varepsilon
        \right)^{\Lcal_r (\lambda)}
        \left( \lambda - \lambda_0 \right)^{\Lcal_\ell (\lambda)}
    }{
        \left(
            \lambda - \lambda_0 + \varepsilon
        \right)^{\Lcal_\ell (\lambda)}
        \left( \lambda_0 - \lambda \right)^{\Lcal_r (\lambda)}
    }
    = \frac{
        \left(
            \lambda_0 - \lambda + \varepsilon
        \right)^{\Lcal_r (\lambda)}
        \left( \lambda - \lambda_0 \right)^{\Lcal_\ell (\lambda)}
    }{
        \left(
            \lambda - \lambda_0 + \varepsilon
        \right)^{\Lcal_\ell (\lambda)}
        \left( \lambda - \lambda_0 \right)^{\Lcal_r (\lambda)}
    }
    \rme^{\pi \rmi \Lcal_r (\lambda) \sgn \Imag (\lambda)}.
\end{multline}
The function $\alpha (\lambda)$ can then be expressed as
\begin{equation}
    \label{eq:alpha-factorized}
    \alpha (\lambda)
    = \varkappa (\lambda | \lambda_0)
    \cdot
    (\lambda - \lambda_0)^{\tau (\lambda)} \
    \rme^{ \pi \rmi \sgn (\Imag (\lambda)) \Lcal_r (\lambda)},
\end{equation}
where the function $\tau$ is given by \eqref{eq:definition-of-tau}
and where $\varkappa (\lambda | \lambda_0)$, defined as
\begin{equation}
\label{eq:varkappa}
    \varkappa (\lambda | \lambda_0)
    = \frac{
        (\lambda_0 - \lambda + \varepsilon)^{\Lcal_r (\lambda)} }{
        (\lambda - \lambda_0 + \varepsilon)^{\Lcal_\ell (\lambda)} }
    \exp\Bigg\{
        \int\limits_{\Ccal_{\lambda_0}}
        \dd \mu \frac{
            \Lcal (\mu | \lambda_0)
            - \Lcal (\lambda | \lambda_0)
            \cdot
            \indicator{
                (\lambda_0 - \varepsilon, \lambda_0 + \varepsilon)
            } (\mu)
        }{
            \mu - \lambda }
    \Bigg\},
\end{equation}
is holomorphic in the vicinity of $\lambda_0$.
We note that the representation \eqref{eq:varkappa}
does not depend on the regularization parameter $\varepsilon > 0$,
which can be seen, for example,
after taking the derivative with respect to $\varepsilon$
and showing that it is zero.

\subsection{Factorization of the jump matrix}
\label{sec:factorization-of-the-jump-matrix}
The transformation~\eqref{eq:Xi} is singular at $\lambda_0$,
cf.~\eqref{eq:alpha-factorized}, hence induces a
singularity of the transformed matrix $\Xi (\lambda)$ at
this point. Therefore, $\Xi$ is the solution of the following 
Riemann--Hilbert problem with an additional condition
at $\lambda_0$.
\begin{RHp}
\label{RHp:Xi}
    Determine $\Xi (\lambda) \in \mathbb{C}^{2 \times 2}$ such that
    \begin{enumerate}
        \item
        $\Xi (\lambda)$ is analytic
        in $\mathbb{C} \backslash \Ccal_{\lambda_0}$
        and has continuous $\pm$ boundary values on
        $\Ccal_{\lambda_0} \backslash \{\lambda_0\}$.

        \item
        On the contour $\Ccal_{\lambda_0} \backslash \{\lambda_0\}$
        the boundary values
        $\Xi_\pm (\lambda)$
        satisfy the jump condition
        $\Xi_- (\lambda) = \Xi_+ (\lambda) \rmG_\Xi (\lambda)$
        with jump matrix $\rmG_\Xi (\lambda)$
        given by~\eqref{eq:jump-matrix-Xi-with-alpha}.

        \item
        $\Xi (\lambda) = \mathrm{I}_2
            + \lambda^{-1}
            \Ocal \big(
                \begin{smallmatrix}1 & 1 \\ 1 & 1\end{smallmatrix}
            \big)$
        as $\lambda \to \infty$
        up to tangential direction to $\Ccal_{\lambda_0}$.

        \item
        As $\lambda \to \lambda_0$
        \begin{equation}
        \label{eq:asymptotics-of-Xi-at-lambda-0}
            \Xi (\lambda)
            = \Big[
                \Xi_0
                + (\lambda - \lambda_0)
                \Ocal \big(
                    \begin{smallmatrix}
                        1 & 1
                        \\
                        1 & 1
                    \end{smallmatrix}
                \big)
            \Big]
            (\lambda - \lambda_0)^{\sigma^z \tau (\lambda)}
        \end{equation}
        for a piecewise constant matrix $\Xi_0 \in \mathbb{C}^{2 \times 2}$.
    \end{enumerate}
\end{RHp}

Next we will turn to the factorization
of the jump matrix $\rmG_\Xi$, separately for the two cases
$\Real(\lambda - \lambda_0) < 0$ and $\Real(\lambda - \lambda_0) > 0$.

\subsubsection[Factorization for
\texorpdfstring{$\Real (\lambda - \lambda_0) < 0$}{
Re (lambda - lambda0) < 0}]{\boldmath Factorization for $\Real (\lambda - \lambda_0) < 0$}
Substituting the jump condition~\eqref{eq:jump-condition-alpha}
into~\eqref{eq:jump-matrix-Xi-with-alpha},
we obtain for $\Real (\lambda - \lambda_0) < 0$
\begin{equation}
    \rmG_\Xi (\lambda)
    =
    \begin{pmatrix}
        1 - 4 \sin^2 (\pi \nu (\lambda))
        \vartheta (\lambda) (1 - \vartheta (\lambda))
        &
        \dfrac{
            e^{-2} (\lambda) (1 - \vartheta (\lambda)) }{
            \alpha_+^2 (\lambda)
            \left[1 + \vartheta (\lambda)
            (\rme^{2 \pi \rmi \nu (\lambda)} - 1)\right] }
        \\
        - \dfrac{
            4
            \alpha_-^2 (\lambda) e^{2} (\lambda) \vartheta (\lambda)
            \sin^2 (\pi \nu (\lambda))
        }{
            \left[1 + \vartheta (\lambda)
            (\rme^{2 \pi \rmi \nu (\lambda)} - 1)\right] }
        & 1
    \end{pmatrix}.
\end{equation}
Here we expressed the elements (1,2) and (2,1)
in terms of $\alpha_+$ and $\alpha_-$, respectively,
which leads to an upper-lower factorization of the jump
matrix of the form
\begin{equation}
\label{eq:factorization-left}
    \rmG_\Xi (\lambda)
    = \rmM^+_\ell (\lambda) \rmM^-_\ell (\lambda)
\end{equation}
with matrices $\rmM^+_\ell (\lambda)$
and $\rmM^-_\ell (\lambda)$ given by
\begin{equation}
\label{eq:matrices-M-left}
\begin{aligned}
    \rmM^+_\ell (\lambda)
    &= \mathrm{I}_2
    + e^{-2} (\lambda) Q_\ell^+ (\lambda) \sigma^+,
    \quad
    & Q_\ell^+ (\lambda)
    &= \frac{ 1 - \vartheta (\lambda) }{
        \alpha_+^2 (\lambda)
        [1 + \vartheta (\lambda)
        (\rme^{2 \pi \rmi \nu (\lambda)} - 1)] },
    \\
    \rmM^-_\ell (\lambda)
    &= \mathrm{I}_2
    + e^{2} (\lambda) Q_\ell^- (\lambda) \sigma^-,
    \quad
    & Q_\ell^- (\lambda)
    &=
    -
    \frac{ 4 \alpha_-^2 (\lambda) \vartheta (\lambda)
        \sin^2 (\pi \nu (\lambda))
    }{
        [1 + \vartheta (\lambda)
        (\rme^{2 \pi \rmi \nu (\lambda)} - 1)] }.
\end{aligned}
\end{equation}
We note that the matrix $\rmM_\ell^+$ (resp.\ $\rmM_\ell^-$)
admits an analytic continuation
to the region above (below) the integration contour $\Ccal_{\lambda_0}$
for $\Real (\lambda - \lambda_0) < 0$,
where it becomes exponentially close to identity
due to the factor $e^{-2} (\lambda)$
(resp.\ $e^{2} (\lambda)$).

\subsubsection[Factorization for \texorpdfstring{$\Real (\lambda - \lambda_0) > 0$}{
Re (lambda - lambda0) > 0}]{\boldmath Factorization for $\Real (\lambda - \lambda_0) > 0$}
Similarly, for $\Real (\lambda - \lambda_0) > 0$, substituting the
jump condition~\eqref{eq:jump-condition-alpha}
into~\eqref{eq:jump-matrix-Xi-with-alpha} leads to
\begin{equation} \label{eq:jump-matrix_Xi_larger}
    \rmG_\Xi (\lambda)
    =
    \begin{pmatrix}
        1
        &
        \dfrac{
            e^{-2} (\lambda) (1 - \vartheta (\lambda)) }{
            \alpha_-^2 (\lambda)
            [1 + \vartheta (\lambda)
            (\rme^{-2 \pi \rmi \nu (\lambda)} - 1)]}
        \\
        \dfrac{
            - 4
            \alpha_+^2 (\lambda) e^{2} (\lambda) \vartheta (\lambda)
            \sin^2 (\pi \nu (\lambda))
        }{
            [1 + \vartheta (\lambda)
            (\rme^{-2 \pi \rmi \nu (\lambda)} - 1)] }
        & 
        1 - 4 \sin^2 (\pi \nu (\lambda))
        \vartheta (\lambda) (1 - \vartheta (\lambda))
    \end{pmatrix}.
\end{equation}
Here we expressed the elements (1,2) and (2,1)
in terms of $\alpha_-$ and $\alpha_+$,
respectively. The representation \eqref{eq:jump-matrix_Xi_larger}
implies a lower-upper factorization of the jump matrix,
\begin{equation}
\label{eq:factorization-right}
    \rmG_\Xi (\lambda)
    = \rmM^+_r (\lambda) \rmM^-_r (\lambda),
\end{equation}
with the matrices $\rmM^+_r (\lambda)$
and $\rmM^-_r (\lambda)$ given by
\begin{equation}
\label{eq:matrices-M-right}
\begin{aligned}
    \rmM^+_r (\lambda)
    & =
    \mathrm{I}_2
    + e^{2} (\lambda) Q_r^+ (\lambda) \sigma^-,
    \quad
    & Q_r^+ (\lambda)
    & =
    - \frac{
        4
        \alpha_+^2 (\lambda) \vartheta (\lambda)
        \sin^2 (\pi \nu (\lambda))
    }{
        [1 + \vartheta (\lambda)
        (\rme^{- 2 \pi \rmi \nu (\lambda)} - 1)]},
    \\
    \rmM^-_r (\lambda)
    & =
    \mathrm{I}_2
    + e^{-2} (\lambda) Q_r^- (\lambda) \sigma^+,
    \quad
    & Q_r^- (\lambda)
    & = \frac{
        1 - \vartheta (\lambda) }{
        \alpha_-^2 (\lambda)
        \left[
            1 + \vartheta (\lambda)
            (\rme^{- 2 \pi \rmi \nu (\lambda)} - 1)
        \right] }.
\end{aligned}
\end{equation}
We note that again the matrix $\rmM_r^+$ (resp.\ $\rmM_r^-$)
admits an analytic continuation
to the region above (below) the integration contour $\Ccal_{\lambda_0}$
for $\Real (\lambda - \lambda_0) > 0$,
where it becomes exponentially close to identity
due to the factor $e^{2} (\lambda)$ 
(resp.\ $e^{-2} (\lambda)$).

\subsection{Jump matrix close to identity}
\label{sec:Upsilon}
Our next transformation will split the jump contour
in the matrix Riemann--Hilbert problem. The jump
matrix with respect to the new split contour will be
exponentially close to the identity matrix for 
$\lambda$ uniformly away from the saddle point $\lambda_0$.

We introduce new oriented contours
$\Gamma_\ell^\pm \subset \Omega^\pm$
and $\Gamma_r^\pm \subset \Omega^\pm$ that meet in
the saddle point and, together with the contour
$\Ccal_{\lambda_0}$, divide the strip $\Omega$ into six
disjoint regions as sketched in Figure~\ref{fig:Upsilon-def}.
A matrix $\Upsilon$ is defined separately in each of these
six regions, see Figure~\ref{fig:Upsilon-def}, as a product
of $\Xi$ with either the unit matrix or with
one of the matrices $\rmM_{\ell/r}^\pm$ or their inverses,
in such a way that the jump of $\Xi$ on the initial contour
$\Ccal_{\lambda_0}$ is removed. This creates a new jump
discontinuity on the contour
\begin{equation}
\label{eq:contour-Gamma}
    \Gamma_\Upsilon
    = \Gamma_\ell^+ \cup \Gamma_\ell^-
    \cup \Gamma_r^+ \cup \Gamma_r^-.
\end{equation}
The precise form of the contours $\Gamma_\ell^\pm$ and $\Gamma_r^\pm$
in the vicinity of the saddle point will be specified
in Section~\ref{sec:parametrix} below.
\begin{figure}[ht]
    \centering
    \begin{subfigure}{0.5\textwidth}
        \centering
        \resizebox{\textwidth}{!}{%
            \inputfig{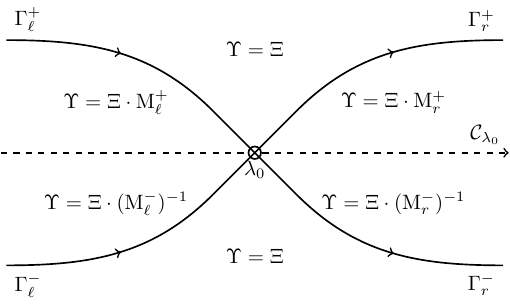}%
        }%
        \caption{}%
        \label{fig:Upsilon-def}
    \end{subfigure}%
    \begin{subfigure}{0.5\textwidth}
        \centering
        \resizebox{\textwidth}{!}{%
            \inputfig{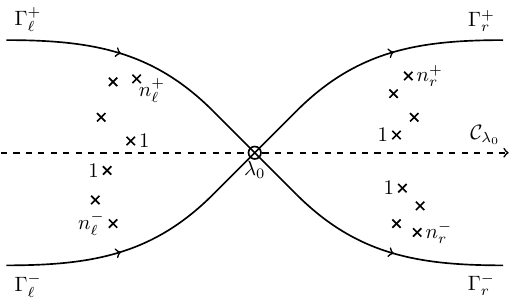}%
        }%
        \caption{}%
        \label{fig:Upsilon-poles}%
    \end{subfigure}%
    \caption{
        (a) Definition of the matrix $\Upsilon$
        in terms of $\Xi$ and $\rmM_{\ell / r}^\pm$;
        (b) the enumerated poles of $\Upsilon$
        stemming from the matrices $\rmM_{\ell / r}^\pm$
        in the corresponding regions.}
    \label{fig:Upsilon}
\end{figure}

The matrices $\rmM^\pm_{\ell}$ and $\rmM^\pm_{r}$ (or, equivalently,
their inverses) are meromorphic and may have poles originating from
their off-diagonal elements, cf.~\eqref{eq:matrices-M-left}
and~\eqref{eq:matrices-M-right}. We choose $\Gamma_\ell^\pm$
and $\Gamma_r^\pm$ in such a way that none of the poles of $\rmM_{\ell/r}^\pm$
is on these contours. We denote the set of those poles of $\rmM_{\ell/r}^+$
that are between $\Ccal_{\lambda_0}$ and $\Gamma_{\ell/r}^+$ by
$\Lcal^+/\Rcal^+$. Similarly, the sets of roots of $\rmM_{\ell/r}^{-1}$
between $\Gamma_{\ell/r}^-$ and $\Ccal_{\lambda_0}$ will be
denoted $\Lcal^-/\Rcal^-$, see Figure~\ref{fig:Upsilon-poles}.
We shall use the notation
\begin{subequations}
\label{eq:sets-of-poles}
\begin{align}
    \Lcal^+ = \{\ell_1^+, \dots, \ell_{n_\ell^+}^+\},
    \qquad
    \Rcal^+ = \{r_1^+, \dots, r_{n_r^+}^+\},\\
    \Lcal^- = \{\ell_1^-, \dots, \ell_{n_\ell^-}^-\},
    \qquad
    \Rcal^- = \{r_1^-, \dots, r_{n_r^-}^-\},
\end{align}
\end{subequations}
where $n_\ell^\pm$ and $n_r^\pm$ are the cardinalities of
the corresponding sets. We note that
\begin{equation}
\label{eq:set-of-poles}
    \Scal
    = \Lcal^+ \cup \Lcal^-
    \cup \Rcal^+ \cup \Rcal^-
\end{equation}
is the set of poles introduced in \eqref{def:Scal}.

These poles are poles of the matrix $\Upsilon$, by construction.
They are all simple by Assumption~\ref{as:number6}. The matrix
$\Upsilon$ can then be described as the unique solution of the
\begin{RHp}
\label{RHp:Upsilon}
    Determine $\Upsilon (\lambda) \in \mathbb{C}^{2 \times 2}$
    such that
    \begin{enumerate}
        \item
        $\Upsilon (\lambda)$ is analytic in
        $\mathbb{C} \backslash (\Gamma_\Upsilon \cup \mathcal{S})$
        and has continuous $\pm$ boundary values on
        $\Gamma_\Upsilon \backslash\{\lambda_0\}$.
        
        \item
        On the contour $\Gamma_\Upsilon \backslash\{\lambda_0\}$
        the boundary values $\Upsilon_\pm (\lambda)$
        satisfy the jump condition
        $\Upsilon_- (\lambda)
        = \Upsilon_+ (\lambda) \rmG_\Upsilon (\lambda)$
        with jump matrix $\rmG_\Upsilon (\lambda)$ given by
        \begin{equation}
        \label{eq:jump-matrix-Upsilon}
            \rmG_\Upsilon (\lambda)
            = \begin{cases}
                \rmM_r^+ (\lambda),
                \quad & \lambda \in \Gamma_r^+,
                \\
                \rmM_r^- (\lambda),
                \quad & \lambda \in \Gamma_r^-,
                \\
                \rmM_\ell^- (\lambda),
                \quad & \lambda \in \Gamma_\ell^-,
                \\
                \rmM_\ell^+ (\lambda),
                \quad & \lambda \in \Gamma_\ell^+.
            \end{cases}
        \end{equation}
        
        \item
        $\Upsilon (\lambda) = \mathrm{I}_2
            + \lambda^{-1}
            \Ocal \big(
                \begin{smallmatrix}1 & 1 \\ 1 & 1\end{smallmatrix}
            \big)$
        as $\lambda \to \infty$
        up to tangential direction to $\Gamma_\Upsilon$.
        
        \item
        As $\lambda \to \lambda_0$
        \begin{equation}
        \label{eq:asymptotics-of-Upsilon-at-lambda-0}
            \Upsilon (\lambda)
            = \Big[
                \Upsilon_0
                + (\lambda - \lambda_0) 
                \Ocal \big(
                    \begin{smallmatrix}
                        1 & 1
                        \\
                        1 & 1
                    \end{smallmatrix}
                \big)
            \Big]
            (\lambda - \lambda_0)^{\tau (\lambda) \sigma^z}
        \end{equation}
        for a piecewise constant matrix
        $\Upsilon_0 \in \mathbb{C}^{2 \times 2}$.
        
        \item
        $\Upsilon$ satisfies the following regularity conditions
        at the poles $\lambda \in \mathcal{S}$:
        \begin{equation}
        \begin{aligned}
            \Upsilon (\lambda) \cdot (\rmM_\ell^+)^{-1} (\lambda)
            \text{ is regular at }
            \lambda = \ell_j^+,
            \quad
            j = 1, \dots, n_\ell^+,
            \\
            \Upsilon (\lambda) \cdot (\rmM_r^+)^{-1} (\lambda)
            \text{ is regular at }
            \lambda = r_j^+,
            \quad
            j = 1, \dots, n_r^+,
            \\
            \Upsilon (\lambda) \cdot \rmM_\ell^- (\lambda)
            \text{ is regular at }
            \lambda = \ell_j^-,
            \quad
            j = 1, \dots, n_\ell^-,
            \\
            \Upsilon (\lambda) \cdot \rmM_r^- (\lambda)
            \text{ is regular at }
            \lambda = r_j^-,
            \quad
            j = 1, \dots, n_r^-.
        \end{aligned}
        \end{equation}
    \end{enumerate}
\end{RHp}
We emphasize that the jump matrix $\rmG_\Upsilon$ goes
to zero pointwise on $\Gamma_\Upsilon \setminus \{\lambda_0\}$ as
$x \rightarrow + \infty$. This suggests that, up to possible pole
contributions, the large-$x$ behaviour of $\Upsilon$ (and hence
of our original function $\chi$) should be dominated by its behaviour
close to the saddle point. The behaviour close to the saddle point,
in turn, should be determined by the asymptotic
condition~\eqref{eq:asymptotics-of-Upsilon-at-lambda-0} and by the
behaviour of the jump matrices $\rmM_{\ell/r}^\pm$ close to $\lambda_0$.

In the following Sections~\ref{sec:parametrix}--\ref{sec:contribution-of-poles}
we separately treat the behaviour of $\Upsilon$ in the vicinity
of the saddle point $\lambda_0$ and the contribution of the poles
$\lambda \in \Scal$.

\subsection{Parametrix}
\label{sec:parametrix}
In this section we construct the parametrix~---
the solution of the matrix Riemann--Hilbert problem
in the vicinity of the saddle point $\lambda_0$.

In the vicinity of $\lambda_0$ the behaviour of the jump matrices
$\rmM_{\ell/r}^\pm$, see~\eqref{eq:matrices-M-left},
\eqref{eq:matrices-M-right}, is dominated by the factors
$e^{\pm 2} (\lambda)$ which vary on a scale $\sqrt{x}$
compared to which all other terms are localized at $\lambda_0$.
Close to $\lambda_0$ we further have to take into account
the singular behaviour exhibited by the terms
$(\lambda - \lambda_0)^{\pm 2 \tau(\lambda)}$ stemming from
the factor $\alpha^2$ in~\eqref{eq:matrices-M-left},
\eqref{eq:matrices-M-right}.

As in the saddle point analysis of contour integrals in the
complex plane we introduce local variables in which the part
of the exponents that multiplies $\pm \rmi x$ is quadratic.
Recall that the function $u$ is holomorphic in an open
neighbourhood $\Ucal_{\lambda_0}$ of $\lambda_0$ and that
$u'(\lambda_0) = 0$ and $u'' (\lambda_0) < 0$. Hence, there
exists a holomorphic function $\omega:
\Ucal_{\lambda_0} \rightarrow {\mathbb C}$ such that
\begin{equation}
\label{eq:local-parametrization-omega}
    u (\lambda | \lambda_0)
    = u (\lambda_0 | \lambda_0) - \omega^2 (\lambda - \lambda_0 | \lambda_0)
\end{equation}
and
\begin{equation}
    \omega'(0|\lambda_0) = \sqrt{- u'' (\lambda_0)/2} > 0.
\end{equation}
The latter inequality implies that $\omega$ is invertible for $\lambda$
in a vicinity of $\lambda_0$ with holomorphic inverse $\omega^{-1}$,
mapping a vicinity of $0$ to a vicinity of $\lambda_0$. In other words,
defining $D_{0, \epsilon} = \{z \in {\mathbb C}|\; |z| < \epsilon\}$,
there exists $\epsilon > 0$ such that $\omega^{-1}: D_{0, \epsilon}
\rightarrow \omega^{-1} (D_{0, \epsilon})$ is a biholomorphism. We
shall set $\Ucal_{\lambda_0} = \omega^{-1} (D_{0, \epsilon})$. Then
$\Ucal_{\lambda_0}$ is open and contains an open disc around $\lambda_0$.

The function $\omega$ furnishes local coordinates for $\Upsilon$.
We use it to provide the precise definition of the oriented  contours
$\Gamma_\ell^\pm$ and $\Gamma_r^\pm$, that were introduced in the previous
subsection, close to $\lambda_0$. Inside $\Ucal_{\lambda_0}$ they
are defined as the pre-images under $\omega$ of the rays emanating from
$\lambda_0$ at angles of $\pi / 4$ relative to the real axis, see 
Figure~\ref{fig:local-parametrization}.
\begin{figure}[ht]
    \centering
    \inputfig{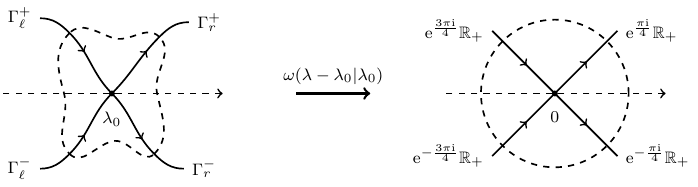}
    \caption{
        The images of the contours $\Gamma_{\ell/r}^\pm \cap \Ucal_{\lambda_0}$
        are line segments that form an angle of $\flatfrac{ \pi }{ 4 }$ with
        the real axis.}
    \label{fig:local-parametrization}
\end{figure}

We `separate fast and slow scales' by setting
\begin{equation}
\label{eq:definition-zeta}
    \zeta (\lambda)
    = \sqrt{x} \cdot \omega (\lambda - \lambda_0 | \lambda_0).
\end{equation}
Then the matrices $\rmM_\ell^\pm$ and $\rmM_r^\pm$
occurring in the jump condition for $\Upsilon$,
see~\eqref{eq:jump-matrix-Upsilon}, take the form
\begin{equation}
\begin{aligned}
\label{eq:matrices-M-local-variables}
    \rmM_\ell^+ (\lambda)
    & = \mathrm{I}_2
    +
    \frac{
        m (\lambda)
        \rme^{-\rmi \zeta^2 (\lambda) } }{
        [\zeta (\lambda)]^{2 \tau (\lambda)} }
    \rme^{2 \pi \rmi \tau (\lambda)}
    \sigma^+,
    \qquad
    \rmM_r^+ (\lambda)
    = \mathrm{I}_2
    +
    n (\lambda)
    \rme^{\rmi \zeta^2 (\lambda)}
    [\zeta (\lambda)]^{2 \tau (\lambda)}
    \sigma^-,
    \\
    \rmM_\ell^- (\lambda)
    & = \mathrm{I}_2
    +
    n (\lambda)
    \rme^{\rmi \zeta^2 (\lambda) }
    [\zeta (\lambda)]^{2 \tau (\lambda)}
    \rme^{2 \pi \rmi \tau (\lambda)}
    \sigma^-,
    \qquad
    \rmM_r^- (\lambda)
    = \mathrm{I}_2 +
    \frac{
        m (\lambda)
        \rme^{-\rmi \zeta^2 (\lambda) } }{
        [\zeta (\lambda)]^{2 \tau (\lambda)} }
    \sigma^+.
\end{aligned}
\end{equation}
Here we introduced the functions $m (\lambda)$ and $n (\lambda)$,
\begin{equation}
\label{eq:m-and-n}
\begin{aligned}
    m (\lambda)
    & =
    \rme^{ \rmi x u (\lambda_0) + g (\lambda)}
    \varkappa_\text{reg}^{- 2} (\lambda | \lambda_0)
    (1 - \vartheta (\lambda))
    x^{\tau (\lambda)},
    \\[1ex]
    n (\lambda)
    & =
    - 4
    \rme^{ - \rmi x u (\lambda_0) - g (\lambda)}
    \varkappa_\text{reg}^2 (\lambda | \lambda_0)
    \vartheta (\lambda)
    \sin^2 [\pi \nu (\lambda)]
    x^{- \tau (\lambda)},
\end{aligned}
\end{equation}
where $\varkappa_\text{reg}$ is the part of the function
$\alpha$ that is regular at $\lambda_0$. It is given by
\begin{equation}
\label{eq:varkappa-reg}
    \varkappa_{\text{reg}} (\lambda | \lambda_0)
    = \left[
        \frac{
            \lambda - \lambda_0 }{
            \omega (\lambda - \lambda_0 | \lambda_0) }
    \right]^{\tau (\lambda)}
    \varkappa (\lambda | \lambda_0)
\end{equation}
with $\varkappa$ defined in equation~\eqref{eq:varkappa}.

In \eqref{eq:m-and-n} the `slow variables' $m(\lambda)$,
$n(\lambda)$ and $\tau(\lambda)$ are separated from the
`fast variable' $\zeta(\lambda)$. Note furthermore that the
slow variables are not independent, but satisfy the relation
\begin{equation} \label{eq:rel_mn_tau}
    m(\lambda) n(\lambda) = \rme^{- 2 \pi \rmi \tau(\lambda)} - 1.
\end{equation}

The above preparations allow us to solve the following
local version of the matrix Riemann--Hilbert Problem~\ref{RHp:Upsilon}
for $\Upsilon$ explicitly.
\begin{RHp}
\label{RHp:Pcal}
    Determine $\mathcal{P} (\lambda) \in \mathbb{C}^{2 \times 2}$
    such that
    \begin{enumerate}
        \item
        $\mathcal{P} (\lambda)$ is analytic
        in $\mathcal{U}_{\lambda_0} \backslash \Gamma_\Upsilon$
        and has continuous $\pm$ boundary values on
        $\Gamma_\Upsilon \cap \Ucal_{\lambda_0} \setminus \{\lambda_0\}$.

        \item
        On the contour
        $\Gamma_\Upsilon \cap \Ucal_{\lambda_0} \setminus \{\lambda_0\}$
        the boundary values
        $\mathcal{P}_\pm (\lambda)$
        satisfy the jump condition
        $\mathcal{P}_- (\lambda)
            = \mathcal{P}_+ (\lambda) \rmG_\mathcal{P} (\lambda)$
        with jump matrix
        \begin{equation}
            \rmG_\mathcal{P} (\lambda)
            = \begin{cases}
                \rmM_r^+ (\lambda),
                \quad
                & \lambda \in \Gamma_r^+
                \cap \mathcal{U}_{\lambda_0},
                \\
                \rmM_r^- (\lambda),
                \quad
                & \lambda \in \Gamma_r^-
                \cap \mathcal{U}_{\lambda_0},
                \\
                \rmM_\ell^- (\lambda),
                \quad
                & \lambda \in \Gamma_\ell^-
                \cap \mathcal{U}_{\lambda_0},
                \\
                \rmM_\ell^+ (\lambda),
                \quad
                & \lambda \in \Gamma_\ell^+
                \cap \mathcal{U}_{\lambda_0}.
            \end{cases}
        \end{equation}

        \item
        $\mathcal{P} (\lambda)
            = \mathrm{I}_2
            + x^{ \rho - \frac12}
            \Ocal \big(
                \begin{smallmatrix}1 & 1 \\ 1 & 1\end{smallmatrix}
            \big)$
        uniformly for $\lambda \in \partial \mathcal{U}_{\lambda_0}$, where 
        $\rho = \max_{\lambda \in \Ucal_{\lambda_0}} \bigl|\Real \tau (\lambda)\bigr| < \frac12$.

        \item
        As $\lambda \to \lambda_0$
        \begin{equation}
        \label{eq:asymptotics-of-Parametrix-at-lambda-0}
            \mathcal{P} (\lambda)
            = \Big[
                \mathcal{P}_0
                + (\lambda - \lambda_0)
                \Ocal
                \big(
                    \begin{smallmatrix}
                        1 & 1
                        \\
                        1 & 1
                    \end{smallmatrix}
                \big)
            \Big]
            \cdot \big[ \zeta (\lambda) \big]^{\sigma^z \tau (\lambda)}
        \end{equation}
        for a piecewise constant matrix
        $\mathcal{P}_0 \in \mathbb{C}^{2 \times 2}$.
    \end{enumerate}
\end{RHp}

Utilizing the differential equation method~\cite{Its-81, IS-99},
the Riemann--Hilbert Problem~\ref{RHp:Pcal} can be solved exactly
and explicitly in terms of parabolic cylinder functions. We recall
the method in Appendix~\ref{app:parametrix-construction}, where
we derive the following expression for the solution $\mathcal{P}$
of the Riemann--Hilbert Problem~\ref{RHp:Pcal}:
\begin{equation}
    \label{eq:parametrix}
    \mathcal{P} (\lambda)
    = \Psi (\lambda)
    \rmL (\lambda)
    \rme^{ \flatfrac{ \rmi \zeta^2 (\lambda) \sigma^z }{2} }
    \big[ \zeta (\lambda) \big]^{\tau (\lambda) \sigma^z}.
\end{equation}
Here the matrix $\Psi (\lambda)$ is expressed in terms of the
parabolic cylinder function $D_\tau (\lambda)$,
defined in~\eqref{eq:parabolic-cylinder-function-def}
in Appendix~\ref{app:parabolic-cylinder-functions},
\begin{equation}
    \label{eq:Psi}
    \Psi (\lambda)
    = \begin{pmatrix}
        D_{-\tau (\lambda)}
        ( \sqrt2 \rme^{\frac{\pi \rmi }{4}} \zeta (\lambda) )
        &
        (1 - \rmi)
        b_{12} (\lambda)
        D_{\tau (\lambda) - 1}
        ( \sqrt2 \rme^{-\frac{\pi \rmi }{4}} \zeta (\lambda) )
        \\
        (1 + \rmi)
        b_{21} (\lambda)
        D_{-\tau (\lambda) - 1}
        ( \sqrt2 \rme^{\frac{\pi \rmi }{4}} \zeta (\lambda) )
        &
        D_{\tau (\lambda)}
        ( \sqrt2 \rme^{-\frac{\pi \rmi }{4}} \zeta (\lambda) )
        \\
    \end{pmatrix}.
\end{equation}
The functions $b_{12}$ and $b_{21}$ are given by
\begin{equation}
\label{eq:b12-and-b21}
    b_{12} (\lambda)
    = \frac{
        \rmi \sqrt{\pi} 
        \rme^{\flatfrac{\pi \rmi }{4}}
        \ 2^{\tau (\lambda)} }{
        \rme^{\flatfrac{ \pi \rmi \tau (\lambda) }{ 2 }}
        n (\lambda) \Gamma (\tau (\lambda))
    },
    \qquad
    b_{21} (\lambda)
    = \frac{
        \rme^{\flatfrac{ \pi \rmi \tau (\lambda) }{2} }
        n(\lambda )
        \Gamma (\tau (\lambda) + 1)
    }{
        \sqrt{\pi}
        \rme^{\flatfrac{\pi \rmi }{4}}
        2^{\tau (\lambda) + 1}
    },
\end{equation}
and, finally, the matrix $\rmL (\lambda)$ is defined piecewise
in regions $\Vcal_1$, $\Vcal_2$, $\Vcal_3$, and $\Vcal_4^\pm$,
see Figure~\ref{fig:L-matrix},
\begin{subequations}
    \label{eq:piece-wise-matrix-L}
    \begin{align}
        \rmL (\lambda)
        &= 
        \rme^{\frac{\pi \rmi \tau(\lambda)}{4}}
        2^{\frac{\tau(\lambda) \sigma^z}{2} }
        \begin{pmatrix}
            1 & 0\\
            - n(\lambda) & 1
        \end{pmatrix},
        & \lambda \in \Vcal_1,
        \\
        \rmL (\lambda)
        &= \rme^{\frac{\pi \rmi \tau(\lambda)}{4}}
        2^{\frac{\tau(\lambda) \sigma^z}{2} }
        \begin{pmatrix}
            1 & 0\\ 0 & 1
        \end{pmatrix},
        & \lambda \in \Vcal_2,
        \\
        \rmL (\lambda)
        & =
        \rme^{\frac{\pi \rmi \tau(\lambda)}{4}}
        2^{\frac{\tau(\lambda) \sigma^z}{2} }
        \begin{pmatrix}
            1 & m (\lambda)\\
            0 & 1
        \end{pmatrix},
        & \lambda \in \Vcal_3,
        \\
        \rmL (\lambda)
        & =
        \rme^{\frac{\pi \rmi \tau(\lambda)}{4}}
        2^{\frac{\tau(\lambda) \sigma^z}{2} }
        \begin{pmatrix}
            1 & 0\\
            - n(\lambda) \exp(2 \pi \rmi \tau (\lambda)) & 1
        \end{pmatrix},
        & \lambda \in \Vcal_4^-,
        \\
        \rmL (\lambda)
        & =
        \rme^{\frac{\pi \rmi \tau(\lambda)}{4}}
        2^{\frac{\tau(\lambda) \sigma^z}{2} }
        \begin{pmatrix}
            1 & m (\lambda) \exp(2 \pi \rmi \tau (\lambda))\\
            0 & 1
        \end{pmatrix},
        & \lambda \in \Vcal_4^+.
    \end{align}
\end{subequations}
We note as well that it follows from~\eqref{eq:b12-and-b21} that
\begin{equation}
    \label{eq:product-of-b12-and-b21}
    2 b_{12} (\lambda) b_{21} (\lambda)
    = \rmi \tau (\lambda).
\end{equation}
\begin{figure}[ht]
    \centering
    \inputfig{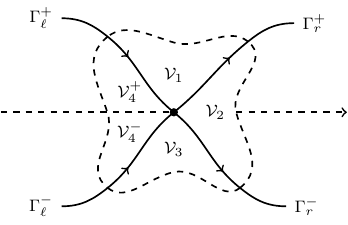}
    \caption{
        Regions $\Vcal_1$, $\Vcal_2$, $\Vcal_3$, $\Vcal_4^\pm$.}
    \label{fig:L-matrix}
\end{figure}

\subsection{Global solution}
\label{sec:Phi}
Finally, taking advantage of the fact that $\Pcal$ and $\Upsilon$
satisfy the same jump condition inside $\Ucal_{\lambda_0}$, we can
construct a Riemann--Hilbert problem with a jump matrix that is
small in~$x$ uniformly on its whole jump contour. Let
\begin{equation}
    \Phi (\lambda)
    =
    \begin{cases}
        \Upsilon (\lambda),
        & \lambda \in \mathbb{C} \backslash \mathcal{U}_{\lambda_0},
        \\[1ex]
        \Upsilon (\lambda) \mathcal{P}^{-1} (\lambda),
        & \lambda \in \mathcal{U}_{\lambda_0},
    \end{cases}
\end{equation}
and
\begin{equation}
    \Gamma_\Phi = (- \partial \Ucal_{\lambda_0})
    \cup (\Gamma_\Upsilon \setminus \Ucal_{\lambda_0}),
\end{equation}
see Figure~\ref{fig:Phi}. For convenience we also
introduce the contours
$\widetilde{\Gamma}_\ell^\pm
    = \Gamma_\ell^\pm \setminus \Ucal_{\lambda_0}$
and $\widetilde{\Gamma}_r^\pm
    = \Gamma_r^\pm \setminus \Ucal_{\lambda_0}$.
Then the matrix $\Phi$ is the unique solution
of the following matrix Riemann--Hilbert problem.
\begin{figure}[th]
    \centering
    \inputfig{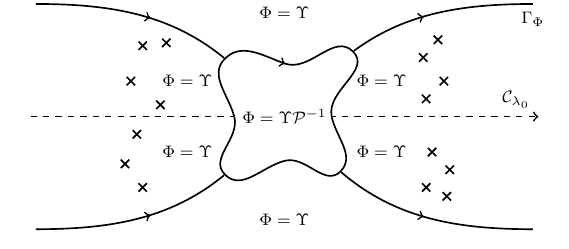}
    \caption{Definition of the matrix $\Phi$
        and of the oriented contour $\Gamma_\Phi$.}
    \label{fig:Phi}
\end{figure}
\begin{RHp}
    \label{RHp:Phi}
    Determine $\Phi (\lambda) \in \mathbb{C}^{2 \times 2}$ such that
    \begin{enumerate}
        \item
        $\Phi (\lambda)$ is analytic in
        $\mathbb{C} \backslash
            \left( \Gamma_\Phi \cup \mathcal{S} \right)$
        and has continuous $\pm$ boundary values on $\Gamma_\Phi$.

        \item
        On the contour $\Gamma_\Phi$ the boundary values
        $\Phi_\pm (\lambda)$ satisfy the jump condition
        $\Phi_- (\lambda) = \Phi_+ (\lambda) \rmG_\Phi (\lambda)$
        with jump matrix $\rmG_\Phi (\lambda)$ given by
        \begin{equation}
        \label{eq:jump-matrix-Phi}
            \rmG_\Phi (\lambda)
            = \begin{cases}
                \mathcal{P}^{-1} (\lambda),
                \quad & \lambda \in - \partial \mathcal{U}_{\lambda_0},
                \\
                \rmM_\ell^+ (\lambda),
                \quad & \lambda \in \widetilde\Gamma_\ell^+,
                \\
                \rmM_r^+ (\lambda),
                \quad & \lambda \in \widetilde\Gamma_r^+,
                \\
                \rmM_\ell^- (\lambda),
                \quad & \lambda \in \widetilde\Gamma_\ell^-,
                \\
                \rmM_r^- (\lambda),
                \quad & \lambda \in \widetilde\Gamma_r^-.
            \end{cases}
        \end{equation}

        \item
        $\Phi (\lambda)
            = \mathrm{I}_2
            + \lambda^{-1}
            \Ocal \big(
                \begin{smallmatrix}1 & 1 \\ 1 & 1\end{smallmatrix}
            \big)$
        as $\lambda \to \infty$
        up to tangential direction to $\Gamma_\Phi$.

        \item
        $\Phi$ satisfies the following regularity conditions
        at the poles $\lambda \in \mathcal{S}$,
        \begin{equation}
        \label{eq:regularity-condition-for-Phi}
            \begin{aligned}
                \Phi (\lambda) \cdot (\rmM_\ell^+)^{-1} (\lambda)
                \text{ is regular at }
                \lambda = \ell_j^+,
                \quad
                j = 1, \dots, n_\ell^+,
                \\
                \Phi (\lambda) \cdot (\rmM_r^+)^{-1} (\lambda)
                \text{ is regular at }
                \lambda = r_j^+,
                \quad
                j = 1, \dots, n_r^+,
                \\
                \Phi (\lambda) \cdot \rmM_\ell^- (\lambda)
                \text{ is regular at }
                \lambda = \ell_j^-,
                \quad
                j = 1, \dots, n_\ell^-,
                \\
                \Phi (\lambda) \cdot \rmM_r^- (\lambda)
                \text{ is regular at }
                \lambda = r_j^-,
                \quad
                j = 1, \dots, n_r^-.
            \end{aligned}
        \end{equation}
    \end{enumerate}
\end{RHp}

We divide the solution of the Riemann--Hilbert
Problem~\ref{RHp:Phi} in two separate steps.
First we deal only with the jump condition
and ignore the conditions at the poles.
That is to say, we seek for a matrix $\Pi (\lambda)$
which is the solution of the following Riemann--Hilbert problem.
\begin{RHp}
    \label{RHp:Pi}
    Determine $\Pi (\lambda) \in \mathbb{C}^{2 \times 2}$ such that
    \begin{enumerate}
        \item
        $\Pi (\lambda)$ is analytic
        in $\mathbb{C} \backslash \Gamma_\Phi$
        and has continuous $\pm$ boundary values on $\Gamma_\Phi$.
        
        \item
        On the contour $\Gamma_\Phi$ the boundary values
        $\Pi_\pm (\lambda)$ satisfy the jump condition
        $\Pi_- (\lambda) = \Pi_+ (\lambda) \rmG_\Phi (\lambda)$
        with jump matrix $\rmG_\Phi (\lambda)$,
        see equation~\eqref{eq:jump-matrix-Phi}.

        \item
        $\Pi (\lambda) = \mathrm{I}_2
            + \lambda^{-1}
            \Ocal \big(
                \begin{smallmatrix}1 & 1 \\ 1 & 1\end{smallmatrix}
            \big)$
        as $\lambda \to \infty$
        up to tangential direction to $\Gamma_\Phi$.
    \end{enumerate}
\end{RHp}
If we multiply the unique solution $\Pi (\lambda)$ of the above
Riemann--Hilbert problem by a matrix $\rmS (\lambda)$ from the
left, which is meromorphic everywhere in the complex plane
except at the poles $\lambda \in \Scal$, then the product
\begin{equation}
\label{eq:Phi}
    \Phi (\lambda)
    = \rmS (\lambda) \Pi (\lambda)
\end{equation}
is the unique solution of the Riemann--Hilbert Problem~\ref{RHp:Phi},
provided it has the required asymptotics for $\lambda \rightarrow
\infty$ and the regularity conditions are fulfilled. As we shall
see below in Section~\ref{sec:contribution-of-poles}, these
requirements determine the form of the matrix $\rmS (\lambda)$
and imply a system of linear equations for its residua that
fixes it completely.

Before continuing with the derivation of these linear equations,
following Beals and Coifman
\cite{BealsCoifmanScatteringInFirstOrderSystemsEquivalenceRHPPRoofRHPUniqueness},
we derive a singular integral equation for the matrix 
$\Pi (\lambda)$ in Section~\ref{sec:Pi} that can be solved by
iteration.

\subsection{Singular integral equation}
\label{sec:Pi}
The ultimate goal behind the chain of transformations that
we have applied to the original Riemann--Hilbert problem for $\chi$
was to construct an equivalent Riemann--Hilbert problem with
jump matrix that is uniformly close to the identity matrix for large $x$. This
has been achieved with the matrix Riemann--Hilbert Problem~\ref{RHp:Pi}
which has a jump matrix that is exponentially close to identity
on $\Gamma_\Upsilon \setminus \Ucal_{\lambda_0}$ and uniformly
close to identity as $\mathrm{I}_2 + \Ocal (x^{-1/2 + \rho})$
on $- \partial \Ucal_{\lambda_0}$. It is one of the essential
ideas of the nonlinear steepest descent method that the large-$x$
asymptotic solution of the Riemann--Hilbert problem can
then be found by solving an equivalent singular integral equation
in terms of its Neumann series~%
\cite{BealsCoifmanScatteringInFirstOrderSystemsEquivalenceRHPSingIntEqnMention,
BealsCoifmanScatteringInFirstOrderSystemsEquivalenceRHPPRoofRHPUniqueness}.

For $f \in L^p (\Gamma_\Phi)$, $1 \le p < + \infty$ let
\begin{equation}
    \rmC_{\Gamma_\Phi} [f] (\lambda) = 
      \int\limits_{\Gamma_\Phi} \frac{\dd{\mu}}{ 2 \pi \rmi }
      \frac{f(\mu)}{\mu - \lambda}
\end{equation}
be the Cauchy transform of $f$ with respect to $\Gamma_\Phi$.
Since $\Gamma_\Phi$ is a union of oriented Lipschitz
curves, $\rmC_{\Gamma_\Phi} [f]$ admits continuous $\pm$
boundary values $\rmC_{\Gamma_\Phi}^{(\pm)} [f]$
on $\Gamma_\Phi$ and defines a continuous map \cite{Calderon77}
$\rmC_{\Gamma_\Phi} [f]: L^p (\Gamma_\Phi) \rightarrow
L^p (\Gamma_\Phi)$, \textit{i.e.}, there exists some
$c > 0$ such that
\begin{equation} \label{eq:bv_cauchy_bv_continuous}
    \bigl\|\rmC_{\Gamma_\Phi}^{(\pm)} [f]\bigr\|_{L^p(\Gamma_\Phi)}
       \le c \|f\|_{L^p(\Gamma_\Phi)}.
\end{equation}
Further observe that, given the ${\rm Mat}_{2 \times 2}
( L^p (\Gamma_\Phi))$ norm
\begin{equation} \label{eq:Lp_matrix_norm}
    \| \rmM \|_p
    = \biggl(
        \sum_{a,b=1,2}
        \| \rmM_{ab}\|_{L^p (\Gamma_\Phi)}^p
    \biggr)^{1/p},
\end{equation}
it readily follows from the explicit form of 
$\rmG_\Phi - \rmI_2$ that, for all $p \ge 1$ up
to $p = + \infty$ and for some $c_p > 0$
\begin{equation}
    \|\rmG_\Phi - \rmI_2\|_p \le c_p x^{\rho - 1/2},
\end{equation}
where $\rho - 1/2 < 0$
as in the Riemann--Hilbert Problem~\ref{RHp:Pcal}.

Consider the singular integral equation
on ${\rm Mat}_{2 \times 2} ( L^p (\Gamma_\Phi))$,
$1 < p < + \infty$,
\begin{equation}
    Q(\lambda) = - \rmC_{\Gamma_\Phi}^{(+)} [\rmG_\Phi - \rmI_2] (\lambda)
       - R[Q] (\lambda),
\end{equation}
where
\begin{equation}
    R[Q] (\lambda) = \rmC_{\Gamma_\Phi}^{(+)} [Q(\rmG_\Phi - \rmI_2)] (\lambda).
\end{equation}
Then \eqref{eq:bv_cauchy_bv_continuous} and \eqref{eq:Lp_matrix_norm}
imply that
\begin{equation}
    \bigl\|\rmC_{\Gamma_\Phi}^{(+)} [\rmG_\Phi - \rmI_2]\bigr\|_p \le
       c \|\rmG_\Phi - \rmI_2\|_p,
\end{equation}
\textit{i.e.}, $\rmC_{\Gamma_\Phi}^{(+)} [\rmG_\Phi - \rmI_2] \in
{\rm Mat}_{2 \times 2} ( L^p (\Gamma_\Phi))$.

Moreover, for every $p > 1$, there exists $c > 0$ such that
\begin{multline}
    \Bigl(
        \bigl\|R[Q]\bigr\|_p
    \Bigr)^p
    = \sum_{a,b=1}^2 
    \Bigl(
        \bigl\|
            \rmC_{\Gamma_\Phi}^{(+)} [(Q(\rmG_\Phi - \rmI_2))_{ab}]
        \bigr\|_{L^p(\Gamma_\Phi)}
    \Bigr)^p \\
    \le c \sum_{a,b=1}^2
    \Bigl(
        \bigl\|
            (Q(\rmG_\Phi - \rmI_2))_{ab}
        \bigr\|_{L^p(\Gamma_\Phi)}
    \Bigr)^p
    \le c \sum_{a,b=1}^2
      \biggl(\sum_{c=1}^2\bigl\|Q_{ac}(\rmG_\Phi - \rmI_2)_{cb}\bigr\|_{L^p(\Gamma_\Phi)}\biggr)^p \\
   \le 2 c \Bigl(\max_{c,b} \bigl\|(\rmG_\Phi - \rmI_2)_{cb}\bigr\|_{L^\infty (\Gamma_\Phi)}\Bigr)^p
       \sum_{a=1}^2
          \biggl(\sum_{c=1}^2\bigl\|Q_{ac}\bigr\|_{L^p(\Gamma_\Phi)}\biggr)^p \\
       \le c\, 2^{p+1} \Bigl(\max_{c,b} \bigl\|(\rmG_\Phi - \rmI_2)_{cb}\bigr\|_{L^\infty (\Gamma_\Phi)}\Bigr)^p
       \sum_{a,c=1}^2
          \biggl(\bigl\|Q_{ac}\bigr\|_{L^p(\Gamma_\Phi)}\biggr)^p.
\end{multline}
Here we have used \eqref{eq:bv_cauchy_bv_continuous} in the
second relation and the Minkowski inequality in the third
relation. Using the bounds on $(\rmG_\Phi - \rmI_2)_{ab}$
we conclude that there exists $c > 0$ such that
\begin{equation}
    \bigl\|R[Q]\bigr\|_p \le \frac{c}{x^{1/2 - \rho}} \|Q\|_p,
\end{equation}
where $\rho < 1/2$. Thus, for $x$ large enough, $\id + R$ is
invertible on ${\rm Mat}_{2 \times 2} ( L^p (\Gamma_\Phi))$,
its inverse computable by the Neumann series. In particular,
there exists a unique solution $Q \in {\rm Mat}_{2 \times 2}
(L^p (\Gamma_\Phi))$ to $(\id + R)[Q] = - 
\rmC_{\Gamma_\Phi}^{(+)} [\rmG_\Phi - \rmI_2]$, and it follows
that $\Pi_+ = \rmI_2 + Q$ satisfies the linear singular integral
equation
\begin{equation}
\label{eq:singular-integral-equation-Pi-plus}
    \Pi_+ (\lambda)
    = \mathrm{I}_2
    - \rmC_{\Gamma_\Phi}^{(+)}
    \left[
        \Pi_+ \left(\rmG_\Phi - \mathrm{I}_2\right)
    \right] (\lambda)
\end{equation}
in which all is well defined, since 
$\rmC_{\Gamma_\Phi}^{(+)} [\rmG_\Phi - \rmI_2] \in
{\rm Mat}_{2 \times 2} (L^p (\Gamma_\Phi))$.

It then follows from standard reasoning that
\begin{equation}
\label{eq:singular-integral-equation}
    \Pi (\lambda)
    = \mathrm{I}_2
    - \int\limits_{\Gamma_\Phi}
    \frac{\dd{\mu}}{ 2 \pi \rmi }
    \frac{
        \Pi_+ (\mu) \left(\rmG_\Phi (\mu)
        - \mathrm{I}_2\right) }{
        \mu - \lambda },
    \qquad
    \lambda \in \mathbb{C} \backslash \Gamma_\Phi,
\end{equation}
is the unique solution of the Riemann--Hilbert Problem~\ref{RHp:Pi}.

We will derive a recursive expression for an asymptotic solution $\Pi$
in terms of the parametrix $\Pcal$ in Section~\ref{sec:asymptotics-Pi}.

\subsection{Contribution of the poles}
\label{sec:contribution-of-poles}
In order for the Riemann--Hilbert Problem~\ref{RHp:Phi}
to have a solution $\Phi (\lambda)$ of the form~\eqref{eq:Phi},
the matrix $\rmS (\lambda)$ must be of the form
\begin{equation}
    \label{eq:matrix-S}
    \rmS (\lambda)
    = \mathrm{I}_2
    + \sum\limits_{\mu \in \Scal} \frac{\rmS_\mu}{\lambda - \mu},
\end{equation}
where the $2 \times 2$ matrices $\rmS_\mu$
do not depend on $\lambda$.
The regularity conditions~\eqref{eq:regularity-condition-for-Phi}
lead to a linear system of equations
for these matrices, which we derive
in Appendix~\ref{app:derivation-of-the-linear-system}.

We note that 
\begin{equation}
\label{eq:det-S}
    \det \rmS (\lambda) = 1.
\end{equation}
To see this, first notice that $\det \Phi (\lambda) = 1$,
since all transformation matrices connecting $\Phi$ with the
solution $\chi$ of the original Riemann--Hilbert
Problem~\ref{RHp:chi} have determinant equal to one, which also holds
for $\chi$ itself. Moreover, $\det \Pi (\lambda) = 1$,
because, as for $\chi$, we have $\det \rmG_\Phi (\lambda)
= 1$ for $\lambda \in \Gamma_\Phi$, and the asymptotic
condition on $\Pi (\lambda)$ for $\lambda \to \infty$. Thus,
$\det \rmS (\lambda) = 1$ by~\eqref{eq:Phi}.

The system of linear equations that determines the matrices $\rmS_\mu$
takes a simpler form, if we combine the poles in the regions
above (resp.\ below) the contour $\Ccal_{\lambda_0}$
to the left of the saddle point with the poles below
(resp.\ above) the contour $\Ccal_{\lambda_0}$
to the right of the saddle point in one set,
which we denote $\Scal^+$ (resp.\ $\Scal^-$).
In other words,
\begin{equation}
    \Scal^\pm
    = \Lcal^\pm \cup \Rcal^\mp,
\end{equation}
see the definitions of the sets~\eqref{eq:sets-of-poles}.
The cardinality of the sets $\Scal^\pm$
are denoted as $n^\pm = n_\ell^\pm + n_r^\mp$.

Then the regularity conditions~\eqref{eq:regularity-condition-for-Phi}
imply the following representation of the matrix $\rmS (\lambda)$:
\begin{equation}
\label{eq:matrix-S-new}
    \rmS (\lambda)
    = \mathrm{I}_2
    + \sum\limits_{j = 1}^{n^+}
    \frac{ \sigma_j^+ }{\lambda - s_j^+}
    (\mathbf{0}, \mathbf{y}_j) \Pi^{-1} (s_j^+)
    + \sum\limits_{j = 1}^{n^-}
    \frac{ \sigma_j^- }{\lambda - s_j^-}
    (\mathbf{x}_j, \mathbf{0}) \Pi^{-1} (s_j^-),
\end{equation}
where $s_j^\pm$ for $j = 1, \dots, n^\pm$
are the poles belonging to the set $\Scal^\pm$,
and the two-dimensional vectors $\mathbf{x}$ and $\mathbf{y}$
are the solutions of a linear system of $n^+ + n^-$ 
algebraic equations having the $2 \times 2$ block form
\begin{equation}
\label{eq:SLE-new}
\left\{
\begin{aligned}
    \mathbf{y}_j
    &= \mathbf{w}_j
    + \sum\limits_{\substack{k = 1\\k \neq j}}^{n^+}
    \frac{
        \sigma_k^+ \big( P_{kj}^{++} \big)_{21} }{
        s_j^+ - s_k^+ }
    \mathbf{y}_k
    + \sum\limits_{k = 1}^{n^-}
    \frac{
        \sigma_k^- \big( P_{kj}^{-+} \big)_{11} }{
        s_j^+ - s_k^- }
    \mathbf{x}_k,
    &&
    \qquad
    j = 1, \dots, n^+,
    \\
    \mathbf{x}_j
    &= \mathbf{v}_j
    + \sum\limits_{k = 1}^{n^+}
    \frac{
        \sigma_k^+ \big( P_{kj}^{+-} \big)_{22} }{
        s_j^- - s_k^+ }
    \mathbf{y}_k
    + \sum\limits_{\substack{k = 1\\k \neq j}}^{n^-}
    \frac{
        \sigma_k^- \big( P_{kj}^{--} \big)_{12} }{
        s_j^- - s_k^- }
    \mathbf{x}_k,
    &&
    \qquad
    j = 1, \dots, n^-.
\end{aligned}
\right.
\end{equation}
Here the vectors $\mathbf{w}_j$ and $\mathbf{v}_j$
are defined by
\begin{equation}
\label{eq:vectors-v-and-w}
    \mathbf{w}_j
    = \begin{pmatrix}
        \Pi_{11} (s_j^+) \\ \Pi_{21} (s_j^+)
    \end{pmatrix},
    \qquad
    \mathbf{v}_j
    = \begin{pmatrix}
        \Pi_{12} (s_j^-) \\ \Pi_{22} (s_j^-)
    \end{pmatrix},
\end{equation}
and the coefficients $\sigma^\pm$ by
\begin{equation}
\label{eq:coefficients-sigma}
    \sigma_j^+
    = \frac{ h_j^+ }{
        1 - h_{j}^+
        \big( Q_{jj}^{++} \big)_{21} },
    \qquad
    \sigma_j^-
    = \frac{ h_j^- }{
        1 - h_{j}^-
        \big( Q_{jj}^{--} \big)_{12} },
\end{equation}
where we introduced two types of coefficients $h^\pm$
which are (up to their sign)
the corresponding residues of the off-diagonal matrix elements
of the matrices $\rmM_\ell^\pm$ and $\rmM_r^\pm$,
see equations~\eqref{eq:matrices-M-left} and~\eqref{eq:matrices-M-right},
\begin{equation}
\label{eq:residues-h}
\begin{aligned}
    h_j^+
    =
    e^{- 2} (s_j^+)
    & \times
    \begin{cases}
        \res\limits_{\lambda = s_j^+}
        Q_\ell^+ (\lambda),
        \qquad & s_j^+ \in \Lcal^+,
        \\
        - \res\limits_{\lambda = s_j^+}
        Q_r^- (\lambda),
        \qquad & s_j^+ \in \Rcal^-,
    \end{cases}
    \\
    h_j^-
    =
    e^2 (s_j^-)
    & \times
    \begin{cases}
        - \res\limits_{\lambda = s_j^-}
        Q_\ell^- (\lambda),
        \qquad & s_j^- \in \Lcal^-,
        \\
        \res\limits_{\lambda = s_j^-}
        Q_r^+ (\lambda),
        \qquad & s_j^- \in \Rcal^+.
    \end{cases}
\end{aligned}
\end{equation}
Above we used the following short-hand notation for the elements
of products of the matrix $\Pi^{-1} (\lambda)$ with $\Pi (\lambda)$,
$\Pi' (\lambda)$:
\begin{equation}
\label{eq:matrix-elements-P-and-Q}
\begin{aligned}
    \big( P_{jk}^{\epsilon \epsilon'} \big)_{m n}
    & = \big(
        \Pi^{-1} (s_j^\epsilon) \Pi (s_k^{\epsilon'})
    \big)_{mn},
    \qquad
    && m, n = 1, 2,
    \quad
    && \epsilon, \epsilon' = \pm,
    \\
    \big( Q_{jk}^{\epsilon \epsilon'} \big)_{mn}
    & = \big(
        \Pi^{-1} (s_j^\epsilon) \Pi' (s_k^{\epsilon'})
    \big)_{mn},
    \qquad
    && m, n = 1, 2,
    \quad
    && \epsilon, \epsilon' = \pm.
\end{aligned}
\end{equation}

\subsection{Preparation for the asymptotic analysis}
\label{sec:deformation-of-the-contour}
Now we turn back to Proposition~\ref{prop:log-der},
see equation~\eqref{eq:prop:log-der},
and derive expressions
for the logarithmic derivatives of the Fredholm determinant
that are directly suitable for the asymptotic analysis. For this
purpose we substitute the chain of transformations that
connects the solution $\chi$ of the original Riemann--Hilbert
Problem~\ref{RHp:chi} with that of the final Riemann--Hilbert
Problem~\ref{RHp:Phi} into equation~\eqref{eq:prop:log-der}
while deforming the integration contour appropriately
in order to finally arrive at
\begin{proposition}
\label{prop:log-der-Fredholm-for-AE}
    If $\det_{\Ccal_{\lambda_0}} (\id + \rmV) \neq 0$,
    then the logarithmic derivative of the Fredholm determinant
    of the integrable integral operator $\mathrm{V}$ with
    kernel~\eqref{eq:kernel-V-as-scalar-product}
    with respect to the parameters $\beta = x, \lambda_0$
    admits the representation
    \begin{multline}
    \label{eq:prop:log-der-Fredholm-for-AE}
        \partial_\beta \ln \det_{\Ccal_{\lambda_0}}
        \big( \id + \mathrm{V} \big)
        = a_\beta (x, \lambda_0)
        \\
        - \int\limits_{\gamma_0}
        \frac{\dd{\lambda}}{ 2 \pi \rmi }
        \tr\{ \Pi' (\lambda) \sigma^z \Pi^{-1} (\lambda)\}
        d_\beta (\lambda)
        - \int\limits_{\gamma_0}
        \frac{\dd{\lambda}}{ 2 \pi \rmi }
        \tr\{
            \rmS' (\lambda) \Pi (\lambda)
            \sigma^z
            \Pi^{-1} (\lambda) \rmS^{-1} (\lambda)
        \}
        d_\beta (\lambda)
        \\
        - \sum\limits_{\mu \in \mathcal{S}}
        \res\limits_{\lambda = \mu}
        \left(
            \tr\{
                \rmS' (\lambda) \Pi (\lambda)
                \sigma^z
                \Pi^{-1} (\lambda) \rmS^{-1} (\lambda)
            \}
            d_\beta (\lambda)
        \strut\right)
        + \Ocal (\rme^{- m x}),
    \end{multline}
    where
    \begin{equation}
    \label{eq:prop:function-a}
        a_\beta (x, \lambda_0)
        = 2 \int\limits_{\Ccal_{\lambda_0}}
        \dd{\lambda}
        \Lcal (\lambda | \lambda_0) d_\beta' (\lambda)
    \end{equation}
    by definition. Here
    $\Lcal(\cdot|\lambda_0)$ was defined in~\eqref{eq:definition-of-Lcal}
    and $d_\beta$ in~\eqref{eq:function-d}.
    The matrix $\Pi$ is the unique solution
    of the Riemann--Hilbert Problem~\ref{RHp:Pi},
    and $\rmS (\lambda)$ is the matrix accounting for the contribution
    of the poles in the set $\Scal$ to the asymptotics.
    It is expressed in terms of the solution
    of a corresponding system of linear algebraic equations 
    derived in Section~\ref{sec:contribution-of-poles}.
    The integration contour $\gamma_0$
    is shown in Figure~\ref{fig:contour-gamma-0-thm}.
    The corrections are determined by
    \begin{equation}
        m = \min\{
            m_\ell^+, m_r^+, m_\ell^-, m_r^-
        \},
    \end{equation}
    where
    \begin{equation}
    \begin{aligned}
        m_\ell^+
        & = \inf\limits_{\lambda \in \gamma_\ell^+} \Imag( u (\lambda) ),
        \qquad
        & m_r^+
        & = \inf\limits_{\lambda \in \gamma_r^+} \Imag( - u (\lambda) ),
        \\
        m_\ell^-
        & = \inf\limits_{\lambda \in \gamma_\ell^-} \Imag( - u (\lambda) ),
        \qquad
        & m_r^-
        & = \inf\limits_{\lambda \in \gamma_r^-} \Imag( u (\lambda) ),
    \end{aligned}
    \end{equation}
    see Figure~\ref{fig:contour-gamma-0-thm}.
\end{proposition}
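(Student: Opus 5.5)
Starting from Proposition~\ref{prop:log-der}, I would unravel the chain of transformations $\chi \mapsto \widetilde\chi \mapsto \Xi \mapsto \Upsilon \mapsto \Phi = \rmS\Pi$ inside the trace. The first step uses \eqref{eq:chi-tilde}: writing $\chi = \widetilde\chi(\rmI_2 + \rmC\sigma^+)$, we get
\begin{equation*}
\tr\{\chi'(\lambda) \sigma^z \chi^{-1}(\lambda) + 2\rmC(\lambda)\chi'(\lambda)\sigma^+\chi^{-1}(\lambda)\}
= \tr\{\widetilde\chi'(\lambda)\sigma^z\widetilde\chi^{-1}(\lambda)\} + \text{(terms with } \rmC').
\end{equation*}
Since $\rmC' $ enters only through $\sigma^+$ sandwiched with $\sigma^z$ and $\sigma^+$, those terms are traceless, and the integrand collapses to $\tr\{\widetilde\chi'\sigma^z\widetilde\chi^{-1}\}\, d_\beta$. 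Next, substituting $\widetilde\chi = \Xi\,\alpha^{-\sigma^z}$ gives
\begin{equation*}
\tr\{\Xi'\sigma^z\Xi^{-1}\} - 2\,\frac{\alpha'}{\alpha}.
\end{equation*}
The scalar piece $-2\alpha'/\alpha\cdot d_\beta$, integrated over the loop $\Gamma(\Ccal_{\lambda_0})$, can be collapsed onto $\Ccal_{\lambda_0}$. Using $\ln\alpha = \int \Lcal/(\mu-\lambda)$, the jump of $\partial_\lambda\ln\alpha$ and an integration by parts produce $a_\beta = 2\int_{\Ccal_{\lambda_0}}\Lcal\, d_\beta'$. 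The factor $\rme^{-\eta\lambda^2}$ ensures convergence at infinity and is then removed as $\eta\to0^+$. I would check the sign from the orientation of $\Gamma$ against the jump $\alpha_+/\alpha_-$.

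For the matrix piece, $\Xi$ has no jump away from $\Ccal_{\lambda_0}$ except through the construction of $\Upsilon$. I would deform $\Gamma(\Ccal_{\lambda_0})$ into the contour $\gamma_0$ of Figure~\ref{fig:contour-gamma-0-thm}. This contour is a small loop around $\lambda_0$ inside $\Ucal_{\lambda_0}$, together with branches $\gamma_{\ell/r}^\pm$ lying beyond $\Gamma_{\ell/r}^\pm$. In the outer regions $\Upsilon = \Xi$, and on $\gamma_0$ we have $\Phi = \Upsilon$ (outside $\Ucal_{\lambda_0}$). During the deformation we cross the regions between $\Ccal_{\lambda_0}$ and $\Gamma^\pm_{\ell/r}$. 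There $\Xi = \Upsilon\,(\rmM^\pm)^{\mp1}$, and
\begin{equation*}
\tr\{(\rmM^\pm)'\sigma^z(\rmM^\pm)^{-1}\} = 0
\end{equation*}
because $\rmM^\pm$ is unipotent triangular. The difference between $\Xi$ and $\Upsilon$ in the trace is therefore harmless, up to the poles of $\rmM^\pm$ in $\Scal$. Passing the contour across those poles produces the residue sum. Then $\Phi = \rmS\Pi$ yields
\begin{equation*}
\tr\{\Phi'\sigma^z\Phi^{-1}\} = \tr\{\Pi'\sigma^z\Pi^{-1}\} + \tr\{\rmS'\Pi\sigma^z\Pi^{-1}\rmS^{-1}\},
\end{equation*}
giving the two $\gamma_0$ integrals.

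The remaining error term arises as follows. Far from $\lambda_0$ the contour $\gamma_0$ runs along the regions where $e^{\pm2}$ is exponentially small. The parts of the deformed loop where $\Xi$ and $\Upsilon$ differ by $\rmM - \rmI_2$ only contribute terms bounded by $\rme^{-x\,\Imag(\pm u)}$, which gives $\Ocal(\rme^{-mx})$ with $m$ as stated.

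The main obstacle is the bookkeeping at the saddle point $\lambda_0$ and at infinity. Near $\lambda_0$, $\Xi$ carries the singular factor $(\lambda-\lambda_0)^{\tau\sigma^z}$, so the loop must be kept around $\lambda_0$ rather than collapsed. One has to verify that no extra contribution arises from the local behaviour \eqref{eq:asymptotics-of-Upsilon-at-lambda-0}, which is integrable since $|\Real\tau|<1/2$. At infinity, the $\eta$-regularization must be handled uniformly during the deformation, using Assumption~\ref{item:assumption-1} for the decay of the off-diagonal terms.
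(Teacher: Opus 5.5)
Your skeleton is the paper's: the $\rmC\sigma^+$ gauge, the scalar $\alpha$-term giving $a_\beta$, residues at $\Scal$, and the split via $\Phi=\rmS\Pi$. There is, however, a genuine gap in the passage from $\Xi$ to $\Upsilon$ and $\Phi$, and this is exactly the step that produces the $\Ocal(\rme^{-mx})$ term.

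The vanishing of $\tr\{\rmM'\sigma^z\rmM^{-1}\}$ does not make the two traces agree in the lens regions. For $\rmM=\rmI_2+f\sigma^\pm$ one has $\rmM\sigma^z\rmM^{-1}=\sigma^z\mp2f\sigma^\pm$, hence $\tr\{(\Upsilon\rmM)'\sigma^z(\Upsilon\rmM)^{-1}\}=\tr\{\Upsilon'\sigma^z\Upsilon^{-1}\}\mp2f\,\tr\{\Upsilon'\sigma^\pm\Upsilon^{-1}\}$. In your first step, the extra $2\rmC\sigma^+$ in the original integrand is what makes the conjugated matrix collapse to $\sigma^z$; in the lenses nothing compensates.

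Here $f=e^{\mp2}Q$, with $|e^{\pm2}|=|\rme^{\mp g}|=\Ocal(1)$ on the real axis and no decay near $\lambda_0$. A contour sweeping through the lens regions therefore yields no exponentially small bound, so your later claim of a bound by $\rme^{-x\Imag(\pm u)}$ is not supported by your geometry. Moreover, a loop around $\Ccal_{\lambda_0}$ cannot be shrunk towards $\lambda_0$ at all while the integrand is written via $\Xi$, which jumps on all of $\Ccal_{\lambda_0}$.

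What works is the opposite order:
1. Push $\Gamma(\Ccal_{\lambda_0})$ outward to a loop beyond the lenses, where $\Xi=\Upsilon=\Phi$ exactly.
2. Since $\Phi$ has no jump on $\Ccal_{\lambda_0}$ outside $\Ucal_{\lambda_0}$, collapse that loop onto $\Gamma_\Phi$. Subtract small loops around the lens regions containing points of $\Scal$; these give the residue sum.
3. On each $\widetilde\Gamma^\pm_{\ell/r}$, the difference of the two boundary values of the integrand is precisely the surplus $\mp2e^{\mp2}Q\,\tr\{\Phi_+'\sigma^\pm\Phi_+^{-1}\}$. This continues analytically to the $+$ side and can be moved to $\gamma^\pm_{\ell/r}$, where $|e^{\mp2}|=\Ocal(\rme^{-mx})$; this is what defines $m$.

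Two further corrections. First, $\gamma_0$ must encircle $\Ucal_{\lambda_0}$ from the outside, not be a small loop inside it as you state. Inside, $\Phi=\Upsilon\Pcal^{-1}$ is analytic, because the singular factors $(\lambda-\lambda_0)^{\tau\sigma^z}$ and $\zeta^{\tau\sigma^z}$ cancel. A loop there therefore contributes nothing in terms of $\Phi$, while in terms of $\Upsilon=\rmS\Pi\Pcal$ the integrand involves $\Pcal$ and is not of the stated form. The $\gamma_0$-integrals are nonzero only because $\Pi$ jumps by $\Pcal^{-1}$ across $\partial\Ucal_{\lambda_0}$. Its exterior values, whose asymptotic coefficients are rational with poles at $\lambda_0$, do not continue analytically into $\Ucal_{\lambda_0}$. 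So the obstacle you single out, integrability of $(\lambda-\lambda_0)^{\tau\sigma^z}$, never arises in the matrix part.

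Second, $\lambda_0$ does need care in the scalar term. Integrate by parts on the closed loop \emph{before} collapsing onto $\Ccal_{\lambda_0}$, since $\partial_\lambda\ln\alpha\sim\tau(\lambda)/(\lambda-\lambda_0)$ is not integrable along $\Ccal_{\lambda_0}$. If you collapse first using the jump $2\pi\rmi\Lcal'$, you must keep a small circle at $\lambda_0$. Its contribution, proportional to $\tau(\lambda_0)d_\beta(\lambda_0)$, cancels the boundary term from the subsequent integration by parts of $\int\Lcal' d_\beta$.

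Finally, note that $\tr\{\Pi'\sigma^z\Pi^{-1}\}d_\beta$ is regular at $\Scal$, which is why only the $\rmS$-term enters the residue sum.
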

\begin{figure}[ht]
    \centering
    \inputfig{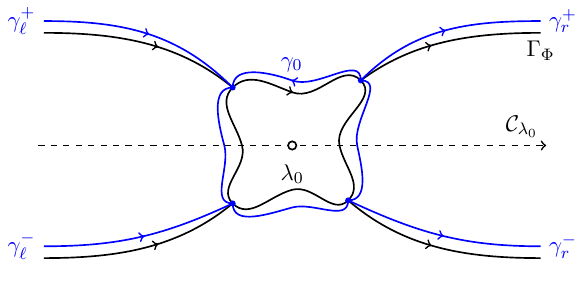}
    \caption{The initial integration contour $\Ccal_{\lambda_0}$ (dashed),
        the jump contour $\Gamma_\Phi$ (black)
        and the integration contour $\gamma_0$ (blue)
        and contours $\gamma_\ell^\pm$ and $\gamma_r^\pm$ (also blue).}
    \label{fig:contour-gamma-0-thm}
\end{figure}

\subsubsection{Proof of Proposition~\ref{prop:log-der-Fredholm-for-AE}}
In order to derive \eqref{eq:prop:log-der-Fredholm-for-AE}, we successively
insert the chain of transformations of the solution $\chi$ of the initial
Riemann--Hilbert problem that was constructed in the previous sections into
\eqref{eq:prop:log-der} and deform the integration contour 
$\Gamma (\Ccal_{\lambda_0})$ to our convenience. In
Figure~\ref{fig:expression-for-chi} we show how in the different regions
$\chi$ is expressed in terms of:
\begin{enumerate}
    \item
    the Cauchy transform of $e^{-2} (\lambda)$
    along the contour $\Ccal_{\lambda_0}$;

    \item
    the solution $\alpha$
    of the scalar Riemann--Hilbert Problem~\ref{RHp:alpha};
    
    \item
    the upper- and lower-triangular matrices $\rmM_\ell^\pm$ and $\rmM_r^\pm$,
    see equations~\eqref{eq:matrices-M-left} and~\eqref{eq:matrices-M-right};

    \item
    the parametrix $\Pcal$,
    the solution of the local Riemann--Hilbert Problem~\ref{RHp:Pcal};

    \item
    the solution $\Pi$ of the Riemann--Hilbert Problem~\ref{RHp:Pi}
    or, equivalently, of the linear singular integral
    equation~\eqref{eq:singular-integral-equation};

    \item
    the matrix $\rmS$
    accounting for the contribution of the poles,
    see expression~\eqref{eq:matrix-S-new},
    which is determined by the linear system of equations~\eqref{eq:SLE-new}.
\end{enumerate}
\begin{figure}[ht]
    \centering
    \inputfig{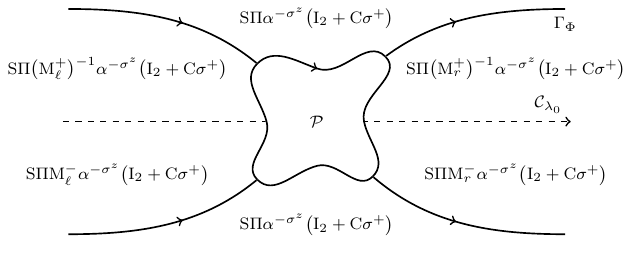}
    \caption{Expressions for the solution $\chi$
        of the initial Riemann--Hilbert Problem~\ref{RHp:chi}
        in terms of the data
        of the transformed Riemann--Hilbert problems
        in all regions
        except for the vicinity of the saddle point $\lambda_0$,
        where the parametrix is constructed.}
    \label{fig:expression-for-chi}
\end{figure}

\paragraph{$\bullet$ Contribution of the scalar Riemann--Hilbert problem}
We first substitute equation~\eqref{eq:chi-tilde} into \eqref{eq:prop:log-der}.
Then
\begin{equation}
\label{eq:log-der-Fredholm-Xi-tilde}
    \partial_\beta \ln \det_{\mathcal{C}_{\lambda_0}}
    (\id + \mathrm{V})
    =
    - \int\limits_{\Gamma (\mathcal{C}_{\lambda_0})}
    \eval{
        \frac{ \dd{\lambda} }{2 \pi \rmi}
        \tr\left\{
            \widetilde{\chi}' (\lambda)
            \sigma^z
            \widetilde{\chi}^{-1} (\lambda)
        \right\}
        d_\beta (\lambda)
        \rme^{- \eta \lambda^2}
    }_{\eta = 0+}.
\end{equation}

Taking into account equation~\eqref{eq:Xi}, the trace under the
integral~\eqref{eq:log-der-Fredholm-Xi-tilde} takes the form
\begin{equation}
    \tr\left\{
        \widetilde{\chi}' (\lambda)
        \sigma^z
        \widetilde{\chi}^{-1} (\lambda)
    \right\}
    \\
    =
    - 2 \partial_\lambda \ln \alpha (\lambda)
    + \tr\left\{
        \Xi' (\lambda)
        \sigma^z
        \Xi^{-1} (\lambda)
    \right\}.
\end{equation}
The integral over the first term on the right-hand side
of this equation can be evaluated as
\begin{multline}
\label{eq:function-a-def}
    2 \int\limits_{\Gamma(\Ccal_{\lambda_0})}
    \frac{ \dd{\lambda} }{2 \pi \rmi}
    \eval{
        \bigl(\partial_\lambda \ln  \alpha (\lambda)\bigr)
        d_\beta (\lambda)
        \rme^{- \eta \lambda^2}
    }_{\eta = 0+} = -
    2 \int\limits_{\Gamma(\Ccal_{\lambda_0})}
    \frac{ \dd{\lambda} }{2 \pi \rmi}
    \eval{
        \ln  \alpha (\lambda)\,
        d_\beta' (\lambda)
        \rme^{- \eta \lambda^2}
    }_{\eta = 0+}
    \\
    =
    2 \int\limits_{\Ccal_{\lambda_0}}
    \frac{ \dd{\lambda} }{2 \pi \rmi}
    \eval{
        \ln \biggl(\frac{ \alpha_+ (\lambda) }{ \alpha_- (\lambda)}\biggr)
        d_\beta' (\lambda)
        \rme^{- \eta \lambda^2}
    }_{\eta = 0+}
    = 2
    \int\limits_{\Ccal_{\lambda_0}}
    \dd{\lambda}
    \Lcal (\lambda | \lambda_0)
    \, d_\beta' (\lambda) =
    a_\beta (x, \lambda_0).
\end{multline}
Here we have integrated by parts in the first equation, have
used the jump condition~\eqref{eq:jump-condition-alpha} and
the definition~\eqref{eq:definition-of-Lcal}
of the function $\Lcal(\cdot|\lambda_0)$
in the third equation,
and have used the definition \eqref{eq:prop:function-a} of
$a_\beta (x,\lambda_0)$ in the fourth equation. It follows that
\begin{equation}
\label{eq:log-der-Fredholm-Xi}
    \partial_\beta \ln \det_{\Ccal_{\lambda_0}}
    \big( \id + \rmV \big)
    = a_\beta (x, \lambda_0)
    - \int\limits_{\Gamma(\Ccal_{\lambda_0})}
    \eval{
        \frac{ \dd{\lambda} }{2 \pi \rmi}
        \tr\left\{
            \Xi' (\lambda) \sigma^z \Xi^{-1} (\lambda)
        \right\}
        d_\beta (\lambda) \rme^{- \eta \lambda^2}
    }_{\eta = 0+}.
\end{equation}

Since the matrix $\Xi (\lambda)$ is analytic
for $\lambda \in \mathbb{C} \backslash \Ccal_{\lambda_0}$,
we can deform the integration contour $\Gamma (\Ccal_{\lambda_0})$
in the integral involving $\Xi (z)$
on the right-hand side of equation~\eqref{eq:log-der-Fredholm-Xi}
into a wider contour $\Gamma' (\Ccal_{\lambda_0})$
as shown in Figure~\ref{fig:contour-Gamma'}.
\begin{figure}[ht]
    \centering
    \resizebox{\textwidth}{!}{
        \inputfig{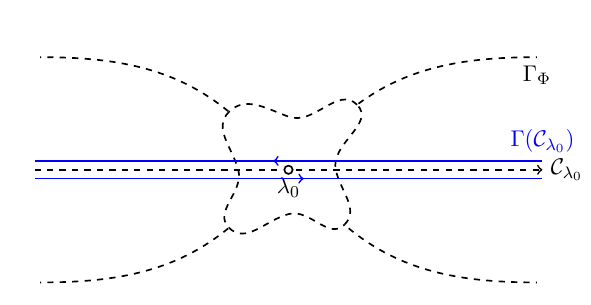}%
        \inputfig{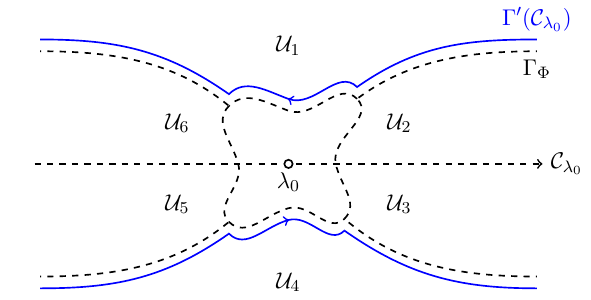}%
    }
    \caption{Deformation of the contour
        $\Gamma (\Ccal_{\lambda_0})$ into $\Gamma' (\Ccal_{\lambda_0})$
        for the integral involving the matrix $\Xi$.}
    \label{fig:contour-Gamma'}
\end{figure}

Therefore, we get
\begin{equation}
    \label{eq:log-der-Fredholm-Xi-Gamma-prime}
    \partial_\beta \ln \det_{\Ccal_{\lambda_0}}
    \big( \id + \rmV \big)
    = a_\beta (x, \lambda_0)
    - \int\limits_{\Gamma' (\Ccal_{\lambda_0})}
     \eval{
        \frac{ \dd{\lambda} }{2 \pi \rmi}
        \tr\left\{
            \Xi' (\lambda) \sigma^z \Xi^{-1} (\lambda)
        \right\}
        d_\beta (\lambda) \rme^{- \eta \lambda^2}
    }_{\eta = 0+}.
\end{equation}
Now we substitute $\Xi \to \Upsilon \to \Phi$,
see Sections~\ref{sec:Upsilon}--\ref{sec:Phi}.
These substitutions are trivial in the regions
containing the integration contour $\Gamma'$,
see Figures~\ref{fig:Upsilon-def}, \ref{fig:Phi}
and~\ref{fig:contour-Gamma'}.
Hence,
\begin{equation}
    \label{eq:log-der-Fredholm-Phi}
    \partial_\beta \ln \det_{\Ccal_{\lambda_0}}
    (\id + \rmV)
    = a_\beta (x, \lambda_0)
    - \int\limits_{\Gamma' (\Ccal_{\lambda_0})}
    \eval{
        \frac{ \dd{\lambda} }{2 \pi \rmi}
        \tr\left\{
            \Phi' (\lambda) \sigma^z \Phi^{-1} (\lambda)
        \right\}
        d_\beta (\lambda) \rme^{- \eta \lambda^2}
    }_{\eta = 0+}.
\end{equation}
We recall that the matrix $\Phi (z)$
does not have a jump across the contour $\Ccal_{\lambda_0}$,
but has poles in the regions $\Ucal_2$, $\Ucal_3$, $\Ucal_5$,
and $\Ucal_6$, see Figure~\ref{fig:contour-Gamma'}.

\paragraph{$\bullet$ Contribution of the poles
    and the solution of the singular integral equation}
Now we add and subtract to the expression~\eqref{eq:log-der-Fredholm-Phi}
the following integrals around the poles,
\begin{equation}
    \int\limits_{\partial \Ucal_v}
    \eval{
        \frac{ \dd{\lambda} }{2 \pi \rmi}
        \tr\left\{
            \Phi' (\lambda) \sigma^z \Phi^{-1} (\lambda)
        \right\}
        d_\beta (\lambda) \rme^{- \eta \lambda^2}
    }_{\eta = 0+}
    = \sum\limits_{\mu \in \Scal \cap \Ucal_v}
    \res\limits_{\lambda = \mu}
    \left(
        \tr\left\{
            \Phi' (\lambda) \sigma^z \Phi^{-1} (\lambda)
        \right\}
        d_\beta (\lambda)
    \strut\right)
\end{equation}
for $v \in \{ 2, 3, 5, 6 \}$,
where $\partial \Ucal_v$ is the positively oriented 
boundary of the region $\Ucal_v$, see Figure~\ref{fig:contours-U}.
Then we get
\begin{multline}
    \label{eq:log-der-Fredholm-Phi-modified}
    \partial_\beta \ln \det_{\Ccal_{\lambda_0}}
    \big( \id + \rmV \big)
    = a_\beta (x, \lambda_0)
    - \sum\limits_{\mu \in \Scal}
    \res\limits_{\lambda = \mu}
    \left(
        \tr\left\{
            \Phi' (\lambda) \sigma^z \Phi^{-1} (\lambda)
        \right\}
        d_\beta (\lambda)
    \strut\right)
    \\
    - \int\limits_{\Gamma' (\Ccal_{\lambda_0})}
    \eval{
        \frac{ \dd{\lambda} }{2 \pi \rmi}
        \tr\left\{
            \Phi' (\lambda) \sigma^z \Phi^{-1} (\lambda)
        \right\}
        d_\beta (\lambda) \rme^{- \eta \lambda^2}
    }_{\eta = 0+}
    \\
    + \sum\limits_{v \in \{ 2, 3, 5, 6 \}}
    \int\limits_{\partial \Ucal_v}
    \eval{
        \frac{ \dd{\lambda} }{2 \pi \rmi}
        \tr\left\{
            \Phi' (\lambda) \sigma^z \Phi^{-1} (\lambda)
        \right\}
        d_\beta (\lambda) \rme^{- \eta \lambda^2}
    }_{\eta = 0+}
\end{multline}
\begin{figure}[t]
    \centering
    \inputfig{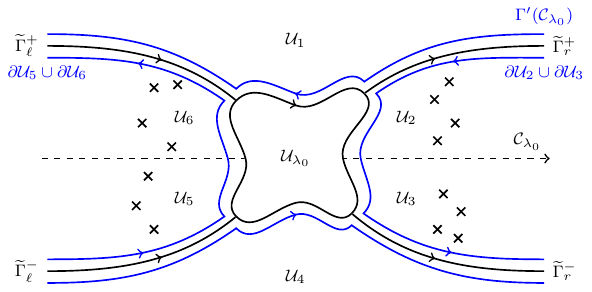}
    \caption{The contours $\partial \Ucal_v$ are the 
        positively oriented contours
        along the boundaries of the corresponding region $\Ucal_v$,
        $v = 2, 3, 5, 6$.}
    \label{fig:contours-U}
\end{figure}

Next we combine the integrals together.
For example, the combination of the integrals
along the contour $\widetilde{\Gamma}_r^+$,
see Figure~\ref{fig:contours-U}, is given by
\begin{equation}
    \int\limits_{\widetilde{\Gamma}_r^+}
    \eval{
        \frac{ \dd{\lambda} }{2 \pi \rmi}
        \left[
            \tr\left\{
                \Phi_-' (\lambda) \sigma^z \Phi_-^{-1} (\lambda)
            \right\}
            - \tr\left\{
                \Phi_+' (\lambda) \sigma^z \Phi_+^{-1} (\lambda)
            \right\}
        \strut\right]
        d_\beta (\lambda) \rme^{- \eta \lambda^2}
    }_{\eta = 0+}.
\end{equation}
Using the jump condition for the matrix $\Phi$
on the jump contour $\widetilde{\Gamma}_r^+$,
see equations~\eqref{eq:jump-matrix-Phi}
and~\eqref{eq:matrices-M-right},
we get
\begin{multline}
    \tr\left\{
        \Phi_-' (\lambda) \sigma^z \Phi_-^{-1} (\lambda)
    \right\}
    - \tr\left\{
        \Phi_+' (\lambda) \sigma^z \Phi_+^{-1} (\lambda)
    \right\}
    \\
    = \tr\left\{
        \Phi_+' (\lambda)
        \left( \mathrm{I}_2 + e^2 (\lambda) Q_r^+ (\lambda) \sigma^- \right)
        \sigma^z
        \left( \mathrm{I}_2 - e^2 (\lambda) Q_r^+ (\lambda) \sigma^- \right)
        \Phi_+^{-1} (\lambda)
    \right\}
    \\
    + \left(e^2 (\lambda) Q_r^+ (\lambda)\right)'
    \tr\left\{
        \sigma^-
        \sigma^z
        \left( \mathrm{I}_2 - e^2 (\lambda) Q_r^+ (\lambda) \sigma^- \right)
    \right\}
    - \tr\left\{
        \Phi_+' (\lambda) \sigma^z \Phi_+^{-1} (\lambda)
    \right\}
    \\
    = 2 e^2 (\lambda) Q_r^+ (\lambda)
    \tr\left\{
        \Phi_+' (\lambda) \sigma^- \Phi_+^{-1} (\lambda)
    \right\}.
\end{multline}
Here the first trace in the third line is zero,
and in the second equation
we used the commutation relations of the Pauli matrices.

Performing similar calculations for $\lambda \in \widetilde{\Gamma}_\ell^\pm$
and $\widetilde{\Gamma}_r^-$,
we obtain
\begin{multline}
\label{eq:difference-of-traces-with-Phi}
    \tr\left\{
        \Phi_-' (\lambda) \sigma^z \Phi_-^{-1} (\lambda)
    \right\}
    - \tr\left\{
        \Phi_+' (\lambda) \sigma^z \Phi_+^{-1} (\lambda)
    \right\}
    \\
    =
    \begin{cases}
        - 2 e^{-2} (\lambda) Q_\ell^+ (\lambda)
        \tr\left\{
            \Phi_+' (\lambda) \sigma^+ \Phi_+^{-1} (\lambda)
        \right\},
        & \quad
        \lambda \in \widetilde{\Gamma}_\ell^+,\\
        2 e^2 (\lambda) Q_r^+ (\lambda)
        \tr\left\{
            \Phi_+' (\lambda) \sigma^- \Phi_+^{-1} (\lambda)
        \right\},
        & \quad
        \lambda \in \widetilde{\Gamma}_r^+,\\
        2 e^2 (\lambda) Q_\ell^- (\lambda)
        \tr\left\{
            \Phi_+' (\lambda) \sigma^- \Phi_+^{-1} (\lambda)
        \right\},
        & \quad
        \lambda \in \widetilde{\Gamma}_\ell^-,\\
        - 2 e^{-2} (\lambda) Q_r^- (\lambda)
        \tr\left\{
            \Phi_+' (\lambda) \sigma^+ \Phi_+^{-1} (\lambda)
        \right\},
        & \quad
        \lambda \in \widetilde{\Gamma}_r^-.\\
    \end{cases}
\end{multline}
Here the expressions on the right-hand side
are given in terms of the boundary value $\Phi_+$
which can be analytically continued
to the region to the positive side
of the corresponding contour $\widetilde{\Gamma}$.
Therefore,
we can deform the integration contours $\widetilde{\Gamma}_\ell^\pm$
and $\widetilde{\Gamma}_r^\pm$
into contours $\gamma_\ell^\pm$ and $\gamma_r^\pm$,
see Figure~\ref{fig:contour-gamma-and-gamma-0},
as long as they do not cross the poles of $Q_\ell^\pm$ and $Q_r^\pm$.
\begin{figure}[t]
    \centering
    \resizebox{\textwidth}{!}{%
        \inputfig{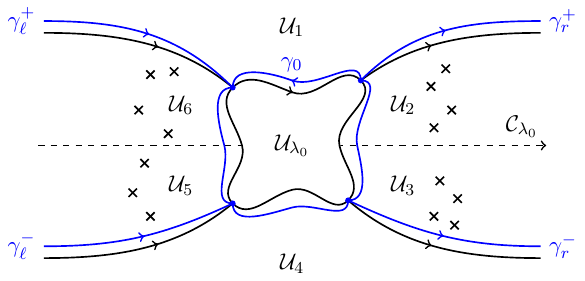}%
        \inputfig{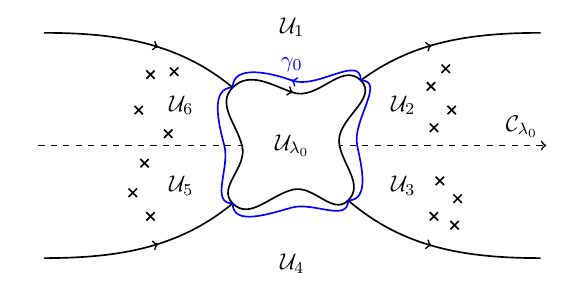}%
    }
    \caption{Contours $\gamma$ (on the left) and $\gamma_0$ (on the right).}
    \label{fig:contour-gamma-and-gamma-0}
\end{figure}

Then the integrals in~\eqref{eq:log-der-Fredholm-Phi-modified}
combine into
\begin{multline}
    \label{eq:contour-integral-combination}
    \int\limits_{\Gamma' (\Ccal_{\lambda_0})}
    \eval{
        \frac{ \dd{\lambda} }{2 \pi \rmi}
        \tr\left\{
            \Phi' (\lambda) \sigma^z \Phi^{-1} (\lambda)
        \right\}
        d_\beta (\lambda) \rme^{- \eta \lambda^2}
    }_{\eta = 0+}
    \\
    - \sum\limits_{v \in \left\{ 2, 3, 5, 6 \right\}}
    \int\limits_{\partial \Ucal_v}
    \eval{
        \frac{ \dd{\lambda} }{2 \pi \rmi}
        \tr\left\{
            \Phi' (\lambda) \sigma^z \Phi^{-1} (\lambda)
        \right\}
        d_\beta (\lambda)
        \rme^{- \eta \lambda^2}
    }_{\eta = 0+}
    \\
    =
    \int\limits_\gamma
    \eval{
        \frac{ \dd{\lambda} }{ 2 \pi \rmi }
        G_\beta (\lambda) \tr\left\{
            \Phi' (\lambda) \sigma(\lambda) \Phi^{-1} (\lambda)
        \right\}
         \rme^{- \eta \lambda^2}
    }_{\eta = 0+},
\end{multline}
where the function $G_\beta (\lambda)$ is given by
\begin{multline}
    G_\beta (\lambda)
    = d_\beta (\lambda)
    \Big[
        \indicator{\gamma_0} (\lambda)
        - 2 e^{-2} (\lambda) Q_\ell^+ (\lambda)
        \indicator{\gamma_\ell^+} (\lambda)
        + 2 e^2 (\lambda) Q_r^+ (\lambda)
        \indicator{\gamma_r^+} (\lambda)
        \\
        + 2 e^2 (\lambda) Q_\ell^- (\lambda)
        \indicator{\gamma_\ell^-} (\lambda)
        - 2 e^{-2} (\lambda) Q_r^- (\lambda)
        \indicator{\gamma_r^-} (\lambda)
    \Big],
\end{multline}
and the matrix $\sigma (\lambda)$ is defined as
\begin{equation}
    \sigma (\lambda)
    =
    \sigma^z \ \indicator{\gamma_0} (\lambda)
    + \sigma^+ \ \indicator{\gamma_\ell^+} (\lambda)
    + \sigma^- \ \indicator{\gamma_r^+} (\lambda)
    + \sigma^- \ \indicator{\gamma_\ell^-} (\lambda)
    + \sigma^+ \ \indicator{\gamma_r^-} (\lambda),
\end{equation}
see equation~\eqref{eq:difference-of-traces-with-Phi}.
The contour
$\gamma = \gamma_0
\cup \gamma_\ell^+ \cup \gamma_r^+
\cup \gamma_\ell^- \cup \gamma_r^-$
is shown in Figure~\ref{fig:contour-gamma-and-gamma-0}.

The integrals along the contours $\gamma_\ell^\pm$
and $\gamma_r^\pm$
are of order $\Ocal \big(\rme^{ - m_\ell^\pm x} \big)$
and~$\Ocal \big(\rme^{ - m_r^\pm x} \big)$, respectively,
due to the factors $e^{\pm 2} (\lambda)$,
where
\begin{equation}
\begin{aligned}
    m_\ell^+
    & = \inf\limits_{\lambda \in \gamma_\ell^+} \Imag( u (\lambda) ),
    \qquad
    & m_r^+
    & = \inf\limits_{\lambda \in \gamma_r^+} \Imag( - u (\lambda) ),
    \\
    m_\ell^-
    & = \inf\limits_{\lambda \in \gamma_\ell^-} \Imag( - u (\lambda) ),
    \qquad
    & m_r^-
    & = \inf\limits_{\lambda \in \gamma_r^-} \Imag( u (\lambda) ).
\end{aligned}
\end{equation}
They can be combined into a remainder of the form 
$\Ocal (\rme^{- m x})$ with $m$ given by
\begin{equation}
    m = \min\{
        m_\ell^+, m_r^+, m_\ell^-, m_r^-
    \}.
\end{equation}
Hence, the leading contributions to the large-$x$ asymptotics
will come from the integral over $\gamma_0$,
see Figure~\ref{fig:contour-gamma-and-gamma-0}.
Moreover, we do not need the regularization parameter $\eta$ any longer,
since the integration contour $\gamma_0$ is compact.

Finally, we have the following expression
for the logarithmic derivative of the Fredholm determinant
\begin{multline}
    \label{eq:log-der-Fredholm-prefinal}
    \partial_\beta \ln \det_{\Ccal_{\lambda_0}}
    \big( \id + \rmV \big)
    = a_\beta (x, \lambda_0)
    - \int\limits_{\gamma_0}
    \frac{ \dd{\lambda} }{ 2 \pi \rmi }
    \tr\left\{
        \Phi' (\lambda) \sigma^z \Phi^{-1} (\lambda)
    \right\}
    d_\beta (\lambda)
    \\
    - \sum\limits_{\mu \in \Scal}
    \res\limits_{\lambda = \mu}
    \left(
        \tr\left\{
            \Phi' (\lambda) \sigma^z \Phi^{-1} (\lambda)
        \right\}
        d_\beta (\lambda)
    \strut\right)
    + \Ocal(\rme^{- m x}).
\end{multline}
Substituting $\Phi (\lambda) = \rmS (\lambda) \Pi (\lambda)$,
we obtain
\begin{equation}
    \tr\{ \Phi' (\lambda) \sigma^z \Phi^{-1} (\lambda)\}
    = \tr\{
        \rmS' (\lambda) \Pi (\lambda)
        \sigma^z
        \Pi^{-1} (\lambda) \rmS^{-1} (\lambda)
    \}
    + \tr\{
        \Pi' (\lambda) \sigma^z \Pi^{-1} (\lambda)
    \},
\end{equation}
from which we derive the expression~\eqref{eq:prop:log-der-Fredholm-for-AE}
in Proposition~\ref{prop:log-der-Fredholm-for-AE}.

\subsection{Analysis of the pole contributions}
\label{sec:analysis-of-pole-contribution}
Next, we derive an explicit expression
for the direct contribution of the poles
to the Fredholm determinant asymptotics,
see the last term on the right-hand side
of equation~\eqref{eq:prop:log-der-Fredholm-for-AE}
in Proposition~\ref{prop:log-der-Fredholm-for-AE}
or equation~\eqref{eq:log-der-Fredholm-prefinal}.
The result is shown
in Proposition~\ref{prop:contribution-of-poles}.

First, we derive an expression for
\begin{equation}
\label{eq:trace-with-S}
    \tr\{
        \rmS^{-1} (\lambda) \rmS' (\lambda)
        \Pi (\lambda) \sigma^z \Pi^{-1} (\lambda)
    \},
\end{equation}
Since $\det \rmS (\lambda) = 1$,
we have $\rmS^{-1} (\lambda) = \sigma^y \rmS^\intercal (\lambda) \sigma^y$,
and it follows from the expression~\eqref{eq:matrix-S-new} for $\rmS (\lambda)$
and from the fact that $\det \Pi (\lambda) = 1$ that
\begin{equation}
\label{eq:matrix-S-new-inverse}
    \rmS^{-1} (\lambda)
    = \mathrm{I}_2
    + \sum\limits_{j = 1}^{n^+}
    \frac{ \sigma_j^+ }{\lambda - s_j^+}
    \Pi (s_j^+)
    \sigma^y
    (\mathbf{0}, \mathbf{y}_j)^\intercal
    \sigma^y
    + \sum\limits_{j = 1}^{n^-}
    \frac{ \sigma_j^- }{\lambda - s_j^-}
    \Pi (s_j^-)
    \sigma^y
    (\mathbf{x}_j, \mathbf{0})^\intercal
    \sigma^y.
\end{equation}
Then the matrix $\rmS^{-1} (\lambda) \rmS' (\lambda)$
can be expressed as
\begin{multline}
\label{eq:S-inv-S-prime}
    \rmS^{-1} (\lambda)
    \rmS' (\lambda)
    = - \sum\limits_{j = 1}^{n^+}
    \frac{ \sigma_j^+ }{ (\lambda - s_j^+)^2 }
    \mathbf{y}_j (0, 1) \Pi^{-1} (s_j^+)
    - \sum\limits_{j = 1}^{n^-}
    \frac{ \sigma_j^- }{ (\lambda - s_j^-)^2 }
    \mathbf{x}_j (1, 0) \Pi^{-1} (s_j^-)
    \\
    -
    \Bigg[
        \sum\limits_{j = 1}^{n^+}
        \frac{ \sigma_j^+ }{\lambda - s_j^+}
        \Pi (s_j^+)
        \sigma^y
        (0, 1)^\intercal \mathbf{y}_j^\intercal
        \sigma^y
        + \sum\limits_{j = 1}^{n^-}
        \frac{ \sigma_j^- }{\lambda - s_j^-}
        \Pi (s_j^-)
        \sigma^y
        (1, 0)^\intercal \mathbf{x}_j^\intercal
        \sigma^y
    \Bigg]
    \\
    \times
    \Bigg[
        \sum\limits_{j = 1}^{n^+}
        \frac{ \sigma_j^+ }{ (\lambda - s_j^+)^2 }
        \mathbf{y}_j (0, 1) \Pi^{-1} (s_j^+)
        + \sum\limits_{j = 1}^{n^-}
        \frac{ \sigma_j^- }{ (\lambda - s_j^-)^2 }
        \mathbf{x}_j (1, 0) \Pi^{-1} (s_j^-)
    \Bigg],
\end{multline}
where we also decomposed the matrices as
\begin{equation}
    (\mathbf{0}, \mathbf{y}_j)
    = \mathbf{y}_j \cdot (0, 1),
    \qquad
    (\mathbf{x}_j, 0)
    = \mathbf{x}_j \cdot (1, 0).
\end{equation}

Let us analyse the pole structure
of $\rmS^{-1} (\lambda) \rmS' (\lambda)$.
The coefficient in front of $(\lambda - s_j^+)^{-3}$ is
\begin{equation}
    - (\sigma_j^+)^2
    \Pi (s_j^+)
    \sigma^y
    (0, 1)^\intercal
    \mathbf{y}_j^\intercal
    \sigma^y
    \mathbf{y}_j
    (0, 1)
    \Pi^{-1} (s_j^+)
    = 0,
\end{equation}
since $\mathbf{y}_j^\intercal \sigma^y \mathbf{y}_j = 0$, and
a similar argument holds for the other potential third order pole.
Hence, the poles in the expression~\eqref{eq:S-inv-S-prime}
are at most of second order.
Accordingly, there are two types of contributions
to the sum of the residues at the poles
in the expression for the logarithmic derivative
of the Fredholm determinant~\eqref{eq:prop:log-der-Fredholm-for-AE},
stemming from first and second order poles, respectively.

Before we derive an expression for the sum over the poles,
we shall make a useful statement concerning the contributions
from the second order poles.
\begin{proposition}
\label{prop:residue-form}
    Let $\rmS (\lambda)$ be given by~\eqref{eq:matrix-S-new},
    where the vectors $\mathbf{y}_j$, $j = 1, \dots, n^+$,
    and $\mathbf{x}_j$, $j = 1, \dots, n^-$
    are the solutions of the linear system~\eqref{eq:SLE-new}.
    Then, for $\mu \in \Scal$, we have the identity
    \begin{multline}
    \label{eq:prop:residue-form}
        \tr\Big\{
            \lim\limits_{\lambda \to \mu}
            \left(
                \rmS^{-1} (\lambda) \rmS' (\lambda) (\lambda - \mu)^2
            \right)
            \eval{
                \partial_{\lambda}
                \left(
                    \Pi (\lambda) \sigma^z \Pi^{-1} (\lambda)
                    d_\beta (\lambda)
                 \right)
            }_{\lambda = \mu}
        \Big\}
        \\
        = 
        \tr\Big\{
            \lim\limits_{\lambda \to \mu}
            \left(
                \rmS^{-1} (\lambda) \rmS' (\lambda) (\lambda - \mu)^2
            \right)
            \eval{
                \partial_{\lambda}
                \left(
                    \Pi (\lambda) \sigma^z \Pi^{-1} (\lambda)
                 \right)
            }_{\lambda = \mu}
        \Big\}
        d_\beta (\mu).
    \end{multline}
\end{proposition}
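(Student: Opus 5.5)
The identity is equivalent, after the product rule, to the vanishing of
\begin{equation*}
    \tr\Big\{
        \rmS_{-2}(\mu)\, \Pi (\mu) \sigma^z \Pi^{-1} (\mu)
    \Big\}\, d_\beta' (\mu),
    \qquad
    \rmS_{-2}(\mu) = \lim\limits_{\lambda \to \mu}
    \left( \rmS^{-1} (\lambda) \rmS' (\lambda) (\lambda - \mu)^2 \right).
\end{equation*}
So the plan is to show that $\tr\{\rmS_{-2}(\mu)\Pi(\mu)\sigma^z\Pi^{-1}(\mu)\}=0$ for every $\mu\in\Scal$. The first step is to compute $\rmS_{-2}(\mu)$ from \eqref{eq:S-inv-S-prime}. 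We take $\mu=s_j^+$; the case $\mu=s_j^-$ is symmetric, with $(0,1)$, $\mathbf{y}_j$ replaced by $(1,0)$, $\mathbf{x}_j$. The coefficient of $(\lambda-s_j^+)^{-2}$ receives three contributions: the explicit term $-\sigma_j^+\mathbf{y}_j(0,1)\Pi^{-1}(s_j^+)$; the first-order-pole factor $\frac{\sigma_j^+}{\lambda-s_j^+}\Pi(s_j^+)\sigma^y(0,1)^\intercal\mathbf{y}_j^\intercal\sigma^y$ evaluated against the regular part, at $s_j^+$, of the second bracket; and the regular part, at $s_j^+$, of the first bracket multiplied by $\sigma_j^+\mathbf{y}_j(0,1)\Pi^{-1}(s_j^+)$.

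The key structural observation is that every one of these contributions has the form $\mathrm{A}\,\mathbf{y}_j(0,1)\Pi^{-1}(s_j^+)$ or $\Pi(s_j^+)\sigma^y(0,1)^\intercal\mathbf{y}_j^\intercal\sigma^y\,\mathrm{B}$. Cyclicity of the trace then reduces each term against $\Pi\sigma^z\Pi^{-1}$ to a scalar.
\begin{itemize}
    \item For the explicit term we get $(0,1)\Pi^{-1}\Pi\sigma^z\Pi^{-1}\cdots\mathbf{y}_j$, which equals $(0,1)\sigma^z\Pi^{-1}(s_j^+)\mathbf{y}_j$, that is, $-(\Pi^{-1}(s_j^+)\mathbf{y}_j)_2$.
    \item It is therefore natural to use the linear system \eqref{eq:SLE-new} together with $\mathbf{w}_j=\Pi(s_j^+)\mathbf{e}_1$. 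This gives $\Pi^{-1}(s_j^+)\mathbf{y}_j=\mathbf{e}_1+\sum_k(\dots)\Pi^{-1}(s_j^+)\Pi(s_k^\pm)\mathbf{y}_k$ (or $\mathbf{x}_k$), so its second component is expressed through the entries $(P^{\epsilon\epsilon'}_{jk})_{21}$ and $(P^{\epsilon\epsilon'}_{jk})_{22}$.
    \item We use $\sigma^y\mathbf{v}\mathbf{v}^\intercal\sigma^y$-type identities, specifically $\mathbf{y}^\intercal\sigma^y\mathbf{y}=0$ and $\sigma^y M^\intercal\sigma^y=M^{-1}$ for unimodular $M$, to rewrite the terms involving $\Pi(s_j^+)\sigma^y(0,1)^\intercal\mathbf{y}_j^\intercal\sigma^y$ in the same variables.
\end{itemize}

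The next step is to write out the cross terms explicitly; this is where I expect the main obstacle. Using \eqref{eq:S-inv-S-prime} and \eqref{eq:matrix-S-new-inverse}, each cross term with index $k\neq j$ produces a factor $\frac{\sigma_k}{s_j^+-s_k}$ times a bilinear expression in $\mathbf{y}_j$, $\mathbf{y}_k$ (or $\mathbf{x}_k$) and $\Pi^{-1}(s_k)\Pi(s_j^+)$. These expressions are exactly the sums appearing in the $j$-th equation of \eqref{eq:SLE-new}. I expect the sum of all contributions to telescope into
\begin{equation*}
    \pm\,\sigma_j^+\,(0,1)\,\sigma^z\,\Pi^{-1}(s_j^+)\big(\mathbf{y}_j-\mathbf{w}_j-\textstyle\sum_k\cdots\big),
\end{equation*}
plus a multiple of $\mathbf{y}_j^\intercal\sigma^y\mathbf{y}_j$. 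The first piece vanishes by the linear system and the second vanishes identically. The delicate parts are the bookkeeping of signs and the self-term $k=j$, which must be handled through the definition \eqref{eq:coefficients-sigma} of $\sigma_j^+$ involving $(Q^{++}_{jj})_{21}$.

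If the direct telescoping proves messy, the fallback is more conceptual. Using $\Phi=\rmS\Pi$, the regularity conditions \eqref{eq:regularity-condition-for-Phi} near $s_j^+$ show that $\Phi\,(\rmM^{\pm})^{\mp1}$ is regular. This yields a local representation $\rmS(\lambda)\Pi(\lambda)=H(\lambda)\big(\mathrm{I}_2-\frac{h_j^+}{\lambda-s_j^+}\sigma^\pm\big)$ with $H$ holomorphic. Consequently $\rmS^{-1}\rmS'$ near $s_j^+$ equals $\Pi N^{-1}H^{-1}(HN)'\Pi^{-1}-\Pi'\Pi^{-1}$, where $N$ is the nilpotent factor. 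Its double-pole coefficient is then $\Pi(s_j^+)\,c\,\sigma^\pm\,\Pi^{-1}(s_j^+)$ for some scalar $c$, because $N^{-1}N'=\frac{h}{(\lambda-s)^2}\sigma^\pm$ exactly. Finally, $\tr\{\Pi\sigma^\pm\Pi^{-1}\Pi\sigma^z\Pi^{-1}\}=\tr\{\sigma^\pm\sigma^z\}=0$. I would in fact lead with this argument, as it is cleaner, and use the explicit computation only as a check.
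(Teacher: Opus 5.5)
Your reduction to showing $\tr\{\rmS_{-2}(\mu)\Pi(\mu)\sigma^z\Pi^{-1}(\mu)\}=0$ is the paper's, but from there your route is genuinely different.

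\textbf{The paper's route.} It computes explicitly:
\begin{itemize}
\item it reads off the double-pole coefficient at $s_j^+$ from \eqref{eq:S-inv-S-prime} as $-\sigma_j^+\bigl[\mathrm{I}_2+\sum_k(\dots)\bigr]\mathbf{y}_j(0,1)\Pi^{-1}(s_j^+)$, where the bracket is the part of $\rmS^{-1}$ regular at $s_j^+$;
\item it contracts this with $\Pi\sigma^z\Pi^{-1}$ by cyclicity and writes $(0,1)\Pi^{-1}(s_j^+)\mathbf{y}_j=\rmi\mathbf{w}_j^\intercal\sigma^y\mathbf{y}_j$;
\item it uses the symmetry $P^{\epsilon\epsilon'}_{jk}=\sigma^y(P^{\epsilon'\epsilon}_{kj})^\intercal\sigma^y$ to recognise the right-hand side of the first block of \eqref{eq:SLE-new}, which leaves $\rmi\sigma_j^+\mathbf{y}_j^\intercal\sigma^y\mathbf{y}_j=0$.
\end{itemize}

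\textbf{Your route.} The factorisation $\rmS\Pi=HN$ near $s$, with $H$ holomorphic and invertible and $N$ unipotent along a fixed $\sigma^\pm$, is a correct and more conceptual alternative. It explains why the double-pole coefficient is always $\Pi(s)\,c\,\sigma^\pm\,\Pi^{-1}(s)$, and hence trace-orthogonal to $\Pi(s)\sigma^z\Pi^{-1}(s)$. It needs neither the explicit inverse of $\rmS$ nor the $P$-symmetry, and it gives the absence of third-order poles for free. In exchange, the paper's route never leaves the algebraic data of the linear system. Its trace manipulations are also exactly those reused in the appendix to evaluate all the residues in Proposition~\ref{prop:contribution-of-poles}.

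Three points need tightening.
\begin{itemize}
\item \emph{The form of the double-pole coefficient.} Your stated reason, $N^{-1}N'$, is incomplete. The term $\Pi N^{-1}KN\Pi^{-1}$ with $K=H^{-1}H'$ also has a double pole. For $N=\mathrm{I}_2+\epsilon\sigma^+$ with $\epsilon=h/(\lambda-s)$, one has $N^{-1}KN=K+\epsilon[K,\sigma^+]-\epsilon^2K_{21}\sigma^+$. This contribution is again along $\sigma^\pm$ only because $\sigma^+K\sigma^+=K_{21}\sigma^+$ (respectively $\sigma^-K\sigma^-=K_{12}\sigma^-$). The result is $c=-h-h^2K_{21}(s)$.
\item \emph{The regularity conditions.} The proposition assumes only that $\mathbf{x}_j,\mathbf{y}_j$ solve \eqref{eq:SLE-new}, so you must justify that \eqref{eq:regularity-condition-for-Phi} then holds. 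It does, because the appendix derivation is reversible. The vanishing of the $(\lambda-s)^{-2}$ and $(\lambda-s)^{-1}$ Laurent coefficients of $\rmS\Pi\,(\rmM^{\pm})^{\mp1}$ is equivalent to the column structure of the residues of $\rmS$ together with \eqref{eq:SLE-new}, as long as the denominators of the $\sigma_j^\pm$ are nonzero. Invertibility of $H$ then follows from $\det\rmS=1$, cf.\ \eqref{eq:det-S}, which the paper's proof also uses.
\item \emph{Your explicit fallback.} The product of the simple-pole term $\frac{\sigma_j^+}{\lambda-s_j^+}\Pi(s_j^+)\sigma^y(0,1)^\intercal\mathbf{y}_j^\intercal\sigma^y$ with the part of the second bracket regular at $s_j^+$ gives only a simple pole. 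It therefore does not belong in the double-pole coefficient. Also, $(Q^{++}_{jj})_{21}$ never enters; $\sigma_j^+$ is merely an overall factor.
\end{itemize}
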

\begin{proof}
    We prove this identity only for
    $\mu = s_j^+$, $j = 1, \dots, n^+$,
    since the proof for $\mu = s_j^-$, $j = 1, \dots, n^-$
    is similar. In particular,
    we show that the trace in front of the term with $d_\beta' (\mu)$
    in~\eqref{eq:prop:residue-form} is zero.

    Consider the coefficient
    in front of the second order pole at $s_j^+$
    in the expression~\eqref{eq:S-inv-S-prime},
    \begin{multline}
        - \sigma_j^+
        \mathbf{y}_j (0, 1) \Pi^{-1} (s_j^+)
        - \sigma_j^+
        \sum\limits_{\substack{k = 1\\ k \neq j}}^{n^+}
        \frac{ \sigma_k^+ }{s_j^+ - s_k^+}
        \Pi (s_k^+)
        \sigma^y
        (0, 1)^\intercal \mathbf{y}_k^\intercal
        \sigma^y
        \mathbf{y}_j (0, 1) \Pi^{-1} (s_j^+)
        \\
        - \sigma_j^+
        \sum\limits_{k = 1}^{n^-}
        \frac{ \sigma_k^- }{s_j^+ - s_k^-}
        \Pi (s_k^-)
        \sigma^y
        (1, 0)^\intercal \mathbf{x}_k^\intercal
        \sigma^y
        \mathbf{y}_j (0, 1) \Pi^{-1} (s_j^+).
    \end{multline}
    Multiplying this expression by
    $\Pi (s_j^+) \sigma^z \Pi^{-1} (s_j^+)$
    from the right and taking the trace, we get
    \begin{multline}
    \label{eq:prop-residue-form-expression}
        \tr\Big\{
            \Big(
                \lim\limits_{\lambda \to \mu}
                \rmS^{-1} (\lambda) \rmS' (\lambda)
                (\lambda - \mu)^2
            \Big)
            \Pi (\mu) \sigma^z \Pi^{-1} (\mu)
        \Big\}
        =
        - \sigma_j^+
        \tr\Big\{
            \mathbf{y}_j (0, 1) \sigma^z \Pi^{-1} (s_j^+)
        \Big\}
        \\
        - \sigma_j^+
        \sum\limits_{\substack{k = 1\\ k \neq j}}^{n^+}
        \frac{ \sigma_k^+ }{s_j^+ - s_k^+}
        \tr\Big\{
            \Pi (s_k^+)
            \sigma^y
            (0, 1)^\intercal \mathbf{y}_k^\intercal
            \sigma^y
            \mathbf{y}_j (0, 1) \sigma^z \Pi^{-1} (s_j^+)
        \Big\}
        \\
        - \sigma_j^+
        \sum\limits_{k = 1}^{n^-}
        \frac{ \sigma_k^- }{s_j^+ - s_k^-}
        \tr\Big\{
            \Pi (s_k^-)
            \sigma^y
            (1, 0)^\intercal \mathbf{x}_k^\intercal
            \sigma^y
            \mathbf{y}_j (0, 1) \sigma^z \Pi^{-1} (s_j^+)
        \Big\}.
    \end{multline}
    Using the cyclicity of the trace we obtain
    \begin{equation}
    \label{eq:cyclicity-of-trace}
    \begin{aligned}
        \tr\Big\{
            \Pi (s_k^+)
            \sigma^y
            (0, 1)^\intercal \mathbf{y}_k^\intercal
            \sigma^y
            \mathbf{y}_j (0, 1) \sigma^z \Pi^{-1} (s_j^+)
        \Big\}
        & = \rmi \big( P_{jk}^{++} \big)_{21}
        \mathbf{y}_k^\intercal
        \sigma^y
        \mathbf{y}_j,
        \\
        \tr\Big\{
            \Pi (s_k^-)
            \sigma^y
            (1, 0)^\intercal \mathbf{x}_k^\intercal
            \sigma^y
            \mathbf{y}_j (0, 1) \sigma^z \Pi^{-1} (s_j^+)
        \Big\}
        & = - \rmi \big( P_{jk}^{+-} \big)_{22}
        \mathbf{x}_k^\intercal
        \sigma^y
        \mathbf{y}_j,
    \end{aligned}
    \end{equation}
    and therefore expression~\eqref{eq:prop-residue-form-expression} reads
    \begin{multline}
        \tr\Big\{
            \Big(
                \lim\limits_{\lambda \to \mu}
                \rmS^{-1} (\lambda) \rmS' (\lambda)
                (\lambda - \mu)^2
            \Big)
            \Pi (\mu) \sigma^z \Pi^{-1} (\mu)
        \Big\}
        =
        \sigma_j^+
        (0, 1) \Pi^{-1} (s_j^+) \mathbf{y}_j
        \\
        - \rmi \sigma_j^+
        \sum\limits_{\substack{k = 1\\ k \neq j}}^{n^+}
        \frac{
            \sigma_k^+
            \big( P_{jk}^{++} \big)_{21} }{
            s_j^+ - s_k^+ }
        \mathbf{y}_k^\intercal
        \sigma^y
        \mathbf{y}_j
        + \rmi \sigma_j^+
        \sum\limits_{k = 1}^{n^-}
        \frac{
            \sigma_k^-
            \big( P_{jk}^{+-} \big)_{22} }{
            s_j^+ - s_k^- }
        \mathbf{x}_k^\intercal
        \sigma^y
        \mathbf{y}_j.
    \end{multline}
    Now, the coefficients $P$,
    cf.\ equation~\eqref{eq:matrix-elements-P-and-Q},
    exhibit the symmetry $P_{jk}^{\epsilon \epsilon'} = 
    \sigma^y (P_{kj}^{\epsilon' \epsilon})^\intercal \sigma^y$,
    and the two sums on the right-hand side are exactly the same
    as those in the first set of linear equations~\eqref{eq:SLE-new}.
    Moreover, the first term is nothing but
    \begin{equation}
    \label{eq:trace-with-W}
        (0, 1) \Pi^{-1} (s_j^+) \mathbf{y}_j
        = (- \Pi_{21} (s_j^+), \Pi_{11} (s_j^+))
        \mathbf{y}_j
        = \rmi \mathbf{w}_j^\intercal
        \sigma^y
        \mathbf{y}_j.
    \end{equation}
    Hence,
    \begin{multline}
        \tr\Big\{
            \Big(
                \lim\limits_{\lambda \to \mu}
                \rmS^{-1} (\lambda) \rmS' (\lambda)
                (\lambda - \mu)^2
            \Big)
            \Pi (\mu) \sigma^z \Pi^{-1} (\mu)
        \Big\}
        \\
        = \rmi \sigma_j^+
        \Bigg[
            \mathbf{w}_j^\intercal
            + \sum\limits_{\substack{k = 1\\ k \neq j}}^{n^+}
            \frac{
                \sigma_k^+
                \big( P_{kj}^{++} \big)_{21} }{
                s_j^+ - s_k^+ }
            \mathbf{y}_k^\intercal
            + \sum\limits_{k = 1}^{n^-}
            \frac{
                \sigma_k^-
                \big( P_{kj}^{-+} \big)_{11} }{
                s_j^+ - s_k^- }
            \mathbf{x}_k^\intercal
        \Bigg]
        \sigma^y
        \mathbf{y}_j
        = 
        \rmi \sigma_j^+
        \mathbf{y}_j^\intercal
        \sigma^y
        \mathbf{y}_j
        = 0,
    \end{multline}
    which completes the proof.
\end{proof}

It is now straightforward to derive the direct contribution of
the poles to the logarithmic derivative of the Fredholm determinant
in terms of the vectors $\mathbf{y}_j$ and $\mathbf{x}_j$,
which are the solutions of the linear system~\eqref{eq:SLE-new}.
\begin{proposition}
    \label{prop:contribution-of-poles}
    Let $\rmS (\lambda)$ be given by~\eqref{eq:matrix-S-new},
    where the vectors $\mathbf{y}_j$, $j = 1, \dots, n^+$,
    and $\mathbf{x}_j$, $j = 1, \dots, n^-$
    are the solutions of the linear system~\eqref{eq:SLE-new}.
    Then
    \begin{multline}
    \label{eq:prop:contribution-of-poles}
        \sum\limits_{\mu \in \mathcal{S}}
        \res\limits_{\lambda = \mu}
        \Big(
            \tr\{
                \rmS' (\lambda) \Pi (\lambda)
                \sigma^z
                \Pi^{-1} (\lambda) \rmS^{-1} (\lambda)
            \}
            d_\beta (\lambda)
        \Big)
        \\
        =
        2 \rmi \sum\limits_{j = 1}^{n^+}
        \sigma_j^+
        d_\beta (s_j^+)
        \left(\mathbf{w}'_j\right)^\intercal
        \sigma^y
        \mathbf{y}_j
        + 2 \rmi \sum\limits_{j = 1}^{n^-}
        \sigma_j^-
        d_\beta (s_j^-)
        \left(\mathbf{v}'_j\right)^\intercal
        \sigma^y
        \mathbf{x}_j
        \\
        - 2 \rmi
        \sum\limits_{\substack{j, k = 1\\ j \neq k}}^{n^+}
        \frac{
            \sigma_j^+ \sigma_k^+
            \big( Q_{kj}^{++} \big)_{21}
            d_\beta (s_j^+)
        }{
            s_j^+ - s_k^+ }
        \mathbf{y}_j^\intercal
        \sigma^y
        \mathbf{y}_k
        - 2 \rmi
        \sum\limits_{\substack{j, k = 1\\ j \neq k}}^{n^-}
        \frac{
            \sigma_j^- \sigma_k^-
            \big( Q_{kj}^{--} \big)_{12}
            d_\beta (s_j^-)
        }{
            s_j^- - s_k^- }
        \mathbf{x}_j^\intercal
        \sigma^y
        \mathbf{x}_k
        \\
        - 2 \rmi
        \sum\limits_{j = 1}^{n^+}
        \sum\limits_{k = 1}^{n^-}
        \frac{
            \sigma_j^+ \sigma_k^- }{
            s_j^+ - s_k^- }
        \Big[
            \big( Q_{kj}^{-+} \big)_{11}
            d_\beta (s_j^+)
            + \big( Q_{jk}^{+-} \big)_{22}
            d_\beta (s_k^-)
        \Big]
        \mathbf{y}_j^\intercal
        \sigma^y
        \mathbf{x}_k
        \\
        + 2 \rmi \sum\limits_{\substack{j, k = 1\\ j \neq k}}^{n^+}
        \frac{
            \sigma_j^+ \sigma_k^+
            \big( P_{kj}^{++} \big)_{21}
            d_\beta (s_j^+) }{
            (s_j^+ - s_k^+)^2 }
        \mathbf{y}_j^\intercal \sigma^y \mathbf{y}_k
        + 2 \rmi
        \sum\limits_{\substack{j, k = 1\\j \neq k}}^{n^-}
        \frac{
            \sigma_j^- \sigma_k^-
            \big( P_{kj}^{--} \big)_{12}
            d_\beta (s_j^-) }{
            (s_j^- - s_k^-)^2 }
        \mathbf{x}_j^\intercal \sigma^y \mathbf{x}_k
        \\
        + 2 \rmi
        \sum\limits_{j = 1}^{n^+}
        \sum\limits_{k = 1}^{n^-}
        \frac{
            \sigma_j^+ \sigma_k^-
            \big( P_{jk}^{+-} \big)_{22}
        }{
            (s_j^+ - s_k^-)^2 }
        \big[
            d_\beta (s_j^+) - d_\beta (s_k^-)
        \big]
        \mathbf{y}_j^\intercal
        \sigma^y
        \mathbf{x}_k.
    \end{multline}
    Here the matrix $\rmS(\lambda)$ is given by~\eqref{eq:matrix-S-new}.
\end{proposition}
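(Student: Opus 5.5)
The plan is to compute the residues directly from the Laurent expansion of $\rmS^{-1}(\lambda)\rmS'(\lambda)$ around each pole, using the cyclicity of the trace to write the integrand as $\tr\{\rmS^{-1}(\lambda)\rmS'(\lambda)\,\Pi(\lambda)\sigma^z\Pi^{-1}(\lambda)\}\,d_\beta(\lambda)$. By the analysis preceding Proposition~\ref{prop:residue-form}, the third order poles cancel. Hence near $\mu=s_j^\pm$ we have
\[
\rmS^{-1}\rmS'(\lambda)=\frac{\rmA_\mu}{(\lambda-\mu)^2}+\frac{\rmB_\mu}{\lambda-\mu}+\Ocal(1).
\]
The residue of the integrand is then
\[
\tr\{\rmA_\mu\,\partial_\lambda(\Pi\sigma^z\Pi^{-1}d_\beta)|_\mu\}+\tr\{\rmB_\mu\,\Pi(\mu)\sigma^z\Pi^{-1}(\mu)\}\,d_\beta(\mu).
\]
Proposition~\ref{prop:residue-form} lets us drop the $d_\beta'$ part of the first term, so every contribution carries the factor $d_\beta(\mu)$. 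We treat $\mu=s_j^+$ in detail; the case $\mu=s_j^-$ follows by the symmetry $\sigma^+\leftrightarrow\sigma^-$, $\mathbf y\leftrightarrow\mathbf x$, first column $\leftrightarrow$ second column.

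For the double-pole term, read $\rmA_{s_j^+}$ off \eqref{eq:S-inv-S-prime}. It is the expression displayed in the proof of Proposition~\ref{prop:residue-form}: $-\sigma_j^+\mathbf y_j(0,1)\Pi^{-1}(s_j^+)$ together with the cross terms from the bracket products, evaluated at $\lambda=s_j^+$. Use
\[
\partial_\lambda(\Pi\sigma^z\Pi^{-1})=\Pi'\sigma^z\Pi^{-1}-\Pi\sigma^z\Pi^{-1}\Pi'\Pi^{-1},
\]
and reduce each trace by cyclicity to bilinear forms $\mathbf y_k^\intercal\sigma^y\mathbf y_j$ or $\mathbf x_k^\intercal\sigma^y\mathbf y_j$. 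Their coefficients are entries of $\Pi^{-1}(s_j^+)\Pi'(s_j^+)$, $P^{\epsilon\epsilon'}$ and $Q^{\epsilon\epsilon'}$, exactly as in \eqref{eq:cyclicity-of-trace}. The leading piece gives $(0,1)\Pi^{-1}(s_j^+)\Pi'(s_j^+)\ldots\mathbf y_j$. By \eqref{eq:trace-with-W}, with $\Pi$ replaced by $\Pi'$, this becomes $\rmi(\mathbf w_j')^\intercal\sigma^y\mathbf y_j$ plus a term proportional to $\mathbf y_j^\intercal\sigma^y\mathbf y_j=0$. The piece involving $\Pi\sigma^z\Pi^{-1}\Pi'\Pi^{-1}$ is handled by inserting the linear system \eqref{eq:SLE-new}, in the same way the proof of Proposition~\ref{prop:residue-form} reassembled $\mathbf y_j$. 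That identification collapses the $P$-weighted sums and doubles the $\mathbf w_j'$ term, producing the $2\rmi\sigma_j^+d_\beta(s_j^+)(\mathbf w_j')^\intercal\sigma^y\mathbf y_j$ contribution.

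For the simple-pole term, expand the cross terms of \eqref{eq:S-inv-S-prime} around $s_j^+$. Each cross term has the form (simple pole at $s_k$) $\times$ (double pole at $s_j$), or the reverse. Expanding
\[
\frac{1}{\lambda-s_k}=\frac{1}{s_j-s_k}-\frac{\lambda-s_j}{(s_j-s_k)^2}+\dots,
\qquad
\frac{1}{(\lambda-s_k)^2}=\frac{1}{(s_j-s_k)^2}+\dots
\]
yields $\rmB_{s_j^+}$. Its first part gives the $1/(s_j-s_k)^2$ terms with $P$-coefficients. The remaining pieces are the simple-pole-at-$s_j$ parts of the first bracket multiplied by the regular double-pole terms at $s_k$. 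Traced against $\Pi(s_j^+)\sigma^z\Pi^{-1}(s_j^+)$, these reduce via cyclicity to $P$-entries times bilinear forms. Finally, collect the terms with $\sigma_j\sigma_k$ over all ordered pairs. Use the symmetry $P_{jk}^{\epsilon\epsilon'}=\sigma^y(P_{kj}^{\epsilon'\epsilon})^\intercal\sigma^y$ and the antisymmetry $\mathbf a^\intercal\sigma^y\mathbf b=-\mathbf b^\intercal\sigma^y\mathbf a$. For mixed $\pm$ pairs this symmetrization is what produces $[d_\beta(s_j^+)-d_\beta(s_k^-)]$ and the combination $(Q_{kj}^{-+})_{11}d_\beta(s_j^+)+(Q_{jk}^{+-})_{22}d_\beta(s_k^-)$.

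The main obstacle is bookkeeping rather than anything conceptual. Signs, the factors $\rmi$ coming from $\sigma^y$ and the factor $2$ all have to come out right. One must also check that the $P$-sums without derivatives from the double-pole term cancel against the simple-pole term through \eqref{eq:SLE-new}, leaving only the $Q$-type and $1/(s_j-s_k)^2$ terms stated. I would first verify the formula for $n^+=1$, $n^-=0$ and then for $n^+=n^-=1$ to fix conventions, and only afterwards do the general symmetrization.
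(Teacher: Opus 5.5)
Your route is the paper's own (Appendix~\ref{app:contribution-of-poles}). You Laurent-expand $\rmS^{-1}\rmS'$ at $s_j^+$, remove the $d_\beta'$ term by Proposition~\ref{prop:residue-form}, reduce traces by cyclicity and use \eqref{eq:SLE-new}. Your $\mathrm{B}_{s_j^+}$ is the paper's $\rmS_2'(s_j^+)+\rmS_1(s_j^+)$. However, two of your bookkeeping claims are wrong, and both concern where \eqref{eq:SLE-new} enters.

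\textbf{The $\mathbf{w}_j'$ term.} Take the leading part $-\sigma_j^+\mathbf{y}_j(0,1)\Pi^{-1}(s_j^+)$ of $\rmA_{s_j^+}$. The piece $-\Pi\sigma^z\Pi^{-1}\Pi'\Pi^{-1}$ of $\partial_\lambda(\Pi\sigma^z\Pi^{-1})$ gives exactly $\rmi\sigma_j^+(\mathbf{w}_j')^\intercal\sigma^y\mathbf{y}_j$, without any linear system. Indeed, at $s_j^+$ one has $(0,1)\Pi^{-1}\Pi'\Pi^{-1}=-(0,1)(\Pi^{-1})'=-\rmi(\mathbf{w}_j')^\intercal\sigma^y$, which is \eqref{eq:trace-with-W} differentiated. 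The other piece, $\Pi'\sigma^z\Pi^{-1}$, gives
\[
\rmi\sigma_j^+(\mathbf{w}_j')^\intercal\sigma^y\mathbf{y}_j-2\rmi\sigma_j^+(Q_{jj}^{++})_{11}\,\mathbf{w}_j^\intercal\sigma^y\mathbf{y}_j .
\]
Here the remainder involves $\mathbf{w}_j^\intercal\sigma^y\mathbf{y}_j$, not $\mathbf{y}_j^\intercal\sigma^y\mathbf{y}_j$, and it is generically nonzero once $n^+ + n^- \geq 2$. So you have the roles of the two pieces reversed.

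Removing this remainder with \eqref{eq:SLE-new} produces sums weighted by $(Q^{++}_{jj})_{11}(P_{kj})$. These merge with the products of $P_{jk}$-entries and $(Q^{++}_{jj})_{21}$ coming from the cross terms of $\rmA_{s_j^+}$. Using $Q^{\epsilon+}_{kj}=P^{\epsilon+}_{kj}Q^{++}_{jj}$ and $P_{kj}=\sigma^yP_{jk}^\intercal\sigma^y$, they combine into the single entries $(Q^{++}_{kj})_{21}$ and $(Q^{-+}_{kj})_{11}$. So the $P$-weighted sums do not collapse away; they become the $1/(s_j-s_k)$ terms of the statement. The paper does this with the identities following \eqref{eq:residue-with-S-2-term-2-preliminary}.

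Your doubling intuition holds in a cleaner form. Since $\rmA_{s_j^+}=\mathbf{a}_j(0,1)\Pi^{-1}(s_j^+)$ for some vector $\mathbf{a}_j$, the identity $\tr\{\rmA_{s_j^+}\Pi\sigma^z\Pi^{-1}(s_j^+)\}=0$ from the proof of Proposition~\ref{prop:residue-form} is equivalent to $\rmA_{s_j^+}=\alpha\,\Pi(s_j^+)\sigma^+\Pi^{-1}(s_j^+)$ for a scalar $\alpha$. Hence $\tr\{\rmA_{s_j^+}(\Pi\sigma^z\Pi^{-1})'(s_j^+)\}=2\tr\{\rmA_{s_j^+}\Pi'\Pi^{-1}(s_j^+)\}=-2\rmi(\mathbf{w}_j')^\intercal\sigma^y\mathbf{a}_j$. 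Inserting the full $\mathbf{a}_j$ then yields the $\mathbf{w}_j'$ term and both $Q$-sums at once.

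\textbf{No cancellation between the parts.} Nothing cancels between the double- and simple-pole contributions, contrary to what you anticipate. The only $\Pi'$-free term coming from $\rmA_{s_j^+}$ is the one multiplying $d_\beta'(s_j^+)$, which Proposition~\ref{prop:residue-form} already removes. The two pieces of $\mathrm{B}_{s_j^+}$ give \emph{identical} $P$-contributions. One piece comes from expanding $1/(\lambda-s_k)$ around $s_j$, the other from the simple pole at $s_j$ in the first bracket of \eqref{eq:S-inv-S-prime}. They agree by the symmetry of $P$ and the antisymmetry of $\mathbf a^\intercal\sigma^y\mathbf b$. They add, and that is the origin of the factor $2\rmi$ in front of the $1/(s_j-s_k)^2$ terms, which survive in the final formula.
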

The proof of the proposition
is presented in Appendix~\ref{app:contribution-of-poles}.

The expression~\eqref{eq:prop:contribution-of-poles}
for the contribution of the poles can be simplified in
the so-called static case $\lambda_0 \to + \infty$,
when there is no saddle point,
meaning that $\Pi (\lambda) \simeq \mathrm{I}_2$.
We consider this case in detail in Section~\ref{sec:static-case}.

\section{Asymptotic expansion of the Fredholm determinant
    in the case of no poles on the real axis}
\label{sec:asymptotics-no-poles}

We now turn to the actual calculation of the large-$x$ asymptotics
of our Fredholm determinant starting from
equation~\eqref{eq:prop:log-der-Fredholm-for-AE}. We first consider
the important special case when there are no poles of the function
$\Lcal'(\lambda|\lambda_0)$ on the real axis, i.e.\ when
$\mathcal{S} \cap \mathbb{R} = 0$.
Then it follows from the system of linear equations,
see~\eqref{eq:matrix-S-new} and~\eqref{eq:SLE-new},
that
\begin{equation} \label{eq:S_when_no_poles}
    \rmS (\lambda) =
    \mathrm{I}_2
    + \Ocal\big(
        x^{-\infty}
        \big(
            \begin{smallmatrix}
                1 & 1
                \\
                1 & 1
            \end{smallmatrix}
        \big)
    \big),
\end{equation}
because all the coefficients $\sigma$ are exponentially small,
due to the factors $e^{\pm 2} (\lambda)$ for $\lambda$ evaluated
at the poles in the corresponding coefficients $h$,
see equations~\eqref{eq:coefficients-sigma} and~\eqref{eq:residues-h}.

Inserting \eqref{eq:S_when_no_poles} into the right-hand side
of equation \eqref{eq:prop:log-der-Fredholm-for-AE} we remain with
the following simpler expression for the logarithmic derivatives
of the Fredholm determinant,
\begin{equation}
\label{eq:prop:log-der-Fredholm-for-AE_when_no_poles}
    \partial_\beta \ln \det_{\Ccal_{\lambda_0}}
    \big( \id + \mathrm{V} \big)
    = a_\beta (x, \lambda_0)
    - \int\limits_{\gamma_0}
    \frac{\dd{\lambda}}{ 2 \pi \rmi }
    \tr\{ \Pi' (\lambda) \sigma^z \Pi^{-1} (\lambda)\}
    \partial_\beta d (\lambda)
    + \Ocal\big( x^{-\infty} \big).
\end{equation}
We shall integrate this equation with respect to $\beta$ for
$\beta = x, \lambda_0$, while using the large-$x$ asymptotic
expansion for $\Pi$ that is obtained 
from~\eqref{eq:singular-integral-equation}. This will
result in a proof of Theorem~\ref{thm:Fredholm-det-AE-no-poles}.

\subsection{Asymptotic expansion of the parametrix}
\label{sec:AE_parametrix}
The $x$-dependence enters the singular integral following from
equation~\eqref{eq:singular-integral-equation} through the inverse
para\-metrix $\Pcal^{-1}$. Using the asymptotic expansion of
the parabolic cylinder functions for large arguments, see
Appendix~\ref{app:parabolic-cylinder-functions}
equations~\eqref{eq:AE-parabolic-cylinder-function}
and~\eqref{eq:parabolic-cylinder-function-relation},
we obtain the asymptotic series
\begin{equation}
\label{eq:AE-parametrix}
    \Pcal^{-1} (\lambda)
    = \mathrm{I}_2
    + \sum\limits_{k = 1}^\infty
    \frac{ \Pcal_k (\lambda) }{
        x^{k / 2}
        (\lambda - \lambda_0)^{k} },
\end{equation}
where for even indices the coefficients $\Pcal_k (\lambda)$ are given by
\begin{subequations}
\label{eq:parametrix-coefficients}
\begin{equation}
    \Pcal_{2 n} (\lambda)
    =
    \frac{ (- \rmi)^n }{n! \ 2^{2 n}}
    \Bigg[
        \dfrac{
            \lambda - \lambda_0}{
            \omega (\lambda - \lambda_0 | \lambda_0) }
    \Bigg]^{2 n}
    \begin{pmatrix}
        \big( -\tau (\lambda) \big)_{2 n} & 0
        \\
        0 & (-1)^n \big( \tau (\lambda) \big)_{2 n}
    \end{pmatrix}
\end{equation}
and for odd indices by
\begin{multline}
    \Pcal_{2 n + 1} (\lambda)
    =
    - \frac{ (- \rmi)^n }{n! \ 2^{2 n}}
    \Bigg[
        \dfrac{
            \lambda - \lambda_0 }{
            \omega (\lambda - \lambda_0 | \lambda_0) }
    \Bigg]^{2 n + 1}
    \\
    \times
    \begin{pmatrix}
        0 & b_{12} (\lambda)
        \big(1 - \tau (\lambda) \big)_{2 n}
        \\
        (-1)^{n} b_{21} (\lambda)
        \big(1 + \tau (\lambda) \big)_{2 n} & 0
    \end{pmatrix}.
\end{multline}
\end{subequations}
Here $(a)_n$ denotes the Pochhammer symbol,
see equation~\eqref{eq:Pochhammer-symbol}.
We stress that $b_{12} (\lambda)$ and $b_{21} (\lambda)$ depend on $x$
as well, see~\eqref{eq:b12-and-b21} and~\eqref{eq:m-and-n}.

\subsection[Asymptotic expansion of \texorpdfstring{$\Pi$}{Pi}]
{\boldmath Asymptotic expansion of $\Pi$}
\label{sec:asymptotics-Pi}
The integral representation~\eqref{eq:singular-integral-equation} of
the function $\Pi (\lambda)$ in terms of is $+$ boundary value
can be used to obtain its large-$x$ asymptotic expansion.
We note that the jump matrix $\rmG_\Phi$ occurring under the integral
in~\eqref{eq:singular-integral-equation} is exponentially close to the
identity matrix on the contours $\Gamma_\ell^\pm$ and $\Gamma_r^\pm$.
Hence,
a restriction of the integration contour to $\partial \mathcal{U}_{\lambda_0}$,
where $\rmG_\Phi = \Pcal$, means to neglect only terms that are
exponentially small in $x$. Therefore
\begin{equation}
\label{eq:AE-singular-integral-equation}
    \Pi (\lambda)
    = \mathrm{I}_2
    + \int\limits_{\partial \mathcal{U}_{\lambda_0}}
    \frac{\dd \mu}{2 \pi \rmi}
    \frac{
        \Pi_+ (\mu)
        \left( \Pcal^{-1} (\mu) - \mathrm{I}_2\right) }{
        \mu - \lambda }
    + \Ocal \big(
        \rme^{ - m x }
        \big(
            \begin{smallmatrix}
                1 & 1
                \\
                1 & 1
            \end{smallmatrix}
        \big)
    \big).
\end{equation}
The boundary value $\Pi_+$ (from outside the negatively oriented
contour $\partial \mathcal{U}_{\lambda_0}$), in particular, satisfies
the singular linear integral equation
\begin{equation}
\label{eq:AE-singular-integral-equation-Pi-plus}
    \Pi_+ (\lambda)
    = \mathrm{I}_2
    + \int\limits_{\partial \mathcal{U}_{\lambda_0}}
    \frac{\dd \mu}{2 \pi \rmi}
    \frac{
        \Pi_+ (\mu)
        \left( \Pcal^{-1} (\mu) - \mathrm{I}_2\right) }{
        \mu - \lambda }
    + \Ocal \big(
        \rme^{ - m x }
        \big(
            \begin{smallmatrix}
                1 & 1
                \\
                1 & 1
            \end{smallmatrix}
        \big)
    \big).
\end{equation}

It is clear from the structure of
equation~\eqref{eq:AE-singular-integral-equation-Pi-plus} and the
large-$x$ asymptotic expansion~\eqref{eq:AE-parametrix},
\eqref{eq:parametrix-coefficients}
of $\Pcal^{-1}$ that $\Pi_+ - \mathrm{I}_2$ can be represented
as a series in negative powers of $x^{\flatfrac12}$.
Thus, for $\lambda \in
\Ext(\mathcal{U}_{\lambda_0})$, the function $\Pi$ will
be of the asymptotic form
\begin{equation}
\label{eq:Pi-AE-Neumann-series}
    \Pi (\lambda)
    = \sum\limits_{n = 0}^\infty \frac{\Pi_n (\lambda)}{x^{n / 2}}, \qquad
    \Pi_0 (\lambda) = \mathrm{I}_2.
    \qquad
\end{equation}
Inserting \eqref{eq:Pi-AE-Neumann-series} together with
\eqref{eq:AE-parametrix} into \eqref{eq:AE-singular-integral-equation}
and comparing the coefficients in front of equal powers of $x^{-\flatfrac12}$
we obtain a recursion relation for the $\Pi_n$,
\begin{equation} \label{eq:pi_n_first_recur}
    \Pi_n (\lambda) = 
       - \sum_{k=1}^n \int\limits_{\partial \mathcal{U}_{\lambda_0}}
         \frac{\dd \mu}{2 \pi \rmi}
         \frac{
            \Pi_{n-k} (\mu) \Pcal_k (\mu) }{
            (\lambda - \mu)(\mu - \lambda_0)^k }.
\end{equation}
This can be directly used to obtain the first few functions $\Pi_n$.
However, we get a better understanding of the structure of the
$\Pi_n$, if we first show inductively that, for $\lambda \in
\Ext(\mathcal{U}_{\lambda_0})$, they are of the form
\begin{equation}
\label{eq:Pi-coefficients-n}
    \Pi_n (\lambda) = \sum\limits_{j = 1}^n
    \frac{ \Pi_{n, j} (\lambda_0) }{ (\lambda - \lambda_0)^j }.
\end{equation}
Inserting this into \eqref{eq:pi_n_first_recur}
we obtain another recursion relation for the coefficients
$\Pi_{n, j} (\lambda_0)$ which involves no integration anymore,
\begin{equation}
\label{eq:Pi-coefficients-n-m}
    \Pi_{n, j} (\lambda_0) =
    - \frac{\Pcal_n^{(n - j)} (\lambda_0)}{ (n - j)! }
    - \sum\limits_{k = j}^{n}
    \sum\limits_{\ell = 1}^{k - 1}
    \frac{\Pi_{n - \ell, k - \ell} (\lambda_0)
    \Pcal_\ell^{(k - j)} (\lambda_0)}{(k - j)!},
    \qquad
    j = 1, \dots, n.
\end{equation}
Here we employ the usual convention that a sum is zero, if
the lower summation boundary exceeds the upper one. With this
convention \eqref{eq:Pi-coefficients-n-m} is valid for all
positive $n, j \in \mathbb{N}$. We note as well that the
coefficients $\Pi_n$ inherit the property of the coefficients
$\Pcal_n$ to be diagonal for even $n$ and off-diagonal for odd $n$.

In particular, we see from~\eqref{eq:Pi-coefficients-n-m}
that the first coefficients $\Pi_n (\lambda)$
in~\eqref{eq:Pi-AE-Neumann-series}
are given by
\begin{align}
\label{eq:Pi-coefficients-1-and-2}
    & \Pi_1 (\lambda)
    = - \frac{
        \Pcal_1 (\lambda_0) }{
        \lambda - \lambda_0 },
    \qquad
    \Pi_2 (\lambda)
    = \pdv{\mu} \left.\left(
        \frac{
            \Pcal_2 (\mu)
            - \Pcal_1 (\lambda_0) \Pcal_1 (\mu) }{
            \mu - \lambda }
    \right)\right|_{\mu = \lambda_0}, \\[2ex] \notag
\label{eq:Pi-coefficient-3}
    & \Pi_3 (\lambda)
    = \frac1{2!} \pdv[2]{\mu} \left. \left(
        \frac{
            \Pcal_3 (\mu)
            - \Pcal_1 (\lambda_0) \Pcal_2 (\mu)
            - \Pcal_2 (\lambda_0) \Pcal_1 (\mu)
            + \Pcal_1^2 (\lambda_0) \Pcal_1 (\mu) }{
            \mu - \lambda }
    \right) \right|_{\mu = \lambda_0}
    \\ & \mspace{209.mu}
    - \pdv{\mu} \left. \left(
        \frac{
            \Pcal_2' (\lambda_0) \Pcal_1 (\mu)
            - \Pcal_1 (\lambda_0) \Pcal_1' (\lambda_0)
            \Pcal_1 (\mu) }{
            \mu - \lambda }
    \right)\right|_{\mu = \lambda_0}.
\end{align}
Moreover, from~\eqref{eq:Pi-AE-Neumann-series}
and~\eqref{eq:Pi-coefficients-n-m}
it is straightforward to see that
\begin{equation}
\label{eq:order-of-Pi-3}
    \frac{\Pi_3 (\lambda)}{x^{3 / 2}} =
    \Ocal \left(
        \frac{
            (\ln x)^2
            \rme^{\rmi x u (\lambda_0)} }{
            x^{\flatfrac{3}{2} - \tau (\lambda_0)} } \cdot \sigma^+
    +
           \frac{
            (\ln x)^2
            \rme^{- \rmi x u (\lambda_0)} }{
            x^{\flatfrac{3}{2} + \tau (\lambda_0)} } \cdot \sigma^-
    \right)
\end{equation}
and
\begin{equation}
\label{eq:order-of-Pi-4}
    \frac{\Pi_4 (\lambda)}{x^2}
    = \Ocal\left( \frac{(\ln x)^2}{x^2}  \cdot \mathrm{I}_2 \right).
    \end{equation}
The key observation here is
that the leading order in $x$ propagates from the terms,
where the derivative with respect to $\lambda$ in $\Pcal_\ell^{(k - j)}$
acts on the functions $b_{12}$ and $b_{21}$
in the coefficients with odd indices $\ell$,
see the expression~\eqref{eq:parametrix-coefficients}.
Each such action of the derivative produces a factor $\ln x$,
however, the coefficient in front of $(\ln x)^3 / x^2$ in $\Pi_4$
happens to be zero.

Explicitly,
the first two coefficients in~\eqref{eq:Pi-AE-Neumann-series},
see equations~\eqref{eq:Pi-coefficients-1-and-2}
and~\eqref{eq:parametrix-coefficients},
are given by
\begin{equation}
\label{eq:coefficient-Pi-1-explicit}
    \Pi_1 (\lambda)
    = \frac{1}{\lambda - \lambda_0}
    \frac{1}{\omega' (0 | \lambda_0)}
    \begin{pmatrix}
        0 & b_{12} (\lambda_0)
        \\
        b_{21} (\lambda_0) & 0
    \end{pmatrix}
\end{equation}
and
\begin{multline}
\label{eq:coefficient-Pi-2-explicit}
    \Pi_2 (\lambda)
    = \frac{1}{(\lambda - \lambda_0)^2}
    \frac{\rmi \tau (\lambda_0)}{4 (\omega' (0 | \lambda_0))^2}
    \left[\mathrm{I}_2 + \tau (\lambda_0) \sigma^z\right]
    - \frac{1}{ (\lambda - \lambda_0) }
    \frac{ \rmi \tau (\lambda_0) }{4 (\omega' (0 | \lambda_0))^2}
    \Bigg[
        \frac{
            \tau (\lambda_0)
            \omega'' (0 | \lambda_0) }{
            \omega' (0 | \lambda_0) }
        \\
        - 2 \tau' (\lambda_0)
        + \frac{2}{\rmi \tau (\lambda_0)} \big[
            b_{12}' (\lambda_0) b_{21} (\lambda_0) 
            - b_{12} (\lambda_0) b_{21}' (\lambda_0) 
        \big]
    \Bigg]  \sigma^z.
\end{multline}
Here and in the following we denote
$\omega^{(n)} (0 | \lambda_0)
:= \eval{\partial^n_\lambda
\omega (\lambda - \lambda_0 | \lambda_0)}_{\lambda = \lambda_0}$.

\subsection[Asymptotic expansion of the integral involving \texorpdfstring{{$\Pi$}}{Pi}]
{\boldmath Asymptotic expansion of the integral involving $\Pi$}
With the above results we can now calculate the leading orders
in $x$ of the expression
$\tr\big\{ \Pi' (\lambda) \sigma^z \Pi^{-1} (\lambda) \big\}$
occurring under the integral on the right-hand side of
\eqref{eq:prop:log-der-Fredholm-for-AE_when_no_poles},
while keeping control of the corrections. First of all,
\begin{multline}
    \label{eq:tr-incomplete-AE-part-1}
    \tr\big\{ \Pi' (\lambda) \sigma^z \Pi^{-1} (\lambda) \big\}
    =
    \frac1{x^{\flatfrac12}}
    \tr\big\{ \Pi_1' (\lambda) \sigma^z \big\}
    + \frac1x
    \tr\big\{  
        \Pi_2' (\lambda) \sigma^z
        + \Pi_1' (\lambda) \sigma^z \Pi_1^{\rm ad} (\lambda)
    \big\}
    \\
    + \frac1{x^{\flatfrac32}}
    \tr\big\{
        \Pi_3' (\lambda) \sigma^z
        + \Pi_2' (\lambda) \sigma^z \Pi_1^{\rm ad} (\lambda)
        + \Pi_1' (\lambda) \sigma^z \Pi_2^{\rm ad} (\lambda)
    \big\}
    +
    \frac1{x^2}
    \tr\big\{
        \Pi_4' (\lambda) \sigma^z
        \\
        + \Pi_3' (\lambda) \sigma^z \Pi_1^{\rm ad} (\lambda)
        + \Pi_2' (\lambda) \sigma^z \Pi_2^{\rm ad} (\lambda)
        + \Pi_1' (\lambda) \sigma^z \Pi_3^{\rm ad} (\lambda)
    \big\}
    + \ocal \left( x^{-2} \right),
\end{multline}
where we have denoted the
adjugate of $\Pi_n$ by $\sigma^y \Pi_n^\intercal \sigma^y = \Pi^{\rm ad}_n$.
Using the fact that the coefficients $\Pi_n$ are diagonal for even $n$
and off-diagonal for odd $n$ and the observations~\eqref{eq:order-of-Pi-3},
\eqref{eq:order-of-Pi-4} this further reduces to
\begin{equation}
\label{eq:integrand-gamma-0}
    \tr\big\{ \Pi' (\lambda) \sigma^z \Pi^{-1} (\lambda) \big\}
    =
    \frac1x
    \tr\big\{
        \Pi_2' (\lambda) \sigma^z
        + \Pi_1' (\lambda) \sigma^z \Pi_1^{\rm ad} (\lambda)
    \big\}
    + \Ocal\left( \frac{ (\ln x)^2 }{x^2} \right).
\end{equation}
Moreover, from the diagonal/off-diagonal form of the coefficients $\Pi_n$,
see expressions~\eqref{eq:Pi-coefficients-n} and~\eqref{eq:Pi-coefficients-n-m},
and the form of the coefficients $\Pcal_n$,
see equations~\eqref{eq:parametrix-coefficients},
one can prove by induction that
\begin{equation}
    \tr\big\{ \Pi' (\lambda) \sigma^z \Pi^{-1} (\lambda) \big\}
    = \sum\limits_{n = 0}^\infty
    \frac{p_n (\ln x)}{x^{n}},
\end{equation}
where $p_n (\ln x)$ are some polynomials in $\ln x$ of
degree $\leq 2 n - 1$, and there are no factors that
are oscillating in $x$ in this asymptotic expansion.

For the calculation of the two trace terms on the right-hand side
we insert the explicit expressions~\eqref{eq:coefficient-Pi-1-explicit},
\eqref{eq:coefficient-Pi-2-explicit}. Then
\begin{equation} \label{eq:trace_first_order_coefficient}
    \tr\bigl\{\Pi_1' (\lambda) \sigma^z \Pi_1^{\rm ad} (\lambda)\bigr\} = 0
\end{equation}
and
\begin{multline}
\label{eq:tr-with-Pi-2-prime-explicit}
    \tr\bigl\{\Pi_2' (\lambda) \sigma^z\bigr\}
    = - \frac{1}{(\lambda - \lambda_0)^3}
    \frac{\rmi \tau^2 (\lambda_0)}{(\omega' (0 | \lambda_0))^2} \\
    + \frac{1}{(\lambda - \lambda_0)^2}
    \frac{\rmi \tau(\lambda_0)}{2 (\omega' (0 | \lambda_0))^2}
    \Biggl[
        \frac{
            \tau (\lambda_0)
            \omega'' (0 | \lambda_0) }{
            \omega' (0 | \lambda_0) }
        - 2 \tau' (\lambda_0)
        + 2 \frac{b_{12}' (\lambda_0)}{b_{12} (\lambda_0)}
        - \frac{\tau'(\lambda_0)}{\tau(\lambda_0)}
    \Biggr].
\end{multline}
In the derivation of \eqref{eq:tr-with-Pi-2-prime-explicit} we have
used the identity
\begin{equation}
    2 \big[
        b_{12}' (\lambda_0) b_{21} (\lambda_0)
        - b_{12} (\lambda_0) b_{21}' (\lambda_0)
    \big]
    = 2 \rmi \tau (\lambda_0)
    \frac{ b_{12}' (\lambda_0) }{ b_{12} (\lambda_0) }
    - \rmi \tau' (\lambda_0)
\end{equation}
that follows from~\eqref{eq:product-of-b12-and-b21}. At this point
we can make the $x$-dependence hidden in $b_{12} (\lambda_0)$ explicit.
Using equations~\eqref{eq:b12-and-b21} and~\eqref{eq:m-and-n} we
see that the last two terms in the square bracket on the right-hand
side of \eqref{eq:tr-with-Pi-2-prime-explicit} take the form
\begin{equation}
\label{eq:definition-B}
    2 \frac{ b_{12}' (\lambda_0) }{ b_{12} (\lambda_0) }
    - \frac{\tau'(\lambda_0)}{\tau(\lambda_0)}
    = 2 \tau' (\lambda_0) \ln x + B(\lambda_0),
\end{equation}
where we introduced a function $B (\lambda_0)$ that does not depend on $x$.
Its explicit form, which does not matter at this point, will be shown
in equation~\eqref{eq:explicit_B} below.

Substituting~\eqref{eq:trace_first_order_coefficient},
\eqref{eq:tr-with-Pi-2-prime-explicit}, and~\eqref{eq:definition-B}
into~\eqref{eq:integrand-gamma-0},
we can now evaluate the integral
over $\gamma_0$ in~\eqref{eq:prop:log-der-Fredholm-for-AE_when_no_poles},
\begin{equation}
\label{eq:integral-over-gamma-0-no-poles-final}
    \int\limits_{\gamma_0}
    \frac{\dd{\lambda}}{ 2 \pi \rmi }
    \tr\left\{ \Pi' (\lambda) \sigma^z \Pi^{-1} (\lambda) \right\}
    d_\beta (\lambda)
    =
    - \frac{a_\beta^{(2)}(x, \lambda_0)}{x}
        + \Ocal \left( \frac{ (\ln x)^2 }{x^2} \right),
\end{equation}
where
\begin{multline} \label{eq:def_a_upper_two}
    a_\beta^{(2)}(x, \lambda_0) = \\
    \frac{ \rmi \tau^2 (\lambda_0) }{ 2 (\omega' (0 | \lambda_0))^2 }
    \Biggl\{
        d_\beta'' (\lambda_0)
        - \left(
            \frac{\omega'' (0 | \lambda_0) }{\omega' (0 | \lambda_0) }
            + \frac{2 \tau' (\lambda_0)}{\tau(\lambda_0)} ( \ln x - 1)
            + \frac{B (\lambda_0)}{\tau(\lambda_0)}
        \right)
        d_\beta' (\lambda_0)
    \Biggr\}.
\end{multline}

\subsection{Fredholm determinant asymptotics}
\label{sec:asymptotics-of-Fredholm-determinant-no-poles}
Substituting~\eqref{eq:integral-over-gamma-0-no-poles-final}
into~\eqref{eq:prop:log-der-Fredholm-for-AE_when_no_poles},
we have obtained an explicit formula for the leading large-$x$
asymptotics of the logarithmic derivative of the Fredholm determinant
with respect to the parameter $\beta$:
\begin{equation}
    \label{eq:Fred-det-log-der-series-in-x-no-poles}
    \partial_\beta \ln \det_{\Ccal_{\lambda_0}}
    (\id + \rmV)
    = a_\beta (x, \lambda_0)
    + \frac{a_\beta^{(2)} (x,\lambda_0)}{x}
    + \Ocal \left( \frac{ (\ln x)^2 }{x^2} \right),
\end{equation}
where $a_\beta (x, \lambda_0)$ was defined in \eqref{eq:prop:function-a}
and $a_\beta^{(2)} (x, \lambda_0)$ in \eqref{eq:def_a_upper_two}.

Let us introduce the function
\begin{equation}
    \label{eq:def_fun_a}
        a (x, \lambda_0)
        = 2 \int\limits_{\Ccal_{\lambda_0}}
        \dd{\lambda}
        \Lcal (\lambda | \lambda_0) d' (\lambda).
    \end{equation}
Then
\begin{equation}
    \label{eq:relation-for-a-x}
    a_x (x, \lambda_0) = \partial_x a(x, \lambda_0).
\end{equation}
Taking into account that $d_x' (\lambda_0) = 0$
and $d_x'' (\lambda_0) = \rmi (\omega' (0 | \lambda_0))^2$,
the expression~\eqref{eq:def_a_upper_two}
for $a_\beta^{(2)} (x, \lambda_0)$ with
$\beta = x$ turns into
\begin{equation}
\label{eq:a-2-no-poles-log}
    a_x^{(2)} (x, \lambda_0) = - \frac{\tau^2 (\lambda_0)}{2}.
\end{equation}
Using now \eqref{eq:relation-for-a-x}, \eqref{eq:a-2-no-poles-log} in 
\eqref{eq:Fred-det-log-der-series-in-x-no-poles} for $\beta = x$
we can easily take the anti-derivative with respect to $x$ and
obtain
\begin{equation}
    \label{eq:log-Fred-det-no-poles}
    \ln \det_{\Ccal_{\lambda_0}} (\id + \rmV)
    = 
    C (\lambda_0)
    + a (x, \lambda_0)
    - \frac{\tau^2 (\lambda_0)}{2} \ln x
    + \Ocal \left( \frac{ (\ln x)^2 }{x} \right),
\end{equation}
where $C(\lambda_0)$ is an integration constant that does not
depend on $x$ and that remains to be determined.

\subsection{The integration constant}
\label{sec:integration-constant}
In order to fix the constant $C(\lambda_0)$
in equation~\eqref{eq:log-Fred-det-no-poles},
we consider equation~\eqref{eq:Fred-det-log-der-series-in-x-no-poles}
with $\beta = \lambda_0$ and compare with what we obtain by taking the
$\lambda_0$ derivative of \eqref{eq:log-Fred-det-no-poles}. We shall need
the relations
\begin{align}
\label{eq:der_lambdazero_a}
    a_{\lambda_0} (x,\lambda_0)
    & = \partial_{\lambda_0} a(x, \lambda_0)
    + \tau(\lambda_0) g' (\lambda_0),
    \\[.5ex]
    d_{\lambda_0}' (\lambda_0)
    & = - \rmi x\, (\omega' (0 | \lambda_0))^2, \label{eq:d_der_lambdazero}
    \\
    \label{eq:d_dder_lambdazero}
    d_{\lambda_0}'' (\lambda_0)
    & = - 3 \rmi x\, \omega' (0 | \lambda_0) \omega'' (0 | \lambda_0)
    + 2 \rmi x\, \omega' (0 | \lambda_0)
    \partial_{\lambda_0} \omega' (0 | \lambda_0)
\end{align}
that follow directly from the definitions of the respective functions.
Substituting \eqref{eq:d_der_lambdazero} and \eqref{eq:d_dder_lambdazero}
into \eqref{eq:def_a_upper_two} for $\beta = \lambda_0$ we see that
\begin{equation} \label{eq:a2_lambdazero_explicit}
    a_{\lambda_0}^{(2)} (x, \lambda_0)
    = x\, \tau^2 (\lambda_0)
    \biggl(
        \frac{\omega'' (0 | \lambda_0) }{\omega' (0 | \lambda_0)}
        - \partial_{\lambda_0} \ln \omega' (0 | \lambda_0)
        - \frac{\tau' (\lambda_0)}{\tau (\lambda_0)} ( \ln x - 1)
        - \frac{B (\lambda_0)}{2 \tau (\lambda_0)}
    \biggr).
\end{equation}
Inserting then \eqref{eq:der_lambdazero_a}
and \eqref{eq:a2_lambdazero_explicit}
into the right-hand side
of equation~\eqref{eq:Fred-det-log-der-series-in-x-no-poles}
we obtain an explicit expression
for $\partial_{\lambda_0} \ln \det_{\Ccal_{\lambda_0}}
(\id + \rmV)$. Taking the $\lambda_0$ derivative
of \eqref{eq:log-Fred-det-no-poles},
on the other hand, yields
\begin{equation}
    \label{eq:der_lambdazero_log-Fred-det-no-poles}
    \partial_{\lambda_0} \ln \det_{\Ccal_{\lambda_0}} (\id + \rmV)
    = 
    C' (\lambda_0)
    + \partial_{\lambda_0} a (x, \lambda_0)
    - \tau (\lambda_0) \tau' (\lambda_0) \ln x
    + \Ocal \left( \frac{ (\ln x)^2 }{x} \right).
\end{equation}
Comparing the two expressions
for $\partial_{\lambda_0} \ln \det_{\Ccal_{\lambda_0}}
(\id + \rmV)$ we see that
\begin{equation} \label{eq:der_c_lambdazero}
    C' (\lambda_0)
    = \tau^2 (\lambda_0)
    \biggl(
        \frac{\omega'' (0 | \lambda_0) }{\omega' (0 | \lambda_0)}
        - \partial_{\lambda_0} \ln \omega' (0 | \lambda_0)
        + \frac{\tau' (\lambda_0)}{\tau (\lambda_0)}
        - \frac{B (\lambda_0)}{2 \tau (\lambda_0)}
        + \frac{g' (\lambda_0)}{\tau(\lambda_0)}
    \biggr).
\end{equation}
Finally, we insert the explicit expression for $B (\lambda_0)$
from its definition~\eqref{eq:definition-B},
\begin{multline} \label{eq:explicit_B}
    B (\lambda_0)
    = - \partial_{\lambda_0} \ln \tau (\lambda_0)
    + 2 \tau' (\lambda_0) \left( \ln 2 - \frac{\pi \rmi}{2} + 2 \right)
    + 4 \tau' (\lambda_0) \ln \omega' (0 | \lambda_0)
    + 2 \tau (\lambda_0)
    \frac{ \omega'' ( 0 | \lambda_0 ) }{ \omega' ( 0 | \lambda_0 ) }
        \\
    - 2 \tau' (\lambda_0) \psi (\tau (\lambda_0))
    - 2 \partial_{\lambda_0}
    \ln \big[ \vartheta (\lambda_0) \sin^2 (\pi \nu (\lambda_0)) \big]
    - 4 \partial_{\lambda_0} \ln \varkappa ( \lambda_0 | \lambda_0 )
    + 2 g' (\lambda_0),
\end{multline}
to obtain
\begin{multline}
\label{eq:integration-constant-derivative}
    C' (\lambda_0)
    =
    - \frac12 \partial_{\lambda_0} \left(
        \tau^2 (\lambda_0)
        \ln \left[ 2 (\omega' (0 | \lambda_0))^2 \right]
    \right)
    + \frac{1}{2}
    \Big(
        \frac{\pi \rmi}{2} - 1
    \Big)
    \partial_{\lambda_0} \tau^2 (\lambda_0)
    + \frac12 \partial_{\lambda_0} \tau (\lambda_0)
    \\
    \mspace{-1mu}
    + \tau (\lambda_0) \tau' (\lambda_0) \psi (\tau (\lambda_0))
    + \tau (\lambda_0) \partial_{\lambda_0}
    \ln \big[ \vartheta (\lambda_0) \sin^2 (\pi \nu (\lambda_0)) \big]
    + \tau (\lambda_0)
    \partial_{\lambda_0} \ln \varkappa^2 ( \lambda_0 | \lambda_0 ).
\end{multline}

Integrating the latter equation from $\lambda_0$ to $ + \infty$
along the contour $\Ccal_{\lambda_0}$,
which in the case of no poles on the real axis is the contour along $\mathbb{R}$,
we obtain the following result for the integration constant,
\begin{multline}
\label{eq:integration-constant-derivation-plus}
    \lim_{\lambda_0 \rightarrow + \infty}
    \big( C (\lambda_0) \big)
    - C (\lambda_0)
    \\
    = 
    \Big(
        \ln \left[ 2 (\omega' (0 | \lambda_0))^2 \right]
        - \frac{\pi \rmi}{2}
        + 1
    \Big)
    \frac{\tau^2 (\lambda_0)}{2}
    - \frac{ \tau (\lambda_0) }{2}
    + \int\limits_{\lambda_0}^{+ \infty}
    \dd{\lambda}
    \tau (\lambda) \tau' (\lambda) \psi (\tau (\lambda))
        \\
    + \int\limits_{\lambda_0}^{+ \infty}
    \dd{\lambda}
    \tau (\lambda) \partial_{\lambda}
    \ln \big[ \vartheta (\lambda) \sin^2 (\pi \nu (\lambda)) \big]
    + 2 \int\limits_{\lambda_0}^{+ \infty}
    \dd{\lambda}
    \tau (\lambda)
    \partial_{\lambda} \ln \varkappa ( \lambda | \lambda ).
\end{multline}
The remaining tasks are the calculation of the limit on the left-hand side
of this equation and the simplification of the integrals
on the right-hand side.

The first integral on the right-hand side
can be expressed in terms of the Barnes $G$-function,
see its integral representation~\eqref{eq:Barnes-G-function}.
Indeed,
changing the integration variable, $\lambda \to \tau (\lambda)$,
we transform the contour from $\lambda_0$ to $+ \infty$
into a contour $\Ccal$ going in the complex $\tau$ plane
from $\tau (\lambda_0)$ to $0$.
Then,
\begin{multline}
\label{eq:Barnes-G-function-integral}
    \int\limits_{\lambda_0}^{+ \infty}
    \dd{\lambda}
    \tau (\lambda) \tau' (\lambda) \psi (\tau (\lambda))
    = \int\limits_{\Ccal}
    \dd{\tau}
    \tau \psi (\tau)
    = - \int\limits_0^{\tau (\lambda_0)}
    \dd{\tau}
    \tau \psi (\tau)
    \\
    =
    - \log G (\tau (\lambda_0) + 1)
    + \frac{\tau (\lambda_0) (1 - \tau (\lambda_0))}{2}
    + \frac{\tau (\lambda_0)}{2} \log (2 \pi).
\end{multline}
Here, in the second equation,
we have deformed the contour $\Ccal$ from $\tau (\lambda_0)$ to $0$
into the straight contour from $\tau (\lambda_0)$ to $0$.
Since $\abs{\Real \tau (\lambda)} < 1 / 2$
for $\lambda \in \Ccal_{\lambda_0}$
due to the Assumption~\ref{item:assumption-5},
we do not cross any pole
when deforming the contour.

The last integral on the right-hand side
of equation~\eqref{eq:integration-constant-derivation-plus}
involves the function $\varkappa (\lambda | \lambda)$,
which in itself is defined by an integral~\eqref{eq:varkappa}.
We cannot evaluate it in terms of known special functions, but
we can transform it into a more canonical form. Integrating by parts
and then sending the regularization parameter $\varepsilon$ to zero,
we see that
\begin{equation}
    \label{eq:varkappa(lambda|lambda)}
    \ln \varkappa (\lambda | \lambda)
        = - \int\limits_{- \infty}^\lambda
        \dd{\mu}
        \Lcal_\ell' (\mu)
        \ln (\lambda - \mu)
        - \int\limits_{\lambda}^{+ \infty}
        \dd{\mu}
        \Lcal_r' (\mu)
        \ln (\mu - \lambda).
\end{equation}
We first integrate the last integral on the right-hand side
of equation~\eqref{eq:integration-constant-derivation-plus} by parts with
respect to $\lambda$,
\begin{equation}
    \int\limits_{\lambda_0}^{+ \infty}
    \dd{\lambda}
    \tau (\lambda)
    \partial_{\lambda} \ln \varkappa ( \lambda | \lambda )
    = - \tau (\lambda_0) \ln \varkappa ( \lambda_0 | \lambda_0 )
    - \int\limits_{\lambda_0}^{+ \infty}
    \dd{\lambda}
    \tau' (\lambda)
    \ln \varkappa ( \lambda | \lambda ).
\end{equation}
Then we substitute
$\tau (\lambda) = \Lcal_\ell (\lambda) - \Lcal_r (\lambda)$
and the expression~\eqref{eq:varkappa(lambda|lambda)}
and obtain
\begin{multline}
    \int\limits_{\lambda_0}^{+ \infty}
    \dd{\lambda}
    \tau' (\lambda)
    \ln \varkappa ( \lambda | \lambda )
    \\[-2ex]
    = \int\limits_{\lambda_0}^{+ \infty}
    \dd{\lambda}
    \int\limits_{- \infty}^\lambda
    \dd{\mu}
    \Lcal_\ell' (\lambda)
    \Lcal_\ell' (\mu)
    \ln (\lambda - \mu)
    - \int\limits_{\lambda_0}^{+ \infty}
    \dd{\lambda}
    \int\limits_{- \infty}^\lambda
    \dd{\mu}
    \Lcal_r' (\lambda)
    \Lcal_\ell' (\mu)
    \ln (\lambda - \mu)
    \\
    + \int\limits_{\lambda_0}^{+ \infty}
    \dd{\lambda}
    \int\limits_{\lambda}^{+ \infty}
    \dd{\mu}
    \Lcal_\ell' (\lambda)
    \Lcal_r' (\mu)
    \ln (\mu - \lambda)
    - \int\limits_{\lambda_0}^{+ \infty}
    \dd{\lambda}
    \int\limits_{\lambda}^{+ \infty}
    \dd{\mu}
    \Lcal_r' (\lambda)
    \Lcal_r' (\mu)
    \ln (\mu - \lambda).
\end{multline}
Symmetrising each of the double integrals above with respect to the
two integrations involved, we express them as
\begin{multline}
    \int\limits_{\lambda_0}^{+ \infty}
    \dd{\lambda}
    \tau' (\lambda)
    \ln \varkappa(\lambda | \lambda)
    =
    \frac12
    \int\limits_{- \infty}^{+ \infty}
    \dd{\lambda}
    \int\limits_{- \infty}^{+ \infty}
    \dd{\mu}
    \Lcal' (\lambda | \lambda_0)
    \Lcal' (\mu | \lambda_0)
    \ln |\lambda - \mu|
    \\
    - \frac12
    \int\limits_{- \infty}^{+ \infty}
    \dd{\lambda}
    \int\limits_{- \infty}^{+ \infty}
    \dd{\mu}
    \Lcal_\ell' (\lambda)
    \Lcal_\ell' (\mu)
    \ln |\lambda - \mu|,
\end{multline}
where $\Lcal' (\lambda | \lambda_0)$
was introduced in~\eqref{eq:definition-of-Lcal-prime}.

Integrating once more by parts, we obtain
\begin{multline}
\label{eq:integral-with-ln-varkappa-plus}
    2 \int\limits_{\lambda_0}^{+ \infty}
    \dd{\lambda}
    \tau' (\lambda) \ln \varkappa (\lambda | \lambda)
    =
    - \frac12
    \int\limits_{- \infty}^{+ \infty}
    \dd{\lambda}
    \int\limits_{- \infty}^{+ \infty}
    \dd{\mu}
    \frac{
        \Lcal_\ell' (\lambda) \Lcal_\ell (\mu)
        - \Lcal_\ell (\lambda) \Lcal_\ell' (\mu) }{
        \lambda - \mu}
    \\
    - \tau (\lambda_0) \ln \varkappa(\lambda_0 | \lambda_0)
    + \frac12
    \int\limits_{- \infty}^{+ \infty}
    \dd{\lambda}
    \int\limits_{- \infty}^{+ \infty}
    \dd{\mu}
    \frac{
        \Lcal' (\lambda | \lambda_0) \Lcal (\mu | \lambda_0)
        - \Lcal (\lambda | \lambda_0) \Lcal' (\mu | \lambda_0) }{
        \lambda - \mu}.
\end{multline}
Here the boundary term stems from the point $\lambda_0$
in the integrals involving $\Lcal (\lambda | \lambda_0)$.
Eventually, the last integral
in~\eqref{eq:integration-constant-derivation-plus},
involving $\ln \varkappa (\lambda | \lambda)$, can be expressed as
\begin{multline}
\label{eq:integral-with-kappa}
    2 \int\limits_{\lambda_0}^{+ \infty}
    \dd{\lambda}
    \tau (\lambda)
    \partial_{\lambda} \ln \varkappa ( \lambda | \lambda )
    \\[-2ex]
    = - \tau (\lambda_0) \ln \varkappa ( \lambda_0 | \lambda_0 )
    + \frac12
    \int\limits_{- \infty}^{+ \infty}
    \dd{\lambda}
    \int\limits_{- \infty}^{+ \infty}
    \dd{\mu}
    \frac{
        \Lcal_\ell' (\lambda) \Lcal_\ell (\mu)
        - \Lcal_\ell (\lambda) \Lcal_\ell' (\mu) }{
        \lambda - \mu}
    \\
    - \frac12
    \int\limits_{- \infty}^{+ \infty}
    \dd{\lambda}
    \int\limits_{- \infty}^{+ \infty}
    \dd{\mu}
    \frac{
        \Lcal' (\lambda | \lambda_0) \Lcal (\mu | \lambda_0)
        - \Lcal (\lambda | \lambda_0) \Lcal' (\mu | \lambda_0) }{
        \lambda - \mu}.
\end{multline}

Finally, we need to fix the integration constant
$\lim_{\lambda_0 \rightarrow + \infty} C (\lambda_0)$
on the left-hand side
of the expression~\eqref{eq:integration-constant-derivation-plus}.
For this purpose we go back to \eqref{eq:log-Fred-det-no-poles}
and note that
\begin{equation} \label{eq:c_infinity_from_det}
    \lim_{\lambda_0 \rightarrow + \infty} C (\lambda_0)
    =
    \lim_{\lambda_0 \rightarrow + \infty}
    \biggl\{
    \ln \det_{\mathbb{R}} (\id + \rmV)
    - a (x, \lambda_0)
    + \Ocal \left( \frac{ (\ln x)^2 }{x} \right)
    \biggr\}.
\end{equation}
We show in Appendix~\ref{app:existence_of_Fred} that
\begin{equation} \label{eq:Fred_lambda0_infty}
    \lim_{\lambda_0 \rightarrow + \infty}
    \det_{\mathbb{R}} (\id + \rmV)
    = \det_{\mathbb R} \big(\id + \rmV_0 \big) \big( 1 + \Ocal (x^{- \infty}) \big).
\end{equation}
where
\begin{equation}
\label{eq:V-GSK}
    V_0 (\lambda, \mu)
    = \frac{
        \sqrt{ F (\lambda) F (\mu) } }{
        2 \pi \rmi (\lambda - \mu) }
    \left(
        e^{-1}_0 (\lambda) e_0 (\mu)
        - e_0 (\lambda) e^{-1}_0 (\mu)
    \right)
\end{equation}
with
\begin{align} \label{eq:def_F}
    F(\lambda)
    & = \vartheta (\lambda) \bigl(\rme^{2\pi \rmi \nu(\lambda)} - 1\bigr),
    \\[1ex] \label{eq:def_ezero}
    e_0 (\lambda)
    & = \exp \biggl\{- \frac{\rmi x p(\lambda)}{2}
    - \frac12 \Bigl(g(\lambda)
    - \ln\bigl(\rme^{- 2\pi \rmi \nu(\lambda)} - 1\bigr)\Bigr)\biggr\}.
\end{align}

The Fredholm determinant on the right hand side of
equation~\eqref{eq:Fred_lambda0_infty} corresponds to the `static case'
considered in~\cite{KKMST-09-RHp,S-10}. In~\cite{KKMST-09-RHp} the
authors obtain the large-$x$ asymptotic expansion of the Fredholm
determinant $\det_{[-q, q]} \big(\id + \rmV_0 \big)$ (cf.\ Theorem
2.1 of that paper). Since this expansion is uniform in $q$, we
may take the limit $q \to + \infty$, which results in the following
\begin{proposition} \cite{KKMST-09-RHp}
\label{prop:GSK}
    \begin{multline}
    \label{eq:prop:GSK-asymptotics}
    \ln \det_{\mathbb R} \big(\id + \rmV_0 \big) \\
        =
        2
        \int\limits_{- \infty}^{+ \infty}
        \dd{\lambda}
        \Lcal_\ell (\lambda)
        \partial_\lambda \ln e_0 (\lambda)
        + \frac12
        \int\limits_{- \infty}^{+ \infty}
        \dd{\lambda}
        \int\limits_{- \infty}^{+ \infty}
        \dd{\mu}
        \frac{
            \Lcal_\ell' (\lambda) \Lcal_\ell (\mu)
            - \Lcal_\ell (\lambda) \Lcal_\ell' (\mu) }{
            \lambda - \mu }
        + \ocal(1).
    \end{multline}
    \end{proposition}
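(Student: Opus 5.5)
The plan is to reduce the statement to \cite[Theorem~2.1]{KKMST-09-RHp} on a finite interval $[-q,q]$ and then let $q \to +\infty$. The key is that all $q$-dependent boundary contributions disappear in this limit because $\vartheta$, and hence $F$, decays at infinity.

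\emph{Step 1: put $\rmV_0$ into the form of \cite{KKMST-09-RHp}.} I would write
\begin{equation*}
    e_0^{-1}(\lambda)e_0(\mu) - e_0(\lambda)e_0^{-1}(\mu) = \exp\bigl\{\tfrac{\rmi x}{2}(p(\lambda)-p(\mu)) + \tfrac12(\tilde g(\lambda)-\tilde g(\mu))\bigr\} - (\lambda \leftrightarrow \mu),
\end{equation*}
with $\tilde g = g - \ln(\rme^{-2\pi\rmi\nu}-1)$. The symmetric prefactor $\sqrt{F(\lambda)F(\mu)}$ can be replaced by $F(\lambda)$ through the conjugation $\sqrt{F(\lambda)}\,V_0(\lambda,\mu)\,\sqrt{F(\mu)}^{-1}$, which leaves the determinant invariant. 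This gives a generalized sine kernel with phase $p$, additive function $\tilde g$, and weight $F = \vartheta(\rme^{2\pi\rmi\nu}-1)$. The corresponding ``$\nu$-function'' of \cite{KKMST-09-RHp} is
\begin{equation*}
    -\frac{1}{2\pi\rmi}\ln(1+F) = \Lcal_\ell ,
\end{equation*}
with the branch fixed by \eqref{eq:complex-logarithms}. The monodromy condition \eqref{eq:monodromy-condition} is exactly the zero-winding requirement of \cite{KKMST-09-RHp}. Assumption~\ref{item:assumption-2}, namely $\Im p(\lambda+\rmi y) \ge cy$, plays the role of $p(\lambda)=\lambda$ there. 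Since only the Deift--Zhou lens opening uses the phase, the theorem carries over with $x(\lambda-\mu)$ replaced by $x(p(\lambda)-p(\mu))$. I would check this point briefly, citing \cite{S-10} where general momenta are treated.

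\emph{Step 2: read off the finite-$q$ asymptotics.} \cite[Theorem~2.1]{KKMST-09-RHp} gives $\ln\det_{[-q,q]}(\id+\rmV_0)$ as a sum of four pieces:
\begin{itemize}
    \item the bulk term $2\int_{-q}^{q}\Lcal_\ell\,\partial_\lambda\ln e_0$, since $\partial_\lambda \ln e_0 = -\rmi x p'/2 - \tilde g'/2$;
    \item the double integral $\frac12\iint_{[-q,q]^2}\frac{\Lcal_\ell'(\lambda)\Lcal_\ell(\mu)-\Lcal_\ell(\lambda)\Lcal_\ell'(\mu)}{\lambda-\mu}$;
    \item endpoint terms of the form $-(\Lcal_\ell^2(q)+\Lcal_\ell^2(-q))\ln(2qx p'(\pm q)) + \ln[G(1+\Lcal_\ell(q))G(1-\Lcal_\ell(-q))] + \Lcal_\ell(\pm q)\,(\ldots)$;
    \item an $\ocal(1)$ remainder that is uniform in $q$.
\end{itemize}

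\emph{Step 3: take $q \to \infty$.} By Assumption~\ref{item:assumption-on-vartheta}, $\vartheta \to 0$, so $F(\pm q) \to 0$ and $\Lcal_\ell(\pm q) \to 0$. Together with $\vartheta^{1/4} \in L^1$, this makes all endpoint terms tend to zero. The $G$-factors tend to $G(1)=1$. The products $\Lcal_\ell^2(\pm q)\ln q$ vanish provided $\Lcal_\ell(q) = \ocal(1/\sqrt{\ln q})$, which I would extract from the integrability and decay of $\vartheta$, or argue along a subsequence. The bulk integrals converge to their $\mathbb R$ versions. On the left-hand side, $\det_{[-q,q]}(\id+\rmV_0) \to \det_{\mathbb R}(\id+\rmV_0)$ follows from the absolutely convergent Fredholm series and the Hadamard-type bounds of Appendix~\ref{app:existence_of_Fred}, via dominated convergence.

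\emph{Main obstacle.} The hardest point is justifying the interchange of $q \to \infty$ with the $x$-asymptotics. This requires that the $\ocal(1)$ in \cite[Theorem~2.1]{KKMST-09-RHp} is uniform in $q$ even though the endpoint exponents $\Lcal_\ell(\pm q)$ tend to $0$. At exponent $0$ the local Bessel/confluent parametrix degenerates and its matching error bounds must not blow up. I would handle this by noting that the endpoint parametrix error is controlled by $|\Lcal_\ell(\pm q)|\,x^{2|\Real\Lcal_\ell|-1}$, which stays bounded and in fact shrinks as $q$ grows. If that estimate is hard to extract from \cite{KKMST-09-RHp}, I would instead treat the tails $|\lambda|>q$ as a small perturbation of the kernel, of size $\|\vartheta\|_{L^1(|\lambda|>q)}$, uniformly in $x$.
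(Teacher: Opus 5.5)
Your proposal follows the paper's own argument: apply \cite[Theorem~2.1]{KKMST-09-RHp} to $\rmV_0$ on $[-q,q]$ with $\Lcal_\ell=-\frac{1}{2\pi\rmi}\ln(1+F)$, use that the expansion is uniform in $q$, and let $q\to+\infty$ so that the endpoint terms vanish because $\Lcal_\ell(\pm q)\to 0$ (the paper simply asserts the uniformity that you correctly single out as the point needing care). Drop your fallback, though. The tails $|\lambda|>q$ are not a perturbation that is small uniformly in $x$, since they shift $\ln\det$ by roughly $-\rmi x\int_{|\lambda|>q}\Lcal_\ell(\lambda)p'(\lambda)\dd{\lambda}$ (already visible in the trace of the tail operator), so the interchange of limits has to rest on the $q$-uniform remainder.
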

For the proof we note that $\Lcal_\ell (\lambda)
= - \frac{1}{ 2 \pi \rmi} \ln \left(1 + F (\lambda) \right)$
and that $e_0$ and $\Lcal_\ell$ satisfy the hypothesis of Theorem 2.1
in \cite{KKMST-09-RHp}.

Substituting~\eqref{eq:Fred_lambda0_infty} into \eqref{eq:c_infinity_from_det}
and then \eqref{eq:prop:GSK-asymptotics} into the resulting equation
we obtain the limit
\begin{multline}
\label{eq:constant-at-plus-infty}
    \lim_{\lambda_0 \rightarrow + \infty} C (\lambda_0)
    =
    \frac12
    \int\limits_{- \infty}^{+ \infty}
    \dd{\lambda}
    \int\limits_{- \infty}^{+ \infty}
    \dd{\mu}
    \frac{
        \Lcal_\ell' (\lambda) \Lcal_\ell (\mu)
        - \Lcal_\ell (\lambda) \Lcal_\ell' (\mu) }{
        \lambda - \mu }
    \\[-1ex]
    +
    \int\limits_{- \infty}^{+ \infty}
    \dd{\lambda}
    \Lcal_\ell (\lambda) \partial_\lambda
    \ln\big( \rme^{-2 \pi \rmi \nu (\lambda)} - 1 \big).
\end{multline}

With equations~\eqref{eq:Barnes-G-function-integral},
\eqref{eq:integral-with-kappa},
and~\eqref{eq:constant-at-plus-infty}
into~\eqref{eq:integration-constant-derivation-plus}
we have arrived at the following representation of the integration constant,
\begin{multline}
\label{eq:expression-for-integration-constant-plus}
    C (\lambda_0)
    =
    \ln \left[ G (\tau (\lambda_0) + 1) \right]
    - \frac{\tau (\lambda_0)}{2} \ln (2 \pi)
    + \frac{\tau^2 (\lambda_0)}{2}
    \Big(
        \frac{\pi \rmi}{2}
        -  \ln \left[ 2 (\omega' (0 | \lambda_0))^2 \right]
    \Big)
    \\[1ex]
    + \tau (\lambda_0) \ln \varkappa (\lambda_0 | \lambda_0)
    + \frac12
    \int\limits_{- \infty}^{+ \infty}
    \dd{\lambda}
    \int\limits_{- \infty}^{+ \infty}
    \dd{\mu}
    \frac{
        \Lcal' (\lambda | \lambda_0) \Lcal (\mu | \lambda_0)
        - \Lcal (\lambda | \lambda_0) \Lcal' (\mu | \lambda_0) }{
        \lambda - \mu}
    \\
    + \int\limits_{- \infty}^{+ \infty}
    \dd{\lambda}
    \Lcal_\ell (\lambda)
    \partial_\lambda
    \ln \big(\rme^{- 2 \pi \rmi \nu (\lambda)} - 1\big)
    - \int\limits_{\lambda_0}^{+ \infty}
    \dd{\lambda}
    \tau (\lambda)
    \partial_\lambda
    \ln \big[ \vartheta(\lambda) \sin^2 (\pi \nu (\lambda)) \big].
\end{multline}

Here the last two integrals on the right-hand side can be
slightly simplified by means of the identity
\begin{equation} \label{eq:a_vanishing_formula}
    \int\limits_{- \infty}^{+ \infty}
    \dd{\lambda}
    \Lcal_\ell (\lambda)
    \partial_\lambda
    \ln \big[ \vartheta(\lambda)(\rme^{2 \pi \rmi \nu (\lambda)} - 1) \big]
    = 0.
\end{equation}
The latter is a consequence of
\begin{multline}
\label{eq:dilogarithm-integrals}
    \int\limits_{- \infty}^{\lambda_0}
    \dd{\lambda}
    \Lcal_\ell (\lambda)
    \partial_\lambda \ln \big[
        \vartheta (\lambda)
        (\rme^{2 \pi \rmi \nu (\lambda)} - 1)
    \big] \\
    =
    - \int\limits_{\lambda_0}^{+ \infty}
    \dd{\lambda}
    \Lcal_\ell (\lambda)
    \partial_\lambda \ln \big[
        \vartheta (\lambda)
        (\rme^{2 \pi \rmi \nu (\lambda)} - 1)
    \big]
    =
    \frac{1}{2 \pi \rmi}
    \Li_2 \big(
        \vartheta (\lambda_0)
        (1 - \rme^{2 \pi \rmi \nu(\lambda_0)})
    \big),
\end{multline}
where $\Li_2$ is the dilogarithm. Both integrals reduce to the integral
representation~\eqref{eq:dilogarithm} of the dilogarithm by means of the
substitution
\begin{equation}
    t = - \vartheta (\lambda) (\rme^{2 \pi \rmi \nu (\lambda)} - 1).
\end{equation}
It is important that the integrals are along the contour $\Ccal_{\lambda_0}$,
since due to the Assumption~\ref{as:number6}
and $\Scal \cap \mathbb{R} = \emptyset$,
\begin{equation}
    \vartheta (\lambda)
    (1 - \rme^{\pm 2 \pi \rmi \nu (\lambda)}) \notin [1, {+ \infty}),
    \qquad
    \lambda \in \mathbb{R},
\end{equation}
and therefore both integrals have a trivial monodromy.

Using \eqref{eq:a_vanishing_formula} and the definition
\eqref{eq:definition-of-tau} of $\tau$ we see that
\begin{multline}
    \int\limits_{- \infty}^{+ \infty}
    \dd{\lambda}
    \Lcal_\ell (\lambda)
    \partial_\lambda
    \ln \big(\rme^{- 2 \pi \rmi \nu (\lambda)} - 1\big)
    - \int\limits_{\lambda_0}^{+ \infty}
    \dd{\lambda}
    \tau (\lambda)
    \partial_\lambda
    \ln \big[ \vartheta(\lambda) \sin^2 (\pi \nu (\lambda) \big]
    \\[-2ex] 
    =
    \int\limits_{- \infty}^{+ \infty}
    \dd{\lambda}
    \Lcal (\lambda | \lambda_0)
    \partial_\lambda
    \ln \big[
        \vartheta (\lambda) \sin^2 (\pi \nu (\lambda))
    \big].
\end{multline}
With this identity the integration constant can finally be written as
\begin{multline}
\label{eq:expression-for-integration-constant-final}
    C (\lambda_0)
    =
    \ln \left[ G (\tau (\lambda_0) + 1) \right]
    - \frac{\tau (\lambda_0)}{2} \ln (2 \pi)
    + \frac{\tau^2 (\lambda_0)}{2}
    \Big(
        \frac{\pi \rmi}{2}
        -  \ln (- u''(\lambda_0))
    \Big)
    \\
    + \tau (\lambda_0) \ln \varkappa (\lambda_0 | \lambda_0)
    + \int\limits_{- \infty}^{+ \infty}
    \dd{\lambda}
    \Lcal (\lambda | \lambda_0)
    \partial_\lambda
    \ln \big[
        \vartheta (\lambda) \sin^2 (\pi \nu (\lambda))
    \big]
    \\
    + \frac12
    \int\limits_{- \infty}^{+ \infty}
    \dd{\lambda}
    \int\limits_{- \infty}^{+ \infty}
    \dd{\mu}
    \frac{
        \Lcal' (\lambda | \lambda_0) \Lcal (\mu | \lambda_0)
        - \Lcal (\lambda | \lambda_0) \Lcal' (\mu | \lambda_0) }{
        \lambda - \mu}.
\end{multline}
Here we used that $2 (\omega' (0 | \lambda_0))^2 = - u''(\lambda_0)$,
see equation~\eqref{eq:local-parametrization-omega}.
This concludes the proof of the Theorem~\ref{thm:Fredholm-det-AE-no-poles}.

\subsection{A Fredholm minor}
\label{sec:Fredholm-minor-no-poles}
For the applications, for example, to the impenetrable Bose gas,
we often need to evaluate a minor of the Fredholm determinant,
which has the form
\begin{equation}
    \Bigg[
        \int\limits_{\Ccal_{\lambda_0}}
        \frac{ \dd{\mu} }{ 2 \pi }
        e^{-2} (\mu)
        + \partial_\gamma
    \Bigg]
    \eval{
        \det\limits_{\Ccal_{\lambda_0}}
        \big(
            \id + \rmV
            + \gamma \rmP
        \big)
    }_{\gamma = 0}
\end{equation}
with $\rmP$ being a projector.

If $\det_{\Ccal_{\lambda_0}} \big( \id + \rmV \big) \neq 0$,
one can show that
\begin{equation}
    \partial_\gamma
    \eval{
        \det_{\Ccal_{\lambda_0}}
        \big(
            \id + \rmV
            + \gamma \rmP
        \big)
    }_{\gamma = 0}
    = \tr\big\{
        \big( \id - \rmR \big) \rmP
    \big\}
    \cdot
    \det\limits_{\Ccal_{\lambda_0}}
    \big(
        \id + \rmV
    \big),
\end{equation}
where $\rmR$ is the resolvent, $(\id + \rmV) (\id - \rmR) = \id$.
Hence,
\begin{multline}
\label{eq:minor-of-Fredholm-det}
    \Bigg[
        \int\limits_{\Ccal_{\lambda_0}}
        \frac{ \dd{\mu} }{ 2 \pi }
        e^{-2} (\mu)
        + \partial_\gamma
    \Bigg]
    \eval{
        \det\limits_{\Ccal_{\lambda_0}}
        \big(
            \id + \rmV
            + \gamma \rmP
        \big)
    }_{\gamma = 0}
    \\
    = \Bigg[
        \int\limits_{\Ccal_{\lambda_0}}
        \frac{ \dd{\mu} }{ 2 \pi }
        e^{-2} (\mu)
        + \tr\big\{
            \big( \id - \rmR \big) \rmP
        \big\}
    \Bigg]
    \det\limits_{\Ccal_{\lambda_0}}
    \big(
        \id + \rmV
    \big).
\end{multline}

In the special case
when the kernel of the projector $\rmP$ is given by
\begin{equation}
\label{eq:P-kernel}
    P (\lambda, \mu)
    = \frac{2}{\pi} \vartheta (\mu)
    \sin[ \pi \nu (\lambda) ]
    \sin[ \pi \nu (\mu) ]
    E (\lambda) E (\mu),
\end{equation}
the prefactor in equation~\eqref{eq:minor-of-Fredholm-det}
can be expressed in terms of the solution of the Riemann--Hilbert problem,
which we formulate in the following proposition.
\begin{proposition}
    \label{prop:minor-of-Fredholm-det}
    If $\det_{\Ccal_{\lambda_0}} \big( \id + \rmV \big) \neq 0$,
    then
    \begin{equation}
    \label{eq:prop:minor-of-Fredholm-det}
        \Bigg[
            \int\limits_{\Ccal_{\lambda_0}}
            \frac{ \dd{\mu} }{ 2 \pi }
            e^{-2} (\mu)
            + \partial_\gamma
        \Bigg]
        \eval{
            \det\limits_{\Ccal_{\lambda_0}}
            \big(
                \id + \rmV
                + \gamma \rmP
            \big)
        }_{\gamma = 0}
        \\
        =
        \rmi
        \lim\limits_{\lambda \to \infty}
        \big[
            \lambda \cdot \widetilde{\chi}_{12} (\lambda)
        \big]
        \cdot
        \det\limits_{\Ccal_{\lambda_0}}
        \big( \id + \rmV \big).
    \end{equation}
    Here $\rmP$ is the projector given by equation~\eqref{eq:P-kernel},
    and $\widetilde{\chi}_{12} (\lambda)$ is the matrix element
    of the solution $\widetilde{\chi} (\lambda)$
    of the Riemann--Hilbert Problem~\ref{RHp:chi-tilde}.
\end{proposition}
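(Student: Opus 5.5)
Given the identity \eqref{eq:minor-of-Fredholm-det}, it suffices to prove
\begin{equation*}
    \int\limits_{\Ccal_{\lambda_0}} \frac{\dd{\mu}}{2\pi} e^{-2}(\mu)
    + \tr\{(\id - \rmR)\rmP\}
    = \rmi \lim_{\lambda\to\infty} \lambda\, \widetilde{\chi}_{12}(\lambda).
\end{equation*}

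\emph{Step 1: the Cauchy transform term.} Since $\widetilde\chi = \chi(\rmI_2 - \rmC\sigma^+)$, we have $\widetilde\chi_{12} = \chi_{12} - \rmC\,\chi_{11}$. Because $\chi_{11} = 1 + \Ocal(\lambda^{-1})$ and $\rmC(\lambda) = -\lambda^{-1}\int \frac{\dd\mu}{2\pi\rmi} e^{-2}(\mu) + \ocal(\lambda^{-1})$, this gives
\begin{equation*}
    \rmi\lim \lambda\,\widetilde\chi_{12}
    = \rmi \lim \lambda\,\chi_{12}
    + \int \frac{\dd\mu}{2\pi} e^{-2}(\mu).
\end{equation*}
So the first term on each side matches. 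The limit should be taken in a direction transversal to $\Ccal_{\lambda_0}$, where the Cauchy transform expansion is justified by the decay of $e^{-2}$ on the deformed contour.

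\emph{Step 2: the trace term.} It remains to show $\tr\{(\id-\rmR)\rmP\} = \rmi\lim \lambda\,\chi_{12}(\lambda)$. From \eqref{eq:chi-and-chi-inverse},
\begin{equation*}
    \lim \lambda\,\chi(\lambda) = \int \mathbf F_R(\mu)\,\mathbf E_L^\intercal(\mu)\,\dd\mu ,
\end{equation*}
so
\begin{equation*}
    \rmi\lim\lambda\chi_{12} = \rmi\int (\mathbf F_R)_1\,(\mathbf E_L)_2\,\dd\mu
    = \rmi\int (\mathbf F_R)_1(\mu)\,\sin[\pi\nu(\mu)]\,E(\mu)\,\dd\mu .
\end{equation*}
On the other side, $\rmP$ is rank one with kernel $f(\lambda)g(\mu)$, where $f(\lambda) = \sin[\pi\nu(\lambda)]E(\lambda)$ and $g(\mu) = \frac{2}{\pi}\vartheta(\mu)\sin[\pi\nu(\mu)]E(\mu)$. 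Hence
\begin{equation*}
    \tr\{(\id-\rmR)\rmP\} = \int\!\!\int g(\mu)\,(\id-\rmR)(\mu,\lambda)\,f(\lambda).
\end{equation*}
Now $g = -\rmi\,(\mathbf E_R)_1$, since $(\mathbf E_R)_1 = \frac{4\vartheta\sin\pi\nu}{2\pi\rmi}E = -\frac{2\rmi}{\pi}\vartheta\sin(\pi\nu)E$. Using \eqref{eq:vector-F-R}, the row function $\mathbf F_R = \mathbf E_R(\id+\rmV)^{-1} = \mathbf E_R(\id-\rmR)$ acting from the right. Therefore
\begin{equation*}
    \int g(\mu)(\id-\rmR)(\mu,\lambda)\,\dd\mu = -\rmi\,(\mathbf F_R)_1(\lambda),
\end{equation*}
and the trace equals $-\rmi\int(\mathbf F_R)_1 f$.

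\emph{Step 3: fixing the sign.} Steps 1 and 2 agree up to a sign, so the signs must be reconciled carefully. The point to check is the normalization and orientation in $\chi = \rmI_2 - \int \frac{\mathbf F_R\mathbf E_L^\intercal}{\mu-\lambda}$: expanding $\frac{1}{\mu-\lambda} = -\lambda^{-1}(1+\dots)$ gives $\lambda(\chi-\rmI_2)\to +\int\mathbf F_R\mathbf E_L^\intercal$. The sign convention in the expansion of $\rmC$ must be tracked in the same way. I would redo this bookkeeping explicitly. The likely source of a relative sign is the convention that $\mathbf F_R$ solves the transposed equation \eqref{eq:vector-F-R}, i.e. $\mathbf F_R = \mathbf E_R(\id+\rmV)^{-1}$, combined with $(\id+\rmV)^{-1} = \id - \rmR$. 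If a sign discrepancy persists, I would instead compute $\tr\{\rmR\rmP\}$ using the integrable form \eqref{eq:kernel-R} of $R$ and the identity $\mathbf F_L = (\id-\rmR)\mathbf E_L$, and redistribute the terms. This sign and orientation bookkeeping, together with justifying the $\lambda\to\infty$ limits via the contour deformation of Figure~\ref{fig:contour-for-Cauchy-transform}, is the main obstacle. The rest is algebra.
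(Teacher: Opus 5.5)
Your overall strategy works, and the sign discrepancy in Step 3 does not exist: it comes from an arithmetic slip in Step 2. From your own correct formula $(\mathbf E_R)_1=\frac{2}{\pi\rmi}\vartheta\sin[\pi\nu]E=-\frac{2\rmi}{\pi}\vartheta\sin[\pi\nu]E$ you get $(\mathbf E_R)_1=-\rmi g$. Hence $g=\rmi(\mathbf E_R)_1$, because $1/(-\rmi)=\rmi$, and not $g=-\rmi(\mathbf E_R)_1$ as you wrote. With the correct relation,
\[
\int g(\mu)\,\bigl(\delta(\mu-\lambda)-R(\mu,\lambda)\bigr)\,\dd{\mu}=\rmi(\mathbf F_R)_1(\lambda),
\]
so $\tr\{(\id-\rmR)\rmP\}=\rmi\int(\mathbf F_R)_1 f=\rmi\lim_{\lambda\to\infty}\lambda\chi_{12}(\lambda)$. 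Together with Step 1, this is exactly the claim.

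The items you planned to re-check in Step 3 are all correct as you had them:
\begin{itemize}
\item the expansion $\frac{1}{\mu-\lambda}=-\lambda^{-1}(1+\dots)$;
\item the sign in the expansion of $\rmC$;
\item the row-action form $\mathbf F_R=\mathbf E_R(\id-\rmR)$ read off from \eqref{eq:vector-F-R}.
\end{itemize}
As written, the proposal does not close. It ends by asserting an inconsistency and looks for its source in the wrong places. The one-line correction above turns it into a complete proof, and Step 3 then becomes unnecessary.

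Once corrected, your route differs mildly from the paper's. The paper uses $\det\chi=1$ and the representation of $\chi^{-1}$ in \eqref{eq:chi-and-chi-inverse} to write $\lim\lambda\chi_{12}=-\lim\lambda(\chi^{-1})_{12}=\int(\mathbf E_R)_1(\mathbf F_L)_2$. It then inserts $\mathbf F_L=(\id-\rmR)\mathbf E_L$ from \eqref{eq:vector-F-L}. You instead read $\int(\mathbf F_R)_1(\mathbf E_L)_2$ directly off $\chi$ and use $\mathbf F_R=\mathbf E_R(\id-\rmR)$.

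Both routes produce the same double integral
\[
\frac{2}{\pi\rmi}\iint\vartheta(\mu)\sin[\pi\nu(\mu)]E(\mu)\,\bigl(\delta(\mu-\lambda)-R(\mu,\lambda)\bigr)\,\sin[\pi\nu(\lambda)]E(\lambda),
\]
which equals $-\rmi\tr\{(\id-\rmR)\rmP\}$. Yours is slightly more economical, since it needs neither the unimodularity of $\chi$ nor the formula for $\chi^{-1}$. Your Step 1, the absorption of the Cauchy-transform term via $\widetilde\chi_{12}=\chi_{12}-\rmC\chi_{11}$, coincides with the paper's final step.
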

\begin{proof}
    Consider the matrix $\chi (\lambda)$
    given by equation~\eqref{eq:chi-and-chi-inverse}.
    Since $\det \chi (\lambda) = 1$,
    \begin{equation}
    \label{eq:limit-chi-12}
        \lim\limits_{\lambda \to \infty}
        \big[ \lambda \cdot \chi_{12} (\lambda) \big]
        = - \lim\limits_{\lambda \to \infty}
        \big[ \lambda \cdot (\chi^{-1} (\lambda))_{12} \big]
        = \int\limits_{\Ccal_{\lambda_0}} \dd{\mu}
        \big( \mathbf{E}_R (\mu) \big)_1
        \cdot \big( \mathbf{F}_L^\intercal (\mu) \big)_2,
    \end{equation}
    Next, we use the integral equation
    for the vector~\eqref{eq:vector-F-L} to get
    \begin{equation}
        \big( \mathbf{F}_R (\lambda) \big)_2
        = \big( \id - \rmR \big)
        \big( \mathbf{E}_R (\lambda) \big)_2.
    \end{equation}
    Substituting this expression
    and~\eqref{eq:vectors-E} into~\eqref{eq:limit-chi-12},
    we obtain
    \begin{multline}
        \lim\limits_{\lambda \to \infty}
        \big[ \lambda \cdot \chi_{12} (\lambda) \big]
        \\
        = \frac{2}{\pi \rmi}
        \int\limits_{\Ccal_{\lambda_0}} \dd{\mu}
        \vartheta (\mu)
        \sin[ \pi \nu (\mu) ] E (\mu)
        \int\limits_{\Ccal_{\lambda_0}} \dd{\lambda}
        \big( \delta (\mu - \lambda) - R (\mu, \lambda) \big)
        \sin[ \pi \nu (\lambda) \big] E (\lambda),
    \end{multline}
    which is $- \rmi \tr\{ (\id - \rmR) \rmP \}$,
    see equations~\eqref{eq:P-kernel} and~\eqref{eq:minor-of-Fredholm-det}.
    Finally,
    the first term on the right-hand side of~\eqref{eq:minor-of-Fredholm-det}
    can be absorbed into the matrix element
    of $\widetilde{\chi}_{12} (\lambda)$,
    due to~\eqref{eq:chi-tilde},
    \begin{multline}
        \lim\limits_{\lambda \to \infty}
        \big[
            \lambda \cdot \chi_{12} (\lambda)
        \big]
        = \lim\limits_{\lambda \to \infty}
        \big[
            \lambda \cdot \widetilde{\chi}_{12} (\lambda)
        \big]
        + \lim\limits_{\lambda \to \infty}
        \big[
            \lambda \cdot \widetilde{\chi}_{11} (\lambda)
            \rmC (\lambda)
        \big]
        \\
        = \lim\limits_{\lambda \to \infty}
        \big[
            \lambda \cdot \widetilde{\chi}_{12} (\lambda)
        \big]
        -
        \int\limits_{\Ccal_{\lambda_0}}
        \frac{ \dd{\mu} }{2 \pi  \rmi}
        e^{-2} (\mu),
    \end{multline}
    which completes the proof.
\end{proof}

In the case of no poles in the vicinity of the real axis,
the asymptotic expansion
of the matrix element $\widetilde{\chi}_{12} (\lambda)$ takes the form
\begin{equation}
    \lim\limits_{\lambda \to \infty}
    \big[
        \lambda \cdot \widetilde{\chi}_{12} (\lambda)
    \big]
    = \lim\limits_{\lambda \to \infty}
    \big[
        \lambda \cdot \Pi_{12} (\lambda)
    \big]
    =
    \frac{ b_{12} (\lambda_0) }{ x^{1/2} \omega' (0 | \lambda_0) }
    + \Ocal \Bigg(
        \frac{
            (\ln x)^2 \rme^{\rmi u (\lambda_0)} }{
            x^{3 / 2 - \tau (\lambda_0)} }
    \Bigg).
\end{equation}
Here we used the explicit expansion of
$\Pi (\lambda)$ in the second equation,
see equations~\eqref{eq:Pi-AE-Neumann-series},
\eqref{eq:coefficient-Pi-1-explicit},
and~\eqref{eq:order-of-Pi-3}.

Then it is straightforward to derive the following proposition.
\begin{proposition}
    \label{prop:minor-of-Fredholm-det-no-poles}
    Let Assumptions \ref{item:assumption-1}--\ref{as:number6}
    be fulfilled, and let $\Scal \cap {\mathbb R} = \emptyset$.
    Then the Fredholm minor defined by
    the integrable integral operator $\rmV$
    with kernel~\eqref{eq:kernel-V-as-scalar-product}
    and by the projector $\rmP$ with kernel~\eqref{eq:P-kernel}
    has the large-$x$ asymptotic expansion
    \begin{multline}
        \label{eq:prop:Fredholm-det-minor-AE-no-poles}
        \Bigg[
            \int\limits_{\Ccal_{\lambda_0}}
            \frac{ \dd{\mu} }{ 2 \pi }
            e^{-2} (\mu)
            + \partial_\gamma
        \Bigg]
        \eval{
            \det\limits_{\Ccal_{\lambda_0}}
            \big(
                \id + \rmV
                + \gamma \rmP
            \big)
        }_{\gamma = 0}
        =
        \frac{
            \sqrt{2} \rmi b_{12} (\lambda_0) }{
            \sqrt{- u'' (\lambda_0)} }
        \cdot \Acal (\lambda_0)
        \\
        \times
        x^{ - \flatfrac12 - \flatfrac{\tau^2 (\lambda_0)}{2}}
        \exp\Big\{
            - \int\limits_{\Ccal_{\lambda_0}}
            \dd{\lambda} \Lcal (\lambda | \lambda_0)
            \left( \rmi x u' (\lambda) + g' (\lambda) \right)
        \Big\}
        \Big(
            1
            + \Ocal \Big( \frac{ (\ln x)^2 }{x} \Big)
        \Big),
    \end{multline}
    where the functions $\tau$,
    $\Lcal (\cdot | \lambda_0)$
    were defined in~\eqref{eq:definition-of-tau-Lcal-and-Lcal-prime},
    the amplitude $\Acal (\lambda_0)$ in~\eqref{eq:thm:integration-constant},
    and the coefficient $b_{12} (\lambda_0)$
    in~\eqref{eq:definition-of-b12-and-b21}.
\end{proposition}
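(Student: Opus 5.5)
The plan is to combine Proposition~\ref{prop:minor-of-Fredholm-det} with the asymptotics already established. By that proposition the minor equals $\rmi \lim_{\lambda\to\infty}[\lambda\,\widetilde\chi_{12}(\lambda)]$ times $\det_{\Ccal_{\lambda_0}}(\id+\rmV)$. For the determinant factor we insert Theorem~\ref{thm:Fredholm-det-AE-no-poles} directly; its contour may be taken on $\mathbb R$ or $\Ccal_{\lambda_0}$ interchangeably by the second remark after the assumptions. It remains to compute the prefactor $\lim_{\lambda\to\infty}\lambda\widetilde\chi_{12}(\lambda)$ to relative accuracy $\Ocal((\ln x)^2/x)$.

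For the prefactor we follow the chain of transformations at large $|\lambda|$, where all of them become trivial. Outside a compact neighbourhood of $\Ccal_{\lambda_0}$ we have $\widetilde\chi=\Xi\,\alpha^{-\sigma^z}$ and $\Xi=\Upsilon=\Phi=\rmS\Pi$. Since $\alpha(\lambda)=1+\Ocal(\lambda^{-1})$ is diagonal, the $12$ entry gives
\[
\lambda\widetilde\chi_{12}=\lambda\Phi_{12}\,\alpha(\lambda)+\ocal(1)\cdot\Phi_{12}\to\lim\lambda\Phi_{12}.
\]
Because $\Scal\cap\mathbb R=\emptyset$, \eqref{eq:S_when_no_poles} gives $\rmS-\rmI_2=\Ocal(x^{-\infty})$ with residues of order $x^{-\infty}$. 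Hence $\lim\lambda\Phi_{12}=\lim\lambda\Pi_{12}+\Ocal(x^{-\infty})$. One should check that this $x^{-\infty}$ is genuinely small relative to the leading term, which carries a factor $\rme^{\pm\rmi x u(\lambda_0)}x^{\pm\tau}$ of modulus at most $x^{\rho}$; the exponential smallness of $\sigma_j^\pm$ makes this safe.

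Next insert the expansion \eqref{eq:Pi-AE-Neumann-series} with \eqref{eq:Pi-coefficients-n}. Each $\Pi_n$ is a polynomial in $(\lambda-\lambda_0)^{-1}$, so $\lim\lambda\Pi_{n,12}=(\Pi_{n,1})_{12}$. The even terms are diagonal and contribute nothing. The $n=1$ term gives, by \eqref{eq:coefficient-Pi-1-explicit}, $b_{12}(\lambda_0)/(x^{1/2}\omega'(0|\lambda_0))$. The $n=3$ term is bounded by \eqref{eq:order-of-Pi-3}, and the odd terms with $n\ge5$ are smaller still by the Neumann-series estimate for $\id+R$, which controls the remainder uniformly. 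The main point to verify is that the $n=3$ correction is of relative size $(\ln x)^2/x$ compared with the leading term. This holds because $b_{12}(\lambda_0)\propto 1/n(\lambda_0)\propto \rme^{\rmi xu(\lambda_0)}x^{\tau(\lambda_0)}$ carries exactly the oscillating factor and power appearing in the $\sigma^+$ part of \eqref{eq:order-of-Pi-3}. I expect this bookkeeping — tracking how the $\ln x$ factors produced by derivatives of $b_{12}$ enter $\Pi_{3,1}$ — to be the only delicate step; I would read it off from the recursion \eqref{eq:Pi-coefficients-n-m}.

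Finally use $\omega'(0|\lambda_0)=\sqrt{-u''(\lambda_0)/2}$, so that $1/\omega'=\sqrt2/\sqrt{-u''(\lambda_0)}$. Multiplying $\rmi\,b_{12}(\lambda_0)\sqrt2\,x^{-1/2}/\sqrt{-u''(\lambda_0)}\,(1+\Ocal((\ln x)^2/x))$ by the expansion of Theorem~\ref{thm:Fredholm-det-AE-no-poles} yields \eqref{eq:prop:Fredholm-det-minor-AE-no-poles}. The nonvanishing of $\det(\id+\rmV)$ needed in Proposition~\ref{prop:minor-of-Fredholm-det} holds for large $x$ by that same asymptotic formula.
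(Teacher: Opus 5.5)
Your proposal is correct and follows the paper's own argument. Like the paper, you combine Proposition~\ref{prop:minor-of-Fredholm-det} with $\lim_{\lambda\to\infty}\lambda\widetilde\chi_{12}(\lambda)=\lim_{\lambda\to\infty}\lambda\Pi_{12}(\lambda)=b_{12}(\lambda_0)/(x^{1/2}\omega'(0|\lambda_0))$ plus the $\Pi_3$ correction bounded by \eqref{eq:order-of-Pi-3}, and then multiply by Theorem~\ref{thm:Fredholm-det-AE-no-poles}. You also spell out steps the paper leaves tacit: the transformations are trivial near infinity, $\rmS-\rmI_2$ is exponentially small, and the $\Pi_3$ term carries the same factor $\rme^{\rmi x u(\lambda_0)}x^{\tau(\lambda_0)}$ as $b_{12}$, which is what makes the error relative, of size $\Ocal((\ln x)^2/x)$.
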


\section{The case of \texorpdfstring{{\boldmath$2$}}{2}
    poles on the real axis}
\label{sec:2-poles-on-the-real-axis}
In this section we consider the case of two poles on the real axis,
when $n^+ = n^- = 1$, see equation~\eqref{eq:definition-Scal-pm-set}.
First we prove the statement of Theorem~\ref{thm:Fredholm-det-AE-two-poles}.
Then we derive an expression for the Fredholm minor,
relevant for physical applications,
which is formulated in Proposition~\ref{prop:minor-of-Fredholm-det-two-poles}.

\subsection{Solution for the system for two poles}
We first consider the physically relevant case,
when there are exactly two poles on the real axis,
such that $n^+ = n^- = 1$, see equations~\eqref{eq:definition-Scal-pm}
and \eqref{eq:definition-Scal-pm-set},
which correspond to the configurations of the poles
shown in Figure~\ref{fig:possible-pole-configurations},
see also Figure~\ref{fig:4-configurations-of-the-poles}
and Section~\ref{sec:assumptions}.
We denote the poles by $\{ s_1^\pm \} = \Scal^\pm \cap \mathbb{R}$.
\begin{figure}[ht]
    \centering
    \begin{subfigure}[]{0.5\textwidth}
        \centering
        \inputfig{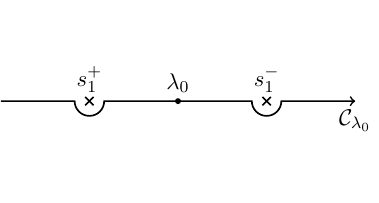}
    \end{subfigure}%
    \begin{subfigure}[]{0.5\textwidth}
        \centering
        \inputfig{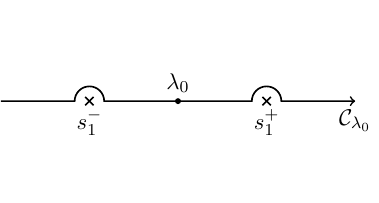}
    \end{subfigure}%
    \newline%
    \begin{subfigure}[]{0.5\textwidth}
        \centering
        \inputfig{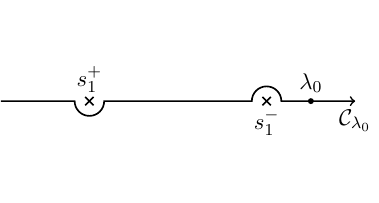}
    \end{subfigure}%
    \begin{subfigure}[]{0.5\textwidth}
        \centering
        \inputfig{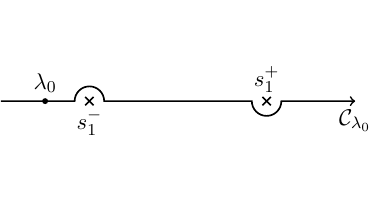}
    \end{subfigure}%
    \caption{
        Possible pole configurations
        corresponding to $n^+ = n^- = 1$.
        For the two cases at the bottom the relative position
        of the poles does not matter, i.e.,
        both possibilities $s_1^+ > s_1^-$
        and $s_1^+ < s_1^-$ are allowed.}
    \label{fig:possible-pole-configurations}%
\end{figure}

The linear system~\eqref{eq:SLE-new}
then takes the form
\begin{equation}
\label{eq:SLE-two-poles}
\left\{
\begin{aligned}
    \mathbf{y}_1
    &= \mathbf{w}_1
    + \frac{
        \sigma_1^- \big( P_{11}^{-+} \big)_{11} }{
        s_1^+ - s_1^- }
    \mathbf{x}_1,
    \\
    \mathbf{x}_1
    &= \mathbf{v}_1
    + \frac{
        \sigma_1^+ \big( P_{11}^{+-} \big)_{22} }{
        s_1^- - s_1^+ }
    \mathbf{y}_1.
\end{aligned}
\right.
\end{equation}
The solution is straightforwardly given by
\begin{equation}
\label{eq:solution-system-for-two-poles}
\begin{aligned}
    \mathbf{y}_1
    &=
    C \cdot \Bigg[
        \mathbf{w}_1
        + \frac{
            \sigma_1^- \big( P_{11}^{+-} \big)_{22} }{
            s_1^+ - s_1^- }
        \mathbf{v}_1
    \Bigg],
    \\
    \mathbf{x}_1
    &=
    C \cdot \Bigg[
        \mathbf{v}_1
        - \frac{
            \sigma_1^+ \big( P_{11}^{+-} \big)_{22} }{
            s_1^+ - s_1^- }
        \mathbf{w}_1
    \Bigg],
\end{aligned}
\end{equation}
where we introduced $C$
\begin{equation}
\label{eq:coefficient-C}
    C
    = \Bigg[
        1 + \frac{
            \sigma_1^+ \sigma_1^-
            \big( P_{11}^{+-} \big)_{22}^2 }{
            (s_1^+ - s_1^-)^2 }
    \Bigg]^{-1}.
\end{equation}
We recall that
$\big( P_{11}^{-+} \big)_{11} = \big( P_{11}^{+-} \big)_{22}$.

Substituting the solution~\eqref{eq:solution-system-for-two-poles}
into the expression~\eqref{eq:prop:contribution-of-poles}
and using the identities
\begin{equation}
\begin{aligned}
    \big( \mathbf{w}'_1 \big)^\intercal
    \sigma^y \mathbf{w}_1
    & = \rmi \big( Q_{11}^{++} \big)_{21},
    \qquad
    & \big( \mathbf{w}'_1 \big)^\intercal
    \sigma^y \mathbf{v}_1
    & = - \rmi \big( Q_{11}^{-+} \big)_{11},
    \\
    \big( \mathbf{v}'_1 \big)^\intercal
    \sigma^y \mathbf{w}_1
    & = \rmi \big( Q_{11}^{+-} \big)_{22},
    \qquad
    & \big( \mathbf{v}'_1 \big)^\intercal
    \sigma^y \mathbf{v}_1
     &= - \rmi \big( Q_{11}^{--} \big)_{12},
\end{aligned}
\end{equation}
and
\begin{equation}
    \mathbf{y}^\intercal \sigma^y \mathbf{x}
    = - \rmi C \big( P_{11}^{+-} \big)_{22},
\end{equation}
we obtain, for $\beta = x$,
\begin{multline}
\label{eq:contribution-from-two-poles}
    \sum\limits_{\mu \in \{s_1^+, s_1^-\}}
    \res\limits_{\lambda = \mu}
    \Big(
        \tr\{
            \rmS' (\lambda) \Pi (\lambda)
            \sigma^z
            \Pi^{-1} (\lambda) \rmS^{-1} (\lambda)
        \}
        d_x (\lambda)
    \Big)
    =
    - 2 C \sigma_1^+
    \big( Q_{11}^{++} \big)_{21}
    d_x (s_1^+)
    \\
    + 2 C \sigma_1^-
    \big( Q_{11}^{--} \big)_{12}
    d_x (s_1^-)
    + \frac{
        2 C \sigma_1^+ \sigma_1^-
    }{
        (s_1^+ - s_1^-)^2 }
    \big( P_{11}^{+-} \big)_{22}^2
    \left[
        d_x (s_1^+) - d_x (s_1^-)
    \right].
\end{multline}
In the following we shall expand this expression in $x^{-1/2}$.

\subsection{Asymptotic expansion of the pole contribution}
We substitute the explicit asymptotic expansion
for $\Pi (\lambda)$,
see equations~\eqref{eq:Pi-AE-Neumann-series},
\eqref{eq:order-of-Pi-3},
\eqref{eq:coefficient-Pi-1-explicit},
and~\eqref{eq:coefficient-Pi-2-explicit},
into the definition~\eqref{eq:matrix-elements-P-and-Q}
of the coefficients $P$ and $Q$ and obtain
\begin{subequations}
\label{eq:AE-for-P-and-Q}
\begin{align}
\label{eq:AE-for-P22}
    \big( P_{11}^{+ -} \big)_{22}
    & = 1 + \Ocal \left( \frac{\ln x}{x} \right),
    \\
\label{eq:AE-for-Q12}
    \big( Q_{11}^{--} \big)_{12}
    &= \frac1{ x^{1 / 2} }
    \big( \Pi_1' (s_1^-) \big)_{12}
    + \Ocal \left(
        \frac{
            (\ln x)^2 \rme^{\rmi x u (\lambda_0)} }{
            x^{3 / 2 - \tau (\lambda_0)} }
    \right),
    \\
\label{eq:AE-for-Q21}
    \big( Q_{11}^{++} \big)_{21}
    &= \frac1{ x^{1 / 2} }
    \big( \Pi_1' (s_1^+) \big)_{21}
    + \Ocal \left(
        \frac{
            (\ln x)^2 \rme^{- \rmi x u (\lambda_0)} }{
            x^{3 / 2 + \tau (\lambda_0)} }
    \right).
\end{align}
\end{subequations}

Subsequently substituting~\eqref{eq:AE-for-Q12} and~\eqref{eq:AE-for-Q21}
into the definition~\eqref{eq:coefficients-sigma}
of the coefficients $\sigma_1^\pm$,
we see that
\begin{subequations}
\label{eq:AE-for-sigma-pm}
\begin{align}
    \sigma_1^+
    = h_1^+
    \Bigg[
        1 + \frac{1}{x^{1/2}} h_1^+ \big( \Pi_1' (s_1^+) \big)_{21}
        + \Ocal
        \Bigg(
            \frac{ (h_1^+ b_{21})^2 }{x}
        \Bigg)
    \Bigg],
    \\
    \sigma_1^-
    = h_1^-
    \Bigg[
        1 + \frac{1}{x^{1/2}} h_1^- \big( \Pi_1' (s_1^-) \big)_{12}
        + \Ocal \Bigg(
            \frac{ (h_1^- b_{12})^2 }{x}
        \Bigg)
    \Bigg].
\end{align}
\end{subequations}
Here we write the corrections in a form that allows us 
to keep track of $x$-dependence, albeit the
dependence of the coefficients $h_1^\pm$, $b_{12}$ and $b_{21}$
is simple, see equations~\eqref{eq:residues-h},
\eqref{eq:b12-and-b21}, and~\eqref{eq:m-and-n},
\begin{equation}
    h_1^\pm \propto \rme^{\pm \rmi x u (s_1^\pm)},
    \qquad
    b_{12} \propto x^{\tau (\lambda_0)} \rme^{\rmi x u (\lambda_0)},
    \qquad
    b_{21} \propto x^{- \tau (\lambda_0)} \rme^{- \rmi x u (\lambda_0)}.
\end{equation}
Exactly such $x$-dependence, originating
from the coefficients $b_{12}$ and $b_{21}$,
appeared in the corrections~\eqref{eq:AE-for-P-and-Q}.

It follows that the asymptotic expansion of $C$,
see the definition~\eqref{eq:coefficient-C},
takes the form
\begin{multline}
\label{eq:AE-for-C}
    C
    = \frac{ \Delta^2 }{ \Delta^2 + h_1^+ h_1^-}
        - \frac{1}{x^{1/2}}
        \frac{ \Delta^2 h_1^+ h_1^- }{ \big[\Delta^2 + h_1^+ h_1^-\big]^2 }
        \Big[
            h_1^+ \big( \Pi_1' (s_1^+) \big)_{21}
            + h_1^- \big( \Pi_1' (s_1^-) \big)_{12}
        \Big]
        \\
        + \frac1x
        \frac{h_1^+ h_1^-}{ \big[ \Delta^2 + h_1^+ h_1^- \big]^2 }
        \Big[
            \Ocal\big( \ln x \big)
            + \Ocal\big( (h_1^+ b_{21})^2 \big)
            + \Ocal\big( (h_1^- b_{12})^2 \big)
        \Big]
        \\
        + \frac1x
        \frac{ (h_1^+ h_1^-)^2 }{ \big[ \Delta^2 + h_1^+ h_1^- \big]^3 }
        \Big[
            \Ocal\big( (h_1^+ b_{21})^2 \big)
            + \Ocal\big( h_1^+ h_1^- \big)
            + \Ocal\big( (h_1^- b_{12})^2 \big)
        \Big],
\end{multline}
where we introduced the abbreviation $\Delta := s_1^+ - s_1^-$.
Here and in what follows we assume
that $\Delta^2 + h_1^+ h_1^- \neq 0$,
i.e., the determinant of the matrix
encoding the linear system is not zero to
leading order in the large parameter $x$,
$\det[\rmI_{n^+} - \rmA^- \rmA^+] \neq 0$
with $n^+ = 1$,
see equation~\eqref{eq:matrices-A-pm}.
The correction of order $\ln x / x$
propagates from~\eqref{eq:AE-for-P22}
and the remaining corrections stem from $\sigma_1^\pm$,
see the asymptotic expansions~\eqref{eq:AE-for-sigma-pm}.

Now, we substitute~\eqref{eq:AE-for-P-and-Q},
\eqref{eq:AE-for-sigma-pm},
and~\eqref{eq:AE-for-C}
into~\eqref{eq:contribution-from-two-poles}.
We also use relation
\begin{equation}
    \frac{
        C \sigma_1^+ \sigma_1^-
    }{
        (s_1^+ - s_1^-)^2 }
    \left[ \big( P_{11}^{+-} \big)_{22} \right]^2
    = 1 - C
\end{equation}
to simplify the asymptotic expansion
of the last term in~\eqref{eq:contribution-from-two-poles}.
We obtain
\begin{multline}
\label{eq:contribution-from-two-poles-in-terms-of-Pi}
    \sum\limits_{\mu \in \{s_1^+, s_1^-\}}
    \res\limits_{\lambda = \mu}
    \Big(
        \tr\{
            \rmS' (\lambda) \Pi (\lambda)
            \sigma^z
            \Pi^{-1} (\lambda) \rmS^{-1} (\lambda) 
        \}
        d_x (\lambda)
    \Big)
    =
    \frac{
        2 h_1^+ h_1^-
        \big[ d_x (s_1^+) - d_x (s_1^-) \big] }{
        \Delta^2 + h_1^+ h_1^-}
    \\
    - \frac1{ x^{1 / 2} }
    \frac{ 2 \Delta^2 }{ \Delta^2 + h_1^+ h_1^-}
    \Big[
        h_1^+
        \big( \Pi_1' (s_1^+) \big)_{21}
        d_x (s_1^+)
        - h_1^-
        \big( \Pi_1' (s_1^-) \big)_{12}
        d_x (s_1^-)
    \Big]
    \\
    + \frac{1}{x^{1/2}}
    \frac{
        2 \Delta^2 h_1^+ h_1^-
        \big[ d_x (s_1^+) - d_x (s_1^-) \big] }{
        \big[\Delta^2 + h_1^+ h_1^-\big]^2 }
    \Big[
        h_1^+ \big( \Pi_1' (s_1^+) \big)_{21}
        + h_1^- \big( \Pi_1' (s_1^-) \big)_{12}
    \Big]
    \\
    + \frac{1}{x}
    \frac{ 1 }{ [\Delta^2 + h_1^+ h_1^-] }
    \Big[
        \Ocal\big( (h_1^+ b_{21})^2 \big)
        + \Ocal\big( (h_1^- b_{12})^2 \big)
    \Big]
    \\
    + \frac{1}{x}
    \frac{ h_1^+ h_1^- }{ [\Delta^2 + h_1^+ h_1^-]^2 }
    \Big[
        \Ocal\big( \ln x \big)
        + \Ocal\big( (h_1^+ b_{21})^2 \big)
        + \Ocal\big( h_1^+ h_1^- \big)
        + \Ocal\big( (h_1^- b_{12})^2 \big)
    \Big]
    \\
    + \frac{1}{x}
    \frac{ (h_1^+ h_1^-)^2 }{ \big[ \Delta^2 + h_1^+ h_1^- \big]^3 }
    \Big[
        \Ocal\big( (h_1^+ b_{21})^2 \big)
        + \Ocal\big( h_1^+ h_1^- \big)
        + \Ocal\big( (h_1^- b_{12})^2 \big)
    \Big].
\end{multline}
Here we also used that $d_x (\lambda) = \Ocal (1)$.

Substituting the coefficient $\Pi_1' (\lambda)$,
see equation~\eqref{eq:coefficient-Pi-1-explicit},
we end up with
\begin{multline}
\label{eq:pole-contribution-two-poles-final}
    \sum\limits_{\mu \in \{s_1^+, s_1^-\}}
    \res\limits_{\lambda = \mu}
    \Big(
        \tr\{
            \rmS' (\lambda) \Pi (\lambda)
            \sigma^z
            \Pi^{-1} (\lambda) \rmS^{-1} (\lambda) 
        \}
        d_x (\lambda)
    \Big)
    =
    \frac{
        2 h_1^+ h_1^-
        \big[ d_x (s_1^+) - d_x (s_1^-) \big] }{
        \Delta^2 + h_1^+ h_1^-}
    \\
    + \frac1{ x^{1 / 2} }
    \frac{1}{\omega' (0 | \lambda_0)}
    \frac{ 2 \Delta^2 }{ \Delta^2 + h_1^+ h_1^-}
    \Bigg[
        \frac{
            h_1^+ b_{21}
            d_x (s_1^+) }{
            (s_1^+ - \lambda_0)^2 }
        -
        \frac{
            h_1^- b_{12}
            d_x (s_1^-) }{
            (s_1^- - \lambda_0)^2 }
    \Bigg]
    \\
    - \frac{1}{x^{1/2}}
    \frac{
        2 \Delta^2 h_1^+ h_1^-
        \big[ d_x (s_1^+) - d_x (s_1^-) \big] }{
        \omega' (0 | \lambda_0)
        \big[\Delta^2 + h_1^+ h_1^-\big]^2 }
    \Bigg[
        \frac{ h_1^+ b_{21} }{ (s_1^+ - \lambda_0)^2 }
        + \frac{ h_1^- b_{12} }{ (s_1^- - \lambda_0)^2 }
    \Bigg]
    \\
    + \frac{1}{x}
    \frac{ 1 }{ [\Delta^2 + h_1^+ h_1^-] }
    \Big[
        \Ocal\big( (h_1^+ b_{21})^2 \big)
        + \Ocal\big( (h_1^- b_{12})^2 \big)
    \Big]
    \\
    + \frac{1}{x}
    \frac{ h_1^+ h_1^- }{ [\Delta^2 + h_1^+ h_1^-]^2 }
    \Big[
        \Ocal\big( \ln x \big)
        + \Ocal\big( (h_1^+ b_{21})^2 \big)
        + \Ocal\big( h_1^+ h_1^- \big)
        + \Ocal\big( (h_1^- b_{12})^2 \big)
    \Big]
    \\
    + \frac{1}{x}
    \frac{ (h_1^+ h_1^-)^2 }{ \big[ \Delta^2 + h_1^+ h_1^- \big]^3 }
    \Big[
        \Ocal\big( (h_1^+ b_{21})^2 \big)
        + \Ocal\big( h_1^+ h_1^- \big)
        + \Ocal\big( (h_1^- b_{12})^2 \big)
    \Big].
\end{multline}

\subsection[Asymptotic expansion
    of the integral over \texorpdfstring{$\gamma_0$}{gamma0}]{\boldmath Asymptotic expansion
    of the integral over $\gamma_0$}
We already calculated the contribution
of the first integral in~\eqref{eq:prop:log-der-Fredholm-for-AE},
see~\eqref{eq:integral-over-gamma-0-no-poles-final}.
Now we calculate the second one,
\begin{equation}
\label{eq:integral-over-gamma-0-two-poles}
    \int\limits_{\gamma_0}
    \frac{\dd{\lambda}}{ 2 \pi \rmi }
    \tr\{
        \rmS' (\lambda) \Pi (\lambda)
        \sigma^z
        \Pi^{-1} (\lambda) \rmS^{-1} (\lambda)
    \}
    d_\beta (\lambda)
\end{equation}
for $\beta = x$.
Substituting the asymptotic expansion for $\Pi (\lambda)$
under the trace, we obtain
\begin{multline}
    \tr\{
        \rmS' (\lambda) \Pi (\lambda)
        \sigma^z
        \Pi^{-1} (\lambda) \rmS^{-1} (\lambda)
    \}
    =
    \tr\{
        \rmS^{-1} (\lambda) \rmS' (\lambda) \sigma^z
    \}
    \\
    + \frac{1}{ x^{1/2} }
    \tr\bigl\{
        \rmS^{-1} (\lambda) \rmS' (\lambda)
        \bigl(
            \Pi_1 (\lambda) \sigma^z + \sigma^z \Pi^{\rm ad}_1 (\lambda)
        \bigr)
    \bigr\}
    \\
    + \frac{1}{ x }
    \tr\bigl\{
        \rmS^{-1} (\lambda) \rmS' (\lambda)
        \bigl(
            \Pi_2 (\lambda) \sigma^z
            - \Pi_1 (\lambda) \sigma^z \Pi_1^{\rm ad} (\lambda)
            + \sigma^z \Pi_2^{\rm ad} (\lambda)
        \bigr)
    \bigr\}
    \\
    + \tr\{
        \rmS^{-1} (\lambda) \rmS' (\lambda)
        \sigma^+
    \}
    \Ocal\biggl(
        \frac{
            (\ln x)^2 \rme^{\rmi x u (\lambda_0)} }{
            x^{3 / 2 - \tau (\lambda_0)} }
    \biggr)
    + \tr\{
        \rmS^{-1} (\lambda) \rmS' (\lambda)
        \sigma^-
    \}
    \Ocal\biggl(
        \frac{
            (\ln x)^2 \rme^{- \rmi x u (\lambda_0)} }{
            x^{3 / 2 + \tau (\lambda_0)} }
    \biggr).
\end{multline}

Now
\begin{equation}
    \Pi (\lambda) \Pi^{-1} (\lambda) = 1
    \quad \Rightarrow \quad
    \Pi^{\rm ad}_{1} (\lambda) = - \Pi_{1} (\lambda),
    \quad
    \Pi^{\rm ad}_{2} (\lambda) = \Pi_1^2 (\lambda) - \Pi_2 (\lambda).
\end{equation}
We further note
that $\comm{\Pi_2 (\lambda)}{\sigma^z} = 0$,
since the matrix $\Pi_2 (\lambda)$ is diagonal, and that
\begin{equation}
    \Pi_1 (\lambda) \sigma^z = - \sigma^z \Pi_1 (\lambda).
\end{equation}
Then
\begin{multline}
    \tr\{
        \rmS' (\lambda) \Pi (\lambda)
        \sigma^z
        \Pi^{-1} (\lambda) \rmS^{-1} (\lambda)
    \}
    =
    \tr\{
        \rmS^{-1} (\lambda) \rmS' (\lambda) \sigma^z
    \}
    \\
    + \frac{2}{ x^{1/2} }
    \tr\{
        \rmS^{-1} (\lambda) \rmS' (\lambda)
        \Pi_1 (\lambda) \sigma^z
    \}
    + \frac{2}{ x }
    \tr\{
        \rmS^{-1} (\lambda) \rmS' (\lambda)
        \Pi_1^2 (\lambda) \sigma^z
    \}
    \\
    + \Ocal\biggl(
        \frac{ (\ln x)^2 \rme^{\rmi x u (\lambda_0)} }{
            x^{3 / 2 - \tau (\lambda_0)} }
    \biggr)
    + \Ocal\biggl(
        \frac{ (\ln x)^2 \rme^{- \rmi x u (\lambda_0)} }{
            x^{3 / 2 + \tau (\lambda_0)} }
    \biggr).
\end{multline}
It follows
from the expression~\eqref{eq:coefficient-Pi-1-explicit} for $\Pi_1 (\lambda)$
and~\eqref{eq:product-of-b12-and-b21} that
\begin{equation}
    \Pi_1^2 (\lambda)
    = \frac{1}{(\lambda - \lambda_0)^2}
    \frac{
        \rmi \tau (\lambda_0) }{
        2 (\omega' (0 | \lambda_0))^2 } \mathrm{I}_2.
\end{equation}
Hence,
\begin{multline}
\label{eq:tr-S-and-Pi-for-integral-over-gamma-0}
    \tr\{
        \rmS' (\lambda) \Pi (\lambda)
        \sigma^z
        \Pi^{-1} (\lambda) \rmS^{-1} (\lambda)
    \}
    =
    \tr\{
        \rmS^{-1} (\lambda) \rmS' (\lambda) \sigma^z
    \}
    \\
    + \frac{2}{ x^{1/2} }
    \tr\{
        \rmS^{-1} (\lambda) \rmS' (\lambda)
        \Pi_1 (\lambda) \sigma^z
    \}
    + \frac{1}{ x }
    \frac{1}{(\lambda - \lambda_0)^2}
    \frac{ \rmi \tau (\lambda_0) }{ (\omega' (0 | \lambda_0))^2 }
    \tr\{
        \rmS^{-1} (\lambda) \rmS' (\lambda)
        \sigma^z
    \}
    \\
    + \Ocal\biggl(
        \frac{
            (\ln x)^2 \rme^{\rmi x u (\lambda_0)} }{
            x^{3 / 2 - \tau (\lambda_0)} }
    \biggr)
    + \Ocal\biggl(
        \frac{
            (\ln x)^2 \rme^{- \rmi x u (\lambda_0)} }{
            x^{3 / 2 + \tau (\lambda_0)} }
    \biggr).
\end{multline}

We emphasize that we have expanded so far
only the matrix $\Pi (\lambda)$, whereas
the matrix $\rmS (\lambda)$ has been kept
exact. To proceed any further,
we need an explicit expression for $\rmS^{-1} (\lambda) \rmS' (\lambda)$
and its asymptotic expansion.
Substituting~\eqref{eq:solution-system-for-two-poles}
into~\eqref{eq:S-inv-S-prime}
we obtain after some algebra
\begin{multline}
    \rmS^{-1} (\lambda)
    \rmS' (\lambda)
    = - \frac{ \rmi C \sigma_1^+ }{ (\lambda - s_1^+) }
    \mathbf{w}_1 \mathbf{w}_1^\intercal
    \sigma^y
    + \frac{ \rmi C \sigma_1^- }{ (\lambda - s_1^-) }
    \mathbf{v}_1 \mathbf{v}_1^\intercal
    \sigma^y
    \\
    - \frac{
        \rmi C \sigma_1^+ \sigma_1^- \big( P_{11}^{+-} \big)_{22} }{
        (\lambda - s_1^+)
        (\lambda - s_1^-)
        (s_1^+ - s_1^-) }
    \Big[
        \mathbf{v}_1 \mathbf{w}_1^\intercal
        + \mathbf{w}_1 \mathbf{v}_1^\intercal
    \Big]
    \sigma^y,
\end{multline}
where
\begin{align}
    \mathbf{w}_1 \mathbf{w}_1^\intercal
    &= \begin{pmatrix}
        \Pi_{11}^2 (s_1^+)
        & \Pi_{11} (s_1^+) \Pi_{21} (s_1^+)
        \\
        \Pi_{11} (s_1^+) \Pi_{21} (s_1^+)
        & \Pi_{21}^2 (s_1^+)
    \end{pmatrix},
    \\
    \mathbf{v}_1 \mathbf{v}_1^\intercal
    &= \begin{pmatrix}
        \Pi_{12}^2 (s_1^-)
        & \Pi_{12} (s_1^-) \Pi_{22} (s_1^-)
        \\
        \Pi_{12} (s_1^-) \Pi_{22} (s_1^-)
        & \Pi_{22}^2 (s_1^-)
    \end{pmatrix},
    \\
    \mathbf{v}_1 \mathbf{w}_1^\intercal
    = \left( \mathbf{w}_1 \mathbf{v}_1^\intercal \right)^\intercal
    &= \begin{pmatrix}
        \Pi_{12} (s_1^-) \Pi_{11} (s_1^+)
        & \Pi_{12} (s_1^-) \Pi_{21} (s_1^+)
        \\
        \Pi_{22} (s_1^-) \Pi_{11} (s_1^+)
        & \Pi_{22} (s_1^-) \Pi_{21} (s_1^+)
    \end{pmatrix},
\end{align}
Then the diagonal matrix elements of $\rmS^{-1} (\lambda) \rmS' (\lambda)$
are given by
\begin{multline}
    \big( \rmS^{-1} (\lambda) \rmS' (\lambda) \big)_{11}
    = \frac{
        C \sigma_1^+ \Pi_{11} (s_1^+) \Pi_{21} (s_1^+) }{
        (\lambda - s_1^+)^2 }
    - \frac{
        C \sigma_1^- \Pi_{12} (s_1^-) \Pi_{22} (s_1^-) }{
        (\lambda - s_1^-)^2 }
    \\
    + \frac{
        C \sigma_1^+ \sigma_1^-
        \big[
            \Pi_{11}^2 (s_1^+) \Pi_{22}^2 (s_1^-)
            - \Pi_{21}^2 (s_1^+) \Pi_{12}^2 (s_1^-)
        \big]
    }{
        (\lambda - s_1^+)
        (\lambda - s_1^-)
        (s_1^+ - s_1^-) },
\end{multline}
\begin{multline}
    \big( \rmS^{-1} (\lambda) \rmS' (\lambda) \big)_{22}
    = - \frac{
        C \sigma_1^+ \Pi_{11} (s_1^+) \Pi_{21} (s_1^+) }{
        (\lambda - s_1^+)^2 }
    + \frac{
        C \sigma_1^- \Pi_{12} (s_1^-) \Pi_{22} (s_1^-) }{
        (\lambda - s_1^-)^2 }
    \\
    - \frac{
        C \sigma_1^+ \sigma_1^-
        \big[
            \Pi_{11}^2 (s_1^+) \Pi_{22}^2 (s_1^-)
            - \Pi_{21}^2 (s_1^+) \Pi_{12}^2 (s_1^-)
        \big]
    }{
        (\lambda - s_1^+)
        (\lambda - s_1^-)
        (s_1^+ - s_1^-) },
\end{multline}
and the off-diagonal ones by
\begin{multline}
    \big( \rmS^{-1} (\lambda) \rmS' (\lambda) \big)_{12}
    = - \frac{ C \sigma_1^+ \Pi_{11}^2 (s_1^+) }{
        (\lambda - s_1^+)^2 }
    + \frac{ C \sigma_1^- \Pi_{12}^2 (s_1^-) }{
        (\lambda - s_1^-)^2 }
    \\
    - \frac{
        2 C \sigma_1^+ \sigma_1^-
        \Pi_{11} (s_1^+) \Pi_{12} (s_1^-)
        \big[
            \Pi_{11} (s_1^+) \Pi_{22} (s_1^-)
            - \Pi_{21} (s_1^+) \Pi_{12} (s_1^-)
        \big] }{
        (\lambda - s_1^+)
        (\lambda - s_1^-)
        (s_1^+ - s_1^-) },
\end{multline}
\begin{multline}
    \big( \rmS^{-1} (\lambda) \rmS' (\lambda) \big)_{21}
    = \frac{ C \sigma_1^+ \Pi_{21}^2 (s_1^+) }{
        (\lambda - s_1^+)^2 }
    - \frac{ C \sigma_1^- \Pi_{22}^2 (s_1^-) }{
        (\lambda - s_1^-)^2 }
    \\
    + \frac{
        2 C \sigma_1^+ \sigma_1^-
        \Pi_{21} (s_1^+) \Pi_{22} (s_1^-)
        \big[
            \Pi_{11} (s_1^+) \Pi_{22} (s_1^-)
            - \Pi_{21} (s_1^+) \Pi_{12} (s_1^-)
        \big] }{
        (\lambda - s_1^+)
        (\lambda - s_1^-)
        (s_1^+ - s_1^-) }.
\end{multline}
We note that
$(\rmS^{-1} (\lambda) \rmS' (\lambda) )_{22}
= - (\rmS^{-1} (\lambda) \rmS' (\lambda) )_{11}$.

The asymptotic expansion for the matrix elements is thus given by
\begin{equation}
    \big( \rmS^{-1} (\lambda) \rmS' (\lambda) \big)_{11}
    = \frac{ h_1^+ h_1^- }{ \Delta^2 + h_1^+ h_1^- }
    \Ocal(1)
    = - \big( \rmS^{-1} (\lambda) \rmS' (\lambda) \big)_{22},
\end{equation}
for the diagonal matrix elements,
and
\begin{multline}
    \big( \rmS^{-1} (\lambda) \rmS' (\lambda) \big)_{12}
    = - \frac{ h_1^+ }{ \Delta^2 + h_1^+ h_1^- }
    \frac{ \Delta^2 }{ (\lambda - s_1^+)^2 }
    + \frac{
        \Ocal\big( (h_1^+)^2 b_{21} \big)
        + \Ocal\big( h_1^+ h_1^- b_{12} \big)
    }{
        x^{1 / 2} \big[ \Delta^2 + h_1^+ h_1^- \big] }
    \\
    + \frac{
        \Ocal\big( (h_1^+)^3 h_1^- b_{21} \big)
        + \Ocal\big( (h_1^+ h_1^-)^2 b_{12} \big)
    }{
        x^{1 / 2} \big[ \Delta^2 + h_1^+ h_1^- \big]^2 }
    ,
\end{multline}
\begin{multline}
    \big( \rmS^{-1} (\lambda) \rmS' (\lambda) \big)_{21}
    = - \frac{ h_1^- }{ \Delta^2 + h_1^+ h_1^- }
    \frac{ \Delta^2 }{ (\lambda - s_1^-)^2 }
    + \frac{
        \Ocal\big( (h_1^-)^2 b_{12} \big)
        + \Ocal\big( h_1^+ h_1^- b_{21} \big)
    }{
        x^{1 / 2} \big[ \Delta^2 + h_1^+ h_1^- \big] }
    \\
    + \frac{
        \Ocal\big( h_1^+ (h_1^-)^3 b_{12} \big)
        + \Ocal\big( (h_1^+ h_1^-)^2 b_{21} \big)
    }{
        x^{1 / 2} \big[ \Delta^2 + h_1^+ h_1^- \big]^2 }
\end{multline}
for the off-diagonal matrix elements.

Next, we insert
the expression~\eqref{eq:coefficient-Pi-1-explicit} for $\Pi_1 (\lambda)$
and the asymptotic expansions
of the matrix elements of $\rmS^{-1} (\lambda) \rmS' (\lambda)$
into equation~\eqref{eq:tr-S-and-Pi-for-integral-over-gamma-0}.
Subsequently, we substitute the asymptotic expansions of 
$\sigma_1^\pm$, and $C$, see equations~\eqref{eq:AE-for-sigma-pm},
and~\eqref{eq:AE-for-C}.
As a result we obtain the following asymptotic expansion
for the second term
on the right-hand side of~\eqref{eq:tr-S-and-Pi-for-integral-over-gamma-0},
\begin{multline}
    \frac{2}{ x^{1/2} }
    \tr\{
        \rmS^{-1} (\lambda) \rmS' (\lambda)
        \Pi_1 (\lambda) \sigma^z
    \}
    \\
    =
    \frac{1}{ x^{1/2} }
    \frac{1}{ (\lambda - \lambda_0) }
    \frac{2}{ \omega' (0 | \lambda_0) }
    \Big[
        \big( \rmS^{-1} (\lambda) \rmS' (\lambda) \big)_{12} b_{21}
        - \big( \rmS^{-1} (\lambda) \rmS' (\lambda) \big)_{21} b_{12}
    \Big]
    \\
    = - \frac{1}{ x^{1/2} }
    \frac{1}{ (\lambda - \lambda_0)}
    \frac{1}{ \omega' (0 | \lambda_0) }
    \frac{ 2 \Delta^2 }{ \big[ \Delta^2 + h_1^+ h_1^- \big]}
    \Bigg[
        \frac{ h_1^+ b_{21} }{ (\lambda - s_1^+)^2 }
        - \frac{ h_1^- b_{12} }{ (\lambda - s_1^-)^2 }
    \Bigg]
    \\
    + \frac{1}{ x }
    \frac{1}{ (\lambda - \lambda_0) }
    \frac{1}{ \big[ \Delta^2 + h_1^+ h_1^- \big]}
    \Big[
        \Ocal\big( (h_1^+ b_{21})^2 \big)
        + \Ocal\big( h_1^+ h_1^- \big)
        + \Ocal\big( (h_1^- b_{12})^2 \big)
    \Big]
    \\
    + \frac{1}{ x }
    \frac{1}{ (\lambda - \lambda_0) }
    \frac{
        h_1^+ h_1^- }{
        \big[ \Delta^2 + h_1^+ h_1^- \big]^2}
    \Big[
        \Ocal\big( (h_1^+ b_{21})^2 \big)
        + \Ocal\big( h_1^+ h_1^- \big)
        + \Ocal\big( (h_1^- b_{12})^2 \big)
    \Big],
\end{multline}
which has a first order pole at $\lambda = \lambda_0$,
and the asymptotic expansion for the third term
on the right-hand side of~\eqref{eq:tr-S-and-Pi-for-integral-over-gamma-0},
\begin{multline}
    \frac{1}{ x }
    \frac{1}{(\lambda - \lambda_0)^2}
    \frac{ \rmi \tau (\lambda_0) }{ (\omega' (0 | \lambda_0))^2 }
    \Big[
        \big( \rmS^{-1} (\lambda) \rmS' (\lambda) \big)_{11}
        - \big( \rmS^{-1} (\lambda) \rmS' (\lambda) \big)_{22}
    \Big]
    \\
    = \frac{1}{ x }
    \frac{1}{(\lambda - \lambda_0)^2}
    \frac{ 2 \rmi \tau (\lambda_0) }{ (\omega' (0 | \lambda_0))^2 }
    \big( \rmS^{-1} (\lambda) \rmS' (\lambda) \big)_{11}
    =
    \frac{1}{ x }
    \frac{ 1 }{ (\lambda - \lambda_0)^2 }
    \frac{ \Ocal\big( h_1^+ h_1^- \big) }{ \big[ \Delta^2 + h_1^+ h_1^- \big]},
\end{multline}
which has a second order pole at $\lambda = \lambda_0$.

Then the trace~\eqref{eq:tr-S-and-Pi-for-integral-over-gamma-0}
is evaluated as
\begin{multline}
    \tr\{
        \rmS' (\lambda) \Pi (\lambda)
        \sigma^z
        \Pi^{-1} (\lambda) \rmS^{-1} (\lambda)
    \}
    =
    \tr\{
        \rmS^{-1} (\lambda) \rmS' (\lambda) \sigma^z
    \}
    \\
    - \frac{1}{ x^{1/2} }
    \frac{1}{(\lambda - \lambda_0)}
    \frac{ 2 \Delta^2 }{
        \omega' (0 | \lambda_0) \big[ \Delta^2 + h_1^+ h_1^- \big]}
    \Bigg[
        \frac{ h_1^+ b_{21} }{ (\lambda - s_1^+)^2 }
        - \frac{ h_1^- b_{12} }{ (\lambda - s_1^-)^2 }
    \Bigg]
    \\
    + \frac{1}{ x }
    \frac{1}{(\lambda - \lambda_0)}
    \frac{1}{ \big[ \Delta^2 + h_1^+ h_1^- \big]}
    \Big[
        \Ocal\big( (h_1^+ b_{21})^2 \big)
        + \Ocal\big( h_1^+ h_1^- \big)
        + \Ocal\big( (h_1^- b_{12})^2 \big)
    \Big]
    \\
    + \frac{1}{ x }
    \frac{1}{(\lambda - \lambda_0)}
    \frac{ h_1^+ h_1^- }{ \big[ \Delta^2 + h_1^+ h_1^- \big]^2}
    \Big[
        \Ocal\big( (h_1^+ b_{21})^2 \big)
        + \Ocal\big( h_1^+ h_1^- \big)
        + \Ocal\big( (h_1^- b_{12})^2 \big)
    \Big]
    \\
    + \frac{1}{ x }
    \frac{ 1 }{ (\lambda - \lambda_0)^2 }
    \frac{ \Ocal\big( h_1^+ h_1^- \big) }{ \big[ \Delta^2 + h_1^+ h_1^- \big]}.
\end{multline}
Here we keep the factors $(\lambda-\lambda_0)^{-n}$ for $n = 1, 2$ explicit,
since we integrate this expression
over the contour $\gamma_0$ around the pole at $\lambda_0$.
The first term on the right-hand side does not have a pole at $\lambda_0$
and therefore does not contribute
to the integral~\eqref{eq:integral-over-gamma-0-two-poles}.
Hence,
the integral over the contour $\gamma_0$ with $\beta = x$
has the large-$x$ asymptotic expansion
\begin{multline}
\label{eq:integral-over-gamma-0-two-poles-final}
    \int\limits_{\gamma_0}
    \frac{\dd{\lambda}}{ 2 \pi \rmi }
    \tr\{
        \rmS' (\lambda) \Pi (\lambda)
        \sigma^z
        \Pi^{-1} (\lambda) \rmS^{-1} (\lambda)
    \}
    d_x (\lambda)
    \\
    = - \frac{1}{ x^{1/2} }
    \frac{
        2 \Delta^2 d_x (\lambda_0) }{
        \omega' (0 | \lambda_0) \big[ \Delta^2 + h_1^+ h_1^- \big]}
    \Bigg[
        \frac{ h_1^+ b_{21} }{ (\lambda_0 - s_1^+)^2 }
        - \frac{ h_1^- b_{12} }{ (\lambda_0 - s_1^-)^2 }
    \Bigg]
    \\
    + \frac{1}{ x }
    \frac{1}{ \big[ \Delta^2 + h_1^+ h_1^- \big] }
    \Big[
        \Ocal\big( (h_1^+ b_{21})^2 \big)
        + \Ocal\big( h_1^+ h_1^- \big)
        + \Ocal\big( (h_1^- b_{12})^2 \big)
    \Big]
    \\
    + \frac{1}{ x }
    \frac{ h_1^+ h_1^- }{ \big[ \Delta^2 + h_1^+ h_1^- \big]^2 }
    \Big[
        \Ocal\big( (h_1^+ b_{21})^2 \big)
        + \Ocal\big( h_1^+ h_1^- \big)
        + \Ocal\big( (h_1^- b_{12})^2 \big)
    \Big].
\end{multline}
At this point we have obtained asymptotic expansions for
all terms in~\eqref{eq:prop:log-der-Fredholm-for-AE}
and can write the logarithmic derivative of the Fredholm determinant
with respect to $\beta = x$ explicitly.

\subsection{Asymptotic expansion of the Fredholm determinant}
We substitute the asymptotic expansions%
~\eqref{eq:integral-over-gamma-0-no-poles-final},
\eqref{eq:pole-contribution-two-poles-final}
and~\eqref{eq:integral-over-gamma-0-two-poles-final}
into~\eqref{eq:prop:log-der-Fredholm-for-AE}
with $\beta = x$.
We obtain the following asymptotic expansion
of the logarithmic derivative of the Fredholm determinant:
\begin{equation}
    \label{eq:Fred-det-log-der-series-in-x-two-poles}
    \partial_x \ln \det_{\Ccal_{\lambda_0}}
    (\id + \rmV)
    = a_x (x, \lambda_0)
    + b_x^{(0)}
    + \frac{b_x^{(1)}}{x^{1/2}}
    + \frac{ b_x^{(2)} }{x}
    + \frac{ a_x^{(2)}}{x}
    + \Ocal \left( \frac{ (\ln x)^2 }{x^2} \right).
\end{equation}
Here the coefficients $b_x^{(0)}$ $b_x^{(1)}$ and $b_x^{(2)}$
are the contributions of the integral over $\gamma_0$
involving the matrix $\rmS (\lambda)$%
~\eqref{eq:integral-over-gamma-0-two-poles-final}
and of the sum over the poles%
~\eqref{eq:pole-contribution-two-poles-final}
to the orders $x^{0}$, $x^{- 1 / 2}$, and $x^{-1}$, respectively.
The coefficient $b_x^{(0)}$ originates
solely from~\eqref{eq:pole-contribution-two-poles-final},
\begin{equation}
    b_x^{(0)}
    = - \frac{
        2 h_1^+ h_1^-
        \big[ d_x (s_1^+) - d_x (s_1^-) \big] }{
        \Delta^2 + h_1^+ h_1^-},
\end{equation}
while the coefficients $b_x^{(1)}$ and $b_x^{(2)}$ both originate
from~\eqref{eq:pole-contribution-two-poles-final}
and~\eqref{eq:integral-over-gamma-0-two-poles-final},
\begin{multline}
    \frac{ b_x^{(1)} }{ x^{1/2} }
    = \frac{1}{ x^{1/2} }
    \frac{\Delta^2}{\omega' (0 | \lambda_0)}
    \Bigg\{
        \frac{
            2 \big[ d_x (s_1^+) - d_x (s_1^-) \big]
            h_1^+ h_1^- }{
            \big[\Delta^2 + h_1^+ h_1^-\big]^2 }
        \Bigg[
            \frac{ h_1^+ b_{21} }{ (s_1^+ - \lambda_0)^2 }
            + \frac{ h_1^- b_{12} }{ (s_1^- - \lambda_0)^2 }
        \Bigg]
        \\
        + \frac{1}{ \big[ \Delta^2 + h_1^+ h_1^- \big]}
        \Bigg[
            \frac{
                2 \big[ d_x (\lambda_0) - d_x (s_1^+) \big]
                h_1^+ b_{21} }{
                (\lambda_0 - s_1^+)^2 }
            - \frac{
                2 \big[ d_x (\lambda_0) - d_x (s_1^-) \big]
                h_1^- b_{12} }{
                (\lambda_0 - s_1^-)^2 }
        \Bigg]
    \Bigg\}
\end{multline}
and
\begin{multline}
    \frac{
        b_x^{(2)}
    }{x}
    =
    \frac{1}{x}
    \frac{ 1 }{ [\Delta^2 + h_1^+ h_1^-] }
    \Big[
        \Ocal\big( (h_1^+ b_{21})^2 \big)
        + \Ocal\big( h_1^+ h_1^- \big)
        + \Ocal\big( (h_1^- b_{12})^2 \big)
    \Big]
    \\
    + \frac{1}{x}
    \frac{ h_1^+ h_1^- }{ \big[ \Delta^2 + h_1^+ h_1^- \big]^2 }
    \Big[
        \Ocal\big( \ln x \big)
        + \Ocal\big( (h_1^+ b_{21})^2 \big)
        + \Ocal\big( h_1^+ h_1^- \big)
        + \Ocal\big( (h_1^- b_{12})^2 \big)
    \Big]
    \\
    + \frac{1}{x}
    \frac{ (h_1^+ h_1^-)^2 }{ \big[ \Delta^2 + h_1^+ h_1^- \big]^3 }
    \Big[
        \Ocal\big( (h_1^+ b_{21})^2 \big)
        + \Ocal\big( h_1^+ h_1^- \big)
        + \Ocal\big( (h_1^- b_{12})^2 \big)
    \Big].
\end{multline}
The coefficient $a_x^{(2)}$
is the contribution of that part of 
the integral over $\gamma_0$ that does not
involve the matrix $\rmS (\lambda)$,
i.e., the contribution
of the term~\eqref{eq:integral-over-gamma-0-no-poles-final}.
It was calculated explicitly in the previous
section, see equation~\eqref{eq:a-2-no-poles-log}.
The correction $\Ocal( (\ln x)^2 / x^2)$ in the asymptotic expansion
is also from~\eqref{eq:integral-over-gamma-0-no-poles-final}.

Now, we integrate the logarithmic derivative
of the Fredholm determinant with respect to $x$.
First, it is straightforward to check that
\begin{equation}
    b_x^{(0)} = \partial_x \ln \big[ \Delta^2 + h_1^+ h_1^- \big]
\end{equation}
and
\begin{multline}
    \frac{b_x^{(1)}}{x^{1 / 2}}
    =
    \frac{1}{ x^{1/2} }
    \frac{\Delta^2}{\omega' (0 | \lambda_0)}
    \partial_x \Bigg\{
        \frac{1}{ \big[ \Delta^2 + h_1^+ h_1^- \big]}
        \Bigg[
            \frac{ h_1^+ b_{21} }{ (\lambda_0 - s_1^+)^2 }
            + \frac{ h_1^- b_{12} }{ (\lambda_0 - s_1^-)^2 }
        \Bigg]
    \Bigg\}
    \\
    =
    \frac{
        \Delta^2 \big( 1 + \Ocal \left( x^{-1} \right) \big) }{
        \omega' (0 | \lambda_0)}
    \partial_x \Bigg\{
        \frac{1}{ x^{1/2} }
        \frac{1}{ \big[ \Delta^2 + h_1^+ h_1^- \big]}
        \Bigg[
            \frac{ h_1^+ b_{21} }{ (\lambda_0 - s_1^+)^2 }
            + \frac{ h_1^- b_{12} }{ (\lambda_0 - s_1^-)^2 }
        \Bigg]
    \Bigg\}.
\end{multline}
To evaluate the anti-derivatives of $b_x^{(0)}$ and $b_x^{(1)}$,
we used that by definition $d_x (\lambda) = \partial_x \ln e (\lambda)$,
see equation~\eqref{eq:function-d},
and, therefore,
\begin{equation}
    \partial_x h_1^+
    = - 2 d_x (s_1^+) \cdot h_1^+,
    \qquad
    \partial_x h_1^-
    = 2 d_x (s_1^-) \cdot h_1^-,
\end{equation}
\begin{equation}
    \partial_x b_{12}
    = - 2 d_x (\lambda_0) \cdot b_{12}
    \big[ 1 + \Ocal(x^{-1}) \big],
    \qquad
    \partial_x b_{21}
    = 2 d_x (\lambda_0) \cdot b_{21}
    \big[ 1 + \Ocal(x^{-1}) \big].
\end{equation}

Similarly, we argue that the anti-derivative of $b_x^{(2)}$
gives rise to higher-order corrections to the Fredholm determinant.
This can be seen by the following observation.
Consider, for $a > 1$, $b \geq 0$, $c \in \mathbb{R} \backslash \{0\}$,
\begin{multline}
\label{eq:how-to-integrate}
    \partial_x \Bigg\{
        \frac{1}{ \big[\Delta^2 + h_1^+ h_1^-\big]^a}
        \frac{ (\ln x)^b \rme^{\rmi x c} }{ x }
    \Bigg\}
    =
    \frac{1}{ \big[\Delta^2 + h_1^+ h_1^-\big]^a}
    \frac{ (\ln x)^b \rme^{\rmi x c} }{ x }
    \Big(
        \rmi c + \Ocal(x^{-1})
    \Big)
    \\
    - \frac{
        2 a h_1^+ h_1^- (d_x (s_1^-) - d_x (s_1^+)) }{
        \big[ \Delta^2 + h_1^+ h_1^- \big]^{a + 1}}
    \frac{ (\ln x)^b \rme^{\rmi x c} }{ x }
\end{multline}
which suggests that we should first integrate
$[ \Delta^2 + h_1^+ h_1^- ]$
in the denominators
as well as the oscillating factors
(in the coefficients $h_1^+$, $h_1^-$, $b_{12}$, and $b_{21}$)
in the expression for $b_x^{(2)}$.

In the end of the day, the only term
in the expression for the logarithmic derivative of the Fredholm determinant
proportional to $x^{-1}$,
that contributes to the Fredholm determinant asymptotic expansion
after integration over $x$,
is the term $a_x^{(2)}$.
This term is responsible for the logarithmic correction
in the asymptotic expansion of the Fredholm determinant
as we showed in the case of no poles on the real axis
\eqref{eq:a-2-no-poles-log}.

Finally, we obtain
\begin{multline}
    \ln \det_{\Ccal_{\lambda_0}}
    (\id + \rmV)
    =
    C (\lambda_0)
    + \ln \big[ \Delta^2 + h_1^+ h_1^- \big]
    + a (x, \lambda_0)
    - \frac{\tau^2 (\lambda_0)}{2} \ln x
    \\
    + \frac{ 1 + \ocal(1) }{ x^{1/2} }
    \frac{\Delta^2}{\omega' (0 | \lambda_0)}
    \frac{1}{ \big[ \Delta^2 + h_1^+ h_1^- \big]}
    \Bigg[
        \frac{ h_1^+ b_{21} }{ (\lambda_0 - s_1^+)^2 }
        + \frac{ h_1^- b_{12} }{ (\lambda_0 - s_1^-)^2 }
    \Bigg]
\end{multline}
and hence
\begin{multline}
    \det_{\Ccal_{\lambda_0}}
    (\id + \rmV)
    =
    \exp\{C (\lambda_0) \}
    x^{ - \frac{\tau^2 (\lambda_0)}{2} }
    \exp\{a (x, \lambda_0) \}
    \\
    \times
    \Bigg\{
        \Delta^2 + h_1^+ h_1^-
        + \frac{ 1 + \ocal(1) }{ x^{1/2} }
        \frac{ \sqrt2 \Delta^2 }{ \sqrt{- u'' (\lambda_0)} }
        \Bigg[
            \frac{ h_1^+ b_{21} }{ (\lambda_0 - s_1^+)^2 }
            + \frac{ h_1^- b_{12} }{ (\lambda_0 - s_1^-)^2 }
        \Bigg]
    \Bigg\}.
\end{multline}
Here we also substituted
$u'' (\lambda_0) = - 2 (\omega' (0 | \lambda_0))^2$.

\subsection{The integration constant}
In order to fix the integration constant,
we note that we can slightly deform
the function $\nu (\lambda)$ to some function $\widetilde{\nu} (\lambda)$,
such that for this new function the poles have moved away from the real axis,
$\Scal \cap \mathbb{R} = \emptyset$,
and the integration contour can be deformed back
as in Figure~\ref{fig:configuration-of-the-poles}.
We denote the constants for the Fredholm determinants with $\nu$
and $\widetilde{\nu}$ as $C (\lambda_0)$ and $\widetilde{C} (\lambda_0)$,
respectively.
Since
\begin{equation}
    \ln \big[ \Delta^2 + h_1^+ h_1^- \big]
    \xrightarrow[\nu \to \widetilde{\nu}]{}
    \ln \Delta^2 + \Ocal(x^{-\infty}),
\end{equation}
we have that
\begin{equation}
    C (\lambda_0)
    \xrightarrow[\nu \to \widetilde{\nu}]{}
    \widetilde{C} (\lambda_0)
    - \ln \Delta^2,
\end{equation}
where
$\exp\{ \widetilde{C} (\lambda_0) \}
    = \eval{\Acal (\lambda_0)}_{\nu \to \tilde{\nu}}$
with $\Acal$ given by~\eqref{eq:thm:integration-constant}
due to Theorem~\ref{thm:Fredholm-det-AE-no-poles}.

Therefore, the asymptotic expansion of the Fredholm
determinant reads
\begin{multline}
    \det\limits_{\Ccal_{\lambda_0}} (\id + \rmV)
    =
    \exp\{C (\lambda_0) \}
    x^{ - \frac{\tau^2 (\lambda_0)}{2} }
    \exp\{a (x, \lambda_0) \}
    \\
    \times
    \Bigg\{
        1 + \frac{ h_1^+ h_1^- }{ (s_1^+ - s_1^-)^2 }
        + \frac{ 1 + \ocal(1) }{ x^{1/2} }
        \frac{\sqrt 2}{ \sqrt{- u'' (\lambda_0)} }
        \Bigg[
            \frac{ h_1^+ b_{21} (\lambda_0) }{ (\lambda_0 - s_1^+)^2 }
            + \frac{ h_1^- b_{12} (\lambda_0) }{ (\lambda_0 - s_1^-)^2 }
        \Bigg]
    \Bigg\},
\end{multline}
where the integration constant is given by the same expression
as in the case of no poles,
see equation~\eqref{eq:expression-for-integration-constant-final},
but with an integration contour $\Ccal_{\lambda_0}$ deformed
in the vicinity of the poles according to Assumption~\ref{as:number6}.
This completes the proof of Theorem~\ref{thm:Fredholm-det-AE-two-poles}.

\subsection{A Fredholm minor}
\label{sec:Fredholm-minor-two-poles}
Similarly as in the case of no poles on the real axis,
described in Section~\ref{sec:Fredholm-minor-no-poles},
we derive the asymptotic expansion of the Fredholm minor
considered in Proposition~\ref{prop:minor-of-Fredholm-det}
in the case of two poles.

As in the proof of Proposition~\ref{prop:minor-of-Fredholm-det-no-poles},
we calculate first the leading large-$|\lambda|$
asymptotics of $\widetilde{\chi}_{12} (\lambda)$,
\begin{equation}
\label{eq:chi-12-two-poles}
    \lim\limits_{\lambda \to \infty}
    \big[
        \lambda \cdot \widetilde{\chi}_{12} (\lambda)
    \big]
    =
    \lim\limits_{\lambda \to \infty}
    \big[
        \lambda \cdot \rmS_{12} (\lambda)
    \big]
    + \frac{ b_{12} (\lambda_0) }{ x^{1/2}\omega' (0 | \lambda_0) }
    + \Ocal \Bigg(
        \frac{
            (\ln x)^2 \rme^{\rmi u (\lambda_0)} }{
            x^{3 / 2 - \tau (\lambda_0)} }
    \Bigg).
\end{equation}
From the expression~\eqref{eq:matrix-S-new} for the matrix $\rmS (\lambda)$
it follows that
\begin{equation} \label{eq:S_large_lambda}
    \lim\limits_{\lambda \to \infty} 
    \big[
        \lambda \cdot \rmS_{12} (\lambda)
    \big]
    = \sum\limits_{j = 1}^{n^+}
    \sigma_j^+
    \cdot \big( \mathbf{y}_j \big)_1
    \cdot \Pi_{11} (s_j^+)
    - \sum\limits_{j = 1}^{n^-}
    \sigma_j^-
    \cdot \big( \mathbf{x}_j \big)_1
    \cdot \Pi_{12} (s_j^-).
\end{equation}
The right-hand side can be evaluated explicitly
in the case of two poles considered above.
Substituting expression~\eqref{eq:solution-system-for-two-poles}
into \eqref{eq:S_large_lambda}, we obtain
\begin{multline}
    \lim\limits_{\lambda \to \infty} 
    \big[
        \lambda \cdot \rmS_{12} (\lambda)
    \big]
    \\
    =
    C \big[
        \sigma_1^+
        \Pi_{11}^2 (s_1^+)
        -
        \sigma_1^-
        \Pi_{12}^2 (s_1^-)
    \big]
    + \frac{
        2 C
        \sigma_1^+ \sigma_1^-
        \big( P_{11}^{+-} \big)_{22} }{
        s_1^+ - s_1^- }
    \Pi_{11} (s_1^+) \Pi_{12} (s_1^-)
\end{multline}
with $C$ given by~\eqref{eq:coefficient-C}.
Here we insert the asymptotic expansions 
\eqref{eq:AE-for-P22},
\eqref{eq:AE-for-sigma-pm},
\eqref{eq:AE-for-C},
and~\eqref{eq:coefficient-Pi-1-explicit}.
We substitute the resulting expression back into
equation~\eqref{eq:chi-12-two-poles}
and multiply it by the asymptotic expansion
of the Fredholm determinant from Theorem~\ref{thm:Fredholm-det-AE-two-poles}.
Using Proposition~\ref{prop:minor-of-Fredholm-det},
we obtain the following result for the Fredholm minor.
\begin{proposition}
    \label{prop:minor-of-Fredholm-det-two-poles}
    Let Assumptions \ref{item:assumption-1}--\ref{as:number6}
    be satisfied.
    Let the sets $\Scal^\pm \cap {\mathbb R}$ both have cardinality one
    and set $\{s_1^\pm\} = \Scal^\pm \cap {\mathbb R}$,
    see Figure~\ref{fig:4-configurations-of-the-poles},
    and $(s_1^+ - s_1^-)^2 + h_1^+ h_1^- \neq 0$.
    Then the Fredholm minor composed of the integrable integral operator
    $\rmV$ with kernel \eqref{eq:kernel-V-as-scalar-product}
    and the projector $\rmP$ with kernel~\eqref{eq:P-kernel}
    has the large-$x$ asymptotic expansion
    \begin{multline}
        \label{eq:prop:Fredholm-det-minor-AE-two-poles}
        \Bigg[
            \int\limits_{\Ccal_{\lambda_0}}
            \frac{ \dd{\mu} }{ 2 \pi }
            e^{-2} (\mu)
            + \partial_\gamma
        \Bigg]
        \eval{
            \det\limits_{\Ccal_{\lambda_0}}
            \big(
                \id + \rmV
                + \gamma \rmP
            \big)
        }_{\gamma = 0}
        \\
        =
        \rmi
        h_1^+
        \cdot \Acal (\lambda_0)
        \ x^{ - \flatfrac{\tau^2 (\lambda_0)}{2}}
        \exp\Bigg\{
            - \int\limits_{\Ccal_{\lambda_0}}
            \dd{\lambda} \Lcal (\lambda | \lambda_0)
            \left( \rmi x u' (\lambda) + g' (\lambda) \right)
        \Bigg\}
        \\
        \times
        \Bigg\{
            1
            + \frac{ 1 }{ x^{1/2} }
            \frac{
                \sqrt{2} b_{12} (\lambda_0) }{
                \sqrt{- u'' (\lambda_0)} h_1^+ }
            \Bigg[
                1
                + \frac{ h_1^+ h_1^- (\lambda_0 - s_1^+)^2 }{
                    (\lambda_0 - s_1^-)^2 (s_1^+ - s_1^-)^2 }
            \Bigg]
            + \frac{
                \ocal\big(b_{12} (\lambda_0) \big)
                +
                \ocal\big(b_{21} (\lambda_0) \big) }{
                x^{1/2}}
        \Bigg\},
    \end{multline}
    where the functions $\tau$,
    $\Lcal (\cdot | \lambda_0)$
    were defined in~\eqref{eq:definition-of-tau-Lcal-and-Lcal-prime},
    the amplitude $\Acal (\lambda_0)$
    in~\eqref{eq:thm:integration-constant},
    and the coefficients $h_1^\pm$, $b_{12} (\lambda_0)$,
    and $b_{21} (\lambda_0)$
    in~\eqref{eq:definition-of-h-plus-and-h-minus}
    and~\eqref{eq:definition-of-b12-and-b21}.
\end{proposition}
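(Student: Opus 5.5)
The plan is to combine Proposition~\ref{prop:minor-of-Fredholm-det} with Theorem~\ref{thm:Fredholm-det-AE-two-poles}. By Proposition~\ref{prop:minor-of-Fredholm-det}, the minor equals $\rmi \lim_{\lambda\to\infty}[\lambda\,\widetilde\chi_{12}(\lambda)]$ times $\det(\id+\rmV)$. It therefore suffices to compute the leading large-$x$ behaviour of $\lim \lambda\widetilde\chi_{12}$ up to relative order $x^{-1/2}$, and then multiply by the expansion of the determinant. Outside $\Ucal_{\lambda_0}$ and for $|\lambda|$ large, the transformations $\widetilde\chi\to\Xi\to\Upsilon\to\Phi$ are trivial up to the diagonal factor $\alpha^{\sigma^z}$, and $\alpha=1+\Ocal(\lambda^{-1})$. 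Hence the $(1,2)$ entry of the $\lambda^{-1}$ coefficient of $\widetilde\chi$ equals that of $\Phi=\rmS\Pi$, namely $\lim\lambda \rmS_{12}+\lim\lambda\Pi_{12}$, which is \eqref{eq:chi-12-two-poles}. The $\Pi$-part is already known from \eqref{eq:coefficient-Pi-1-explicit}: it equals $b_{12}(\lambda_0)/(x^{1/2}\omega'(0|\lambda_0))$ plus the $\Pi_3$-bound.

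Next I would evaluate \eqref{eq:S_large_lambda} using the explicit two-pole solution \eqref{eq:solution-system-for-two-poles}. This gives $C[\sigma_1^+\Pi_{11}^2(s_1^+)-\sigma_1^-\Pi_{12}^2(s_1^-)]+2C\sigma_1^+\sigma_1^-(P^{+-}_{11})_{22}\Pi_{11}(s_1^+)\Pi_{12}(s_1^-)/\Delta$. I then expand each factor:
\begin{itemize}
\item $\Pi_{11}(s_1^+)=1+\Ocal(\ln x/x)$;
\item $\Pi_{12}(s_1^-)=x^{-1/2}b_{12}/(\omega'(s_1^--\lambda_0))+\dots$;
\item $(P^{+-}_{11})_{22}=1+\Ocal(\ln x/x)$;
\item $\sigma_1^\pm=h_1^\pm(1+\Ocal(x^{-1/2}))$, using \eqref{eq:AE-for-sigma-pm};
\item $C$, using \eqref{eq:AE-for-C}.
\end{itemize}
At leading order this yields $C h_1^+=h_1^+\Delta^2/(\Delta^2+h_1^+h_1^-)$. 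At order $x^{-1/2}$ there are three sources. The first is the cross term $2h_1^+h_1^-\Pi_{12}(s_1^-)/\Delta$. The second is $-h_1^-\Pi_{12}^2$, which is of order $x^{-1}$ and so can be dropped. The third is the $x^{-1/2}$ corrections of $C$ and $\sigma_1^+$, which are proportional to $h_1^+b_{21}$ and $h_1^-b_{12}$.

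Multiplying by Theorem~\ref{thm:Fredholm-det-AE-two-poles} cancels the prefactor: the factor $\Delta^2/(\Delta^2+h_1^+h_1^-)$ meets $1+h_1^+h_1^-/\Delta^2$. The determinant also contributes its own $x^{-1/2}$ term. I then collect everything at order $x^{-1/2}$. The terms proportional to $b_{12}$ come from three places: the direct $\Pi$ term, the cross term, and the correction of $C$ together with the determinant's $h_1^-b_{12}$ term. These should combine into
\[
\frac{\sqrt2\, b_{12}}{\sqrt{-u''}\,h_1^+}\Bigl[1+\frac{h_1^+h_1^-(\lambda_0-s_1^+)^2}{(\lambda_0-s_1^-)^2\Delta^2}\Bigr].
\]
The remaining $b_{21}$ contributions and the subleading $b_{12}$ pieces are absorbed into the stated $\ocal(b_{12})+\ocal(b_{21})$ error, after using $\sqrt2/\sqrt{-u''}=1/\omega'(0|\lambda_0)$.

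The main obstacle is this bookkeeping at order $x^{-1/2}$. One has to track carefully which $h_1^\pm b$-combinations cancel between the $\rmS$-expansion and the determinant's correction, given that $h_1^\pm$, $b_{12}$, $b_{21}$ carry different oscillating factors and powers $x^{\pm\tau}$. I would first verify the $b_{12}$ coefficient by simplifying the combination of $C$-corrections with the cross term over the common denominator $\Delta^2(\lambda_0-s_1^-)^2$, and check it against the claimed bracket. Only afterwards would I bound the rest crudely as $\ocal$.
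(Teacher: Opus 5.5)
Your proposal is correct and is essentially the paper's own argument. You reduce via Proposition~\ref{prop:minor-of-Fredholm-det} to $\lim_{\lambda\to\infty}\lambda\,\widetilde\chi_{12}(\lambda)=\lim_{\lambda\to\infty}\lambda\,\rmS_{12}(\lambda)+\lim_{\lambda\to\infty}\lambda\,\Pi_{12}(\lambda)$, evaluate \eqref{eq:S_large_lambda} on the two-pole solution \eqref{eq:solution-system-for-two-poles}, expand, and multiply by Theorem~\ref{thm:Fredholm-det-AE-two-poles}. Your bookkeeping does close: the $b_{12}$-pieces (direct $\Pi$ term, cross term, $C$-correction plus the determinant's $h_1^-b_{12}$ term) sum to $\frac{b_{12}}{h_1^+}\bigl[1+h_1^+h_1^-\bigl(\frac{1}{\lambda_0-s_1^-}-\frac{1}{s_1^+-s_1^-}\bigr)^2\bigr]$, which is exactly the stated bracket.

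One precision concerns the order-$x^{-1/2}$ terms proportional to $h_1^+b_{21}$. These cannot simply be absorbed into an $\ocal(b_{21})$ error, since they are $\Ocal(b_{21})$. They cancel exactly: with $\Delta=s_1^+-s_1^-$, the contributions from $C$, from $\sigma_1^+$ and from the determinant carry relative weights $\frac{h_1^+h_1^-}{\Delta^2+h_1^+h_1^-}$, $-1$ and $\frac{\Delta^2}{\Delta^2+h_1^+h_1^-}$, which sum to zero. Only the $\ocal(1)$ remainder of Theorem~\ref{thm:Fredholm-det-AE-two-poles} survives, and that is what produces the $\ocal(b_{21})$ term.
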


\section[The case of \texorpdfstring{{$\mathbf{n}$}}{n} poles]%
    {The case of $\mathbf{n}$ poles}
\label{sec:n-poles-on-the-real-axis}
In this section we consider the case of $n$ poles in $\Omega$,
an arbitrary number of which may be located on the real axis.
First we prove Theorem~\ref{thm:Fredholm-det-AE-static}
for `the static case', i.e., for $\lambda_0 \to + \infty$.
Then we formulate a conjecture for the case of arbitrary
$\lambda_0 \in \mathbb{R}$.

\subsection{Static case}
\label{sec:static-case}
The static case corresponds to $\lambda_0 \to + \infty$.
Then, by Assumption~\ref{item:assumption-2},
$\lim_{\lambda_0 \rightarrow + \infty} u(\lambda|\lambda_0)
= p(\lambda)$, where $p'(\lambda) > 0$ for all 
$\lambda \in {\mathbb R}$. Hence, there is no saddle-point
in this limit, and what used to be the saddle-point
contribution reduces to 
\begin{equation}
    \Pi (\lambda)
    = \mathrm{I}_2
    + \Ocal
    \big(
        \rme^{- m x}
        \big(
            \begin{smallmatrix}
                1 & 1
                \\
                1 & 1
            \end{smallmatrix}
        \big)
    \big)
\end{equation}
for some $m > 0$.
Consequently, the coefficients $P$ and $Q$ simplify to
\begin{equation}
    \big( P_{jk}^{\epsilon \epsilon'} \big)_{11}
    = \big( P_{jk}^{\epsilon \epsilon'} \big)_{22} = 1 + \Ocal (\rme^{- m x}),
    \quad \big( P_{jk}^{\epsilon \epsilon'} \big)_{12}
    = \big( P_{jk}^{\epsilon \epsilon'} \big)_{21} = \Ocal (\rme^{- m x}),
\end{equation}
and
\begin{equation}
    \big( Q_{jk}^{\epsilon \epsilon'} \big)_{ln} = \Ocal (\rme^{- m x}),
    \qquad l, n = 1, 2,
\end{equation}
for $\epsilon, \epsilon' = \pm$. Then
the linear system~\eqref{eq:SLE-new} for $\mathbf{y}_j$ and $\mathbf{x}_j$
has the following off-diagonal block structure,
\begin{equation}
\label{eq:prop:SLE}
\left\{
\begin{aligned}
    \mathbf{x}_j
    - \sum\limits_{k = 1}^{n^+}
    \frac{ \eval{h_k^+}_{\lambda_0 = + \infty} }{ s_j^- - s_k^+ }
    \mathbf{y}_k
    &= \begin{pmatrix} 0 \\ 1 \end{pmatrix},
    \\
    \mathbf{y}_j
    - \sum\limits_{k = 1}^{n^-}
    \frac{ \eval{h_k^-}_{\lambda_0 = + \infty} }{ s_j^+ - s_k^- }
    \mathbf{x}_k
    &= \begin{pmatrix} 1 \\ 0 \end{pmatrix},
\end{aligned}
\right.
\end{equation}
and the contribution of the poles~\eqref{eq:prop:contribution-of-poles}
for $\beta = x$ significantly simplifies as well and reads
\begin{multline}
\label{eq:contribution-of-poles-static-case}
    \sum\limits_{\mu \in \mathcal{S}}
    \res\limits_{\lambda = \mu}
    \Big(
        \tr\{
            \rmS' (\lambda) \Pi (\lambda)
            \sigma^z
            \Pi^{-1} (\lambda) \rmS^{-1} (\lambda)
        \}
        \eval{ d_x (\lambda) }_{\lambda_0 = + \infty}
    \Big)
    \\
    = - \rmi
    \sum\limits_{j = 1}^{n^+}
    \sum\limits_{k = 1}^{n^-}
    \frac{
        \eval{ \partial_x \big( h_j^+ h_k^- \big) }_{\lambda_0 = + \infty} }{
        (s_j^+ - s_k^-)^2 }
    \mathbf{x}_k^\intercal \sigma^y \mathbf{y}_j 
    + \Ocal (\rme^{- m x}).
\end{multline}
Here we used that
\begin{equation}
    d_x = \partial_x \ln e (\lambda)
    = \frac{ \partial_x e (\lambda) }{ e (\lambda) }
    \quad
    \Rightarrow
    \quad
    \partial_x h_j^\pm (\lambda)
    = - 2 d_x (\lambda) h_j^\pm (\lambda).
\end{equation}

The expression~\eqref{eq:contribution-of-poles-static-case}
can be expressed as the derivative of the determinant of
the matrix encoding the linear system~\eqref{eq:SLE-static},
a result that was first obtained in \cite{S-10} in the
less general context of the static case with explicit
momentum function $p (\lambda) = \lambda$.
\begin{proposition}
\label{prop:Slavnov-formula-for-general-momentum}
    Let $\mathbf{y}_j$, $j = 1, \dots, n^+$,
    and $\mathbf{x}_j$, $j = 1, \dots, n^-$ be the two-dimensional vectors
    which satisfy the linear system
    \begin{equation}
    \label{eq:SLE-static}
    \left\{
    \begin{aligned}
        \mathbf{x}_j
        - \sum\limits_{k = 1}^{n^+}
        A_{jk}^+
        \mathbf{y}_k
        &= \begin{pmatrix} 0 \\ 1 \end{pmatrix},
        \\
        \mathbf{y}_j
        - \sum\limits_{k = 1}^{n^-}
        A_{jk}^-
        \mathbf{x}_k
        &= \begin{pmatrix} 1 \\ 0 \end{pmatrix},
    \end{aligned}
    \right.
    \end{equation}
    where
    the elements of the matrices $\rmA^\pm$
    are given by~\eqref{eq:matrices-A-pm}.
    Assume that $\rmA^\pm$ are $C^1$ functions of an auxiliary variable $x$
    and that the matrix $\rmI_{n^+} - \rmA$
    with $\rmA = \rmA^- \rmA^+$ is invertible.
    Then $\mathbf{x}_j$ and $\mathbf{y}_k$ are well-defined, and
    \begin{equation}
    \label{eq:prop:Slavnov-formula-for-general-momentum}
        \rmi
        \sum\limits_{j = 1}^{n^+}
        \sum\limits_{k = 1}^{n^-}
        \partial_x \big[ A_{kj}^+ A_{jk}^- \big]
        \cdot
        \mathbf{x}_k^\intercal \sigma^y \mathbf{y}_j
        = \partial_x \ln \det\big[ \mathrm{I}_{n^+} - \rmA \big].
    \end{equation}
\end{proposition}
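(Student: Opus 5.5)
We decouple the block system \eqref{eq:SLE-static} first. Write $\mathbf{y}_j = (y_j^{(1)}, y_j^{(2)})^\intercal$ and $\mathbf{x}_k = (x_k^{(1)}, x_k^{(2)})^\intercal$. Each component yields a scalar linear system whose coefficient matrix is built from $\rmA^\pm$.

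Substituting the first equation into the second gives
\begin{equation*}
    (\rmI_{n^+} - \rmA^-\rmA^+)\,\mathbf{Y} = \mathbf{e}_1 + \rmA^- \mathbf{e}_2 .
\end{equation*}
Here $\mathbf{Y}$ is the $n^+$-vector with vector-valued entries $\mathbf{y}_j$, and $\mathbf{e}_1$, $\mathbf{e}_2$ denote the right-hand-side constant vectors $(1,0)^\intercal$, $(0,1)^\intercal$ repeated over the $n^+$ or $n^-$ indices. Invertibility of $\rmI_{n^+}-\rmA$ then gives well-definedness of $\mathbf{Y}$, and $\mathbf{X} = \mathbf{e}_2 + \rmA^+\mathbf{Y}$ follows.

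Concretely, put $\rmR = (\rmI_{n^+}-\rmA^-\rmA^+)^{-1}$ and let $\mathbf{1}_n$ be the all-ones vector. Then
\begin{equation*}
    y^{(1)} = \rmR \mathbf{1}_{n^+}, \qquad y^{(2)} = \rmR \rmA^- \mathbf{1}_{n^-},
\end{equation*}
\begin{equation*}
    x^{(1)} = \rmA^+ \rmR \mathbf{1}_{n^+}, \qquad x^{(2)} = \mathbf{1}_{n^-} + \rmA^+\rmR\rmA^-\mathbf{1}_{n^-}.
\end{equation*}
Since $\rmi\,\mathbf{x}^\intercal\sigma^y\mathbf{y} = x^{(1)}y^{(2)} - x^{(2)}y^{(1)}$, the left-hand side of \eqref{eq:prop:Slavnov-formula-for-general-momentum} becomes an explicit bilinear expression in these vectors, weighted by $\partial_x[A^+_{kj}A^-_{jk}]$.

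Next, expand the right-hand side. We have
\begin{equation*}
    \partial_x\ln\det(\rmI-\rmA) = -\tr\bigl\{\rmR\,(\partial_x\rmA^-\,\rmA^+ + \rmA^-\,\partial_x\rmA^+)\bigr\}.
\end{equation*}
The structural input is the Cauchy-type form of the entries, $A^-_{jk} = h_k^-/(s_j^+-s_k^-)$ and $A^+_{kj} = h_j^+/(s_k^- - s_j^+)$. The $s$ are independent of $x$, and only the $h$ carry $x$-dependence. Therefore
\begin{equation*}
    \partial_x\rmA^- = \rmA^-\rmD^-, \qquad \partial_x\rmA^+ = \rmA^+\rmD^+,
\end{equation*}
where $\rmD^\pm = \mathrm{diag}(\partial_x\ln h^\pm)$. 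It follows that
\begin{equation*}
    \partial_x[A^+_{kj}A^-_{jk}] = A^+_{kj}A^-_{jk}\,(\partial_x\ln h_j^+ + \partial_x\ln h_k^-).
\end{equation*}
Both sides are therefore linear in the $\partial_x\ln h$, and it suffices to match the coefficient of each $\partial_x\ln h_k^-$ and each $\partial_x\ln h_j^+$ separately. On the right, the coefficient of $\partial_x\ln h_k^-$ is $-(\rmA^+\rmR\rmA^-)_{kk}$, and the coefficient of $\partial_x\ln h_j^+$ is $-(\rmR\rmA^-\rmA^+)_{jj}$.

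The main obstacle is matching these diagonal entries with the left side. The left side instead contains products of row and column sums: $\sum_j A^+_{kj}A^-_{jk}$ multiplied by components of $\mathbf{x}_k$ and $\mathbf{y}_j$, which are themselves sums of resolvent entries against $\mathbf{1}$. To convert between the two I would use the Cauchy-matrix displacement identities
\begin{equation*}
    \rmS^+\rmA^- - \rmA^-\rmS^- = \rmA^-\text{-independent rank-one term } \mathbf{1}_{n^+}(h^-)^\intercal, \qquad \rmS^\pm = \mathrm{diag}(s^\pm),
\end{equation*}
together with the analogous identity for $\rmA^+$. These make $\rmA^-\rmA^+$ satisfy a rank-two displacement equation with respect to $\rmS^+$. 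Combining this with the resolvent gives
\begin{equation*}
    [\rmS^+,\rmR] = \rmR\,(\text{rank two})\,\rmR,
\end{equation*}
whose two rank-one pieces are exactly $y^{(1)}$, $y^{(2)}$ paired with the corresponding row vectors, and these in turn are expressible through $x^{(1)}$, $x^{(2)}$.

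Taking diagonal entries, and using that the commutator has zero diagonal, expresses the needed quantities in the form $A^+_{kj}A^-_{jk}\,(x^{(1)}_k y^{(2)}_j - x^{(2)}_k y^{(1)}_j)$. Here each factor $1/(s_j^+-s_k^-)$ is squared, once from each $A$, which is what produces the $(s_j^+-s_k^-)^{-2}$ structure seen in \eqref{eq:contribution-of-poles-static-case}.

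As a sanity check I would first verify the case $n^+ = n^- = 1$ by hand. There $\det = 1 - A^-A^+$, and the left side reduces to $\partial_x(A^+A^-)\cdot(x^{(1)}y^{(2)} - x^{(2)}y^{(1)})$ with everything explicit; this confirms the signs before carrying out the general displacement computation.
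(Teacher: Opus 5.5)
Your reduction is sound. Decoupling the system, using $\partial_x\rmA^\pm=\rmA^\pm\rmD^\pm$, and matching the coefficients of $\partial_x\ln h_j^+$ and $\partial_x\ln h_k^-$ correctly shows that you need
$-\rmi\sum_k A^+_{kj}A^-_{jk}\,\mathbf{x}_k^\intercal\sigma^y\mathbf{y}_j=(\rmR\rmA^-\rmA^+)_{jj}$ and $-\rmi\sum_j A^+_{kj}A^-_{jk}\,\mathbf{x}_k^\intercal\sigma^y\mathbf{y}_j=(\rmA^+\rmR\rmA^-)_{kk}$.

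The step meant to deliver these cannot work as described. The relation $[\rmS^+,\rmR]=\rmR[\rmS^+,\rmA]\rmR$ is an exact matrix identity. Since $\rmS^+$ is diagonal, both sides have identically vanishing diagonal, so ``taking diagonal entries'' produces only $0=0$. More fundamentally, $[\rmS^+,\cdot\,]$ annihilates diagonal matrices, so the commutator carries no information about the entries $R_{jj}$. Yet these entries enter $(\rmR\rmA)_{jj}$ and $(\rmA^+\rmR\rmA^-)_{kk}$, so no manipulation of $[\rmS^+,\rmR]$ alone can yield the required quantities.

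The claim that the row vectors are ``expressible through $x^{(1)}$, $x^{(2)}$'' is also left unjustified. Such conversions rest on a fact you do not mention: $\rmA^+=\mathrm{K}\mathrm{H}^+$ and $\rmA^-=-\mathrm{K}^\intercal\mathrm{H}^-$ involve one and the same Cauchy matrix $\mathrm{K}_{kj}=1/(s_k^--s_j^+)$, with $\mathrm{H}^\pm=\mathrm{diag}(h^\pm)$. Hence $\mathrm{H}^+\rmA$ is symmetric and $\mathrm{H}^+\rmR=\rmR^\intercal\mathrm{H}^+$. Applied to your commutator this gives $[\rmS^+,\rmR]=\bigl(y^{(2)}(y^{(1)})^\intercal-y^{(1)}(y^{(2)})^\intercal\bigr)\mathrm{H}^+$, which involves no $x^{(1)}$, $x^{(2)}$ at all.

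The repair is to run the displacement argument on $\rmR\rmA^-$ with respect to the pair $(\rmS^+,\rmS^-)$. Use your rank-two formula for $[\rmS^+,\rmA]$, the identity $\rmS^+\rmA^--\rmA^-\rmS^-=\mathbf{1}_{n^+}(h^-)^\intercal$, and the symmetry above. The symmetry gives $(h^-)^\intercal(\rmI_{n^-}+\rmA^+\rmR\rmA^-)=(x^{(2)})^\intercal\mathrm{H}^-$ and $(h^+)^\intercal\rmR\rmA^-=-(x^{(1)})^\intercal\mathrm{H}^-$. One then obtains
\[
\rmS^+\rmR\rmA^--\rmR\rmA^-\rmS^-=[\rmS^+,\rmR]\,\rmA^-+\rmR\,\mathbf{1}_{n^+}(h^-)^\intercal=\bigl(y^{(1)}(x^{(2)})^\intercal-y^{(2)}(x^{(1)})^\intercal\bigr)\mathrm{H}^- .
\]
Since $s_j^+\neq s_k^-$, this Sylvester equation determines every entry:
\[
(\rmR\rmA^-)_{jk}=A^-_{jk}\bigl(y^{(1)}_jx^{(2)}_k-y^{(2)}_jx^{(1)}_k\bigr)=-\rmi A^-_{jk}\,\mathbf{x}_k^\intercal\sigma^y\mathbf{y}_j .
\]
Contracting with $A^+_{kj}$ over $k$ or over $j$ gives exactly your two coefficient identities. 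Alternatively, keep $[\rmS^+,\rmR]$ for the off-diagonal $R_{jj'}$ and recover $R_{jj}$ from $\rmR\mathbf{1}_{n^+}=y^{(1)}$.

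This entrywise identity is what the paper obtains order by order. It expands $\mathbf{c}^\pm$, $\mathbf{d}^\pm$ in Neumann series for $\|\rmA\|<1$ and symmetrizes over permutations. Its formula for $K^{(p)}_{j_0k_0}$ amounts to $A^-_{j_0k_0}\bigl(1+\sum_{p\geq0}K^{(p)}_{j_0k_0}\bigr)=(\rmR\rmA^-)_{j_0k_0}$, and the paper then continues analytically in $\gamma$. Once repaired, your route is a shorter, purely algebraic proof, valid directly whenever $\rmI_{n^+}-\rmA$ is invertible.
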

The proof of this proposition involves combinatorial arguments
and is presented in Appendix~\ref{app:combinatorial-identity}.
We stress that the validity of the proposition is not restricted
to the static case. We shall use it again
in the next section,
when we formulate a conjecture on the general dynamical case
with $n$ poles on the real axis
that contribute to the large-$x$ asymptotics of the Fredholm determinant.

Due to Proposition~\ref{prop:Slavnov-formula-for-general-momentum},
in the static case, the logarithmic derivative
of the Fredholm determinant~\eqref{eq:prop:log-der-Fredholm-for-AE} reads
\begin{equation}
\label{eq:log-der-Fredholm-for-AE-static}
    \partial_x \ln \det_{\Ccal_\infty}
    (\id + \mathrm{V})
    = a_x (x, + \infty)
    + \partial_x \ln \det\big[ \mathrm{I}_{n^+} - \rmA_\infty \big]
    + \Ocal (\rme^{- m x}),
\end{equation}
where $a_x (x, + \infty)$ is given by~\eqref{eq:prop:function-a}
and where $\rmA_\infty = \eval{\rmA^- \rmA^+}_{\lambda_0 = + \infty}$
with $\rmA^\pm$ defined in~\eqref{eq:matrices-A-pm}.
Therefore, the Fredholm determinant can be expressed as
\begin{equation}
\label{eq:Fredholm-det-AE-n-poles-static}
    \det_{\Ccal_\infty}
    (\id + \mathrm{V})
    =
    \Acal_\infty \cdot \det\big[ \mathrm{I}_{n^+} - \rmA_\infty \big]
    \exp\Bigg\{
        - \int\limits_{\Ccal_\infty}
        \dd{\lambda} \Lcal_{\ell} (\lambda)
        \big[ \rmi x p' (\lambda) + g' (\lambda) \big]
    \Bigg\} \big( 1 + \Ocal (\rme^{-m x}) \big),
\end{equation}
where the integration constant $\Acal_\infty$ can be determined
as in the case of two poles.
Namely, we can slightly deform the function $\nu (\lambda)$
to some $\widetilde{\nu} (\lambda)$
such that all the poles move away from the real axis.
Then the determinant of the matrix
on the right-hand side of~\eqref{eq:Fredholm-det-AE-n-poles-static}
turns into $1 + \Ocal(x^{-\infty})$,
and we should recover the asymptotic expansion
in Theorem~\ref{thm:Fredholm-det-AE-no-poles}
as $\lambda_0 \to + \infty$ see,
e.g., equations~\eqref{eq:log-Fred-det-no-poles}
and~\eqref{eq:constant-at-plus-infty}.
Then
\begin{multline}
    \Acal_\infty = \lim\limits_{\lambda_0 \to + \infty} \Acal (\lambda_0)
    = \exp\Bigg\{
        \int\limits_{\Ccal_\infty}
        \dd{\lambda}
        \Lcal_\ell (\lambda) \partial_\lambda
        \ln\big( \rme^{-2 \pi \rmi \nu (\lambda)} - 1 \big)
    \Bigg\}
    \\
    \times
    \exp\Bigg\{
        \frac12
        \int\limits_{\Ccal_\infty}
        \dd{\lambda}
        \int\limits_{\Ccal_\infty}
        \dd{\mu}
        \frac{
            \Lcal_\ell' (\lambda) \Lcal_\ell (\mu)
            - \Lcal_\ell (\lambda) \Lcal_\ell' (\mu) }{
            \lambda - \mu}
    \Bigg\}.
\end{multline}
This concludes the proof of Theorem~\ref{thm:Fredholm-det-AE-static}.

\subsection{Conjecture}
\label{sec:conjecture}
In the general case of $n$ poles in $\Omega$ with $\lambda_0 \in \mathbb{R}$,
we have, to leading order, the same asymptotic expansion
for the contribution of the poles to the logarithmic
derivative with respect to $x$ of the Fredholm determinant
as in the static case,
\begin{multline}
\label{eq:contribution-of-poles-more-poles-on-the-real-axis}
    \sum\limits_{\mu \in \mathcal{S}}
    \res\limits_{\lambda = \mu}
    \Big(
        \tr\{
            \rmS' (\lambda) \Pi (\lambda)
            \sigma^z
            \Pi^{-1} (\lambda) \rmS^{-1} (\lambda)
        \}
        d_x (\lambda)
    \Big)
    \\
    =
    - 2 \rmi
    \sum\limits_{j = 1}^{n^+}
    \sum\limits_{k = 1}^{n^-}
    \frac{ h_j^+ h_k^- }{ (s_j^+ - s_k^-)^2 }
    \big[
        d_x (s_j^+) - d_x (s_k^-)
    \big]
    \mathbf{x}_k^\intercal
    \sigma^y
    \mathbf{y}_j
    + \ocal(1).
\end{multline}
Compare this expression
with the formula~\eqref{eq:contribution-of-poles-static-case}.
This expression follows from~\eqref{eq:prop:contribution-of-poles},
if we substitute the asymptotic expansions for the coefficients $\sigma_j^\pm$,
see equation~\eqref{eq:AE-for-sigma-pm},
and the asymptotic expansions for coefficients $P$ and $Q$
similar to expressions~\eqref{eq:AE-for-P-and-Q}, namely,
\begin{subequations}
\label{eq:AE-for-P-and-Q-more-poles}
\begin{align}
    \label{eq:AE-for-P12-more-poles}
    \big( P_{kj}^{--} \big)_{12}
    &= \frac1{ x^{1 / 2} }
    \big( \Pi_1' (s_j^-) \big)_{12}
    - \frac1{ x^{1 / 2} }
    \big( \Pi_1' (s_k^-) \big)_{12}
    + \Ocal \left(
        \frac{
            (\ln x)^2 \rme^{\rmi x u (\lambda_0)} }{
            x^{3 / 2 - \tau (\lambda_0)} }
    \right),
    \\
    \label{eq:AE-for-P21-more-poles}
    \big( P_{kj}^{++} \big)_{21}
    &= \frac1{ x^{1 / 2} }
    \big( \Pi_1' (s_j^+) \big)_{21}
    - \frac1{ x^{1 / 2} }
    \big( \Pi_1' (s_k^+) \big)_{21}
    + \Ocal \left(
        \frac{
            (\ln x)^2 \rme^{- \rmi x u (\lambda_0)} }{
            x^{3 / 2 + \tau (\lambda_0)} }
    \right),
    \\
    \label{eq:AE-for-P22-more-poles}
    \big( P_{jk}^{+ -} \big)_{22}
    & = 1 + \Ocal \left( \frac{\ln x}{x} \right),
    \\
    \label{eq:AE-for-Q12-more-poles}
    \big( Q_{kj}^{--} \big)_{12}
    &= \frac1{ x^{1 / 2} }
    \big( \Pi_1' (s_j^-) \big)_{12}
    + \Ocal \left(
        \frac{
            (\ln x)^2 \rme^{\rmi x u (\lambda_0)} }{
            x^{3 / 2 - \tau (\lambda_0)} }
    \right),
    \\
    \label{eq:AE-for-Q21-more-poles}
    \big( Q_{kj}^{++} \big)_{21}
    &= \frac1{ x^{1 / 2} }
    \big( \Pi_1' (s_j^+) \big)_{21}
    + \Ocal \left(
        \frac{
            (\ln x)^2 \rme^{- \rmi x u (\lambda_0)} }{
            x^{3 / 2 + \tau (\lambda_0)} }
    \right),
    \\
    \label{eq:AE-for-Q11-and-Q22-more-poles}
    \big( Q_{kj}^{-+} \big)_{11}
    & = \Ocal \left( \frac{\ln x}{x} \right)
    = \big( Q_{jk}^{+-} \big)_{22}.
\end{align}
\end{subequations}

Also, as we have shown above in the case of two poles,
the integral over the contour $\gamma_0$
involving the matrix $\rmS (\lambda)$
contributes only to the next-order correction
of the large-$x$ asymptotic expansion of the Fredholm determinant,
see equation~\eqref{eq:tr-S-and-Pi-for-integral-over-gamma-0}.
Thus, we straightforwardly obtain
\begin{multline}
    \label{eq:Fred-det-log-der-series-in-x-n-poles}
    \partial_x \ln \det_{\Ccal_{\lambda_0}}
    (\id + \rmV)
    \\
    = a_x (x, \lambda_0)
    - 2 \rmi
    \sum\limits_{j = 1}^{n^+}
    \sum\limits_{k = 1}^{n^-}
    \frac{ h_j^+ h_k^- }{ (s_j^+ - s_k^-)^2 }
    \big[
        d_x (s_j^+) - d_x (s_k^-)
    \big]
    \mathbf{x}_k^\intercal
    \sigma^y
    \mathbf{y}_j
    + \frac{ a_x^{(2)} }{x}
    + \ocal(1),
\end{multline}
where $a_x^{(2)}$ is given by~\eqref{eq:a-2-no-poles-log}.
The essence of the conjecture that we formulate below is
that we obtain the leading order asymptotics
\begin{equation}
\label{eq:Fredholm-det-AE-n-poles}
    \det_{\Ccal_{\lambda_0}}
    (\id + \rmV)
    = \exp\{ C (\lambda_0) \}
    \det\big[ \mathrm{I}_{n^+} - \rmA \big]
    x^{- \frac{\tau^2 (\lambda_0)}{2}}
    \exp\{ a (x, \lambda_0) \}
    (1 + \ocal(1)),
\end{equation}
where $\rmA = \rmA^- \rmA^+$
with the matrices $\rmA^\pm$
given by~\eqref{eq:matrices-A-pm},
employing again Proposition~\ref{prop:Slavnov-formula-for-general-momentum}.
The sub-leading corrections $\ocal(1)$ remain sub-leading
after the integration over $x$,
due to a relations like~\eqref{eq:how-to-integrate},
except for the term $a_x^{(2)} / x$,
which is responsible for the logarithmic corrections
$x^{- \flatfrac{\tau^2 (\lambda_0)}{2}}$.
Finally, the constant $C (\lambda_0)$ can be fixed,
as in the case of two poles.

\begin{conjecture}
    \label{conj:Fredholm-det-AE-n-poles}
    Let Assumptions \ref{item:assumption-1}--\ref{as:number6}
    be fulfilled. Then the Fredholm determinant
    of the integrable integral operator $\rmV$
    with kernel~\eqref{eq:kernel-V-as-scalar-product}
    has the large-$x$ asymptotic expansion
    \begin{multline}
        \label{eq:conj:Fredholm-det-AE-n-poles}
        \det_{\Ccal_{\lambda_0}} (\id + \rmV)
        =
        \Acal (\lambda_0)
        \det\big[ \mathrm{I}_{n^+} - \rmA \big]
        \ x^{ - \flatfrac{\tau^2 (\lambda_0)}{2}}
        \\
        \times
        \exp\Bigg\{
            - \int\limits_{\Ccal_{\lambda_0}}
            \dd{\lambda} \Lcal (\lambda | \lambda_0)
            \left( \rmi x u' (\lambda) + g' (\lambda) \right)
        \Bigg\}
        \big( 1 + \ocal (1) \big),
    \end{multline}
    where the constant $\Acal (\lambda_0)$
    is given by~\eqref{eq:thm:integration-constant} and
    the function $\Lcal (\cdot | \lambda_0)$
    by~\eqref{eq:definition-of-tau-Lcal-and-Lcal-prime}. The leading
    order contribution from the poles in $\Scal^\pm$,
    cf.~\eqref{eq:definition-Scal-pm}, \eqref{eq:definition-Scal-pm-set},
    is comprised in the matrix $\rmA = \rmA^- \rmA^+$,
    see expression~\eqref{eq:matrices-A-pm}.
    The matrix $\rmI_{n^+} - \rmA$ is assumed to be invertible.
    If $n^+ = 0$ or $n^- = 0$
    then the determinant on the right-hand side of
    \eqref{eq:conj:Fredholm-det-AE-n-poles} is equal to $1$ by
    definition.
\end{conjecture}

\section{Discussion}
\label{sec:discussion}
We have analyzed the large-$x$ asymptotic behaviour of the
Fredholm determinant of an integrable integral operator
whose kernel depends on four functional parameters (subject to
Assumptions~\ref{item:assumption-1}-\ref{as:number6}) and is
a generalization of the famous sine kernel
\cite{DeiftItsZhouSineKernelOnUnionOfIntervals} that appears
in various branches of mathematical physics. Our results are
stated in three theorems in Section~\ref{sec:main_results}.
They include explicit expressions for the constant terms
of the asymptotics in the cases of no pole or two poles on
the real axis. In the case of two poles on the real axis we 
have also calculated the first sub-leading corrections in
explicit form. For the most general case of an arbitrary
number of poles on the real axis we have presented a 
conjecture for the explicit form of the constant term 
in Section~\ref{sec:conjecture}.

We were mainly motivated by the perspective to apply our results
to the asymptotic analysis of certain dynamical two-point
correlation functions of the Lieb--Liniger Bose gas at arbitrary
positive coupling in thermal and non-thermal equilibrium using the
method developed in~\cite{KoTe11,Kozlowski15b,KMS11a}. This
will be the subject of subsequent work.

The first and immediate application, however, will be to the
impenetrable Bose gas \cite{M-25}. An asymptotic analysis of the long-time,
large-distance asymptotic behaviour of the two-point functions
of fields of this model in thermal equilibrium was accomplished
in~\cite{IIKV-92}. In a separate publication we shall reconsider
this subject in a slightly more general context. Our functional
parameter $\vartheta$ will play the role of the so-called
filling fraction. In thermal equilibrium the filling fraction
is equal to the Fermi function~\eqref{eq:Fermi_distribution}.
In our analysis the choice of the function $\vartheta$ is only
restricted by Assumptions~\ref{item:assumption-1}--\ref{as:number6}
which allows us to consider a large class of non-thermal equilibria
characterized, e.g., by generalized Gibbs ensembles. Moreover, as
compared to~\cite{IIKV-92} we have obtained the constant term in
the asymptotics completely and in a nice and compact form.

Our results may also be applied to improve the existing
asymptotic results for a number of generalizations of the
impenetrable Bose gas including the gas of impenetrable
anyons~\cite{OKA-10,P-15} or the two-component gases of
impenetrable Bosons or Fermions~\cite{IP-98}. The asymptotic
analysis of the two-component Fermi gas was carried out
in~\cite{GIK-98} for the case of negative chemical potential
and in~\cite{CZ-04} for the case of positive chemical
potential. Our improvements of the representation of the
constant term and the sub-leading corrections will be
relevant in both cases.

A related problem is that of the asymptotic analysis of
the transverse correlation functions of the XX model at
finite temperature~\cite{CIKT92,CIKT93,IIKS93b}. The
space-like asymptotics, including the constant term has
been obtained in~\cite{GKS20b} directly from a thermal
form factor series. In the time-like
regime the constant term is only known for non-critical
external fields~\cite{Jie98}. The case of dynamical
correlation functions for critical values of the
field strength is still partially open. It should be treatable
by Riemann--Hilbert techniques similar to those in the
present work.

\section*{Acknowledgement}
The authors would like to thank Sergei Adler and Oleksandr Gamayun for
stimulating discussions. They are very grateful to Alexander Wei{\ss}e
for providing his numerical data for the Fredholm minor related to
the impenetrable Bose gas for comparison. FG and MM thank DFG and ANR
for financial support under grant numbers Go825/12-1 and Go825/13-1.
KKK gratefully acknowledges funding by an ANR-DFG TFS24 grant number
ANR-24-CE-92-0033-0.

\clearpage

 \clearpage

\appendix
\section{Special functions}
\label{app:special-functions}

In this appendix
we collect the definitions of the special functions
needed in the main text and list some of their properties.

First, we introduce
the dilogarithm function, $\Li_2 (z)$,
and the Barnes $G$-function, $G(z)$,
which admit the following integral representations:
\begin{equation}
\label{eq:dilogarithm}
    \Li_2 (z)
    = - \int\limits_0^z
    \dd{t}
    \frac{ \ln (1 - t) }{t},
    \qquad
    z \notin [1, \infty)
\end{equation}
and
\begin{equation}
\label{eq:Barnes-G-function}
    \log G (z + 1)
    = \frac{z (1 - z)}{2}
    + \frac{z}{2} \log (2 \pi)
    + \int\limits_0^z
    \dd{t}
    t \psi (t),
    \qquad
    \Real (z) > - 1,
\end{equation}
where $\psi$ is the digamma functions.

\subsection{Parabolic cylinder functions}
\label{app:parabolic-cylinder-functions}
The solutions of the differential equation
\begin{equation}
\label{eq:diff-equation}
	y'' (z) + \left(\nu + \frac12 - \frac{ z^2 }{4} \right) y (z) = 0
\end{equation}
are called \textit{parabolic cylinder functions},
see~\cite{WhWa63ch16}, for example.
We define the parabolic cylinder function $D_\nu$
in terms of confluent hypergeometric function
\begin{equation}
\label{eq:parabolic-cylinder-function-def}
	D_\nu (z)
	= 2^{ \flatfrac{ (\nu - 1) }{2} }
	\rme^{- \flatfrac{z^2}4}
	z
	\
	\Psi \left(
		\frac{(1 - \nu)}{2}, \frac32; \frac{z^2}{2}
	\right).
\end{equation}
There are four (linearly dependent) functions
satisfying equation~\eqref{eq:diff-equation},
namely,
$D_\nu (z)$, $D_\nu (- z)$, $D_{- \nu - 1} ( \rmi z)$,
and $D_{- \nu - 1} ( - \rmi z)$.

The asymptotic series for the parabolic cylinder functions $D_\nu$
for large values of $\abs{z}$ is given by
\cite{WhWa63ch16}
\begin{equation}
\label{eq:AE-parabolic-cylinder-function}
    D_\nu (z)
    = z^\nu \mathrm{e}^{- \frac{z^2}4}
    \left[
        \sum\limits_{n = 0}^N
        \frac{
            (- \nu)_{2n} }{
            n! (- 2 z^2)^n}
        +\mathrm{O} ( \abs{z}^{- 2 (N + 1)} )
    \right],
    \quad
    \abs{\arg z} < \flatfrac{3 \pi}{4},
\end{equation}
where $(a)_n$ denotes the Pochhammer symbol (the rising factorial),
\begin{equation}
\label{eq:Pochhammer-symbol}
	(a)_n
	= \begin{cases}
		a \cdot (a + 1) \cdots (a + n - 1),
		& n \in \mathbb{N},
		\\
		1, & n = 0.
	\end{cases}
\end{equation}
There is also the useful relation
\begin{equation}
\label{eq:parabolic-cylinder-function-relation}
    D_\nu (z)
    = \mathrm{e}^{\pi \mathrm{i} \nu}
    D_\nu (-z)
    + \frac{ \sqrt{2 \pi} }{\Gamma (- \nu)}
    \mathrm{e}^{ \frac{ \pi \mathrm{i} (\nu + 1) }{ 2 }}
    D_{- \nu - 1} (- \mathrm{i} z),
\end{equation}
which allows us to consider the large-$|z|$ asymptotic behaviour of the
parabolic cylinder functions for $z$ outside of the region $\abs{\arg z}
< \flatfrac{3 \pi}{4}$.

\section{Fredholm determinant -- convergence and static limit}
\label{app:existence_of_Fred}
In this appendix we show that the series
\begin{equation}
    1 + \sum_{n=1}^\infty \frac{1}{n!}
    \int\limits_{{\mathbb R}^n} \dd^n \lambda \:
       \det_{n} \bigl\{V(\lambda_j, \lambda_k)\bigr\}
\end{equation}
which defines the Fredholm determinant~\eqref{eq:Fred} with
integration kernel~\eqref{eq:kernel-V-as-scalar-product} is
absolutely convergent, pointwise in $\lambda_0$, once our
basic assumptions listed in Section~\ref{sec:main_results}
are satisfied. We then consider the static limit
$\lambda_0 \rightarrow + \infty$.

Upon inserting the explicit expressions \eqref{eq:vectors-E} into
the defining equation~\eqref{eq:kernel-V-as-scalar-product}, we
obtain the following representation of the kernel,
\begin{equation} \label{eq:fred_series}
    V(\lambda,\mu)
    = \vartheta^\frac{1}{2} (\mu) \vartheta^{- \frac{1}{2}} (\lambda)
    \bigl
        (V_{S1} (\lambda,\mu) + V_{S2} (\lambda,\mu)
    \bigr),
\end{equation}
where
\begin{align} \label{eq:def_vs1}
    V_{S1} (\lambda, \mu) & = \rme^{\rmi \pi(\nu(\lambda) + \nu(\mu))}
      \vartheta^\frac{1}{2} (\mu) \vartheta^\frac{1}{2} (\lambda) 
      e^{-1} (\lambda) e^{-1} (\mu) \notag \\ & \mspace{180.mu}
      \frac{e^2 (\lambda) \bigl(\rme^{- 2 \pi \rmi \nu (\lambda)} - 1\bigr)
            - e^2 (\mu) \bigl(\rme^{- 2 \pi \rmi \nu (\mu)} - 1\bigr)}
           {2 \pi \rmi(\lambda - \mu)}, \\[1ex] \label{eq:def_vs2}
    V_{S2} (\lambda, \mu) & = \frac{2 \rmi}{\pi} 
       \sin \bigl(\pi(\nu(\lambda)\bigr) \sin \bigl(\pi(\nu(\mu)\bigr)
       \vartheta^\frac{1}{2} (\mu) \vartheta^\frac{1}{2} (\lambda) e(\lambda) e(\mu)
       \frac{\rmC_- (\lambda) - \rmC_- (\mu)}{\lambda - \mu}
\end{align}
are symmetric.

We want to establish bounds on
\begin{equation}
    \det_n \bigl(V(\lambda_j, \lambda_k)\bigr)
       = \det_n \bigl(V_{S1} (\lambda_j, \lambda_k)
           + V_{S2} (\lambda_j, \lambda_k)\bigr).
\end{equation}
For this purpose we start with
\begin{lemma} \label{lem:u_esti}
Fix $\sigma > 0$. Then, for some $K > 0$, we have the estimate
\begin{multline}
    \biggr|\frac{\rmC_- (\lambda) - \rmC_- (\mu)}{\lambda - \mu}\biggl| \leq
       K \biggl[\biggl\{1 + \sum_{\eta \in \{\lambda, \mu\}}
                           \sup_{\nu \in [- \sigma/4, \sigma/4]}
                           \bigl|u'(\eta + \nu|\lambda_0)\bigr|\biggr\}
        \times \indicator{|\lambda - \mu| \geq \sigma} \\
       + \sup_{\nu \in [- 2 \sigma, 2 \sigma]}
          \bigl\{1 + \bigl|u' (\lambda + \nu|\lambda_0)\bigr|
                   + \bigl|u'(\lambda + \nu|\lambda_0)^2\bigr|
                   + \bigl|u'' (\lambda + \nu|\lambda_0)\bigr|\bigr\}
          \times \indicator{|\lambda - \mu| < \sigma} \biggr].
\end{multline}
\end{lemma}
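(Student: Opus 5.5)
The plan is to split according to the two regimes $|\lambda-\mu|\ge\sigma$ and $|\lambda-\mu|<\sigma$. In each regime we use the representation of $\rmC_-$ as an integral over a contour lifted off the real axis near the evaluation point, exactly as in Figure~\ref{fig:contour-for-Cauchy-transform}.

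We first fix notation. For $\eta\in\mathbb R$ let $\gamma_\eta$ be the contour which coincides with $\mathbb R$ outside $[\eta-\sigma/4,\eta+\sigma/4]$. On that interval it is replaced by a small arc (or box) in the lower half plane of height $\delta\le\min(\varepsilon,\sigma/4)$ that passes below $\eta$; this gives the boundary value from the right of $\mathbb R$. Then
\[
\rmC_-(\eta)=\int_{\gamma_\eta}\frac{\dd\mu}{2\pi\rmi}\frac{e^{-2}(\mu)}{\mu-\eta}.
\]

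\emph{Far regime.} For $|\lambda-\mu|\ge\sigma$ it suffices to show $|\rmC_-(\eta)|\le K(1+\sup_{|\nu|\le\sigma/4}|u'(\eta+\nu)|)$ and divide by $|\lambda-\mu|\ge\sigma$. To prove this bound, split $\gamma_\eta$ into its deformed part and the real tails.
\begin{itemize}
\item On the deformed part, $|\mu-\eta|\ge\delta$. Moreover $|e^{-2}(\mu)|=|\rme^{\rmi xu(\mu)+g(\mu)}|\le \rme^{\|g\|_\infty}\rme^{-x\Imag u(\mu)}$, and a Taylor expansion gives $\Imag u(t-\rmi y)=-y\,u'(t)+O(y^2\sup|u''|)$. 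Choosing the height of the deformation as $\delta\sim 1/(1+x\sup|u'|)$ keeps the exponential bounded. This also absorbs the $\log$ contributions, so the deformed part is bounded by a constant times $1+\sup|u'|$.
\item The real tails $\int_{|\mu-\eta|>\sigma/4}$ are oscillatory integrals. They are not absolutely convergent, so we treat them as in the remark after Figure~\ref{fig:contour-for-Cauchy-transform}: deform far out into the half planes where Assumption~\ref{item:assumption-1} gives decay $\rme^{-cxy|\mu|^\eta}$. This yields a bound uniform in $\eta$, because $1/|\mu-\eta|\le 4/\sigma$ there.
\end{itemize}
The honest dependence on $x$ is treated as fixed, with $K$ allowed to depend on $x$ and $\sigma$. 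Only uniformity in $\lambda,\mu$ (and in $\lambda_0$) is needed for the convergence of the Fredholm series.

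\emph{Near regime.} For $|\lambda-\mu|<\sigma$, write
\[
\frac{\rmC_-(\lambda)-\rmC_-(\mu)}{\lambda-\mu}=\int_0^1\rmC_-'(\mu+t(\lambda-\mu))\,\dd t,
\]
so it suffices to bound $\rmC_-'$ on $[\lambda-\sigma,\lambda+\sigma]$. Using a contour $\gamma$ that is common to all points in this interval and is lifted on $[\lambda-2\sigma,\lambda+2\sigma]$, we get
\[
\rmC_-'(\eta)=\int_\gamma\frac{\dd\mu}{2\pi\rmi}\frac{e^{-2}(\mu)}{(\mu-\eta)^2}.
\]
Alternatively, and better, first subtract the local part: integrate by parts, or write $e^{-2}(\mu)=e^{-2}(\eta)+(\mu-\eta)\partial e^{-2}(\eta)+(\mu-\eta)^2R$. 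The derivatives of $e^{-2}$ produce factors $x u'$, $(xu')^2$ and $xu''$, together with bounded terms from $g'$ and $g''$ (Assumption~\ref{item:assumption-on-nu-and-g}). This yields the sup of $1+|u'|+|u'|^2+|u''|$ over $[-2\sigma,2\sigma]$.

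\emph{Main obstacle.} The main obstacle is controlling the near-diagonal singular integral uniformly. The second-order Taylor remainder must be estimated on a complex neighbourhood of width up to $\delta$, where only real-axis information on $u$ is available. I would first try to use analyticity of $u$ in $\Omega$ together with Cauchy estimates on discs of radius $\sim\sigma$, so that sups over complex points are bounded by sups over real points shifted by at most $2\sigma$.
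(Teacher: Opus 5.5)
There is a genuine gap, and it sits in the far regime. Bounding $|\rmC_-(\eta)|$ pointwise and dividing by $|\lambda-\mu|\ge\sigma$ throws away the cancellation between the two Cauchy kernels. It leaves you with the tail $\int_{|s|>\sigma/4}e^{-2}(\eta+s)\,s^{-1}\,\dd{s}$, which is only conditionally convergent.

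You control this tail by deforming into the half planes where Assumption~\ref{item:assumption-1} gives decay. That only works beyond a threshold $M$, and neither $M$ nor the constants $c,\varepsilon$ there are uniform in $\lambda_0$. For $u(\lambda|\lambda_0)=\lambda-\lambda^2/(2\lambda_0)$ one has $\Imag u(\lambda-\rmi y|\lambda_0)=y(\lambda/\lambda_0-1)$, so $M\sim\lambda_0$ and $c\sim1/\lambda_0$. The undeformed real segment between $\eta\pm\sigma/4$ and $\pm M$ must then be estimated absolutely, which costs a factor $\log M$. The bound $1/|\mu-\eta|\le4/\sigma$ controls the integrand, not the length of that segment.

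Hence your $K$ depends on $\lambda_0$, contrary to what you assert. The lemma could then no longer feed Lemma~\ref{lem:detn_esti} and the dominated-convergence argument for $\lambda_0\to+\infty$ in Appendix~\ref{app:existence_of_Fred}. There, all $\lambda_0$-dependence has to sit in the displayed real-axis sups, which are then tamed by \eqref{eq:theta_bounds_u}.

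The missing idea is to keep the difference quotient intact:
\[
\frac{\rmC_-(\lambda)-\rmC_-(\mu)}{\lambda-\mu}
=-\frac{e^{-2}(\lambda)-e^{-2}(\mu)}{2(\lambda-\mu)}
+\Pcal\int_{\mathbb R}\frac{\dd{\nu}}{2\pi\rmi}\frac{e^{-2}(\nu)}{(\nu-\lambda)(\nu-\mu)}.
\]
The kernel now decays like $\nu^{-2}$. Away from $\lambda$ and $\mu$ the integral therefore converges absolutely and is bounded by $\sup_{\mathbb R}|e^{-2}|\int\dd{\nu}/(|\nu-\lambda||\nu-\mu|)$, which is uniformly bounded for $|\lambda-\mu|\ge\sigma$. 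On the two pieces $|\nu-\lambda|<\sigma/4$ and $|\nu-\mu|<\sigma/4$ you subtract the value at the singular point and apply the mean value theorem on $\mathbb R$. This gives exactly $1+\sup|u'|$.

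The second problem concerns your local contour lifts, in both regimes. On a lifted arc you need $\Imag u$ and $u''$ at complex points, while the right-hand side of the lemma only contains sups over real points (and, in the far regime, no $u''$ at all). Cauchy estimates cannot close this gap. They bound derivatives at a point by the sup of the function on a surrounding circle, i.e.\ by complex values, and the values of a holomorphic function on $\mathbb R$ do not control its values off the axis (think of $\sin(Nz)$).

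You do not need to leave $\mathbb R$ at all. Since $u$ is real there, $|e^{-2}|=\rme^{\Real g}$ is bounded on $\mathbb R$, and its real derivatives are expressed through $u',u'',g',g''$. In the near regime this is precisely your ``alternatively, and better'' option, which is the paper's proof:
\begin{itemize}
\item write $\rmC_-(\lambda)=-\frac12e^{-2}(\lambda)+\int_{|s|>\delta}\frac{\dd{s}}{2\pi\rmi}\frac{e^{-2}(\lambda+s)}{s}+\int_{-\delta}^{\delta}\frac{\dd{s}}{2\pi\rmi}\frac{e^{-2}(\lambda+s)-e^{-2}(\lambda)}{s}$;
\item differentiate, and integrate by parts once in $|s|>\delta$ to get boundary terms plus the absolutely convergent $\int_{|s|>\delta}e^{-2}(\lambda+s)s^{-2}\,\dd{s}$;
\item bound the local piece by $\frac{\delta}{\pi}\sup_{|\nu|\le\delta}|\partial_\nu^2e^{-2}(\lambda+\nu)|$.
\end{itemize}
With $\delta=\sigma$ and your mean-value step, this yields the bound over $[-2\sigma,2\sigma]$.

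Finally, a contour passing below $\eta$ produces $\rmC_+(\eta)$, not $\rmC_-(\eta)$. This is harmless, since $\rmC_+-\rmC_-=e^{-2}$ has a difference quotient of the allowed form.
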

\begin{proof}
Fix $\sigma > 0$ and let $|\lambda - \mu| < \sigma$. We rewrite $\rmC_-$
in such a way that we can easily take its derivative,
\begin{multline} \label{eq:decompose_cminus}
    \rmC_- (\lambda) = - \frac{e^{-2} (\lambda)}{2}
                      + \Pcal \int\limits_{\mathbb R} \frac{\dd \mu}{2 \pi \rmi}
                           \frac{e^{-2} (\mu)}{\mu - \lambda} \\[1ex]
                     = - \frac{e^{-2} (\lambda)}{2}
                       + \int\limits_{{\mathbb R} \setminus [- \delta,\delta]}
                           \frac{\dd \mu}{2 \pi \rmi}
                           \frac{e^{-2} (\mu + \lambda)}{\mu}
                       + \int\limits_{- \delta}^\delta \frac{\dd \mu}{2 \pi \rmi}
                           \frac{e^{-2} (\mu + \lambda) - e^{-2} (\lambda)}{\mu},
\end{multline}
where $\delta > 0$. In order to see that the second equation
holds, replace the last (regular) integral in the second
line by its principal value and observe that the principal
value of the second term under the integral sign is equal to
zero. Equation~\eqref{eq:decompose_cminus} implies that
\begin{multline}
    \rmC_-' (\lambda) = - \frac{g' (\lambda) e^{-2} (\lambda)}{2} 
       - \frac{e^{-2} (\lambda + \delta) + e^{-2} (\lambda - \delta)}{\delta}
       + \int\limits_{{\mathbb R} \setminus [- \delta,\delta]}
                           \frac{\dd \mu}{2 \pi \rmi}
                           \frac{e^{-2} (\mu + \lambda)}{\mu^2} \\[1ex]
       - \frac{\rmi x u' (\lambda|\lambda_0) e^{-2} (\lambda)}{2}
       + \int\limits_{- \delta}^\delta \frac{\dd \mu}{2 \pi \rmi} \partial_\lambda
                           \frac{e^{-2} (\mu + \lambda) - e^{-2} (\lambda)}{\mu}.
\end{multline}
Hence, since $g'$ and $e^{-2}$ are bounded on $\mathbb R$,
\begin{equation}
     \bigl|\rmC_-' (\lambda)\bigr| \leq c_0 + \frac{x |u' (\lambda|\lambda_0)|}{2}
       + \int\limits_{- \delta}^\delta \frac{\dd \mu}{2 \pi} \biggl|\partial_\lambda
                           \frac{e^{-2} (\mu + \lambda) - e^{-2} (\lambda)}{\mu}\biggr|
\end{equation}
for some $c_0 > 0$, which we can further estimate as
\begin{multline}
    \bigl| \rmC_-' (\lambda) \bigr|
    \leq c_0 + \frac{x |u' (\lambda|\lambda_0)|}{2}
    + \frac{\delta}{\pi}
    \sup_{\nu \in [- \delta, \delta]}
    \bigl| \partial_\nu^2 e^{-2} (\lambda + \nu) \bigr|
    \\
    \leq c_0 + \frac{x |u' (\lambda|\lambda_0)|}{2}
    + c_1 \sup_{\nu \in [- \delta, \delta]}
    \bigl\{
        \bigl| u'(\lambda + \nu | \lambda_0) \bigr|
        + \bigl| u''(\lambda + \nu | \lambda_0) \bigr|
        + \bigl| u'(\lambda + \nu | \lambda_0)^2 \bigr|
    \bigr\}\\
    \leq c \sup_{\nu \in [- \delta, \delta]}
    \bigl\{
    1 + \bigl|u' (\lambda + \nu|\lambda_0)\bigr| + \bigl|u'(\lambda + \nu|\lambda_0)^2\bigr|
    + \bigl|u'' (\lambda + \nu|\lambda_0)\bigr|\bigr\},
\end{multline}
where $c, c_1 > 0$ and independent of $\lambda_0$. Choosing
now $\delta = \sigma$ we see that
\begin{multline}
    \biggr|\frac{\rmC_- (\lambda) - \rmC_- (\mu)}{\lambda - \mu}\biggl| \leq
    \sup_{\nu \in [-\sigma, \sigma]} \bigl|\rmC_-' (\lambda + \nu)\bigr|
    \\
    \leq
    c \sup_{\nu \in [- 2 \sigma, 2 \sigma]}
    \bigl\{
        1
        + \bigl|u' (\lambda + \nu|\lambda_0)\bigr|
        + \bigl|u'(\lambda + \nu|\lambda_0)^2\bigr|
        + \bigl|u'' (\lambda + \nu|\lambda_0)\bigr|
    \bigr\}
\end{multline}
which entails the claim for $|\lambda - \mu| < \sigma$.

If $|\lambda - \mu| \geq \sigma$, we start with
\begin{equation} \label{eq:ct_ext}
    \frac{\rmC_- (\lambda) - \rmC_- (\mu)}{\lambda - \mu} =
       - \frac{e^{-2} (\lambda) - e^{-2} (\mu)}{2(\lambda - \mu)}
       + \Pcal \int\limits_{\mathbb R} \frac{\dd \nu}{2 \pi \rmi}
                  \frac{e^{-2} (\nu)}{(\nu - \lambda)(\nu - \mu)}.
\end{equation}
The first term on the right hand
side of \eqref{eq:ct_ext} is bounded by $c/\sigma$ for
some $c > 0$. Let $\delta = \sigma/4$. In order to estimate
the principal value integral we split it as
\begin{equation}
    \Pcal \int\limits_{\mathbb R} \frac{\dd \nu}{2 \pi \rmi}
                  \frac{e^{-2} (\nu)}{(\nu - \lambda)(\nu - \mu)} =
       \Delta C_{\rm in} (\lambda, \mu) + \Delta C_{\rm in} (\mu, \lambda) +
       \Delta C_{\rm out} (\lambda, \mu),
\end{equation}
where
\begin{subequations}
\begin{align}
    \Delta C_{\rm in} (\lambda, \mu) & =
       \Pcal \int\limits_{- \delta}^\delta \frac{\dd \nu}{2 \pi \rmi}
                \frac{e^{-2} (\nu + \lambda)}{\nu(\nu + \lambda - \mu)}, \\[1ex]
    \Delta C_{\rm out} (\lambda, \mu) & =
       \int\limits_{{\mathbb R} \setminus \{[-\delta,\delta] + \lambda\}
             \setminus \{[-\delta,\delta] + \mu\}} \frac{\dd \nu}{2 \pi \rmi}
             \frac{e^{-2} (\nu)}{(\nu - \lambda)(\nu - \mu)}.
\end{align}
\end{subequations}
Then
\begin{multline}
    \bigl|\Delta C_{\rm in} (\lambda, \mu)\bigr|
    = \biggl|
        \int\limits_{- \delta}^\delta
        \frac{\dd \nu}{2 \pi \rmi} \frac{1}{\nu}
        \biggl\{
            \frac{e^{-2} (\nu + \lambda)}{\nu + \lambda - \mu}
            - \frac{e^{-2} (\lambda)}{\lambda - \mu}
        \biggr\}
    \biggr|
    \\[1ex]
    \leq \frac{\delta}{\pi} \sup_{\nu \in [-\delta,\delta]}
    \biggl|
        \partial_\nu \frac{e^{-2} (\lambda + \nu)}{\lambda - \mu + \nu}
    \biggr|
    \leq c \biggl\{
        1 
        + \sup_{\nu \in [- \sigma/4, \sigma/4]}
        \bigl|u' (\lambda + \nu|\lambda_0) \bigr|
    \biggr\}
\end{multline}
for some $c > 0$.

For the remaining contribution $\Delta C_{\rm out}$ we
note that $\{[- \delta, \delta] + \lambda\} \cap
\{[- \delta,\delta] + \mu\} = \emptyset$,
since $|\lambda - \mu| \geq \sigma = 4 \delta$.
Thus, assuming (w.l.o.g.) that $\lambda < \mu$, we obtain
\begin{multline} \label{eq:delta_c_out}
    \bigl|\Delta C_{\rm out} (\lambda, \mu)\bigr| \leq
       c_0 \biggl\{\int\limits_{- \infty}^{\lambda - \delta}
                 - \int\limits_{\lambda + \delta}^{\mu - \delta}
                 + \int\limits_{\mu + \delta}^\infty\biggr\} \dd \nu \:
                 \frac{1}{(\nu - \lambda)(\nu - \mu)} \\
       = \frac{2 c_0}{\sigma} \frac{1}{(\mu - \lambda)/\sigma}
         \ln \bigl(
            \bigl( 4 (\mu - \lambda)/\sigma \bigr)^2 - 1
        \bigr)
         \leq c
\end{multline}
for some $c_0, c > 0$. The last inequality follows from
the fact that the function $\ln(4 x^2 - 1)/x$ is bounded
for $x > 1$. Finally, equations \eqref{eq:ct_ext}-\eqref{eq:delta_c_out}
imply the claim for $|\lambda - \mu| \ge \sigma$.
\end{proof}

\begin{lemma} \label{lem:detn_esti}
If the functions $u$, $g$, $\nu$, and $\vartheta$ fulfill
Assumptions \ref{item:assumption-1}, \ref{item:assumption-on-nu-and-g},
and \ref{item:assumption-on-vartheta} in Section~\ref{sec:main_results},
then the integrands in \eqref{eq:fred_series} are bounded as
\begin{equation}
    \bigl|\det_n \bigl(V(\lambda_j, \lambda_k)\bigr)\bigr|
       \leq c^n n^\frac{n}{2} \prod_{j=1}^n \vartheta^\frac{1}{4} (\lambda_j)
\end{equation}
for some $c > 0$.
\end{lemma}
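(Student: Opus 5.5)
The determinant is conjugation invariant. The factor $\vartheta^{1/2}(\mu)\vartheta^{-1/2}(\lambda)$ in \eqref{eq:fred_series} is a diagonal similarity, so
\begin{equation*}
\det_n\bigl(V(\lambda_j,\lambda_k)\bigr)=\det_n\bigl(V_{S1}(\lambda_j,\lambda_k)+V_{S2}(\lambda_j,\lambda_k)\bigr).
\end{equation*}
It therefore suffices to bound the entries of the symmetric kernel $V_S=V_{S1}+V_{S2}$ in product form, $|V_S(\lambda,\mu)|\le c\,\vartheta^{1/4}(\lambda)\vartheta^{1/4}(\mu)$. Hadamard's inequality then applies directly. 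For an $n\times n$ matrix whose entries are bounded by $a_j a_k$, each row $j$ has Euclidean norm at most $a_j\bigl(\sum_k a_k^2\bigr)^{1/2}$. Since this bound is not of product form in a single factor, I would instead factor out $a_j$ from row $j$ and $a_k$ from column $k$. This leaves a matrix with entries bounded by $c$, whose determinant is at most $c^n n^{n/2}$ by Hadamard. The result is $\prod_j a_j^2\, c^n n^{n/2}$ with $a_j=\vartheta^{1/8}(\lambda_j)$, which is exactly the claimed bound $c^n n^{n/2}\prod_j\vartheta^{1/4}(\lambda_j)$. So the target entrywise bound is
\begin{equation*}
|V_S(\lambda,\mu)|\le c\,\vartheta^{1/8}(\lambda)\vartheta^{1/8}(\mu)\cdot \vartheta^{1/8}(\lambda)\vartheta^{1/8}(\mu),
\end{equation*}
i.e.\ again $c\,\vartheta^{1/4}(\lambda)\vartheta^{1/4}(\mu)$, with $c$ independent of $\lambda,\mu$ (and uniform in $\lambda_0$).

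For $V_{S2}$, I would use that $e$ is bounded on $\mathbb R$ (since $u$ is real there and $g$ is bounded), as are $\nu$ and $\vartheta$. Lemma~\ref{lem:u_esti} then controls the difference quotient of $\rmC_-$. In the region $|\lambda-\mu|\ge\sigma$ the bound involves $1+\sup|u'(\lambda+\cdot)|+\sup|u'(\mu+\cdot)|$. Each such term is absorbed by the available factor $\vartheta^{1/4}$ of the same variable through \eqref{eq:theta_bounds_u}, leaving $\vartheta^{1/4}$ in the other variable. Since $\vartheta$ is bounded, $\vartheta^{1/2}=\vartheta^{1/4}\vartheta^{1/4}$ and surplus powers are harmless, so this yields $c\,\vartheta^{1/4}(\lambda)\vartheta^{1/4}(\mu)$. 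In the region $|\lambda-\mu|<\sigma$ the sup over $[-2\sigma,2\sigma]$ around $\lambda$ appears. Here I would shrink $\sigma$ in the lemma (taking it to be $\sigma/2$ of assumption \eqref{eq:theta_bounds_u}) so that the sup window is covered, and absorb it with $\vartheta^{1/4}(\lambda)$. The remaining factor $\vartheta^{1/4}(\mu)$ survives.

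For $V_{S1}$, I would write its numerator as $f(\lambda)-f(\mu)$ with $f=e^2(\rme^{-2\pi\rmi\nu}-1)$. For $|\lambda-\mu|\ge\sigma$ this is trivially bounded by $2\sup|f|/\sigma$. For $|\lambda-\mu|<\sigma$ the mean value theorem gives $\sup|f'|$ on a $\sigma$-neighbourhood, and $f'$ involves $x u'$ together with bounded $g',\nu'$. This is again absorbed by $\vartheta^{1/4}(\lambda)$ via \eqref{eq:theta_bounds_u}, while the other half-power survives.

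The main obstacle is the near-diagonal region: the sup of $u'$ is taken over a neighbourhood of $\lambda$ while the absorbing weight is evaluated at $\lambda$ only. This is precisely why assumption \eqref{eq:theta_bounds_u} is stated with the supremum over $\nu\in[-\sigma,\sigma]$, and the radii must be matched carefully. The factor $x$ also enters the constants, but only as a fixed parameter, so $c=c(x)$ is acceptable.
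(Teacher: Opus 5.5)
Your proposal is correct and follows essentially the paper's proof. Both arguments split into $V_{S1}$ and $V_{S2}$, treat the near- and far-diagonal regions separately (Lemma~\ref{lem:u_esti} for $V_{S2}$, the mean-value bound on $f$ for $V_{S1}$), absorb the $u'$, $u''$ terms via \eqref{eq:theta_bounds_u}, and finish with Hadamard's inequality. The only difference is that the paper weakens the entry bound to $c\,\vartheta^{1/4}(\lambda_j)$ and applies Hadamard's row-norm form directly, rather than rescaling rows and columns. There is one harmless bookkeeping slip: from $|V_S(\lambda,\mu)|\le c\,\vartheta^{1/4}(\lambda)\vartheta^{1/4}(\mu)$ the rescaling weights are $a_j=\vartheta^{1/4}(\lambda_j)$, not $\vartheta^{1/8}(\lambda_j)$. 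This gives $c^n n^{n/2}\prod_j\vartheta^{1/2}(\lambda_j)$, which implies the claim because $\vartheta$ is bounded.
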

\begin{proof}
We separately estimate the kernels $V_{S1}$ and $V_{S2}$
(cf.~\eqref{eq:def_vs1}, \eqref{eq:def_vs2}). For this
purpose fix $\sigma > 0$. Let
\begin{equation}
    f(\lambda) = e^2 (\lambda) \bigl(\rme^{- 2 \pi \rmi \nu (\lambda)} - 1\bigr)
\end{equation}
For $|\lambda - \mu| < \sigma$ we have the estimate
\begin{equation} \label{eq:diff_quot_f_estimate}
    \biggl|\frac{f(\lambda) - f(\mu)}{\lambda - \mu}\biggr|
    \leq \sup_{\nu \in [-\sigma, \sigma]} \bigl|f'(\lambda + \nu)\bigr|
    \leq c \sup_{\nu \in [-\sigma, \sigma]}
    \bigl\{1 + \bigl|u'(\lambda + \nu | \lambda_0)\bigr|\bigr\}
\end{equation}
for some $c > 0$, since $e$, $g'$, $\nu$, and $\nu'$ are
bounded. For $|\lambda - \mu| \geq \sigma$ the boundedness
of $e$ and $\nu$ implies that
\begin{equation} \label{eq:vs1_greater_sigma_estimate}
    \bigl|V_{S1} (\lambda, \mu)\bigr| \leq \frac{c}{\sigma} 
       \vartheta^\frac{1}{2} (\lambda) \vartheta^\frac{1}{2} (\mu)
\end{equation}
for some $c > 0$.

Then, using  \eqref{eq:diff_quot_f_estimate},
\eqref{eq:vs1_greater_sigma_estimate}, and 
Assumption~\ref{item:assumption-on-vartheta} we see that there
exist $c, c_0, c_1 > 0$ such that
\begin{multline}
    \bigl|V_{S1} (\lambda, \mu)\bigr| \leq \\ 
       \vartheta^\frac{1}{2} (\lambda) \vartheta^\frac{1}{2} (\mu) \biggl\{
       \frac{c_0}{\sigma}
       \indicator{|\lambda - \mu| \geq \sigma}
    + c_1 \sup_{\nu \in [-\sigma, \sigma]}
    \bigl\{1 + \bigl|u'(\lambda + \nu | \lambda_0)\bigr|\bigr\}
      \indicator{|\lambda - \mu| < \sigma}\biggr\}
    \leq c \vartheta^\frac{1}{2} (\lambda).
\end{multline}
Due to the boundedness of $\vartheta$ on $\mathbb R$, there
are $c, c' > 0$ such that
\begin{equation} \label{eq:vs1_esti}
    \bigl|V_{S1} (\lambda, \mu)\bigr|
       \leq c \vartheta^\frac{1}{4} (\lambda) \vartheta^\frac{1}{4} (\lambda)
       \leq c' \vartheta^\frac{1}{4} (\lambda)
\end{equation}
for all $(\lambda, \mu) \in {\mathbb R}^2$.

Using Lemma~\ref{lem:u_esti} and the boundedness of $e$ and
$\nu$ we obtain an estimate for $V_{S2}$,
\begin{multline}
    \bigl|V_{S2} (\lambda, \mu)\bigr|
    \leq
    \vartheta^\frac{1}{2} (\lambda) \vartheta^\frac{1}{2} (\mu)
    K \biggl[
        \sup_{\nu \in [- \sigma/4, \sigma/4]}
        \biggl\{
            1 + \sum_{\eta \in \{\lambda, \mu\}}
            \bigl|u'(\eta + \nu|\lambda_0)\bigr|
        \biggr\}
        \times \indicator{|\lambda - \mu| \geq \sigma} \\
        + \sup_{\nu \in [- 2 \sigma, 2 \sigma]}
        \bigl\{
            1 + \bigl|u' (\lambda + \nu|\lambda_0)\bigr|
            + \bigl|u'(\lambda + \nu|\lambda_0)^2\bigr|
            + \bigl|u'' (\lambda + \nu|\lambda_0)\bigr|
        \bigr\}
        \times \indicator{|\lambda - \mu| < \sigma}
    \biggr],
\end{multline}
holding for a fixed $K > 0$. Using once more
Assumption~\ref{item:assumption-on-vartheta} we see that
\begin{equation} \label{eq:vs2_esti}
    \bigl|V_{S2} (\lambda, \mu)\bigr| 
        \leq c' \vartheta^\frac{1}{4} (\lambda) \vartheta^\frac{1}{4} (\mu)
        \leq c \vartheta^\frac{1}{4} (\lambda)
\end{equation}
for some $c', c > 0$.

Employing \eqref{eq:vs1_esti}, \eqref{eq:vs2_esti}
we infer that there exists a $c > 0$ such that
\begin{multline}
    \bigl|\det_n \bigl(V(\lambda_j, \lambda_k)\bigr)\bigr|
       = \bigl|\det_n \bigl(V_{S1} (\lambda_j, \lambda_k) +
                            V_{S2} (\lambda_j, \lambda_k) \bigr)\bigr| \\
       \leq \prod_{j=1}^n \biggl(\sum_{k=1}^n \bigl|V_{S1} (\lambda_j, \lambda_k) +
                            V_{S2} (\lambda_j, \lambda_k) \bigr|^2\biggr)^\frac{1}{2}
       \leq c^n n^\frac{n}{2} \prod_{j=1}^n \vartheta^\frac{1}{4} (\lambda_j)
\end{multline}
which concludes the proof.
\end{proof}
\begin{proposition}
\begin{enumerate}
\item 
If the functions $u$, $g$, $\nu$, and $\vartheta$ satisfy
Assumptions \ref{item:assumption-1}, \ref{item:assumption-on-nu-and-g},
and \ref{item:assumption-on-vartheta}, the series \eqref{eq:fred_series}
is absolutely convergent.
\item
If, in addition, Assumption~\ref{item:assumption-2} holds true, it
further follows that
\begin{equation} \label{eq:prop_stat_lim}
    \lim_{\lambda_0 \rightarrow + \infty} \det_{\mathbb R} (\id + \rmV)
    = \det_{\mathbb R} (\id + \rmV_0 + \delta \rmV_0)
    = \det_{\mathbb R} (\id + \rmV_0)
    \bigl(1 + \Ocal(x^{- \infty})\bigr),
\end{equation}
where $\rmV_0$ is the integral operator with kernel $V_0 (\lambda, \mu)$
defined in \eqref{eq:V-GSK} and where the integral operator
$\delta V_0$ is defined by its kernel
\begin{equation} \label{eq:delta_v_zero}
    \delta V_0 (\lambda, \mu) = - \frac{1}{2\pi \rmi}
       F^\frac{1}{2} (\lambda) F^\frac{1}{2} (\mu) e_0 (\lambda) e_0 (\mu)
       \int\limits_{{\mathbb R} + \rmi \varepsilon} \frac{\dd \nu}{2 \pi \rmi} 
       \frac{\rme^{\rmi x p (\nu) + g (\nu)}}{(\nu - \lambda)(\nu - \mu)}
\end{equation}
with $F$ and $e_0$ as in \eqref{eq:def_F}, \eqref{eq:def_ezero}
and $\varepsilon > 0$ small.
\end{enumerate}
\end{proposition}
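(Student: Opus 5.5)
The plan has two parts: absolute convergence from Lemma~\ref{lem:detn_esti}, and the static limit via dominated convergence followed by a comparison of determinants.

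\emph{Part 1.} By Lemma~\ref{lem:detn_esti} the $n$-th term of \eqref{eq:fred_series} is bounded by
\[
\frac{c^n n^{n/2}}{n!}\,\bigl\|\vartheta^{1/4}\bigr\|_{L^1(\mathbb R)}^n .
\]
This is finite because $\vartheta^{1/4}\in L^1(\mathbb R)$ by Assumption~\ref{item:assumption-on-vartheta}. Stirling's formula gives $n^{n/2}/n!\sim (e/\sqrt n)^n/\sqrt{2\pi n}$, so the series $\sum_n (c\,e\,\|\vartheta^{1/4}\|_1)^n n^{-n/2}$ converges. Hence the Fredholm series converges absolutely. Since the constants in Lemmas~\ref{lem:u_esti} and \ref{lem:detn_esti} are uniform in $\lambda_0$ (the bound \eqref{eq:theta_bounds_u} is uniform), this bound is uniform in $\lambda_0$ as well.

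\emph{Part 2.} Because the domination is uniform in $\lambda_0$, dominated convergence applies termwise, both in the $n$-fold integrals and in the sum over $n$. It therefore suffices to compute the pointwise limit of $\det_n V(\lambda_j,\lambda_k)$. The conjugation factor $\vartheta^{1/2}(\mu)\vartheta^{-1/2}(\lambda)$ drops out of each determinant, so I work with $V_{S1}+V_{S2}$.

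As $\lambda_0\to+\infty$ we have $u\to p$ and hence $e\to \exp\{-\rmi xp/2-g/2\}$. The Cauchy transform $\rmC_-$ has its contour deformable to $\mathbb R+\rmi\varepsilon$, where $e^{-2}$ decays by the condition $\Im p(\lambda+\rmi y)\ge cy$; this makes $\rmC_-$ converge as well. I then split $\rmC_-(\lambda)$ into a residue-type local part and the shifted contour integral. Concretely, with $\rmC_-(\lambda)=\int_{\mathbb R+\rmi\varepsilon}\frac{\dd\nu}{2\pi\rmi}\frac{e^{-2}(\nu)}{\nu-\lambda}$, the local part is absent for the minus boundary value, and we get
\[
\frac{\rmC_-(\lambda)-\rmC_-(\mu)}{\lambda-\mu}=\int_{\mathbb R+\rmi\varepsilon}\frac{\dd\nu}{2\pi\rmi}\frac{e^{-2}(\nu)}{(\nu-\lambda)(\nu-\mu)} .
\]
Substituting into $V_{S2}$ produces exactly the rank-structured kernel $\delta V_0$ in \eqref{eq:delta_v_zero}, after using $\sin^2(\pi\nu)$ and the identity $-4\sin^2(\pi\nu)=\rme^{2\pi\rmi\nu}(\rme^{-2\pi\rmi\nu}-1)^2\rme^{\ldots}$ to rewrite $\sin\pi\nu(\lambda)\sin\pi\nu(\mu)\,\vartheta^{1/2}\vartheta^{1/2}e\,e$ as $F^{1/2}F^{1/2}e_0e_0$ up to constants. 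In the same way, $V_{S1}$ is rewritten algebraically: factor $\rme^{\rmi\pi\nu}\vartheta^{1/2}(\rme^{-2\pi\rmi\nu}-1)^{1/2}=F^{1/2}$ up to a sign and a conjugation by diagonal factors, which do not change determinants. This identifies $V_{S1}$ with $V_0$ from \eqref{eq:V-GSK}. This algebraic matching of branches and signs is routine but fiddly, and it gives the first equality in \eqref{eq:prop_stat_lim}.

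For the second equality, $\delta V_0$ is small. Because $\Im p(\nu)\ge c\varepsilon$ on $\mathbb R+\rmi\varepsilon$, the factor $\rme^{\rmi xp(\nu)}$ is $\Ocal(\rme^{-c\varepsilon x})$ there. I would shift further, to $\varepsilon$ as large as the strip allows, or deform into the upper half of $\Omega$, to gain exponential smallness. The catch is that $e_0(\lambda)e_0(\mu)$ with real $\lambda,\mu$ is only oscillatory and bounded. Hence $\|\delta V_0\|_{\mathrm{trace}}=\Ocal(\rme^{-mx})$, with domination supplied by $F^{1/2}$ (controlled by $\vartheta^{1/2}\in L^1$ after bounding $\vartheta^{1/2}\le C\vartheta^{1/4}$).

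Then I write
\[
\det(\id+\rmV_0+\delta\rmV_0)=\det(\id+\rmV_0)\det\bigl(\id+(\id+\rmV_0)^{-1}\delta\rmV_0\bigr),
\]
assuming $\det(\id+\rmV_0)\ne0$. The operator norm of $(\id+\rmV_0)^{-1}$ grows at most polynomially in $x$: by the static Riemann--Hilbert analysis of \cite{KKMST-09-RHp}, the resolvent is built from a $\chi$ that is bounded in $x$. So the correction is $1+\Ocal(x^{-\infty})$, and in fact exponentially small. I expect the main obstacle to be this resolvent bound, together with the precise branch/sign bookkeeping needed to identify $V_{S1}$ with $V_0$.
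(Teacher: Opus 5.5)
Your route is the paper's: for (i), the majorant from Lemma~\ref{lem:detn_esti} combined with Stirling. For (ii), dominated convergence to reach $\det_{\mathbb R}(\id+\rmV_0+\delta\rmV_0)$ with $V_0=\lim V_{S1}$ and $\delta V_0=\lim V_{S2}$, followed by $\det(\id+\rmV_0+\delta\rmV_0)=\det(\id+\rmV_0)\det\bigl(\id+(\id+\rmV_0)^{-1}\delta\rmV_0\bigr)$. No diagonal conjugation is needed to match $V_{S1}$ with $V_0$. You only have to fix the branch of $F^{1/2}$ consistently in $V_0$ and $\delta V_0$, because a pointwise sign change acts as a conjugation of the \emph{sum}.

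The step that fails as written is your justification of the limit of the Cauchy transform. This is exactly the pointwise convergence of $V_{S2}$ that dominated convergence needs. You move the contour of $\rmC_-$ to $\mathbb R+\rmi\varepsilon$, but for finite $\lambda_0$ this is not allowed. Since $u(\cdot|\lambda_0)$ is real on $\mathbb R$, Schwarz reflection and Assumption~\ref{item:assumption-1} give $\Imag u(\lambda+\rmi y|\lambda_0)\le -cy\lambda^\eta$ as $\lambda\to+\infty$. Hence $|e^{-2}(\lambda+\rmi y)|\ge \rme^{cxy\lambda^\eta-\mathrm{const}}$, and the integral over $\mathbb R+\rmi\varepsilon$ diverges. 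The bound $\Imag p(\lambda+\rmi y)\ge cy$ is a property of the limit function only. It tells you what the limit should be, but not that $\lim_{\lambda_0\to+\infty}$ may be interchanged with the conditionally convergent oscillatory integral defining $\rmC_-$.

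The repair is the route the paper sketches: stay on $\mathbb R$ and use \eqref{eq:ct_ext}.
\begin{enumerate}
\item For fixed $\lambda\ne\mu$, split the principal-value integral as in the proof of Lemma~\ref{lem:u_esti}, with $\delta<|\lambda-\mu|/4$.
\item The outer piece is dominated by $\sup_{\mathbb R}\rme^{\Real g}\,|(\nu-\lambda)(\nu-\mu)|^{-1}$. This is integrable and independent of $\lambda_0$, because $|e^{-2}|=\rme^{\Real g}$ on $\mathbb R$.
\item The inner pieces are dominated, via the mean value theorem, by local bounds on $u'$. These are uniform in $\lambda_0$ by \eqref{eq:theta_bounds_u} wherever $\vartheta\ne0$; elsewhere $V_{S2}$ vanishes.
\item Pointwise convergence $u\to p$ then yields the limit by dominated convergence.
\item Only after taking the limit, shift to $\mathbb R+\rmi\varepsilon$. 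With $f=\rme^{\rmi xp+g}$, the half-residues at $\nu=\lambda$ and $\nu=\mu$ contribute $\frac{f(\lambda)-f(\mu)}{2(\lambda-\mu)}$. This cancels the first term of \eqref{eq:ct_ext} and leaves exactly your formula.
\end{enumerate}

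Your concern about $(\id+\rmV_0)^{-1}$ is well placed, and the paper's proof passes over it. You have $\|\delta\rmV_0\|_1=\Ocal(\rme^{-c\varepsilon x})$, but $\det(\id+\rmV_0)$ may itself be exponentially small and $\|\rmV_0\|_1$ need not stay bounded in $x$. So you genuinely need $\det(\id+\rmV_0)\ne0$ and $\|(\id+\rmV_0)^{-1}\|=\ocal(\rme^{c\varepsilon x})$. Your appeal to \cite{KKMST-09-RHp} still has to be turned into such an operator-norm bound, for instance via the integrable form of the resolvent and $L^\infty$ bounds on $\chi_\pm$.
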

\begin{proof}
(i) Employing Lemma~\ref{lem:detn_esti} and
Assumption~\ref{item:assumption-on-vartheta} we see that there are
a $c_0, c > 0$ such that
\begin{multline}
    \biggl|1 + \sum_{n=1}^\infty \frac{1}{n!}
    \int\limits_{{\mathbb R}^n} \dd^n \lambda \:
       \det_{n} \bigl\{V(\lambda_j, \lambda_k)\bigr\}\biggr|
    \leq
    1 + \sum_{n=1}^\infty \frac{1}{n!}
    \int\limits_{{\mathbb R}^n} \dd^n \lambda \:
       \bigl|\det_{n} \bigl\{V(\lambda_j, \lambda_k)\bigr\}\bigr| \\[1ex]
    \leq 
    1 + \sum_{n=1}^\infty \frac{c_0^n n^\frac{n}{2}}{n!}
    \biggl(\int\limits_{\mathbb R} \dd \lambda \:
       \vartheta^\frac{1}{4} (\lambda) \biggr)^n
     \leq 
    1 + \sum_{n=1}^\infty c^n n^{- \frac{n}{2}}.
    \end{multline}
The last series on the right-hand side is a convergent majorant,
which entails the claim.

(ii) Then, by the dominated convergence theorem on
$L^1 \bigl(\oplus_n {\mathbb R}^n, \sum_n \dd^n \lambda\bigr)$,
\begin{equation}
    \lim_{\lambda_0 \rightarrow + \infty} \det_{\mathbb R}
    (\id + \rmV)
    = \det_{\mathbb R} (\id + \rmV_0 + \delta \rmV_0),
\end{equation}
where $\rmV_0$ and $\delta \rmV_0$ are the integral operators with
kernels
\begin{equation}
    V_0 (\lambda, \mu)
    = \lim_{\lambda_0 \rightarrow + \infty}
    V_{S1} (\lambda, \mu),
    \quad
    \delta V_0 (\lambda, \mu)
    = \lim_{\lambda_0 \rightarrow + \infty} V_{S2} (\lambda, \mu).
\end{equation}

According to Assumption~\ref{item:assumption-2} we have the pointwise
limit $\lim_{\lambda_0 \rightarrow + \infty} u(\lambda|\lambda_0)
= p(\lambda)$. Then the continuity of the involved functions implies
that $V_0 (\lambda, \mu)$ is given by \eqref{eq:V-GSK}.

The evaluation of the $\lambda_0 \rightarrow + \infty$ limit of
$V_{S2}$ is a bit less direct. First of all, with a little rewriting,
\begin{equation}
    \delta V_0 (\lambda, \mu) = 
       - \frac{1}{2 \pi \rmi} F^\frac{1}{2} (\lambda) F^\frac{1}{2} (\mu)
         e_0 (\lambda) e_0 (\mu)
       \lim_{\lambda_0 \rightarrow + \infty}
       \frac{\rmC_- (\lambda) - \rmC_- (\mu)}{\lambda - \mu}.
\end{equation}
Then the claim is that one can pull the limit under the integral
that is defining the Cauchy transform. In order to see this one
introduces
\begin{equation}
    \rmC^{(0)} (\lambda)
    = \int\limits_{\mathbb R} \frac{\dd \nu}{2 \pi \rmi}
    \frac{\rme^{\rmi x p(\nu) + g(\nu)}}{\nu - \lambda}, \quad
    \delta \rmC (\lambda) = \rmC (\lambda) - \rmC^{(0)} (\lambda)
\end{equation}
and estimates $\bigl(\delta C_- (\lambda) - \delta C_- (\mu)\bigr)
/(\lambda - \mu)$ by the same techniques as in the proof of
Lemma~\ref{lem:u_esti}, distinguishing the cases $|\lambda - \mu| <
\sigma$ and $|\lambda - \mu| \geq \sigma$ for some fixed small
$\sigma > 0$ etc. We leave the details to the reader. Eventually,
\begin{equation}
    \lim_{\lambda_0 \rightarrow + \infty}
    \frac{\rmC_- (\lambda) - \rmC_- (\mu)}{\lambda - \mu}
    = \frac{\rmC_-^{(0)} (\lambda) - \rmC_-^{(0)} (\mu)}{\lambda - \mu}
    = \pv \int\limits_{\mathbb R} \frac{\dd \nu}{2 \pi \rmi} 
    \frac{\rme^{\rmi x p (\nu) + g (\nu)}}{(\nu - \lambda)(\nu - \mu)}.
\end{equation}
Assumption~\ref{item:assumption-2} tells us that $\Im p(\lambda)$
is positive immediately above the real axis. We may therefore shift
the integration contour upward by a small amount $\varepsilon > 0$ and,
after this regularization, drop the principle-value symbol. With this
we have arrived at equation~\eqref{eq:delta_v_zero}.

It remains to prove the second equation~\eqref{eq:prop_stat_lim}.
We note that
\begin{equation}
    \delta V_0 (\lambda, \mu) = \Ocal \bigl(x^{- \infty}\bigr),
\end{equation}
because $\Im p(\lambda) > c \varepsilon$ if $\Im \lambda =
\varepsilon$ (cf.\ Assumption~\ref{item:assumption-2}). Let
$\rmR_0$ denote the resolvent of $\rmV_0$. Then
\begin{equation}
    \det_{\mathbb R} (\id + (\id - \rmR_0) \delta \rmV_0)
    = \exp\biggl\{- \sum_{n=1}^\infty \frac{(-1)^n}{n}
    \tr\bigl\{ \bigl( (\id - \rmR_0) \delta \rmV_0\bigr)^n \bigr\} \biggl\}
    = 1 + \Ocal\bigl(x^{-\infty}\bigr),
\end{equation}
implying that
\begin{multline}
    \det_{\mathbb R} (\id + \rmV_0 + \delta \rmV_0)
    = \det_{\mathbb R} (\id + \rmV_0)
    \det_{\mathbb R} (\id + (\id - \rmR_0) \delta \rmV_0)
    \\
    = \det_{\mathbb R} (\id + \rmV_0)
    \bigl( 1 + \Ocal\bigl( x^{-\infty} \bigr) \bigr).
\end{multline}
\end{proof}

\section{The linear system}
\label{app:linear-system}

In this appendix, starting with the regularity
conditions~\eqref{eq:regularity-condition-for-Phi}
for $\Phi$, we derive the
representation~\eqref{eq:matrix-S-new} for the matrix
$\rmS (\lambda)$ and the system~\eqref{eq:SLE-new}
of linear algebraic equations whose solutions determine
the coefficients in~\eqref{eq:matrix-S-new}. We also
provide a proof of Proposition~\ref{prop:contribution-of-poles}
which describes the pole contributions to the logarithmic
derivatives of the Fredholm determinant
in~\eqref{eq:prop:log-der-Fredholm-for-AE}.

\subsection{Derivation of the linear system}
\label{app:derivation-of-the-linear-system}
We denote the residua $\rmS_\mu$ in~\eqref{eq:matrix-S}
at the poles in the sets $\Lcal^\pm$ and $\Rcal^\pm$
by $\rmC_j^\pm$ and $\rmD_j^\pm$. Then
\begin{equation}
    \label{eq:matrix-S-app}
    \rmS (\lambda)
    = \mathrm{I}_2
    + \sum\limits_{j = 1}^{n_\ell^+} \frac{\rmC_j^+}{\lambda - \ell_j^+}
    + \sum\limits_{j = 1}^{n_r^+} \frac{\rmD_j^+}{\lambda - r_j^+}
    + \sum\limits_{j = 1}^{n_\ell^-} \frac{\rmC_j^-}{\lambda - \ell_j^-}
    + \sum\limits_{j = 1}^{n_r^-} \frac{\rmD_j^-}{\lambda - r_j^-}.
\end{equation}
For later convenience we also introduce
the matrices $\rmS_{\ell, j}^\pm$ and $\rmS_{r, j}^\pm$
which are obtained from $\rmS$, see~\eqref{eq:matrix-S-app},
by subtracting the contribution of the poles at
$\ell_j^\pm$ or $r_j^\pm$, respectively,
\begin{subequations}
\label{eq:def-S-reduced}
\begin{align}
    \rmS_{\ell, j}^\pm (\lambda)
    = \rmS (\lambda)
    - \frac{\rmC_j^\pm}{\lambda - \ell_j^\pm},
    \qquad
    j = 1, \dots, n_\ell^\pm,
    \\
    \rmS_{r, j}^\pm (\lambda)
    = \rmS (\lambda)
    - \frac{\rmD_j^\pm}{\lambda - r_j^\pm},
    \qquad
    j = 1, \dots, n_r^\pm.
\end{align}
\end{subequations}

The regularity condition~\eqref{eq:regularity-condition-for-Phi}
for the matrix $\Phi$ at the pole $\ell_j^+$ for $j = 1, \dots, n_\ell^+$,
requires
\begin{equation}
\label{eq:regularity-condition-for-Phi-at-ell-+}
    \Phi (\lambda) \big( \rmM_\ell^+ (\lambda) \big)^{-1}
    = \rmS (\lambda) \Pi (\lambda) \big( \rmM_\ell^{+} (\lambda) \big)^{-1}
\end{equation}
to be regular, implying that
\begin{multline}
\label{eq:expansion-at-ell-+}
    \Bigg(
        \rmS_{\ell, j}^+ (\ell_j^+) + \frac{\rmC_j^+}{\lambda - \ell_j^+}
    \Bigg)
    \Big(
        \Pi (\ell_j^+) + (\lambda - \ell_j^+) \Pi^\prime (\ell_j^+)
    \Big)
    \Big( \mathrm{I}_2 - e^{-2} (\lambda) Q_\ell^+ (\lambda) \sigma^+ \Big)
    \\
    =
    - \frac{
        \rmC_j^+ \Pi (\ell_j^+)
        h_{\ell, j}^+ \sigma^+
    }{ (\lambda - \ell_j^+)^2 }
    - \frac{
        \rmS_{\ell, j}^+ (\ell_j^+) \Pi (\ell_j^+)
        h_{\ell, j}^+ \sigma^+
    }{\lambda - \ell_j^+}
    + \frac{ \rmC_j^+ \Pi (\ell_j^+) }{\lambda - \ell_j^+}
    \\
    - \frac{1}{ \lambda - \ell_j^+ }
    \rmC_j^+
    \cdot
    \eval{
        \partial_\lambda
        \big\{
            \Pi (\lambda) e^{-2} (\lambda)
            Q_\ell^+ (\lambda) (\lambda - \ell_j^+)
        \big\}
    }_{\lambda = \ell_j^+}
    \cdot \sigma^+
    + \Ocal(1)
\end{multline}
must be regular. Here we substituted
the expression~\eqref{eq:matrices-M-left} for the matrix $\rmM_\ell^+$
and denoted the residue at $\ell_j^+$ by $h_{\ell, j}^+ \sigma^+$,
\begin{equation}
\label{eq:residues-h-ell-+}
    h_{\ell, j}^+
    = \res\limits_{\lambda = \ell_j^+}
    \big( \rmM_\ell^+ (\lambda) \big)_{12}
    = e^{-2} (\ell_j^+)
    \cdot
    \res\limits_{\lambda = \ell_j^+}
    \big( Q_\ell^+ (\lambda) \big),
    \qquad
    j = 1, \dots, n_\ell^+.
\end{equation}
Thus, the regularity condition implies
that the coefficients in front of the second and
the first order pole at $\ell_j^+$ on the right
hand side of~\eqref{eq:expansion-at-ell-+} must
be zero, i.e.,
\begin{equation}
\label{eq:condition-second-order-pole-ell-+}
    \rmC_j^+ \Pi (\ell_j^+) \sigma^+ = 0
\end{equation}
and
\begin{equation}
\label{eq:condition-first-order-pole-ell-+}
    \rmC_j^+ \Pi (\ell_j^+)
    =
    h_{\ell, j}^+ \rmS_{\ell, j}^+ (\ell_j^+) \Pi (\ell_j^+) \sigma^+
    + \rmC_j^+ \Pi^\prime (\ell_j^+)
    h_{\ell, j}^+
    \sigma^+
\end{equation}
for $j = 1, \dots, n_\ell^+$.
Here in the second equation we used the fact that the terms
in which the derivative in~\eqref{eq:expansion-at-ell-+}
does not act on the matrix $\Pi$ are zero,
due to the first condition~\eqref{eq:condition-second-order-pole-ell-+}.

The first condition~\eqref{eq:condition-second-order-pole-ell-+} implies
that the matrix $\rmC_j^+$ is of the form
\begin{equation}
    \rmC_j^+
    = \begin{pmatrix}
        0 & *\\ 0 & *
    \end{pmatrix}
    \Pi^{-1} (\ell_j^+),
\end{equation}
and the second condition~\eqref{eq:condition-first-order-pole-ell-+}
implies that
\begin{equation}
    \rmC_j^+ \Pi (\ell_j^+)
    \Big(
        \mathrm{I}_2 -
        \Pi^{-1} (\ell_j^+) \Pi^\prime (\ell_j^+) h_{\ell, j}^+ \sigma^+
    \Big)
    = h_{\ell, j}^+
    \rmS_{\ell, j}^+ (\ell_j^+) \Pi (\ell_j^+) \sigma^+.
\end{equation}
Multiplying from the right by
$(\rmI_2
    - \Pi^{-1} (\ell_j^+) \Pi' (\ell_j^+) h_{\ell, j}^+ \sigma^+)^{-1}$
and using that
\begin{equation}
    \sigma^+ (
        \mathrm{I}_2
        - \Pi^{-1} (\ell_j^+) \Pi' (\ell_j^+) h_{\ell, j}^+ \sigma^+)^{-1}
    = \sigma_{\ell, j}^+ \sigma^+
\end{equation}
with
\begin{equation}
    \sigma_{\ell, j}^+
    = \frac{
        h_{\ell, j}^+
    }{
        1 - h_{\ell, j}^+
        \big[ \Pi^{-1} (\ell_j^+) \Pi^\prime (\ell_j^+) \big]_{21} },
\end{equation}
we obtain the first set of equations,
\begin{equation}
    \rmC_j^+ \Pi (\ell_j^+) 
    = \sigma_{\ell, j}^+
    \rmS_{\ell, j}^+ (\ell_j^+) \Pi (\ell_j^+) \sigma^+.
\end{equation}
In a similar way we analyse the regularity condition
at $\ell_j^-$ for $j = 1, \dots, n_\ell^-$
and at $r_j^\pm$ for $j = 1, \dots, n_r^\pm$.

Altogether, the conditions from the second order
poles imply that
\begin{equation}
\label{eq:conditions-at-second-order-poles}
\begin{aligned}
    \rmC_j^\pm \Pi (\ell_j^\pm) \sigma^\pm
    &= 0,
    & j = 1, \dots, n_\ell^\pm,
    \\
    \rmD_j^\pm \Pi (r_j^\pm) \sigma^\mp
    &= 0,
    & j = 1, \dots, n_r^\pm,
\end{aligned}
\end{equation}
while the conditions from the first order poles
imply the linear system itself,
\begin{equation}
\label{eq:linear-system-short-form}
\begin{aligned} 
    \rmC_j^\pm \Pi (\ell_j^\pm)
    & = \sigma_{\ell, j}^\pm \rmS_{\ell, j}^\pm (\ell^\pm)
    \Pi (\ell_j^\pm) \sigma^\pm,
    & j = 1, \dots, n_\ell^\pm,
    \\[1ex]
    \rmD_j^\pm \Pi (r_j^\pm)
    & = \sigma_{r, j}^\pm \rmS_{r, j}^\pm (r^\pm)
    \Pi (r_j^\pm) \sigma^\mp,
    & j = 1, \dots, n_r^\pm.
\end{aligned}
\end{equation}
The coefficients $\sigma$ in the latter equations
are given by
\begin{subequations}
\label{eq:coefficients-sigma-app}
\begin{equation}
    \sigma_{\ell, j}^+
    = \frac{
        h_{\ell, j}^+ }{
        1 - h_{\ell, j}^+
        \big[ \Pi^{-1} (\ell_j^+) \Pi ' (\ell_j^+) \big]_{21} },
    \qquad
    \sigma_{r, j}^+
    = \frac{
        h_{r, j}^+ }{
        1 - h_{r, j}^+
        \big[ \Pi^{-1} (r_j^+) \Pi ' (r_j^+) \big]_{12} },
\end{equation}
\begin{equation}
    \sigma_{\ell, j}^-
    = \frac{
        h_{\ell, j}^- }{
        1 - h_{\ell, j}^-
        \big[ \Pi^{-1} (\ell_j^-) \Pi ' (\ell_j^-) \big]_{12} },
    \qquad
    \sigma_{r, j}^-
    = \frac{
        h_{r, j}^- }{
        1 - h_{r, j}^-
        \big[ \Pi^{-1} (r_j^-) \Pi ' (r_j^-) \big]_{21} },
\end{equation}
\end{subequations}
where, in a similar way as in~\eqref{eq:residues-h-ell-+},
we introduced (up to their sign) the residues $h_\ell^\pm$
and $h_r^\pm$ of the off-diagonal matrix elements of
$\rmM_\ell^-$ and $\rmM_r^\pm$,
see equations~\eqref{eq:matrices-M-left}
and~\eqref{eq:matrices-M-right},
\begin{equation}
\label{eq:residues-h-app}
\begin{aligned}
    h_{\ell, j}^+
    & = \res\limits_{\lambda = \ell_j^+}
    \big( \rmM_\ell^+ (\lambda) \big)_{12}
    = e^{-2} (\ell_j^+)
    \cdot
    \res\limits_{\lambda = \ell_j^+}
    \big( Q_\ell^+ (\lambda) \big),
    \qquad
    &&
    j = 1, \dots, n_\ell^+,
    \\
    - h_{\ell, j}^-
    &= \res\limits_{\lambda = \ell_j^-}
    \big({\rmM_\ell^-} (\lambda)\big)_{21}
    = e^2 (\ell_j^-)
    \cdot
    \res\limits_{\lambda = \ell_j^-}
    \big( Q_\ell^- (\lambda) \big),
    \qquad
    &&
    j = 1, \dots, n_\ell^-,
    \\
    h_{r, j}^+
    &= \res\limits_{\lambda = r_j^+}
    \big({\rmM_r^+} (\lambda)\big)_{21}
    = e^2 (r_j^+)
    \cdot
    \res\limits_{\lambda = r_j^+}
    \big( Q_r^+ (\lambda) \big),
    \qquad
    &&
    j = 1, \dots, n_r^+,
    \\
    - h_{r, j}^-
    &= \res\limits_{\lambda = r_j^-}
    \big({\rmM_r^-} (\lambda)\big)_{12}
    = e^{-2} (r_j^-)
    \cdot
    \res\limits_{\lambda = r_j^-}
    \big( Q_r^- (\lambda) \big),
    \qquad
    &&
    j = 1, \dots, n_r^-.
\end{aligned}
\end{equation}

Now we substitute the explicit form of 
the matrices $\rmS_{\ell, j}^\pm$ and $\rmS_{r, j}^\pm$
and rewrite the linear system~\eqref{eq:linear-system-short-form}.
We do it again only for the first set of equations in the system,
since the derivation of the remaining equations is along the same lines.

The conditions~\eqref{eq:conditions-at-second-order-poles}
from the second order poles imply the following form
of the matrices $\rmC_j^\pm$ and $\rmD_j^\pm$,
\begin{align}
    & \rmC_j^+ = \begin{pmatrix}
        0 & *\\
        0 & *
    \end{pmatrix}
    \Pi^{-1} (\ell_j^+),
    \quad
    & \rmD_j^+ = \begin{pmatrix}
        * & 0\\
        * & 0
    \end{pmatrix}
    \Pi^{-1} (r_j^+),
    \\
    & \rmC_j^- = \begin{pmatrix}
        * & 0\\
        * & 0
    \end{pmatrix}
    \Pi^{-1} (\ell_j^-),
    \quad
    & \rmD_j^- = \begin{pmatrix}
        0 & *\\
        0 & *
    \end{pmatrix}
    \Pi^{-1} (r_j^-).
\end{align}
We rescale these expressions by the corresponding coefficients $\sigma$
and denote the unknown entries of the matrices
by column vectors $\mathbf{x}^\pm$ and $\mathbf{y}^\pm$
\begin{align}
    \rmC_j^+ &= \sigma_{\ell, j}^+
    \ (\mathbf{0}, \mathbf{y}_j^+) \ \Pi^{-1} (\ell_j^+),
    \qquad
    & \rmD_j^+ &= \sigma_{r, j}^+
    \ (\mathbf{x}_j^+, \mathbf{0}) \ \Pi^{-1} (r_j^+),
    \\
    \rmC_j^- &= \sigma_{\ell, j}^-
    \ (\mathbf{x}_j^-, \mathbf{0}) \ \Pi^{-1} (\ell_j^-),
    \qquad
    & \rmD_j^- &= \sigma_{r, j}^-
    \ (\mathbf{0}, \mathbf{y}_j^-) \ \Pi^{-1} (r_j^-).
\end{align}
Now we substitute the expressions
for the matrices $\rmS_{\ell, j}^\pm$ and $\rmS_{r, j}^\pm$
from~\eqref{eq:def-S-reduced}
into the system~\eqref{eq:linear-system-short-form}.
For example, for $\rmC_j^+$ we obtain
\begin{multline}
    \frac{\rmC_j^+ \Pi (\ell_j^+)}{\sigma_{\ell, j}^+}
    = \big(\mathbf{0}, \mathbf{y}_j^+\big)
    = \Pi (\ell_j^+) \sigma^+
    \\
    + \sum\limits_{\substack{k = 1\\ k \neq j}}^{n_\ell^+}
    \frac{
        \rmC_k^+ \Pi (\ell_k^+)
    }{
        \ell_j^+ - \ell_k^+ }
    + \sum\limits_{s = 1}^{n_r^+}
    \frac{
        \rmD_k^+ \Pi (r_k^+)
    }{
        \ell_j^+ - r_k^+ }
    + \sum\limits_{k = 1}^{n_\ell^-}
    \frac{
        \rmC_k^- \Pi (\ell_k^-)
    }{
        \ell_j^+ - \ell_k^- }
    + \sum\limits_{k = 1}^{n_r^-}
    \frac{
        \rmD_k^- \Pi (r_k^-)
    }{
        \ell_j^+ - r_k^- }.
\end{multline}
Next we also insert the expressions
for $\rmC_j^\pm$ and $\rmD_j^\pm$
in terms of $\mathbf{x}_j^\pm$ and $\mathbf{y}_j^\pm$
and use the following identities
\begin{align}
\label{eq:identities-for-C-plus-and-D-minus}
    \big( \mathbf{x}_j^\pm, \mathbf{0} \big)
    \Pi^{-1} (\lambda) \Pi (\mu)
    \sigma^+
    = \big[ \Pi^{-1} (\lambda) \Pi (\mu) \big]_{11}
    \big( \mathbf{0}, \mathbf{x}_j^\pm \big),
    \\
    \big( \mathbf{0}, \mathbf{y}_j^\pm \big)
    \Pi^{-1} (\lambda) \Pi (\mu)
    \sigma^+
    = \big[ \Pi^{-1} (\lambda) \Pi (\mu) \big]_{21}
    \big( \mathbf{0}, \mathbf{y}_j^\pm\big).
\end{align}
This way we derive the equation
\begin{multline}
    \big( \mathbf{0}, \mathbf{y}_j^+ \big)
    = \Pi (\ell_j^+) \sigma^+
    \\
    + \sum\limits_{\substack{k = 1\\ k \neq j}}^{n_\ell^+}
    \frac{
        \sigma_{\ell, k}^+ \big[ \Pi^{-1} (\ell_k^+) \Pi (\ell_j^+)\big]_{21}
    }{
        \ell_j^+ - \ell_k^+ }
    \big( \mathbf{0}, \mathbf{y}_k^+ \big)
    + \sum\limits_{k = 1}^{n_r^+}
    \frac{
        \sigma_{r, k}^+ \big[ \Pi^{-1} (r_k^+) \Pi (\ell_j^+)\big]_{11}
    }{
        \ell_j^+ - r_k^+ }
    \big( \mathbf{0}, \mathbf{x}_k^+ \big)
    \\
    + \sum\limits_{k = 1}^{n_\ell^-}
    \frac{
        \sigma_{\ell, k}^- \big[ \Pi^{-1} (\ell_k^-) \Pi (\ell_j^+)\big]_{11}
    }{
        \ell_j^+ - \ell_k^- }
    \big( \mathbf{0}, \mathbf{x}_k^- \big)
    + \sum\limits_{k = 1}^{n_r^-}
    \frac{
        \sigma_{r, k}^- \big[ \Pi^{-1} (r_k^-) \Pi (\ell_j^+)\big]_{21}
    }{
        \ell_j^+ - r_k^- }
    \big( \mathbf{0}, \mathbf{y}_k^- \big).
\end{multline}
Further setting
\begin{equation}
    \Pi (\ell_j^+) \sigma^+
    = \begin{pmatrix}
        0 & \Pi_{11} (\ell_j^+)\\ 0 & \Pi_{21} (\ell_j^+)
    \end{pmatrix}
    = (0, \mathbf{w}_j^+),
    \qquad
    j = 1, \dots, n_\ell^+,
\end{equation}
we end up with the first set of equations in our linear
system,
\begin{multline}
    \mathbf{y}_j^+
    = \mathbf{w}_j^+
    + \sum\limits_{\substack{k = 1\\k \neq j}}^{n_\ell^+}
    \frac{
        \sigma_{\ell, k}^+
        \big[ \Pi^{-1} (\ell_k^+) \Pi (\ell_j^+)\big]_{21} }{
        \ell_j^+ - \ell_k^+ }
    \mathbf{y}_k^+
    + \sum\limits_{k = 1}^{n_r^+}
    \frac{
        \sigma_{r, k}^+
        \big[ \Pi^{-1} (r_k^+) \Pi (\ell_j^+)\big]_{11} }{
        \ell_j^+ - r_k^+ }
    \mathbf{x}_k^+
    \\
    + \sum\limits_{k = 1}^{n_\ell^-}
    \frac{
        \sigma_{\ell, k}^-
        \big[ \Pi^{-1} (\ell_k^-) \Pi (\ell_j^+)\big]_{11} }{
        \ell_j^+ - \ell_k^- }
    \mathbf{x}_k^-
    + \sum\limits_{k = 1}^{n_r^-}
    \frac{
        \sigma_{r, k}^-
        \big[ \Pi^{-1} (r_k^-) \Pi (\ell_j^+)\big]_{21} }{
        \ell_j^+ - r_k^- }
    \mathbf{y}_k^-.
\end{multline}

Similarly, one can derive the set of equations for $\rmD_j^-$.
For $\rmC_j^-$ and $\rmD_j^+$, we need the second pair of identities
\begin{align}
    \big( \mathbf{x}_j^\pm, \mathbf{0} \big)
    \Pi^{-1} (\lambda) \Pi (\mu)
    \sigma^-
    & = \big[ \Pi^{-1} (\lambda) \Pi (\mu) \big]_{12}
    \big( \mathbf{x}_j^\pm, \mathbf{0} \big),
    \\
    \big( \mathbf{0}, \mathbf{y}_j^\pm \big)
    \Pi^{-1} (\lambda) \Pi (\mu)
    \sigma^-
    & = \big[ \Pi^{-1} (\lambda) \Pi (\mu) \big]_{22}
    \big( \mathbf{y}_j^\pm, \mathbf{0} \big)
\end{align}
instead of~\eqref{eq:identities-for-C-plus-and-D-minus},
but the rest of the derivation is exactly along the same lines.

At the end of the day,
one derives the expression~\eqref{eq:matrix-S-app}
for the matrix $\rmS (\lambda)$
with the matrices $\rmC_j^\pm$ and $\rmD_j^\pm$ given by
\begin{equation}
    \label{eq:C-and-D-via-x-and-y}
    \begin{aligned}
        \rmC_j^+ &=
        \sigma_{\ell, j}^+
        \cdot \big(\mathbf{0}, \mathbf{y}_j^+\big)
        \cdot \Pi^{-1} (\ell_j^+),
        \qquad
        & \rmD_j^+ &=
        \sigma_{r, j}^+
        \cdot \big(\mathbf{x}_j^+, \mathbf{0}\big)
        \cdot \Pi^{-1} (r_j^+),
        \\
        \rmC_j^- &=
        \sigma_{\ell, j}^-
        \cdot \big(\mathbf{x}_j^-, \mathbf{0}\big)
        \cdot \Pi^{-1} (\ell_j^-),
        \qquad
        & \rmD_j^- &=
        \sigma_{r, j}^-
        \cdot \big(\mathbf{0}, \mathbf{y}_j^-\big)
        \cdot \Pi^{-1} (r_j^-),
    \end{aligned}
\end{equation}
where the two-dimensional vectors $\mathbf{x}_j^\pm$ and $\mathbf{y}_j^\pm$
solve the following system of linear equations
\begin{subequations}
\label{eq:SLE}
\begin{multline}
\label{eq:SLE-left-plus}
    \mathbf{y}_j^+
    = \mathbf{w}_j^+
    + \sum\limits_{\substack{k = 1\\k \neq j}}^{n_\ell^+}
    \frac{
        \sigma_{\ell, k}^+
        \big[ \Pi^{-1} (\ell_k^+) \Pi (\ell_j^+)\big]_{21} }{
        \ell_j^+ - \ell_k^+ }
    \mathbf{y}_k^+
    + \sum\limits_{k = 1}^{n_r^+}
    \frac{
        \sigma_{r, k}^+
        \big[ \Pi^{-1} (r_k^+) \Pi (\ell_j^+)\big]_{11} }{
        \ell_j^+ - r_k^+ }
    \mathbf{x}_k^+
    \\
    + \sum\limits_{k = 1}^{n_\ell^-}
    \frac{
        \sigma_{\ell, k}^-
        \big[ \Pi^{-1} (\ell_k^-) \Pi (\ell_j^+)\big]_{11} }{
        \ell_j^+ - \ell_k^- }
    \mathbf{x}_k^-
    + \sum\limits_{k = 1}^{n_r^-}
    \frac{
        \sigma_{r, k}^-
        \big[ \Pi^{-1} (r_k^-) \Pi (\ell_j^+)\big]_{21} }{
        \ell_j^+ - r_k^- }
    \mathbf{y}_k^-,
\end{multline}
\begin{multline}
\label{eq:SLE-right-plus}
    \mathbf{x}_j^+
    = \mathbf{v}_j^+
    + \sum\limits_{k = 1}^{n_\ell^+}
    \frac{
        \sigma_{\ell, k}^+
        \big[ \Pi^{-1} (\ell_k^+) \Pi (r_j^+)\big]_{22} }{
        r_j^+ - \ell_k^+ }
    \mathbf{y}_k^+
    + \sum\limits_{\substack{k = 1\\k \neq j}}^{n_r^+}
    \frac{
        \sigma_{r, k}^+
        \big[ \Pi^{-1} (r_k^+) \Pi (r_j^+)\big]_{12} }{
        r_j^+ - r_k^+ }
    \mathbf{x}_k^+
    \\
    + \sum\limits_{k = 1}^{n_\ell^-}
    \frac{
        \sigma_{\ell, k}^-
        \big[ \Pi^{-1} (\ell_k^-) \Pi (r_j^+)\big]_{12} }{
        r_j^+ - \ell_k^- }
    \mathbf{x}_k^-
    + \sum\limits_{k = 1}^{n_r^-}
    \frac{
        \sigma_{r, k}^-
        \big[ \Pi^{-1} (r_k^-) \Pi (r_j^+)\big]_{22} }{
        r_j^+ - r_k^- }
    \mathbf{y}_k^-,
\end{multline}
\begin{multline}
\label{eq:SLE-left-minus}
    \mathbf{x}_j^-
    = \mathbf{v}_j^-
    + \sum\limits_{k = 1}^{n_\ell^+}
    \frac{
        \sigma_{\ell, k}^+
        \big[ \Pi^{-1} (\ell_k^+) \Pi (\ell_j^-)\big]_{22} }{
        \ell_j^- - \ell_k^+ }
    \mathbf{y}_k^+
    + \sum\limits_{k = 1}^{n_r^+}
    \frac{
        \sigma_{r, k}^+
        \big[ \Pi^{-1} (r_k^+) \Pi (\ell_j^-)\big]_{12} }{
        \ell_j^- - r_k^+ }
    \mathbf{x}_k^+
    \\
    + \sum\limits_{\substack{k = 1\\k \neq j}}^{n_\ell^-}
    \frac{
        \sigma_{\ell, k}^-
        \big[ \Pi^{-1} (\ell_k^-) \Pi (\ell_j^-)\big]_{12} }{
        \ell_j^- - \ell_k^- }
    \mathbf{x}_k^-
    + \sum\limits_{k = 1}^{n_r^-}
    \frac{
        \sigma_{r, k}^-
        \big[ \Pi^{-1} (r_k^-) \Pi (\ell_j^-)\big]_{22} }{
        \ell_j^- - r_k^- }
    \mathbf{y}_k^-,
\end{multline}
\begin{multline}
\label{eq:SLE-right-minus}
    \mathbf{y}_j^-
    = \mathbf{w}_j^-
    + \sum\limits_{k = 1}^{n_\ell^+}
    \frac{
        \sigma_{\ell, k}^+
        \big[ \Pi^{-1} (\ell_k^+) \Pi (r_j^-)\big]_{21} }{
        r_j^- - \ell_k^+ }
    \mathbf{y}_k^+
    + \sum\limits_{k = 1}^{n_r^+}
    \frac{
        \sigma_{r, k}^+
        \big[ \Pi^{-1} (r_k^+) \Pi (r_j^-)\big]_{11} }{
        r_j^- - r_k^+ }
    \mathbf{x}_k^+
    \\
    + \sum\limits_{k = 1}^{n_\ell^-}
    \frac{
        \sigma_{\ell, k}^-
        \big[ \Pi^{-1} (\ell_k^-) \Pi (r_j^-)\big]_{11} }{
        r_j^- - \ell_k^- }
    \mathbf{x}_k^-
    + \sum\limits_{\substack{k = 1\\k \neq j}}^{n_r^-}
    \frac{
        \sigma_{r, k}^-
        \big[ \Pi^{-1} (r_k^-) \Pi (r_j^-)\big]_{21} }{
        r_j^- - r_k^- }
    \mathbf{y}_k^-.
\end{multline}
\end{subequations}
The vectors $\mathbf{v}_j^\pm$ and $\mathbf{w}_j^\pm$
are defined by
\begin{equation}
\label{eq:vectors-V-and-W}
    \begin{aligned}
        \mathbf{w}_j^+
        & = \begin{pmatrix}
            \Pi_{11} (\ell_j^+) \\ \Pi_{21} (\ell_j^+)
        \end{pmatrix},
        \quad
        & \mathbf{v}_j^+
        & = \begin{pmatrix}
            \Pi_{12} (r_j^+) \\ \Pi_{22} (r_j^+)
        \end{pmatrix},
        \\
        \mathbf{v}_j^-
        & = \begin{pmatrix}
            \Pi_{12} (\ell_j^-) \\ \Pi_{22} (\ell_j^-)
        \end{pmatrix},
        \quad
        & \mathbf{w}_j^-
        & = \begin{pmatrix}
            \Pi_{11} (r_j^-) \\ \Pi_{21} (r_j^-)
        \end{pmatrix},
    \end{aligned}
\end{equation}
and the coefficients $\sigma^\pm$
by expressions~\eqref{eq:coefficients-sigma-app}.

We note that the equations in the linear system
related to the poles $\Lcal^+$ and $\Rcal_-$,
compare equations~\eqref{eq:SLE-left-plus} and~\eqref{eq:SLE-right-minus},
and to the poles $\Lcal^-$ and $\Rcal^+$,
compare equations~\eqref{eq:SLE-left-minus} and~\eqref{eq:SLE-right-plus},
as well as the form of the corresponding matrices
$\rmC^\pm$ and $\rmD^\pm$ in~\eqref{eq:C-and-D-via-x-and-y}
have the same mathematical structure.
Thus, we combine these sets of poles into two sets
\begin{equation}
    \Scal^\pm
    := \Lcal^\pm \cup \Rcal^\mp
    = \{ s_j^\pm \}_{j = 1}^{n^\pm},
\end{equation}
where the number of poles
in the sets is $n^\pm = n_\ell^\pm + n_r^\mp$, respectively.
The order of the poles in these sets is, of course,
not important, but let us fix it as follows:
\begin{align}
    s_j^+
    & = \begin{cases}
        \ell_j^+,
        & j = 1, \dots, n_\ell^+,
        \\
        r_{j - n_\ell^+}^-,
        & j = n_\ell^+ + 1, \dots n_\ell^+ + n_r^-,
    \end{cases}
    \\
    s_j^-
    & = \begin{cases}
        \ell_j^-,
        & j = 1, \dots, n_\ell^-,
        \\
        r_{j - n_\ell^-}^+,
        & j = n_\ell^- + 1, \dots n_\ell^- + n_r^+.
    \end{cases}
\end{align}
Next we introduce
\begin{align}
    \mathbf{y}_j
    & = \begin{cases}
        \mathbf{y}_j^+,
        & j = 1, \dots, n_\ell^+,
        \\
        \mathbf{y}_{j - n_\ell^+}^-,
        & j = n_\ell^+ + 1, \dots, n_\ell^+ + n_r^-,
    \end{cases}
    \\
    \mathbf{x}_j
    & = \begin{cases}
        \mathbf{x}_j^-,
        & j = 1, \dots, n_\ell^-,
        \\
        \mathbf{x}_{j - n_\ell^-}^+,
        & j = n_\ell^- + 1, \dots, n_\ell^- + n_r^+,
    \end{cases}
\end{align}
and, similarly, the coefficients $\sigma_j^\pm$,
and vectors $\mathbf{v}_j$, $\mathbf{w}_j$,
see equations\eqref{eq:coefficients-sigma-app}
and~\eqref{eq:vectors-V-and-W}.
Lastly, we employ the short-hand notations
\eqref{eq:matrix-elements-P-and-Q} for
matrix elements involving the matrices
$\Pi^{-1} (\lambda)$, $\Pi (\lambda)$ and
$\Pi' (\lambda)$,

Using all of these notations,
the linear system~\eqref{eq:SLE}
can be formulated in a more compact $2 \times 2$ block form,
see the linear system~\eqref{eq:SLE-new} in the main text.
Then the corresponding matrix $\rmS (\lambda)$,
see equations~\eqref{eq:matrix-S-app}
and~\eqref{eq:C-and-D-via-x-and-y},
can be expressed as~\eqref{eq:matrix-S-new}.
In particular,
the expressions~\eqref{eq:coefficients-sigma-app}
for the coefficients $\sigma^\pm$
and~\eqref{eq:residues-h-app} for the coefficients $h^\pm$
take the form~\eqref{eq:coefficients-sigma} and~\eqref{eq:residues-h},
respectively.

\subsection{Contribution of the poles}
\label{app:contribution-of-poles}
In this subsection we prove Proposition~\ref{prop:contribution-of-poles}.

We start with the expression for $\rmS^{-1} (\lambda) \rmS' (\lambda)$,
see equation~\eqref{eq:S-inv-S-prime} in the main text,
\begin{multline}
\label{eq:S-inv-S-prime-app}
    \rmS^{-1} (\lambda)
    \rmS' (\lambda)
    = - \sum\limits_{j = 1}^{n^+}
    \frac{ \sigma_j^+ }{ (\lambda - s_j^+)^2 }
    \mathbf{y}_j (0, 1) \Pi^{-1} (s_j^+)
    - \sum\limits_{j = 1}^{n^-}
    \frac{ \sigma_j^- }{ (\lambda - s_j^-)^2 }
    \mathbf{x}_j (1, 0) \Pi^{-1} (s_j^-)
    \\
    -
    \Bigg[
        \sum\limits_{j = 1}^{n^+}
        \frac{ \sigma_j^+ }{\lambda - s_j^+}
        \Pi (s_j^+)
        \sigma^y
        (0, 1)^\intercal \mathbf{y}_j^\intercal
        \sigma^y
        + \sum\limits_{j = 1}^{n^-}
        \frac{ \sigma_j^- }{\lambda - s_j^-}
        \Pi (s_j^-)
        \sigma^y
        (1, 0)^\intercal \mathbf{x}_j^\intercal
        \sigma^y
    \Bigg]
    \\
    \times
    \Bigg[
        \sum\limits_{j = 1}^{n^+}
        \frac{ \sigma_j^+ }{ (\lambda - s_j^+)^2 }
        \mathbf{y}_j (0, 1) \Pi^{-1} (s_j^+)
        + \sum\limits_{j = 1}^{n^-}
        \frac{ \sigma_j^- }{ (\lambda - s_j^-)^2 }
        \mathbf{x}_j (1, 0) \Pi^{-1} (s_j^-)
    \Bigg],
\end{multline}
and calculate the following contribution of the poles,
\begin{equation}
    \sum\limits_{\mu \in \Scal}
    \res\limits_{\lambda = \mu}
    \Big(
        \tr\big\{
            \rmS' (\lambda)
            \Pi (\lambda)
            \sigma^z
            \Pi^{-1} (\lambda)
            \rmS^{-1} (\lambda)
        \big\}
        d_\beta (\lambda)
    \Big),
\end{equation}
where
$\Scal = \{s_j^+\}_{j = 1}^{n^+} \cup \{s_j^-\}_{j = 1}^{n^-}$
is the set of all poles.
We will only provide the details of the calculation
for the residue at $\lambda = s_j^+$ for some
$j = 1, \dots, n^+$, since the derivation for the
residue at $\lambda = s_j^-$ for $j = 1, \dots, n^-$
is along the same lines.

As we discussed in the main text in Section~\ref{sec:contribution-of-poles},
there are no third order poles,
then the Laurent expansion of $\rmS^{-1} (\lambda) \rmS' (\lambda)$
at $\lambda = s_j^+$ has the following form
\begin{equation}
\label{eq:S-inv-S-prime-Laurent}
    \rmS^{-1} (\lambda) \rmS' (\lambda)
    = \frac{\rmS_2 (\lambda)}{(\lambda - s_j^+)^2}
    + \frac{\rmS_1 (\lambda)}{\lambda - s_j^+}
    + \Ocal(1).
\end{equation}

In the next sections
we first evaluate the contribution of the second order poles
and then the contribution of the first order poles.
Finally, we combine the results
and get the final expression for the contribution of the poles,
which is shown in Proposition~\ref{prop:contribution-of-poles}
in the main body of the text.

\subsubsection{Contribution of the second order poles}
We start with the contribution of the second order poles.
The matrix $\rmS_2 (\lambda)$ in~\eqref{eq:S-inv-S-prime-Laurent}
accounting for the second order pole at $\lambda = \ell_j^+$
propagates from expression~\eqref{eq:S-inv-S-prime-app}
\begin{multline}
\label{eq:expression-for-S-2}
    \rmS_2 (\lambda)
    = - \sigma_j^+
    \mathbf{y}_j (0, 1) \Pi^{-1} (s_j^+)
    -
    \sum\limits_{\substack{k = 1\\ k \neq j}}^{n^+}
    \frac{ \sigma_j^+ \sigma_k^+ }{\lambda - s_k^+}
    \Pi (s_k^+)
    \sigma^y
    (0, 1)^\intercal \mathbf{y}_k^\intercal
    \sigma^y
    \mathbf{y}_j (0, 1) \Pi^{-1} (s_j^+)
    \\
    - \sum\limits_{k = 1}^{n^-}
    \frac{ \sigma_j^+ \sigma_k^- }{\lambda - s_k^-}
    \Pi (s_k^-)
    \sigma^y
    (1, 0)^\intercal \mathbf{x}_k^\intercal
    \sigma^y
    \mathbf{y}_j (0, 1) \Pi^{-1} (s_j^+).
\end{multline}

Then the corresponding residue is given by
\begin{multline}
\label{eq:residue-with-S-2}
    \res\limits_{\lambda = s_j^+}
    \Bigg(
        \frac{d_\beta (\lambda)}{(\lambda - s_j^+)^2}
        \tr\big\{
            \rmS_2 (\lambda) 
            \Pi (\lambda)
            \sigma^z
            \Pi^{-1} (\lambda)
        \big\}
    \Bigg)
    =
    \tr\big\{
        \rmS_2' (s_j^+)
        \Pi (s_j^+)
        \sigma^z
        \Pi^{-1} (s_j^+)
    \big\}
    d_\beta (s_j^+)
    \\
    +
    \eval{\tr\big\{
        \rmS_2 (s_j^+)
        \partial_\lambda
        \big(
            \Pi (\lambda)
            \sigma^z
            \Pi^{-1} (\lambda)
        \big)
    \big\}}_{\lambda = s_j^+}
    d_\beta (s_j^+).
\end{multline}
Here we used Proposition~\ref{prop:residue-form} from the main text.
It guarantees that there is no term involving $d' (s_j^+)$.
In the following we separately consider the terms
on the right-hand side of the expression for the residue.

We start with the first term
on the right-hand side of~\eqref{eq:residue-with-S-2}.
Substituting the derivative of the expression~\eqref{eq:expression-for-S-2},
we obtain
\begin{multline}
    \tr\big\{
        \rmS_2' (s_j^+)
        \Pi (s_j^+)
        \sigma^z
        \Pi^{-1} (s_j^+)
    \big\}
    \\
    = \sum\limits_{\substack{k = 1\\ k \neq j}}^{n^+}
    \frac{ \sigma_j^+ \sigma_k^+ }{ (s_j^+ - s_k^+)^2 }
    \tr\big\{
        \Pi (s_k^+)
        \sigma^y
        (0, 1)^\intercal \mathbf{y}_k^\intercal
        \sigma^y
        \mathbf{y}_j (0, 1)
        \sigma^z
        \Pi^{-1} (s_j^+)
    \big\}
    \\
    + \sum\limits_{k = 1}^{n^-}
    \frac{ \sigma_j^+ \sigma_k^- }{ (s_j^+ - s_k^-)^2 }
    \tr\big\{
        \Pi (s_k^-)
        \sigma^y
        (1, 0)^\intercal \mathbf{x}_k^\intercal
        \sigma^y
        \mathbf{y}_j (0, 1)
        \sigma^z
        \Pi^{-1} (s_j^+)
    \big\}.
\end{multline}
Using the cyclicity of the trace
as in the proof of Proposition~\ref{prop:residue-form},
see equation~\eqref{eq:cyclicity-of-trace} in the main text,
we get
\begin{multline}
\label{eq:residue-with-S-2-term-1}
    \tr\big\{
        \rmS_2' (s_j^+)
        \Pi (s_j^+)
        \sigma^z
        \Pi^{-1} (s_j^+)
    \big\}
    d_\beta (s_j^+)
    \\
    = \rmi \sum\limits_{\substack{k = 1\\ k \neq j}}^{n^+}
    \frac{
        \sigma_j^+ \sigma_k^+
        \big( P_{jk}^{++} \big)_{21} }{
        (s_j^+ - s_k^+)^2 }
        d_\beta (s_j^+)
    \mathbf{y}_k^\intercal \sigma^y \mathbf{y}_j
    - \rmi \sum\limits_{k = 1}^{n^-}
    \frac{
        \sigma_j^+ \sigma_k^-
        \big( P_{jk}^{+-} \big)_{22} }{
        (s_j^+ - s_k^-)^2 }
    d_\beta (s_j^+)
    \mathbf{x}_k^\intercal \sigma^y \mathbf{y}_j.
\end{multline}

Substituting expression~\eqref{eq:expression-for-S-2}
into the second term on the right-hand side of~\eqref{eq:residue-with-S-2},
we obtain
\begin{multline}
\label{eq:residue-with-S-2-second-term}
    \eval{\tr\big\{
        \rmS_2 (s_j^+)
        \partial_\lambda
        \big(
            \Pi (\lambda)
            \sigma^z
            \Pi^{-1} (\lambda)
        \big)
    \big\}}_{\lambda = s_j^+}
    \\
    =
    - \sigma_j^+
    \eval{\tr\big\{
        \mathbf{y}_j (0, 1) \Pi^{-1} (s_j^+)
        \partial_\lambda
        \big(
            \Pi (\lambda)
            \sigma^z
            \Pi^{-1} (\lambda)
        \big)
    \big\}}_{\lambda = s_j^+}
    \\
    - \sum\limits_{\substack{k = 1\\ k \neq j}}^{n^+}
    \frac{ \sigma_j^+ \sigma_k^+ }{s_j^+ - s_k^+}
    \eval{\tr\big\{
        \Pi (s_k^+)
        \sigma^y
        (0, 1)^\intercal \mathbf{y}_k^\intercal
        \sigma^y
        \mathbf{y}_j (0, 1) \Pi^{-1} (s_j^+)
        \partial_\lambda
        \big(
            \Pi (\lambda)
            \sigma^z
            \Pi^{-1} (\lambda)
        \big)
    \big\}}_{\lambda = s_j^+}
    \\
    \mspace{-2mu}
    - \sum\limits_{k = 1}^{n^-}
    \frac{ \sigma_j^+ \sigma_k^- }{s_j^+ - s_k^-}
    \eval{\tr\big\{
        \Pi (s_k^-)
        \sigma^y
        (1, 0)^\intercal \mathbf{x}_k^\intercal
        \sigma^y
        \mathbf{y}_j (0, 1) \Pi^{-1} (s_j^+)
        \partial_\lambda
        \big(
            \Pi (\lambda)
            \sigma^z
            \Pi^{-1} (\lambda)
        \big)
    \big\}}_{\lambda = s_j^+}.
\end{multline}
Next we use the following identity
\begin{equation}
    \Pi (\lambda) \Pi^{-1} (\lambda) = \mathrm{I}_2
    \quad
    \Rightarrow
    \quad
    \big( \Pi^{-1} (\lambda) \big)'
    = - \Pi^{-1} (\lambda) \Pi' (\lambda) \Pi^{-1} (\lambda)
\end{equation}
to get
\begin{equation}
\label{eq:derivative-of-Pi-Sigma-z-Pi-inv}
    \partial_\lambda
    \big(
        \Pi (\lambda)
        \sigma^z
        \Pi^{-1} (\lambda)
    \big)
    =
    \Pi' (\lambda) \sigma^z \Pi^{-1} (\lambda)
    - \Pi (\lambda) \sigma^z
    \Pi^{-1} (\lambda) \Pi' (\lambda) \Pi^{-1} (\lambda).
\end{equation}
This allows us to evaluate
all the traces in terms of the coefficients $P$ and $Q$,
see definition~\eqref{eq:matrix-elements-P-and-Q} in the main text.
We substitute~\eqref{eq:derivative-of-Pi-Sigma-z-Pi-inv}
into the first trace in~\eqref{eq:residue-with-S-2-second-term},
use the cyclicity of the trace
and derive the following expression
\begin{multline}
    \eval{\tr\big\{
        \mathbf{y}_j (0, 1) \Pi^{-1} (s_j^+)
        \partial_\lambda
        \big(
            \Pi (\lambda)
            \sigma^z
            \Pi^{-1} (\lambda)
        \big)
    \big\}}_{\lambda = s_j^+}
    \\
    =
    (0, 1) \Pi^{-1} (s_j^+)
    \Pi' (s_j^+)
    \sigma^z
    \Pi^{-1} (s_j^+)
    \mathbf{y}_j
    - (0, 1)
    \sigma^z
    \Pi^{-1} (s_j^+)
    \Pi' (s_j^+)
    \Pi^{-1} (s_j^+)
    \mathbf{y}_j
    \\
    =
    \Big[
        \big( Q_{jj}^{++} \big)_{21}
        (1, 0)
        -
        \big( Q_{jj}^{++} \big)_{22}
        (0, 1)
    \Big]
    \Pi^{-1} (s_j^+)
    \mathbf{y}_j
    + \Big[
        \big( Q_{jj}^{++} \big)_{21}
        (1, 0)
        +
        \big( Q_{jj}^{++} \big)_{22}
        (0, 1)
    \Big]
    \Pi^{-1} (s_j^+)
    \mathbf{y}_j
    \\
    = 2 \big( Q_{jj}^{++} \big)_{21}
    (1, 0)
    \Pi^{-1} (s_j^+)
    \mathbf{y}_j
    = - 2 \rmi \big( Q_{jj}^{++} \big)_{21}
    \mathbf{v}_j^\intercal
    \sigma^y
    \mathbf{y}_j.
\end{multline}
Here in the second equality
we inserted the identity after $\Pi' (s_j^+)$,
\begin{equation}
    \mathrm{I}_2
    = (1, 0)^\intercal \cdot (1, 0)
    + (0, 1)^\intercal \cdot (0, 1).
\end{equation}
Similarly, we obtain
\begin{multline}
    \eval{\tr\big\{
        \Pi (s_k^+)
        \sigma^y
        (0, 1)^\intercal \mathbf{y}_k^\intercal
        \sigma^y
        \mathbf{y}_j (0, 1) \Pi^{-1} (s_j^+)
        \partial_\lambda
        \big(
            \Pi (\lambda)
            \sigma^z
            \Pi^{-1} (\lambda)
        \big)
    \big\}}_{\lambda = s_j^+}
    \\
    = - 2 \rmi
    \big( P_{jk}^{++} \big)_{11}
    \big( Q_{jj}^{++} \big)_{21}
    \mathbf{y}_k^\intercal
    \sigma^y
    \mathbf{y}_j
\end{multline}
and
\begin{multline}
    \eval{\tr\big\{
        \Pi (s_k^-)
        \sigma^y
        (1, 0)^\intercal \mathbf{x}_k^\intercal
        \sigma^y
        \mathbf{y}_j (0, 1) \Pi^{-1} (s_j^+)
        \partial_\lambda
        \big(
            \Pi (\lambda)
            \sigma^z
            \Pi^{-1} (\lambda)
        \big)
    \big\}}_{\lambda = s_j^+}
    \\
    = 2 \rmi
    \big( P_{jk}^{+-} \big)_{12}
    \big( Q_{jj}^{++} \big)_{21}
    \mathbf{x}_k^\intercal
    \sigma^y
    \mathbf{y}_j.
\end{multline}
Substituting all the traces back
into the expression~\eqref{eq:residue-with-S-2-second-term}
and multiplying by $d_\beta (s_j^+)$,
we get
\begin{multline}
\label{eq:residue-with-S-2-term-2-preliminary}
    \eval{\tr\big\{
        \rmS_2 (s_j^+)
        \partial_\lambda
        \big(
            \Pi (\lambda)
            \sigma^z
            \Pi^{-1} (\lambda)
        \big)
    \big\}}_{\lambda = s_j^+}
    d_\beta (s_j^+)
    =
    - 2 \sigma_j^+ \big( Q_{jj}^{++} \big)_{21}
    d_\beta (s_j^+)
    (1, 0)
    \Pi^{-1} (s_j^+)
    \mathbf{y}_j
    \\
    + 2 \rmi
    \sum\limits_{\substack{k = 1\\ k \neq j}}^{n^+}
    \frac{
        \sigma_j^+ \sigma_k^+ }{
        s_j^+ - s_k^+ }
    \big( P_{jk}^{++} \big)_{11}
    \big( Q_{jj}^{++} \big)_{21}
    d_\beta (s_j^+)
    \mathbf{y}_k^\intercal \sigma^y \mathbf{y}_j
    \\
    - 2 \rmi
    \sum\limits_{k = 1}^{n^-}
    \frac{
        \sigma_j^+ \sigma_k^- }{
        s_j^+ - s_k^- }
    \big( P_{jk}^{+-} \big)_{12}
    \big( Q_{jj}^{++} \big)_{21}
    d_\beta (s_j^+)
    \mathbf{x}_k^\intercal \sigma^y \mathbf{y}_j.
\end{multline}

Before we move on to the contribution of the first order poles,
let us slightly rewrite the last two terms with the sums,
using the following identities
\begin{equation}
    \big( Q_{kj}^{-+} \big)_{11}
    = \big( P_{kj}^{-+} \big)_{11}
    \big( Q_{jj}^{++} \big)_{11}
    - \big( P_{jk}^{+-} \big)_{12}
    \big( Q_{jj}^{++} \big)_{21},
\end{equation}
and the linear system~\eqref{eq:SLE-new}.
Then, up to some factors, the last term on the right-hand side
of the expression~\eqref{eq:residue-with-S-2-term-2-preliminary}
becomes
\begin{multline}
    - \sum\limits_{k = 1}^{n^-}
    \frac{ \sigma_k^- }{ s_j^+ - s_k^- }
    \big( P_{jk}^{+-} \big)_{12}
    \big( Q_{jj}^{++} \big)_{21}
    \mathbf{x}_k^\intercal
    = 
    \sum\limits_{k = 1}^{n^-}
    \frac{
        \sigma_k^-
        \big( Q_{kj}^{-+} \big)_{11} }{
        s_j^+ - s_k^- }
    \mathbf{x}_k^\intercal
    -
    \big( Q_{jj}^{++} \big)_{11}
    \sum\limits_{k = 1}^{n^-}
    \frac{
        \sigma_k^-
        \big( P_{kj}^{-+} \big)_{11} }{
        s_j^+ - s_k^- }
    \mathbf{x}_k^\intercal
    \\
    = 
    \big( \mathbf{w}_j^\intercal - \mathbf{y}_j^\intercal \big)
    \big( Q_{jj}^{++} \big)_{11}
    +
    \sum\limits_{k = 1}^{n^-}
    \frac{
        \sigma_k^-
        \big( Q_{kj}^{-+} \big)_{11} }{
        s_j^+ - s_k^- }
    \mathbf{x}_k^\intercal
    +  \big( Q_{jj}^{++} \big)_{11} 
    \sum\limits_{\substack{k = 1\\k \neq j}}^{n^+}
    \frac{
        \sigma_k^+ \big( P_{kj}^{++} \big)_{21} }{
        s_j^+ - s_k^+ }
    \mathbf{y}_k^\intercal.
\end{multline}
Here in the second equality
we used the first set of the equations
in the linear system~\eqref{eq:SLE-new}.
Substituting this expression
into~\eqref{eq:residue-with-S-2-term-2-preliminary}
and using one more identity to combine the terms
with $\mathbf{y}_k^\intercal \sigma^y \mathbf{y}_j$,
\begin{equation}
    \big( P_{jk}^{++} \big)_{11} \big( Q_{jj}^{++} \big)_{21}
    + \big( P_{kj}^{++} \big)_{21} \big( Q_{jj}^{++} \big)_{11}
    = \big( Q_{kj}^{++} \big)_{21},
\end{equation}
we obtain
\begin{multline}
    \eval{\tr\big\{
        \rmS_2 (s_j^+)
        \partial_\lambda
        \big(
            \Pi (\lambda)
            \sigma^z
            \Pi^{-1} (\lambda)
        \big)
    \big\}}_{\lambda = s_j^+}
    d_\beta (s_j^+)
    \\
    =
    2 \sigma_j^+
    d_\beta (s_j^+)
    \Big[
        \rmi \big( Q_{jj}^{++} \big)_{11}
        \mathbf{w}_j^\intercal
        \sigma^y
        - \big( Q_{jj}^{++} \big)_{21}
        (1, 0)
        \Pi^{-1} (s_j^+)
    \Big]
    \mathbf{y}_j
    \\
    +
    2 \rmi
    \sum\limits_{\substack{k = 1\\k \neq j}}^{n^+}
    \frac{
        \sigma_j^+ \sigma_k^+
        \big( Q_{kj}^{++} \big)_{21} }{
        s_j^+ - s_k^+ }
    d_\beta (s_j^+)
    \mathbf{y}_k^\intercal \sigma^y \mathbf{y}_j
    +
    2 \rmi
    \sum\limits_{k = 1}^{n^-}
    \frac{
        \sigma_j^+ \sigma_k^-
        \big( Q_{kj}^{-+} \big)_{11} }{
        s_j^+ - s_k^- }
        d_\beta (s_j^+)
    \mathbf{x}_k^\intercal \sigma^y \mathbf{y}_j.
\end{multline}
Here we used again
that $\mathbf{y}_j^\intercal \sigma^y \mathbf{y}_j = 0$.
Finally, the first term on the right-hand side can be written as
\begin{multline}
    \rmi \big( Q_{jj}^{++} \big)_{11}
    \mathbf{w}_j^\intercal
    \sigma^y
    - \big( Q_{jj}^{++} \big)_{21}
    (1, 0)
    \Pi^{-1} (s_j^+)
    \\
    =
    \big[
        \Pi_{22} (s_j^+)
        \Pi_{11}' (s_j^+)
        - \Pi_{12} (s_j^+)
        \Pi_{21}' (s_j^+)
    \big]
    \cdot
    \big(- \Pi_{21} (s_j^+), \Pi_{11} (s_j^+) \big)
    \\
    - \big[
        \Pi_{11} (s_j^+)
        \Pi_{21}' (s_j^+)
        - \Pi_{21} (s_j^+)
        \Pi_{11}' (s_j^+)
    \big]
    \cdot
    \big(\Pi_{22} (s_j^+), - \Pi_{12} (s_j^+) \big)
    \\
    = 
    \big(- \Pi_{21}' (s_j^+), \Pi_{11}' (s_j^+) \big)
    = \rmi \big( \mathbf{w}_j' \big)^\intercal \sigma^y.
\end{multline}
and therefore
\begin{multline}
\label{eq:residue-with-S-2-term-2}
    \eval{\tr\big\{
        \rmS_2 (s_j^+)
        \partial_\lambda
        \big(
            \Pi (\lambda)
            \sigma^z
            \Pi^{-1} (\lambda)
        \big)
    \big\}}_{\lambda = s_j^+}
    d_\beta (s_j^+)
    =
    2 \rmi \sigma_j^+
    d_\beta (s_j^+)
    \big( \mathbf{w}_j' \big)^\intercal \sigma^y \mathbf{y}_j
    \\
    + 2 \rmi
    \sum\limits_{\substack{k = 1\\k \neq j}}^{n^+}
    \frac{
        \sigma_j^+ \sigma_k^+
        \big( Q_{kj}^{++} \big)_{21} }{
        s_j^+ - s_k^+ }
    d_\beta (s_j^+)
    \mathbf{y}_k^\intercal \sigma^y \mathbf{y}_j
    + 2 \rmi
    \sum\limits_{k = 1}^{n^-}
    \frac{
        \sigma_j^+ \sigma_k^-
        \big( Q_{kj}^{-+} \big)_{11} }{
        s_j^+ - s_k^- }
        d_\beta (s_j^+)
    \mathbf{x}_k^\intercal \sigma^y \mathbf{y}_j.
\end{multline}
Next we evaluate the contribution of the first order poles.

\subsubsection{Contribution of the first order poles}
It follows from~\eqref{eq:S-inv-S-prime-app}
that the matrix $\rmS_1 (\lambda)$
in~\eqref{eq:S-inv-S-prime-Laurent}
is given by
\begin{multline}
    \rmS_1 (\lambda)
    = - \sum\limits_{\substack{k = 1\\ k \neq j}}^{n^+}
    \frac{ \sigma_j^+ \sigma_k^+ }{ (\lambda - s_k^+)^2 }
    \Pi (s_j^+)
    \sigma^y
    (0, 1)^\intercal \mathbf{y}_j^\intercal
    \sigma^y
    \mathbf{y}_k (0, 1) \Pi^{-1} (s_k^+)
    \\
    - \sum\limits_{k = 1}^{n^-}
    \frac{ \sigma_j^+ \sigma_k^- }{ (s_j^+ - s_k^-)^2 }
    \Pi (s_j^+)
    \sigma^y
    (0, 1)^\intercal \mathbf{y}_j^\intercal
    \sigma^y
    \mathbf{x}_k (1, 0) \Pi^{-1} (s_k^-).
\end{multline}
The corresponding residue at $s_j^+$ is
\begin{multline}
    \res\limits_{\lambda = s_j^+}
    \Bigg(
        \frac{ d_\beta (\lambda) }{\lambda - s_j^+}
        \tr\big\{
            \rmS_1 (\lambda)
            \Pi (\lambda)
            \sigma^z
            \Pi^{-1} (\lambda)
        \big\}
    \Bigg)
    =
    \tr\big\{
        \rmS_1 (s_j^+)
        \Pi (s_j^+)
        \sigma^z
        \Pi^{-1} (s_j^+)
    \big\}
    d_\beta (s_j^+)
    \\
    =
    - \sum\limits_{\substack{k = 1\\ k \neq j}}^{n^+}
    \frac{ \sigma_j^+ \sigma_k^+ }{ (s_j^+ - s_k^+)^2 }
    \tr\big\{
        \sigma^y
        (0, 1)^\intercal \mathbf{y}_j^\intercal
        \sigma^y
        \mathbf{y}_k (0, 1) \Pi^{-1} (s_k^+)
        \Pi (s_j^+)
        \sigma^z
    \big\}
    d_\beta (s_j^+)
    \\
    - \sum\limits_{k = 1}^{n^-}
    \frac{ \sigma_j^+ \sigma_k^- }{ (s_j^+ - s_k^-)^2 }
    \tr\big\{
        \sigma^y
        (0, 1)^\intercal \mathbf{y}_j^\intercal
        \sigma^y
        \mathbf{x}_k (1, 0) \Pi^{-1} (s_k^-)
        \Pi (s_j^+)
        \sigma^z
    \big\}
    d_\beta (s_j^+).
\end{multline}
The traces under the sums can be written in terms of the coefficients $P$,
\begin{equation}
\begin{aligned}
    \tr\big\{
        \sigma^y
        (0, 1)^\intercal \mathbf{y}_j^\intercal
        \sigma^y
        \mathbf{y}_k (0, 1) \Pi^{-1} (s_k^+)
        \Pi (s_j^+)
        \sigma^z
    \big\}
    & = - \rmi
    \big( P_{kj}^{++} \big)_{21}
    \mathbf{y}_j^\intercal \sigma^y \mathbf{y}_k,
    \\
    \tr\big\{
        \sigma^y
        (0, 1)^\intercal \mathbf{y}_j^\intercal
        \sigma^y
        \mathbf{x}_k (1, 0) \Pi^{-1} (s_k^-)
        \Pi (s_j^+)
        \sigma^z
    \big\}
    & = - \rmi
    \big( P_{kj}^{-+} \big)_{11}
    \mathbf{y}_j^\intercal \sigma^y \mathbf{x}_k.
\end{aligned}
\end{equation}
Then
\begin{multline}
\label{eq:residue-with-S-1}
    \res\limits_{\lambda = s_j^+}
    \Bigg(
        \frac{ d_\beta (\lambda) }{\lambda - s_j^+}
        \tr\big\{
            \rmS_1 (\lambda)
            \Pi (\lambda)
            \sigma^z
            \Pi^{-1} (\lambda)
        \big\}
    \Bigg)
    \\
    =
    \rmi \sum\limits_{\substack{k = 1\\ k \neq j}}^{n^+}
    \frac{
        \sigma_j^+ \sigma_k^+
        \big( P_{kj}^{++} \big)_{21} }{
        (s_j^+ - s_k^+)^2 }
    d_\beta (s_j^+)
    \mathbf{y}_j^\intercal \sigma^y \mathbf{y}_k
    + \rmi
    \sum\limits_{k = 1}^{n^-}
    \frac{
        \sigma_j^+ \sigma_k^-
        \big( P_{kj}^{-+} \big)_{11} }{
        (s_j^+ - s_k^-)^2 }
    d_\beta (s_j^+)
    \mathbf{y}_j^\intercal \sigma^y \mathbf{x}_k.
\end{multline}
We note that this expression coincides
with the expression~\eqref{eq:residue-with-S-2-term-1}.

\subsubsection{Complete contribution of the residue}
Now we combine the contributions of the poles,
see expressions~\eqref{eq:residue-with-S-2-term-1},
\eqref{eq:residue-with-S-2-term-2},
and~\eqref{eq:residue-with-S-1}.
Then we obtain
\begin{multline}
\label{eq:residue-at-lambda-j-plus}
    \res\limits_{\lambda = s_j^+}
    \Big(
        \tr\big\{
            \rmS' (\lambda)
            \Pi (\lambda)
            \sigma^z
            \Pi^{-1} (\lambda)
            \rmS^{-1} (\lambda)
        \big\}
        d_\beta (\lambda)
    \Big)
    =
    2 \rmi \sigma_j^+
    d_\beta (s_j^+)
    \big( \mathbf{w}_j' \big)^\intercal \sigma^y \mathbf{y}_j
    \\
    - 2 \rmi
    \sum\limits_{\substack{k = 1\\k \neq j}}^{n^+}
    \frac{
        \sigma_j^+ \sigma_k^+
        \big( Q_{kj}^{++} \big)_{21} }{
        s_j^+ - s_k^+ }
    d_\beta (s_j^+)
    \mathbf{y}_j^\intercal \sigma^y \mathbf{y}_k
    - 2 \rmi
    \sum\limits_{k = 1}^{n^-}
    \frac{
        \sigma_j^+ \sigma_k^-
        \big( Q_{kj}^{-+} \big)_{11} }{
        s_j^+ - s_k^- }
        d_\beta (s_j^+)
    \mathbf{y}_j^\intercal \sigma^y \mathbf{x}_k
    \\
    + 2 \rmi \sum\limits_{\substack{k = 1\\ k \neq j}}^{n^+}
    \frac{
        \sigma_j^+ \sigma_k^+
        \big( P_{kj}^{++} \big)_{21} }{
        (s_j^+ - s_k^+)^2 }
    d_\beta (s_j^+)
    \mathbf{y}_j^\intercal \sigma^y \mathbf{y}_k
    + 2 \rmi
    \sum\limits_{k = 1}^{n^-}
    \frac{
        \sigma_j^+ \sigma_k^-
        \big( P_{kj}^{-+} \big)_{11} }{
        (s_j^+ - s_k^-)^2 }
    d_\beta (s_j^+)
    \mathbf{y}_j^\intercal \sigma^y \mathbf{x}_k.
\end{multline}

The evaluation of the residue at $\lambda = s_j^-$ is similar
and leads to
\begin{multline}
\label{eq:residue-at-lambda-j-minus}
    \res\limits_{\lambda = s_j^-}
    \Big(
        \tr\big\{
            \rmS' (\lambda)
            \Pi (\lambda)
            \sigma^z
            \Pi^{-1} (\lambda)
            \rmS^{-1} (\lambda)
        \big\}
        d_\beta (\lambda)
    \Big)
    =
    2 \rmi \sigma_j^-
    d_\beta (s_j^-)
    \big( \mathbf{v}_j' \big)^\intercal \sigma^y \mathbf{x}_j
    \\
    - 2 \rmi
    \sum\limits_{k = 1}^{n^+}
    \frac{
        \sigma_j^- \sigma_k^+
        \big( Q_{kj}^{+-} \big)_{22} }{
        s_j^- - s_k^+ }
    d_\beta (s_j^-)
    \mathbf{x}_j^\intercal \sigma^y \mathbf{y}_k
    - 2 \rmi
    \sum\limits_{\substack{k = 1\\k \neq j}}^{n^-}
    \frac{
        \sigma_j^- \sigma_k^-
        \big( Q_{kj}^{--} \big)_{12} }{
        s_j^- - s_k^- }
        d_\beta (s_j^-)
    \mathbf{x}_j^\intercal \sigma^y \mathbf{x}_k
    \\
    + 2 \rmi
    \sum\limits_{k = 1}^{n^+}
    \frac{
        \sigma_j^- \sigma_k^+
        \big( P_{kj}^{+-} \big)_{22} }{
        (s_j^- - s_k^+)^2 }
    d_\beta (s_j^-)
    \mathbf{x}_j^\intercal \sigma^y \mathbf{y}_k
    + 2 \rmi
    \sum\limits_{\substack{k = 1\\k \neq j}}^{n^-}
    \frac{
        \sigma_j^- \sigma_k^-
        \big( P_{kj}^{--} \big)_{12} }{
        (s_j^- - s_k^-)^2 }
        d_\beta (s_j^-)
    \mathbf{x}_j^\intercal \sigma^y \mathbf{x}_k.
\end{multline}
Summing these two expressions over $j = 1, \dots, n^+$,
and $j = 1, \dots, n^-$, respectively, we obtain the 
following expression for the contribution of the poles:
\begin{multline}
    \sum\limits_{\mu \in \Scal}
    \res\limits_{\lambda = \mu}
    \Big(
        \tr\big\{
            \rmS' (\lambda)
            \Pi (\lambda)
            \sigma^z
            \Pi^{-1} (\lambda)
            \rmS^{-1} (\lambda)
        \big\}
        d_\beta (\lambda)
    \Big)
    \\
    = 2 \rmi
    \sum\limits_{j = 1}^{n^+}\sigma_j^+
    d_\beta (s_j^+)
    \big( \mathbf{w}_j' \big)^\intercal \sigma^y \mathbf{y}_j
    + 2 \rmi
    \sum\limits_{j = 1}^{n^-}
    \sigma_j^-
    d_\beta (s_j^-)
    \big( \mathbf{v}_j' \big)^\intercal \sigma^y \mathbf{x}_j
    \\
    - 2 \rmi
    \sum\limits_{\substack{j, k = 1\\j \neq k}}^{n^+}
    \frac{
        \sigma_j^+ \sigma_k^+
        \big( Q_{kj}^{++} \big)_{21} }{
        s_j^+ - s_k^+ }
    d_\beta (s_j^+)
    \mathbf{y}_j^\intercal \sigma^y \mathbf{y}_k
    - 2 \rmi
    \sum\limits_{\substack{j, k = 1\\j \neq k}}^{n^-}
    \frac{
        \sigma_j^- \sigma_k^-
        \big( Q_{kj}^{--} \big)_{12} }{
        s_j^- - s_k^- }
        d_\beta (s_j^-)
    \mathbf{x}_j^\intercal \sigma^y \mathbf{x}_k
    \\
    - 2 \rmi
    \sum\limits_{j = 1}^{n^+}
    \sum\limits_{k = 1}^{n^-}
    \frac{ \sigma_j^+ \sigma_k^- }{ s_j^+ - s_k^- }
    \Big[
        \big( Q_{kj}^{-+} \big)_{11}
        d_\beta (s_j^+)
        + \big( Q_{jk}^{+-} \big)_{22}
        d_\beta (s_k^-)
    \Big]
    \mathbf{y}_j^\intercal \sigma^y \mathbf{x}_k
    \\
    + 2 \rmi \sum\limits_{\substack{j, k = 1\\j \neq k}}^{n^+}
    \frac{
        \sigma_j^+ \sigma_k^+
        \big( P_{kj}^{++} \big)_{21} }{
        (s_j^+ - s_k^+)^2 }
    d_\beta (s_j^+)
    \mathbf{y}_j^\intercal \sigma^y \mathbf{y}_k
    + 2 \rmi
    \sum\limits_{\substack{j, k = 1\\j \neq k}}^{n^-}
    \frac{
        \sigma_j^- \sigma_k^-
        \big( P_{kj}^{--} \big)_{12} }{
        (s_j^- - s_k^-)^2 }
        d_\beta (s_j^-)
    \mathbf{x}_j^\intercal \sigma^y \mathbf{x}_k
    \\
    + 2 \rmi
    \sum\limits_{j = 1}^{n^+}
    \sum\limits_{k = 1}^{n^-}
    \frac{
        \sigma_j^+ \sigma_k^- \big( P_{jk}^{+-} \big)_{22} }{
        (s_j^+ - s_k^-)^2 }
    \big[
        d_\beta (s_j^+)
        - d_\beta (s_k^-)
    \big]
    \mathbf{y}_j^\intercal \sigma^y \mathbf{x}_k.
\end{multline}
Here we have combined similar terms
and interchanged summation indices in some
of the sums, which can be tracked back
thanks to the factors $d_\beta (s_j^\pm)$
and $d_\beta (s_k^\pm)$.
This expression is exactly the same
as in Proposition~\ref{prop:contribution-of-poles},
see expression~\eqref{eq:prop:contribution-of-poles},
which completes the proof of the proposition.

\section{Proof of Proposition~\ref{prop:Slavnov-formula-for-general-momentum}}
\label{app:combinatorial-identity}

In this appendix,
we prove Proposition~\ref{prop:Slavnov-formula-for-general-momentum}.

First, we introduce matrices
$\rmA^+ \in \mathbb{C}^{n^- \times n^+}$
and $\rmA^- \in \mathbb{C}^{n^+ \times n^-}$ as
\begin{equation}
\label{eq:matrices-A-plus-and-minus-app}
    A_{jk}^- = \frac{h_k^-}{s_j^+ - s_k^-},
    \qquad
    A_{kj}^+ = \frac{h_j^+}{s_k^- - s_j^+},
    \qquad
    j = 1, \dots, n^+, \quad k = 1, \dots, n^-,
\end{equation}
and matrices $\rmA$ and $\widetilde{\rmA}$ as
\begin{equation}
    \rmA = \rmA^- \rmA^+
    \in \mathbb{C}^{n^+ \times n^+},
    \qquad
    \widetilde{\rmA} = \rmA^+ \rmA^-
    \in \mathbb{C}^{n^- \times n^-}.
\end{equation}
Next, we introduce the components
\begin{equation}
\begin{aligned}
    c_j^+
    = \big( \mathbf{y}_j \big)_{1},
    \qquad
    c_j^-
    = \big( \mathbf{x}_j \big)_{2},
    \\
    d_j^+
    = \big( \mathbf{x}_j \big)_{1},
    \qquad
    d_j^-
    = \big( \mathbf{y}_j \big)_{2}
\end{aligned}
\end{equation}
of the vectors $\mathbf{x}_j$ and $\mathbf{y}_j$
which are solutions of the linear system of equations
\begin{equation}
\label{eq:SLE-static-app}
\left\{
\begin{aligned}
    \mathbf{x}_j
    - \sum\limits_{k = 1}^{n^+}
    \frac{ h_k^+}{ s_j^- - s_k^+ }
    \mathbf{y}_k
    &= \begin{pmatrix} 0 \\ 1 \end{pmatrix},
    \\
    \mathbf{y}_j
    - \sum\limits_{k = 1}^{n^-}
    \frac{ h_k^- }{ s_j^+ - s_k^- }
    \mathbf{x}_k
    &= \begin{pmatrix} 1 \\ 0 \end{pmatrix}.
\end{aligned}
\right.
\end{equation}

From the linear system~\eqref{eq:SLE-static-app}
for the vectors $\mathbf{x}_j$ and $\mathbf{y}_j$,
we get the following system
for the vectors $\mathbf{c}^+ = (c_1^+, \dots, c_{n^+}^+)^\intercal$
and $\mathbf{c}_j^- = (c_1^-, \dots, c_{n^-}^-)^\intercal$,
\begin{equation}
    \Big[
        \mathrm{I}_{n^+} - \rmA
    \Big]
    \mathbf{c}^+ = \mathbf{e}_{n^+},
    \qquad
    \Big[
        \mathrm{I}_{n^-} - \widetilde{\rmA}
    \Big]
    \mathbf{c}^- = \mathbf{e}_{n^-},
\end{equation}
where $\mathbf{e}_{n^\pm}$ denotes the vector of length $n^\pm$
with all components being equal to one.
Since $\det \big( \rmI_{n^+} - \rmA \big)
    = \det \big(\rmI_{n^-} - \widetilde{\rmA} \big) \neq 0$,
$\mathbf{c}^\pm$ are well defined.
The vectors $\mathbf{d}^+ = (d_1^+, \dots, d_{n^-}^+)^\intercal$
and $\mathbf{d}^- = (d_1^-, \dots, d_{n^+}^-)^\intercal$ are then given by
\begin{equation}
    \mathbf{d}^+ = \rmA^+ \mathbf{c}^+,
    \qquad
    \mathbf{d}^- = \rmA^- \mathbf{c}^-.
\end{equation}

Finally, we introduce
\begin{equation}
\label{eq:contribution-of-poles-static-case-appendix}
    P
    = \rmi \sum\limits_{j = 1}^{n^+}
    \sum\limits_{k = 1}^{n^-}
    \partial_x \big[ A_{jk}^- A_{kj}^+ \big]
    \cdot
    \mathbf{x}_k^\intercal \sigma^y \mathbf{y}_j
\end{equation}
which in our new notations takes the form
\begin{equation}
    P
    = - \sum\limits_{j = 1}^{n^+}
    \sum\limits_{k = 1}^{n^-}
    \partial_x \big[ A_{jk}^- A_{kj}^+ \big]
    \cdot 
    \big[
        c_j^+ c_k^- - d_j^- d_k^+
    \big].
\end{equation}
The aim of this appendix is to show that it can be expressed as
\begin{equation}
    P = \partial_x \ln \det\big[ \mathrm{I}_{n^+} - \rmA \big].
\end{equation}

We first assume that $\lVert \rmA \rVert < 1$.
In this case,
one obtains the closed convergent expressions
for the vectors $\mathbf{c}^+$ and $\mathbf{c}^-$
by using the Neumann series representations
for $\big( \mathrm{I}_{n^+} - \rmA \big)^{-1}$
and $\big( \mathrm{I}_{n^-} - \widetilde{\rmA} \big)^{-1}$,
\begin{equation}
\begin{aligned}
    \mathbf{c}^+
    & = \big( \mathrm{I}_{n^+} - \rmA \big)^{-1} \mathbf{e}_{n^+}
    = \sum\limits_{n \geq 0} \rmA^n \mathbf{e}_{n^+},
    \\
    \mathbf{c}^-
    & = \big( \mathrm{I}_{n^-} - \widetilde{\rmA} \big)^{-1} \mathbf{e}_{n^-}
    = \sum\limits_{n \geq 0} \widetilde{\rmA}^n \mathbf{e}_{n^-}.
\end{aligned}
\end{equation}
Hence, the vector elements of $\mathbf{c}^+$ and $\mathbf{c}^-$
can be written as
\begin{equation}
\begin{aligned}
    c_{j_0}^+
    & = \sum\limits_{n \geq 0}
    \sum\limits_{\mathbf{j}_n \in (\Jcal^+)^n}
    \sum\limits_{\mathbf{k}_n \in (\Jcal^-)^n}
    \prod\limits_{\ell = 1}^{n}
    \big[
        A_{j_{\ell - 1} k_{\ell}}^-
        A_{k_{\ell} j_{\ell}}^+
    \big],
    \\
    c_{k_0}^-
    & = \sum\limits_{n \geq 0}
    \sum\limits_{\mathbf{j}_n \in (\Jcal^+)^n}
    \sum\limits_{\mathbf{k}_n \in (\Jcal^-)^n}
    \prod\limits_{\ell = 1}^{n}
    \big[
        A_{k_{\ell - 1} j_{\ell}}^+
        A_{j_{\ell} k_{\ell}}^-
    \big],
\end{aligned}
\end{equation}
and
\begin{equation}
\begin{aligned}
    d_{k_0}^+
    & = \sum\limits_{n \geq 0}
    \sum\limits_{\mathbf{j}_{n + 1} \in (\Jcal^+)^{n + 1}}
    \sum\limits_{\mathbf{k}_n \in (\Jcal^-)^n}
    \prod\limits_{\ell = 1}^{n + 1}
    A_{k_{\ell - 1} j_{\ell}}^+
    \prod\limits_{\ell = 1}^{n}
    A_{j_{\ell} k_{\ell}}^-,
    \\
    d_{j_0}^-
    & = \sum\limits_{n \geq 0}
    \sum\limits_{\mathbf{j}_n \in (\Jcal^+)^n}
    \sum\limits_{\mathbf{k}_{n + 1} \in (\Jcal^-)^{n + 1}}
    \prod\limits_{\ell = 1}^{n + 1}
    A_{j_{\ell - 1} k_{\ell}}^-
    \prod\limits_{\ell = 1}^{n}
    A_{k_{\ell} j_{\ell}}^+,
\end{aligned}
\end{equation}
where we denote $\Jcal^\pm = \{1, 2, \dots, n^\pm\}$
and $\mathbf{j}_n = (j_1, j_2, \dots, j_n)^\intercal$,
$\mathbf{k}_n = (k_1, k_2, \dots, k_n)^\intercal$.

First we evaluate $c_{j_0}^+ c_{k_0}^- - d_{j_0}^- d_{k_0}^+$.
The product $c_{j_0}^+ c_{k_0}^-$ can be written as
\begin{multline}
    c_{j_0}^+ c_{k_0}^-
    = 1
    + \sum\limits_{p \geq 0}
    \sum\limits_{\mathbf{j}_{p + 1} \in (\Jcal^+)^{p + 1}}
    \sum\limits_{\mathbf{k}_{p + 1} \in (\Jcal^-)^{p + 1}}
    \Bigg\{
        \prod\limits_{\ell = 1}^{p + 1}
        \big[
            A_{k_{\ell} j_{\ell}}^+
            A_{j_{\ell - 1} k_{\ell}}^-
        \big]
        \\
        +
        \sum\limits_{m = 0}^p
        A_{k_0 j_{m + 1}}^+
        \prod\limits_{\ell = 1}^{m}
        A_{k_{\ell} j_{\ell}}^+
        \prod\limits_{\ell = m + 1}^{p}
        A_{k_{\ell} j_{\ell + 1}}^+
        \prod\limits_{\ell = 1}^{m}
        A_{j_{\ell - 1} k_{\ell}}^-
        \prod\limits_{\ell = m + 1}^{p + 1}
        A_{j_{\ell} k_{\ell}}^-
    \Bigg\}.
\end{multline}
where we have collected the coefficients in front of the terms
which symbolically can be represented as $(\rmA^+ \rmA^-)^{p + 1}$.
The sums over $j_1, \dots, j_m$ and $k_1, \dots, k_m$
correspond to the matrix elements originating from $c_{k_0}^+$
and the sums over $j_{m + 1}, \dots, j_{p + 1}$
and $k_{m + 1}, \dots, k_{p + 1}$
correspond to the matrix elements from $c_{j_0}^-$.
The first term under the sum over $p$ `represents'
the term with $m = p + 1$ in the sum over $m$.

Similarly, the product $d_{j_0}^- d_{k_0}^+$
can be written as
\begin{equation}
    d_{j_0}^- d_{k_0}^+
    = \sum\limits_{p \geq 0}
    \sum\limits_{m = 0}^p
    \sum\limits_{\mathbf{j}_{p + 1} \in (\Jcal^+)^{p + 1}}
    \sum\limits_{\mathbf{k}_{p + 1} \in (\Jcal^-)^{p + 1}}
    A_{k_0 j_{m + 1}}^+
    \prod\limits_{\substack{\ell = 1 \\ \ell \neq m + 1}}^{p + 1}
    A_{k_{\ell} j_{\ell}}^+
    \prod\limits_{\ell = 1}^{p + 1}
    A_{j_{\ell - 1} k_{\ell}}^-.
\end{equation}
Here we collected the coefficients
in front of the terms $(\rmA^+ \rmA^-)^{p + 1}$,
where the terms with indices $\mathbf{j}_m$ and $\mathbf{k}_{m + 1}$
propagate from $d_{k_0}^-$ and those
with indices $j_{m + 1}, \dots, j_{p + 1}$ and $k_{m + 2}, \dots, k_{p + 1}$
from $d_{j_0}^+$, respectively.

Combining these two expressions together,
we obtain
\begin{equation}
    P
    = \sum\limits_{j_0 = 1}^{n^+}
    \sum\limits_{k_0 = 1}^{n^-}
    \partial_x \big[ A_{j_0 k_0}^- A_{k_0 j_0}^+ \big]
    \Big[
        1
        + \sum\limits_{p \geq 0} K_{j_0 k_0}^{(p)}
    \Big],
\end{equation}
where the coefficients $K_{j_0 k_0}^{(p)}$ are given by
\begin{multline}
    K_{j_0 k_0}^{(p)}
    =
    \sum\limits_{\mathbf{j}_{p + 1} \in (\Jcal^+)^{p + 1}}
    \sum\limits_{\mathbf{k}_{p + 1} \in (\Jcal^-)^{p + 1}}
    \Bigg\{
        \prod\limits_{\ell = 1}^{p + 1}
        \big[
            A_{k_{\ell} j_{\ell}}^+
            A_{j_{\ell - 1} k_{\ell}}^-
        \big]
        \\
        +
        \sum\limits_{m = 0}^p
        \Bigg[
            A_{k_0 j_{m + 1}}^+
            \prod\limits_{\ell = 1}^{m}
            A_{k_{\ell} j_{\ell}}^+
            \prod\limits_{\ell = m + 1}^{p}
            A_{k_{\ell} j_{\ell + 1}}^+
            \prod\limits_{\ell = 1}^{m}
            A_{j_{\ell - 1} k_{\ell}}^-
            \prod\limits_{\ell = m + 1}^{p + 1}
            A_{j_{\ell} k_{\ell}}^-
            \\
            - A_{k_0 j_{m + 1}}^+
            \prod\limits_{\substack{\ell = 1 \\
                \ell \neq m + 1}}^{p + 1}
            A_{k_{\ell} j_{\ell}}^+
            \prod\limits_{\ell = 1}^{p + 1}
            A_{j_{\ell - 1} k_{\ell}}^-
        \Bigg]
    \Bigg\}.
\end{multline}

From now on,
it will be useful to introduce certain shorthand notations:
for a given $j_\ell = 1, \dots, n^+$ and $k_\ell = 1, \dots, n^-$,
we denote the corresponding poles as
\begin{equation}
    y_\ell = s_{j_\ell}^+,
    \qquad
    z_\ell = s_{k_\ell}^-.
\end{equation}
We note that every matrix element $A_{jk}^\pm \propto h_k^\pm$,
see definition~\eqref{eq:matrices-A-plus-and-minus-app},
therefore
\begin{equation}
\label{eq:K-p-coefficient}
    K_{j_0 k_0}^{(p)}
    =
    \sum\limits_{\mathbf{j}_{p + 1} \in (\Jcal^+)^{p + 1}}
    \sum\limits_{\mathbf{k}_{p + 1} \in (\Jcal^-)^{p + 1}}
    \prod\limits_{\ell = 1}^{p + 1}
    \big[
        h_{j_\ell}^+
        h_{k_\ell}^-
    \big]
    \cdot
    R ( \mathbf{y}_{p + 2}, \mathbf{z}_{p + 2}),
\end{equation}
where $\mathbf{y}_{p + 2} = (y_0, \dots, y_{p + 1})^\intercal$
and $\mathbf{z}_{p + 2} = (z_0, \dots, z_{p + 1})^\intercal$
and $R ( \mathbf{y}_{p + 2}, \mathbf{z}_{p + 2})$ is given by
\begin{multline}
    R ( \mathbf{y}_{p + 2}, \mathbf{z}_{p + 2})
    =
    \prod\limits_{\ell = 1}^{p + 1}
    \frac{1}{ (y_{\ell - 1} - z_\ell) (z_\ell - y_\ell) }
    \\
    +
    \sum\limits_{m = 0}^p
    \Bigg[
        \frac{1}{ z_0 - y_{m + 1} }
        \prod\limits_{\ell = 1}^{m}
        \frac{1}{ z_\ell - y_\ell}
        \prod\limits_{\ell = m + 1}^{p}
        \frac{1}{ z_\ell - y_{\ell + 1} }
        \prod\limits_{\ell = 1}^{m}
        \frac{1}{ y_{\ell - 1} - z_\ell}
        \prod\limits_{\ell = m + 1}^{p + 1}
        \frac{1}{ y_\ell - z_\ell}
        \\
        - \frac{1}{ z_0 - y_{m + 1} }
        \prod\limits_{\substack{\ell = 1 \\ \ell \neq m + 1}}^{p + 1}
        \frac{1}{ z_\ell - y_\ell}
        \prod\limits_{\ell = 1}^{p + 1}
        \frac{1}{ y_{\ell - 1} - z_\ell}
    \Bigg].
\end{multline}
The products of variables $h^+$
and $h^-$ in~\eqref{eq:K-p-coefficient}
are symmetric with respect to permutations of the indices $j$ and $k$,
therefore we can symmetrise the coefficient $R$
with respect to permutations which leads to the following expression
\begin{equation}
\label{eq:R-symmetrised}
    R ( \mathbf{y}_{p + 2}, \mathbf{z}_{p + 2})
    = \frac{1}{ \big[ (p + 1)! \big]^2 }
    \sum\limits_{\sigma, \tau \in \mathfrak{S}_{p + 1}}
    \prod\limits_{\ell = 1}^{p + 1}
    \frac{1}{ z_{\tau (\ell)} - y_{\sigma (\ell)} }
    V ( \mathbf{y}_{p + 2}^\sigma, \mathbf{z}_{p + 2}^\tau),
\end{equation}
where $\mathbf{y}_{p + 2}^\sigma
    = (y_{\sigma(0)}, \dots, y_{\sigma(p + 1)})^\intercal$
and $\mathbf{z}_{p + 2}^\tau
    = (z_{\tau(0)}, \dots, z_{\tau(p + 1)})^\intercal$,
by which we mean that $\sigma (0) = 0 = \tau (0)$,
and $V ( \mathbf{y}_{p + 2}^\sigma, \mathbf{z}_{p + 2}^\tau)$ is given by
\begin{multline}
    V ( \mathbf{y}_{p + 2}, \mathbf{z}_{p + 2} )
    = \prod\limits_{\ell = 1}^{p + 1}
    \frac{1}{ y_{\ell - 1} - z_\ell}
    \\
    -
    \sum\limits_{m = 0}^p
    \Bigg[
        \frac{1}{ z_0 - y_{m + 1} }
        \prod\limits_{\ell = 1}^{m}
        \frac{1}{ y_{\ell - 1} - z_\ell}
        \prod\limits_{\ell = m + 1}^{p}
        \frac{1}{ y_{\ell + 1} - z_\ell }
        +
        \frac{z_{m + 1} - y_{m + 1}}{ z_0 - y_{m + 1} }
        \prod\limits_{\ell = 1}^{p + 1}
        \frac{1}{ y_{\ell - 1} - z_\ell}
    \Bigg].
\end{multline}

We split $V$ into two parts:
\begin{equation}
    V ( \mathbf{y}_{p + 2}, \mathbf{z}_{p + 2} )
    = V_1 ( \mathbf{y}_{p + 2}, \mathbf{z}_{p + 2} )
    + V_2 ( \mathbf{y}_{p + 2}, \mathbf{z}_{p + 2} ),
\end{equation}
where
\begin{equation}
    V_1 ( \mathbf{y}_{p + 2}, \mathbf{z}_{p + 2})
    = \prod\limits_{\ell = 0}^{p}
    \frac{1}{ y_{\ell} - z_{\ell + 1} }
    \cdot
    \Bigg[
        1
        - \sum\limits_{m = 0}^p
        \frac{ z_{m + 1} - y_{m + 1} }{ z_0 - y_{m + 1} }
    \Bigg].
\end{equation}
and
\begin{equation}
    V_2 ( \mathbf{y}_{p + 2}, \mathbf{z}_{p + 2} )
    = - \sum\limits_{m = 0}^p
    \frac{1}{ z_0 - y_{m + 1} }
    \prod\limits_{\ell = 0}^{m - 1}
    \frac{1}{ y_{\ell} - z_{\ell + 1}}
    \prod\limits_{\ell = m + 1}^{p}
    \frac{1}{ y_{\ell + 1} - z_\ell },
\end{equation}

First,
we add and subtract one term in $V_1$ as follows
\begin{multline}
    V_1 ( \mathbf{y}_{p + 2}, \mathbf{z}_{p + 2} )
    =
    \frac{ z_0 - y_{0}}{ z_0 - y_{p + 1} }
    \prod\limits_{\ell = 0}^{p}
    \frac{1}{ y_{\ell} - z_{\ell + 1} }
    \\
    + \prod\limits_{\ell = 0}^{p}
    \frac{1}{ y_{\ell} - z_{\ell + 1} }
    \cdot
    \Bigg[
        1
        - \frac{z_{0} - y_{0}}{ z_0 - y_{p + 1} }
        - \sum\limits_{m = 1}^{p}
        \frac{z_{m} - y_{m}}{ z_0 - y_{m} }
    \Bigg].
\end{multline}
The first term in this expression, which we denote as $V_0$,
\begin{equation}
    V_0 ( \mathbf{y}_{p + 2}, \mathbf{z}_{p + 2} )
    = \frac{y_0 - z_0}{ y_{p + 1} - z_{0}}
    \prod\limits_{\ell = 0}^{p}
    \frac{1}{ y_{\ell} - z_{\ell + 1}},
\end{equation}
is the `wanted' term,
which will reproduce the final result in the end.
In what follows,
we show that the second term $\delta V_1$,
\begin{equation}
    \delta V_1 ( \mathbf{y}_{p + 2}, \mathbf{z}_{p + 2} )
    = \prod\limits_{\ell = 0}^{p}
    \frac{1}{ y_{\ell} - z_{\ell + 1} }
    \cdot
    \Bigg[
        1
        - \frac{z_{0} - y_{0}}{ z_0 - y_{p + 1} }
        - \sum\limits_{m = 1}^{p}
        \frac{z_{m} - y_{m}}{ z_0 - y_{m} }
    \Bigg]
\end{equation}
together with $V_2$ will cancel out
after implementing another auxiliary symmetrisation.

At this stage,
we leave $\delta V_1$ and proceed to recast $V_2$.
We  implement another auxiliary symmetrisation of $V_2$
by performing a right shift in the symmetric group
$(\sigma, \tau) \hookrightarrow (\sigma \circ \pi, \tau \circ \pi)$
and summing over that shift,
\begin{multline}
    \sum\limits_{\sigma, \tau \in \mathfrak{S}_{p + 1}}
    \prod\limits_{\ell = 1}^{p + 1}
    \frac{1}{ z_{\tau (\ell)} - y_{\sigma (\ell)} }
    V_2 ( \mathbf{y}_{p + 2}^\sigma, \mathbf{z}_{p + 2}^\tau)
    \\
    = \frac{1}{ (p + 1)! }
    \sum\limits_{\pi \in \mathfrak{S}_{p + 1}}
    \sum\limits_{\substack{\sigma\circ \pi, \tau \circ \pi \\
        \sigma, \tau \in \mathfrak{S}_{p + 1}}}
    \prod\limits_{\ell = 1}^{p + 1}
    \frac{1}{ z_{\tau ( \pi (\ell) )} - y_{\sigma ( \pi (\ell) )} }
    V_2 ( \mathbf{y}_{p + 2}^{\sigma \circ \pi},
        \mathbf{z}_{p + 2}^{\tau \circ \pi}).
\end{multline}

We leave the summand in $V_2$ at $m = p$ as it is,
while for $m < p$ we implement the right shift
$\pi \hookrightarrow \pi \circ \sigma_m$,
where
\begin{equation}
    \begin{cases}
        \sigma_m (j) = j,
        & \text{for } j = 1, \dots, m,
        \\
        \sigma_m (j) = p + 2 + m - j,
        & \text{for } j = m + 1, \dots, p + 1.
    \end{cases}
\end{equation}
Then we get
\begin{multline}
    V_2 ( \mathbf{y}_{p + 2}^{\pi}, \mathbf{z}_{p + 2}^{\pi})
    \\
    \hookrightarrow- \frac{1}{ (p + 1)! }
    \sum\limits_{\pi \in \mathfrak{S}_{p + 1}}
    \frac{1}{ z_0 - y_{\pi(p + 1)} }
    \prod\limits_{\ell = 0}^{p}
    \frac{1}{ y_{\pi(\ell)} - z_{\pi(\ell + 1)}}
    \sum\limits_{m = 0}^{p}
    \big( y_{\pi (m)} - z_{\pi (m + 1)} \big),
\end{multline}
which allows us to sum over $\pi$ in the original expression for $V_2$ as
\begin{multline}
    \frac{1}{ \big[ (p + 1)! \big]^2 }
    \sum\limits_{\sigma, \tau \in \mathfrak{S}_{p + 1}}
    \prod\limits_{\ell = 1}^{p + 1}
    \frac{1}{ z_{\tau (\ell)} - y_{\sigma (\ell)} }
    V_2 ( \mathbf{y}_{p + 2}^\sigma, \mathbf{z}_{p + 2}^\tau)
    \\
    = \frac{1}{ \big[ (p + 1)! \big]^2 }
    \sum\limits_{\sigma, \tau \in \mathfrak{S}_{p + 1}}
    \prod\limits_{\ell = 1}^{p + 1}
    \frac{1}{ z_{\tau (\ell)} - y_{\sigma (\ell)} }
    \widetilde{V}_2 ( \mathbf{y}_{p + 2}^\sigma, \mathbf{z}_{p + 2}^\tau),
\end{multline}
where
\begin{equation}
    \widetilde{V}_2 ( \mathbf{y}_{p + 2}, \mathbf{z}_{p + 2})
    \\
    = - \frac{1}{ z_0 - y_{p + 1} }
    \prod\limits_{\ell = 0}^{p}
    \frac{1}{ y_{\ell} - z_{\ell + 1}}
    \sum\limits_{m = 0}^{p}
    \big( y_{m} - z_{m + 1} \big).
\end{equation}

Then $\delta V_1$ and $\widetilde{V}_2$ together read
\begin{multline}
    \delta V_1 ( \mathbf{y}_{p + 2}, \mathbf{z}_{p + 2} )
    + \widetilde{V}_2 ( \mathbf{y}_{p + 2}, \mathbf{z}_{p + 2} )
    \\
    = \prod\limits_{\ell = 0}^{p}
    \frac{1}{ y_{\ell} - z_{\ell + 1}}
    \cdot
    \Bigg[
        \sum\limits_{m = 1}^{p + 1}
        \frac{ y_{m} - z_{m} }{ y_{p + 1} - z_{0} }
        - \sum\limits_{m = 1}^{p + 1}
        \frac{ y_{m} - z_{m} }{ y_{m} - z_{0} }
    \Bigg].
\end{multline}
Here we have combined the summations over $m$ in
$\widetilde{V}_2$ and $\delta V_1$ with extra terms
in $\delta V_1$.

Finally, we substitute $V$
back into the expression~\eqref{eq:R-symmetrised} for $R$,
\begin{equation}
    R ( \mathbf{y}_{p + 2}, \mathbf{z}_{p + 2})
    = R_0 ( \mathbf{y}_{p + 2}, \mathbf{z}_{p + 2} )
    + \delta R ( \mathbf{y}_{p + 2}, \mathbf{z}_{p + 2} ),
\end{equation}
in which
\begin{multline}
\label{eq:R-0}
    R_0 ( \mathbf{y}_{p + 2}, \mathbf{z}_{p + 2})
    \\
    = \frac{1}{ \big[ (p + 1)! \big]^2 }
    \sum\limits_{\sigma, \tau \in \mathfrak{S}_{p + 1}}
    \frac{y_0 - z_0}{ y_{\sigma (p + 1)} - z_{0}}
    \prod\limits_{\ell = 1}^{p + 1}
    \frac{1}{
        ( z_{\tau (\ell)} - y_{\sigma (\ell)} )
        ( y_{\sigma (\ell - 1)} - z_{\tau (\ell)} ) }
\end{multline}
and
\begin{multline}
    \delta R ( \mathbf{y}_{p + 2}, \mathbf{z}_{p + 2})
    = \frac{1}{ \big[ (p + 1)! \big]^2 }
    \sum\limits_{\sigma, \tau \in \mathfrak{S}_{p + 1}}
    \prod\limits_{\ell = 1}^{p + 1}
    \frac{1}{
        ( z_{\tau (\ell)} - y_{\sigma (\ell)} )
        ( y_{\sigma (\ell - 1)} - z_{\tau (\ell)} ) }
    \\
    \times
    \Bigg[
        \sum\limits_{m = 1}^{p + 1}
        \frac{ y_{\sigma (m)} - z_{\tau (m)} }{ y_{\sigma (p + 1)} - z_{0} }
        - \sum\limits_{m = 1}^{p + 1}
        \frac{ y_{\sigma (m)} - z_{\tau (m)} }{ y_{\sigma (m)} - z_{0} }
    \Bigg].
\end{multline}

Finally, we show that $\delta R$ is zero.
For convenience, we set
\begin{equation}
    \delta R ( \mathbf{y}_{p + 2}, \mathbf{z}_{p + 2})
    = \frac{ (-1)^{p + 1} }{ \big[ (p + 1)! \big]^2 }
    \sum\limits_{\sigma, \tau \in \mathfrak{S}_{p + 1}}
    \sum\limits_{m = 1}^{p + 1}
    \delta R_m ( \mathbf{y}_{p + 2}^\sigma, \mathbf{z}_{p + 2}^\tau),
\end{equation}
where
\begin{multline}
    \delta R_m ( \mathbf{y}_{p + 2}, \mathbf{z}_{p + 2})
    =
    \Bigg[
        \frac{ y_m - z_m }{ y_{p + 1} - z_{0} }
        - \frac{ y_m - z_m }{ y_m - z_0 }
    \Bigg]
    \prod\limits_{\ell = 1}^{p + 1}
    \prod\limits_{r = \ell - 1}^\ell
    \frac{1}{z_\ell - y_r}
    \\
    =
    \frac{
        (y_m - y_{p + 1}) }{
        (y_{p + 1} - z_{0}) (y_m - z_0) (y_{m - 1} - z_m) }
    \prod\limits_{\ell = 1}^{m - 1}
    \prod\limits_{r = \ell - 1}^\ell
    \frac{1}{z_\ell - y_r}
    \cdot
    \prod\limits_{\ell = m + 1}^{p + 1}
    \prod\limits_{r = \ell - 1}^\ell
    \frac{1}{z_\ell - y_r}
    .
\end{multline}
Now we show that $\delta R_a$ is antisymmetric
with respect to permutation of the form
$(\sigma, \tau) \hookrightarrow
    (\sigma \circ \eta_{m + 1}, \tau \circ \eta_m)$,
where
\begin{equation}
    \begin{cases}
        \eta_m (j) = j,
        & \text{for } j = 1, \dots, m - 1,
        \\
        \eta_m (j) = p + 1 + m - j,
        & \text{for } j = m, \dots, p + 1.
    \end{cases}
\end{equation}
Indeed, under this transformation
\begin{multline}
    \frac{
        y_{\sigma (m)} - y_{\sigma (p + 1)} }{
        (y_{\sigma (p + 1)} - z_{0})
        (y_{\sigma (m)} - z_0)
        (y_{\sigma (m - 1)} - z_{\tau (m)}) }
    \\
    \hookrightarrow
    - \frac{
        y_{\sigma (m)} - y_{\sigma (p + 1)} }{
        (y_{\sigma (p + 1)} - z_{0})
        (y_{\sigma (m)} - z_0)
        (y_{\sigma (m - 1)} - z_{\tau (m)}) },
\end{multline}
and the product of the factors in the denominator in $\delta R_m$ is invariant
\begin{multline}
    \prod\limits_{\ell = 1}^{m - 1}
    \prod\limits_{r = \ell - 1}^\ell
    \frac{1}{z_{\tau (\ell)} - y_{\sigma (r)}}
    \cdot
    \prod\limits_{\ell = m + 1}^{p + 1}
    \prod\limits_{r = \ell - 1}^\ell
    \frac{1}{z_{\tau (\ell)} - y_{\sigma (r)}}
    \\
    \hookrightarrow
    \prod\limits_{\ell = 1}^{m - 1}
    \prod\limits_{r = \ell - 1}^\ell
    \frac{1}{z_{\tau (\ell)} - y_{\sigma (r)}}
    \cdot
    \prod\limits_{\ell = m + 1}^{p + 1}
    \prod\limits_{r = \ell - 1}^\ell
    \frac{1}{z_{\tau (\ell)} - y_{\sigma (r)}}
    .
\end{multline}
Thus, by symmetry,
for every $m$ the double sum over permutations $\sigma$ and $\tau$
of $\delta R_m$ is zero,
and therefore $R ( \mathbf{y}_{p + 2}, \mathbf{z}_{p + 2})$
is completely determined by $R_0 ( \mathbf{y}_{p + 2}, \mathbf{z}_{p + 2})$,
see expression~\eqref{eq:R-0}.

Finally, upon de-symmetrising the sums,
we obtain
\begin{multline}
\label{eq:K-p-coefficient-final}
    K_{j_0 k_0}^{(p)}
    =
    \sum\limits_{\mathbf{j}_{p + 1} \in (\Jcal^+)^{p + 1}}
    \sum\limits_{\mathbf{k}_{p + 1} \in (\Jcal^-)^{p + 1}}
    \frac{
        s_{j_0}^+ - s_{k_0}^- }{
        s_{j_{p + 1}}^+ - s_{k_0}^- }
    \prod\limits_{\ell = 1}^{p + 1}
    \frac{
        h_{j_\ell}^+ }{
        s_{k_\ell}^- - s_{j_\ell}^+ }
    \frac{ h_{k_\ell}^- }{
        s_{j_{\ell - 1}}^+ - s_{k_\ell}^-}
    \\
    = 
    \sum\limits_{j_{p + 1} = 1}^{n^+}
    \frac{
        s_{j_0}^+ - s_{k_0}^- }{
        s_{j_{p + 1}}^+ - s_{k_0}^- }
    \big[ \rmA^{p + 1} \big]_{j_0 j_{p + 1}}.
\end{multline}
This entails that
\begin{multline}
    P
    =
    - \sum\limits_{j_0 = 1}^{n^+}
    \sum\limits_{k_0 = 1}^{n^-}
    \partial_x \big[ A_{j_0 k_0}^- A_{k_0 j_0}^+ \big]
    -
    \sum\limits_{p \geq 0}
    \sum\limits_{j_{p + 1} = 1}^{n^+}
    \sum\limits_{j_0 = 1}^{n^+}
    \sum\limits_{k_0 = 1}^{n^-}
    \partial_x \big[ A_{j_{p + 1} k_0}^- A_{k_0 j_0}^+ \big]
    \big[ \rmA^{p + 1} \big]_{j_0 j_{p + 1}}
    \\
    = - \partial_x \tr\big[ \rmA \big]
    - \sum\limits_{p \geq 0}
    \frac{1}{p + 2} \partial_x \tr\big[ \rmA^{p + 2} \big]
    = \partial_x \tr \ln \big[ \mathrm{I}_{n^+} - \rmA \big]
    = \partial_x \ln \det \big[ \mathrm{I}_{n^+} - \rmA \big].
\end{multline}
Here in the first equation we used that
\begin{equation}
    A_{j_0 k_0}^-
    \frac{
        s_{j_0}^+ - s_{k_0}^- }{
        s_{j_{p + 1}}^+ - s_{k_0}^- }
    = A_{j_{p + 1} k_0}^-,
\end{equation}
due to the definition of $\rmA^-$,
see equation~\eqref{eq:matrices-A-plus-and-minus-app}.

To treat the general case,
observe that $\det \big[ \mathrm{I}_{n^+} - \gamma \rmA \big]$
has at most $n^+$ distinct roots.
Thus, there exists an analytic continuation of
$\partial_x \ln \det \big[ \mathrm{I}_{n^+} - \gamma \rmA \big]$
from a neighbourhood of $\gamma = 0$ to $\gamma = 1$.
Then it is enough to consider
the system with
$\rmA^\pm \to \gamma^{\frac12} \rmA^\pm$,
obtain the identity
\begin{equation}
    P
    = \partial_x \ln \det \big[ \mathrm{I}_{n^+} - \gamma \rmA \big],
\end{equation}
in the neighbourhood of $\gamma = 0$ as done above
and then continue it analytically up to $\gamma = 1$.

This completes the proof
of Proposition~\ref{prop:Slavnov-formula-for-general-momentum}.

\section{Construction of the parametrix}
\label{app:parametrix-construction}
In order to construct the parametrix,
i.e., the solution of the Riemann--Hilbert Problem~\ref{RHp:Pcal},
we first consider a model Riemann--Hilbert problem,
for which the jump matrix is piecewise constant.
Starting from this Riemann--Hilbert problem we obtain
a second order differential equation that determines
the  solution of the Riemann--Hilbert problem in terms
of parabolic cylinder functions.

\subsection{Second order differential equation}
We start with the following matrix Riemann--Hilbert problem on the matrix $D$:
\begin{RHp}
\label{RHp:D}
	Determine $D (\zeta) \in \mathbb{C}^{2 \times 2}$ such that
	\begin{enumerate}
		\item
		$D (\zeta)$ is analytic
		in $\mathbb{C} \backslash \gamma_D$
		and has continuous $\pm$ boundary values on
		$\gamma_D \backslash \{0\}$,
		see Figure~\ref{fig:contour-gamma-D-and-E}.
	    \item
	    On the contour $\gamma_D \backslash \{0\}$
	    the boundary values
	    $D_\pm (\zeta)$
	      are subject to the jump condition
	    $D_- (\zeta) = D_+ (\zeta) \rmG_D (\zeta)$
	    with jump matrix $\rmG_D (\zeta)$ given by
	    \begin{equation}
		\label{eq:jump-matrix-D}
			\rmG_D (\zeta)
			= \begin{cases}
				\mathrm{I}_2 + m
				\rme^{2 \pi \rmi \tau}
				\rme^{- \rmi \zeta^2}
				\zeta^{- 2 \tau}
				\sigma^+,
				\quad
				& \zeta \in \rme^{\flatfrac{3 \pi \rmi}{4}}\mathbb{R}_+,
				\\
				\mathrm{I}_2 + n
				\rme^{2 \pi \rmi \tau}
				\rme^{\rmi \zeta^2}
				\zeta^{2 \tau}
				\sigma^-
				\quad
				& \zeta \in \rme^{- \flatfrac{3 \pi \rmi}{4}}\mathbb{R}_+,
				\\
				\mathrm{I}_2 + n
				\rme^{\rmi \zeta^2}
				\zeta^{2 \tau}
				\sigma^-,
				\quad
				& \zeta \in \rme^{\flatfrac{\pi \rmi}{4}}\mathbb{R}_+,
				\\
				\mathrm{I}_2 + m
				\rme^{- \rmi \zeta^2}
				\zeta^{- 2 \tau}
				\sigma^+,
				\quad
				& \zeta \in \rme^{- \flatfrac{\pi \rmi}{4}}\mathbb{R}_+.
			\end{cases}
			\end{equation}
	    \item
	    $D (\zeta) = \mathrm{I}_2 + \zeta^{-1}
		\Ocal \big(
			\begin{smallmatrix}1 & 1 \\ 1 & 1\end{smallmatrix}
		\big)$
	    as $\zeta \to \infty$ up to tangential direction to $\gamma_D$.
	    \item
	    As $\zeta \to 0$
	    \begin{equation}
		\label{eq:matrix-D-zeta-to-0}
			D (\zeta)
			= \left[
				D_0
				+ \zeta
				\cdot
				\Ocal \big(
                    \begin{smallmatrix}
                        1 & 1
                        \\
                        1 & 1
                    \end{smallmatrix}
				\big)
			\right] \zeta^{\tau \sigma^z}.
		\end{equation}
	    for a piecewise constant matrix $D_0 \in \mathbb{C}^{2 \times 2}$.
	\end{enumerate}
\end{RHp}
\begin{figure}[ht]
	\centering
	\inputfig{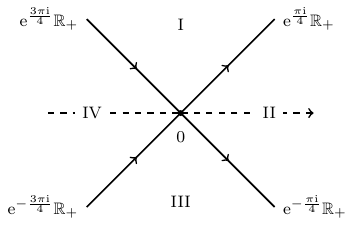}%
	\qquad
	\inputfig{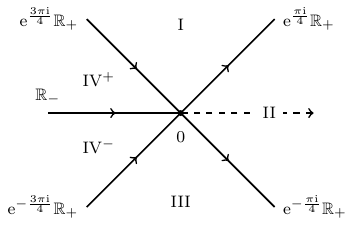}%
	\caption{
		The jump contours $\gamma_D$ on the left
		and $\gamma_E$ on the right.
		Here $\mathbb{R}_+ = (0, \infty)$
		and $\mathbb{R}_- = (-\infty, 0)$.}
	\label{fig:contour-gamma-D-and-E}
\end{figure}
This Riemann--Hilbert problem is a simpler version
of the Riemann--Hilbert Problem~\ref{RHp:Pcal},
where we `forget' about the dependence of the functions
$\tau$, $m$, $n$, and $\zeta$ on $\lambda$.
The expression for the jump matrix follows
from equations~\eqref{eq:jump-matrix-Upsilon}
and~\eqref{eq:matrices-M-local-variables}.
The behaviour at $\zeta = 0$
follows from equation~\eqref{eq:asymptotics-of-Upsilon-at-lambda-0}.

Now we introduce matrix
\begin{equation}
\label{eq:matrix-E}
	E (\zeta)
	= D (\zeta)
	\rme^{- \rmi \zeta^2 \sigma^z / 2} \zeta^{- \tau \sigma^z}
\end{equation}
whose jump matrix,
\begin{equation}
    \rmG_E (\zeta)
    = \zeta^{\tau \sigma^z}
    \rme^{\rmi \zeta^2 \sigma^z / 2}
    \rmG_D (\zeta)
    \rme^{- \rmi \zeta^2 \sigma^z / 2}
    \zeta^{- \tau \sigma^z},
    \qquad
    \zeta \in \gamma_D
\end{equation}
is now a piecewise constant matrix.
We note that $E (\zeta)$
now has an additional jump on $\mathbb{R}_- = (-\infty, 0)$,
see equation~\eqref{eq:matrix-E}.

Since the matrix $D$ has no jump for real negative $\zeta$,
it follows that for $\zeta < 0$
\begin{equation}
\label{eq:matrix-E-jump-condition-on-the-cut}
	D_+ (\zeta) = D_- (\zeta)
	\quad
	\Rightarrow
	\quad
	E_+ (\zeta)
	= E_- (\zeta)
	\frac{
		(\zeta - \rmi 0)^{\tau \sigma^z} }{
		(\zeta + \rmi 0)^{\tau \sigma^z} }
	= E_- (\zeta)
	\rme^{- 2 \pi \rmi \tau \sigma^z},
\end{equation}
i.e., the jump matrix $\rmG_E (\zeta)$
for $\zeta \in \mathbb{R}_-$ reads
\begin{equation}
	\rmG_E (\zeta) = \rme^{- 2 \pi \rmi \tau \sigma^z}.
\end{equation}
We denote the jump contour for the matrix $E$
as $\gamma_E = \gamma_D \cup \mathbb{R}_-$,
see Figure~\ref{fig:contour-gamma-D-and-E}.
Then the jump matrix $\rmG_E (\zeta)$ reads
\begin{equation}
\label{eq:jump-matrix-E}
	\rmG_E (\zeta)
	= \begin{cases}
		\mathrm{I}_2 + m
		\rme^{2 \pi \rmi \tau}
		\sigma^+,
		\quad
		& \zeta \in \rme^{\frac{3 \pi \rmi}{4}}\mathbb{R}_+,
		\\
		\mathrm{I}_2 + n
		\rme^{2 \pi \rmi \tau}
		\sigma^-
		\quad
		& \zeta \in \rme^{- \frac{3 \pi \rmi}{4}}\mathbb{R}_+,
		\\
		\mathrm{I}_2 + n
		\sigma^-,
		\quad
		& \zeta \in \rme^{\frac{\pi \rmi}{4}}\mathbb{R}_+,
		\\
		\mathrm{I}_2 + m
		\sigma^+,
		\quad
		& \zeta \in \rme^{- \frac{\pi \rmi}{4}}\mathbb{R}_+,
        \\
        \rme^{- 2 \pi \rmi \tau \sigma^z},
        \quad
        & \zeta \in \mathbb{R}_-.
	\end{cases}
\end{equation}

Therefore, the matrix $E$ is the unique solution
of the following Riemann--Hilbert problem:
\begin{RHp}
\label{RHp:E}
    Determine $E (\zeta) \in \mathbb{C}^{2 \times 2}$ such that
    \begin{enumerate}
        \item
        $E (\zeta)$ is analytic
        in $\mathbb{C} \backslash \gamma_E$
        and has continuous $\pm$ boundary values on
		$\gamma_E \backslash \{0\}$,
        see Figure~\ref{fig:contour-gamma-D-and-E}.

        \item
        On the contour $\gamma_E \backslash \{0\}$
        the boundary values
        $E_\pm (\zeta)$
        satisfy the jump condition
        $E_- (\zeta) = E_+ (\zeta) \rmG_E (\zeta)$
        with the jump matrix $\rmG_E (\zeta)$
        given by~\eqref{eq:jump-matrix-E}.

        \item
        As $\zeta \to \infty$ up to tangential direction to $\gamma_E$.
        \begin{equation}
            E (\zeta)
            = \big[
				\mathrm{I}_2
				+ \zeta^{-1}
				\Ocal \big(
                    \begin{smallmatrix}
                        1 & 1
                        \\
                        1 & 1
                    \end{smallmatrix}
				\big)
			\big]
            \rme^{- \rmi \zeta^2 \sigma^z / 2}
            \zeta^{\tau \sigma^z}.
        \end{equation}

        \item
        As $\zeta \to 0$
        \begin{equation}
        \label{eq:matrix-E-zeta-to-0}
            E (\zeta)
            = E_0 + \zeta
			\cdot
			\Ocal \big(
                \begin{smallmatrix}1 & 1 \\ 1 & 1\end{smallmatrix}
            \big).
        \end{equation}
        for a piecewise constant matrix $E_0 \in \mathbb{C}^{2 \times 2}$,
        i.e., $E (\zeta)$ is bounded at $\zeta = 0$.
    \end{enumerate}
\end{RHp}

The solution of the Riemann--Hilbert Problem~\ref{RHp:E}
with a piecewise constant jump matrix
can be related to a Fuchsian differential equation.
We introduce a matrix $\varphi (\zeta)$,
\begin{equation}
	\varphi (\zeta)
	= E' (\zeta) E^{-1} (\zeta).
\end{equation}
Then for $\zeta \in \gamma_E \backslash \{0\}$
we have
\begin{multline}
	\varphi_- (\zeta) = E_- ' (\zeta) E_-^{-1} (\zeta)
	= E_+' (\zeta) E_+^{-1} (\zeta)
	+ E_+ (\zeta) \rmG_E' (\zeta) \rmG_E^{-1} (\zeta) E_+^{-1} (\zeta)
	\\
	= E_+ ' (\zeta) E_+^{-1} (\zeta) = \varphi_+ (\zeta),
\end{multline}
where we used the fact that the jump matrix is a piecewise constant matrix,
i.e., $\rmG_E' (\zeta) = 0$.
Hence, $\varphi (\zeta)$ is holomorphic in $\mathbb{C} \backslash \{0\}$.
Moreover, since $\varphi (\zeta)$ is also bounded at $\zeta = 0$,
Riemann's theorem on removable singularities ensures
that $\varphi (\zeta)$ is holomorphic in the vicinity of $\zeta = 0$
and therefore in the whole complex plane.

As $\zeta \to \infty$
\begin{equation}
\label{eq:matrix-E-AE}
	E (\zeta) = 
	\Big(
		\mathrm{I}_2 + E_1 \zeta^{-1}
		+ \zeta^{-2}
		\Ocal \big(
			\begin{smallmatrix}1 & 1 \\ 1 & 1\end{smallmatrix}
		\big)
	\Big)
	\rme^{- \rmi \zeta^2 \sigma^z / 2} \zeta^{- \tau \sigma^z},
\end{equation}
which implies
\begin{equation}
	E^{-1} (\zeta) = 
	\rme^{\rmi \zeta^2 \sigma^z / 2} \zeta^{\tau \sigma^z}
	\Big(
		\mathrm{I}_2
		- E_1 \zeta^{-1}
		+ \zeta^{-2}
		\Ocal \big(
			\begin{smallmatrix}1 & 1 \\ 1 & 1\end{smallmatrix}
		\big)
	\Big)
\end{equation}
and
\begin{multline}
	E' (\zeta)
	= \Big(
		- E_1\zeta^{-2}
		+ \zeta^{-3}
		\Ocal \big(
			\begin{smallmatrix}1 & 1 \\ 1 & 1\end{smallmatrix}
		\big)
	\Big)
	\rme^{- \rmi \zeta^2 \sigma^z / 2} \zeta^{- \tau \sigma^z}
	\\
	+ \Big(
		\mathrm{I}_2 + E_1\zeta^{-1}
		+ \zeta^{-2}
		\Ocal \big(
			\begin{smallmatrix}1 & 1 \\ 1 & 1\end{smallmatrix}
		\big)
	\Big)
	\Big(
		- \rmi \zeta \sigma^z - \tau \zeta^{-1} \sigma^z
	\Big)
	\rme^{- \rmi \zeta^2 \sigma^z / 2} \zeta^{- \tau \sigma^z}
	\\
	= \Big(
		- \rmi \zeta - \rmi E_1
		+ \zeta^{-1}
		\Ocal \big(
			\begin{smallmatrix}1 & 1 \\ 1 & 1\end{smallmatrix}
		\big)
	\Big)
	\sigma^z
	\rme^{- \rmi \zeta^2 \sigma^z / 2} \zeta^{- \tau \sigma^z}.
\end{multline}
Therefore, as $\zeta \to \infty$
\begin{multline}
	\varphi (\zeta)
	= \Big(
		- \rmi \zeta \sigma^z - \rmi E_1 \sigma^z
		+ \zeta^{-1}
		\Ocal \big(
			\begin{smallmatrix}1 & 1 \\ 1 & 1\end{smallmatrix}
		\big)
	\Big)
	\Big(
		\mathrm{I}_2 - E_1 \zeta^{-1}
		+ \zeta^{-2}
		\Ocal \big(
			\begin{smallmatrix}1 & 1 \\ 1 & 1\end{smallmatrix}
		\big)
	\Big)
	\\
	= - \rmi \zeta \sigma^z - \rmi \comm{E_1}{\sigma^z}
	+ \zeta^{-1}
	\Ocal \big(
		\begin{smallmatrix}1 & 1 \\ 1 & 1\end{smallmatrix}
	\big).
\end{multline}
Then the matrix elements of the following expression
\begin{equation}
	\varphi (\zeta)
	+ \rmi \zeta \sigma^z
	+ \rmi \comm{E_1}{\sigma^z}
	= \zeta^{-1}
	\Ocal \big(
		\begin{smallmatrix}1 & 1 \\ 1 & 1\end{smallmatrix}
	\big)
\end{equation}
are entire functions bounded at infinity.
Due to Liouville's theorem, these functions must be constant,
and therefore they are zero.
Then
\begin{equation}
	\varphi (\zeta)
	= E' (\zeta) E^{-1} (\zeta)
	= - \rmi \zeta \sigma^z - \rmi \comm{E_1}{\sigma^z},
\end{equation}
and we end up with the following Fuchsian differential equation
\begin{equation}
\label{eq:matrix-E-DE}
	E' (\zeta)
	= \Big(
		- \rmi \zeta \sigma^z
		- \rmi \comm{E_1}{\sigma^z}
	\Big) E (\zeta).
\end{equation}
Denoting the matrix elements of $E_1$ as follows,
\begin{equation}
\label{eq:matrix-E-coefficient-1}
	E_1 = \begin{pmatrix}
		a & b\\
		c & d
	\end{pmatrix},
\end{equation}
we get
\begin{equation}
	\comm{E_1}{\sigma^z}
	= \begin{pmatrix}
		0 & - 2 b \\
		2 c & 0
	\end{pmatrix}.
\end{equation}
Hence, the differential equation~\eqref{eq:matrix-E-DE}
is equivalent to the following system of differential equations
for the matrix elements of $E (\zeta)$:
\begin{equation}
\left\{
\begin{aligned}
	E_{1 j}' (\zeta)
	&= - \rmi \zeta E_{1 j} (\zeta) + 2 \rmi b E_{2 j} (\zeta),
	\\
	E_{2 j}' (\zeta)
	&= \rmi \zeta E_{2 j} (\zeta) - 2 \rmi c E_{1 j} (\zeta)
\end{aligned}
\right.
\end{equation}
for $j = 1, 2$.

Differentiating both equations with respect to $\zeta$
and substituting the first derivatives from the system,
we obtain the second order differential equations
for the matrix elements of $E (\zeta)$:
\begin{align}
	E_{1 j}'' (\zeta)
	& = - \zeta^2 E_{1 j} (\zeta) + (4 b c - \rmi) E_{1 j} (\zeta),
	\\
	E_{2 j}'' (\zeta)
	& = - \zeta^2 E_{2 j} (\zeta) + (4 b c + \rmi) E_{2 j} (\zeta).
\end{align}

The last step is to write these second order ordinary differential equations
in the canonical form.
We introduce new variable $\xi_1$ and $\xi_2$ for $E_{1 j}$ and $E_{2 j}$,
respectively:
\begin{align}
	\xi_1 = \sqrt 2 \rme^{\frac{ \pi \rmi }{ 4 }} \zeta,
	\quad
	& \Rightarrow
	\quad
	\dv{\zeta} = \sqrt{2} \rme^{\frac{ \pi \rmi }{ 4 }} \dv{\xi_1},
	\\
	\xi_2 = \sqrt 2 \rme^{- \frac{ \pi \rmi }{ 4 }} \zeta,
	\quad
	& \Rightarrow
	\quad
	\dv{\zeta} = \sqrt{2} \rme^{- \frac{ \pi \rmi }{ 4 }} \dv{\xi_2},
\end{align}
and denote $u_{1 j} (\xi_1) := E_{1 j} (\zeta)$
and $u_{2 j} (\xi_2) := E_{2 j} (\zeta)$ for $j = 1, 2$.
Then we obtain the second order differential equations in the canonical form:
\begin{subequations}
\label{eq:diff-equations-for-u}
\begin{align}
	u_{1 j}'' (\xi)
	+ \left(
		2 \rmi b c + \frac12 - \frac{\xi^2}{4}
	\right)
	u_{1 j} (\xi)
	= 0,
	\\
	u_{2 j}'' (\xi)
	+ \Big(
		- 2 \rmi b c + \frac12 - \frac{\xi^2}{4}
	\Big)
	u_{2 j} (\xi)
	= 0
\end{align}
\end{subequations}
for $j = 1, 2$,
which is equivalent to the differential equation~\eqref{eq:diff-equation}.
Hence, the solution can be expressed in terms
of the parabolic cylinder functions
$D_{\rho} (\xi)$, $D_\rho (- \xi)$, $D_{- \rho - 1} ( \rmi \xi)$
and $D_{- \rho - 1} ( - \rmi \xi)$ with $\rho = \pm 2 \rmi b c$.

\subsection{Solution of the Riemann--Hilbert Problem}
One can construct the solution of the Riemann--Hilbert Problem~\ref{RHp:E},
using the following ansatz in terms of the parabolic cylinder functions:
\begin{equation}
	E (\zeta)
	= E_0 (\zeta)
	\cdot \rmL_v.
\end{equation}
Here the matrix $E_0 (\zeta)$ is given by
\begin{equation}
\label{eq:matrix-E-0}
	E_0 (\zeta) =
	\begin{pmatrix}
		D_\rho \big( \sqrt{2} \rme^{\frac{ \pi \rmi }{ 4 }} \zeta \big)
		&
		c_{12}
		\ D_{- \rho - 1} \big(
			\sqrt{2} \rme^{- \frac{ \pi \rmi }{ 4 }} \zeta
		\big)
		\\
		c_{21}
		\ D_{\rho - 1} \big( \sqrt{2} \rme^{\frac{ \pi \rmi }{ 4 }} \zeta \big)
		&
		D_{- \rho} \big( \sqrt{2} \rme^{- \frac{ \pi \rmi }{ 4 }} \zeta \big)
	\end{pmatrix},
\end{equation}
and the matrix $\rmL_v$ is a piecewise constant matrix in regions
$v = \text{I}, \text{II}, \text{III}, \text{IV}^-, \text{IV}^+$,
see Figure~\ref{fig:contour-gamma-D-and-E}.

Starting from the region $\text{II}$,
where the asymptotic expansion~\eqref{eq:AE-parabolic-cylinder-function}
is applicable,
one can satisfy the asymptotic condition for $E (\zeta)$,
see equations~\eqref{eq:matrix-E-AE} and~\eqref{eq:matrix-E-coefficient-1}.
Then one derives that $\rho = 2 \rmi b c = - \tau$,
\begin{equation}
\label{eq:matrix-L-II}
	\rmL_\text{II}
	= \rme^{ \flatfrac{ \pi \rmi \tau }{ 4 }}
	2 ^{\flatfrac{ \tau \sigma^z }{ 2 }},
\end{equation}
and
\begin{equation}
	c_{12} = \sqrt{2} \rme^{- \flatfrac{ \pi \rmi }{4} } b
	\qquad
	c_{21} = \sqrt{2} \rme^{\flatfrac{ \pi \rmi }{4} } c.
\end{equation}
In particular, it follows that
\begin{equation}
	c_{12} c_{21} = \rmi \tau.
\end{equation}

Then from the jump condition on the contour $\rme^{\pi \rmi / 4} \mathbb{R}_+$
it follows that
\begin{equation}
\label{eq:matrix-L-I}
	\rmL_\text{I}
	=
	\begin{pmatrix}
		1 & 0 \\ - n 2^{-\tau} & 1
	\end{pmatrix}
	\cdot
	\rmL_\text{II}.
\end{equation}
Next one can use the relation~\eqref{eq:parabolic-cylinder-function-relation}
to make the asymptotic expansion~\eqref{eq:AE-parabolic-cylinder-function}
applicable in the region $\text{I}$, i.e.,
for $ \flatfrac{\pi}{4} < \arg \zeta < \flatfrac{3 \pi}{4}$
and check that the asymptotic condition for $E (\zeta)$,
see equation~\eqref{eq:matrix-E-AE},
is satisfied in the region $\text{I}$
if one sets
\begin{equation}
	c_{12}
	= \frac{
		\rmi \sqrt{2 \pi}
		2^{\tau} }{
		\rme^{ \flatfrac{ \pi \rmi \tau }{2} } n \Gamma (\tau) },
	\qquad
	c_{21}
	=  \frac{
		\rme^{ \flatfrac{ \pi \rmi \tau }{2} }
		n \Gamma (\tau + 1) }{
		\sqrt{2 \pi} 2^{\tau} }.
\end{equation}

Finally, using the jump condition on the remaining parts of the jump contour,
one can derive that
\begin{equation}
\label{eq:matrix-L-III}
	\rmL_\text{III}
	=
	\rmL_\text{II}
	\cdot
	\begin{pmatrix}
		1 & m \\ 0 & 1
	\end{pmatrix},
\end{equation}
and
\begin{equation}
\label{eq:matrix-L-IV-minus-and-plus}
	\rmL_\text{IV}^-
	= \rmL_\text{III}
	\cdot
	\begin{pmatrix}
		1 & 0 \\ - n \rme^{2 \pi \rmi \tau} & 1
	\end{pmatrix},
	\qquad
	\rmL_\text{IV}^+
	= \rmL_\text{I}
	\cdot
	\begin{pmatrix}
		1 & m \rme^{2 \pi \rmi \tau}
		\\ 0 & 1
	\end{pmatrix}.
\end{equation}
Then combining again relation~\eqref{eq:parabolic-cylinder-function-relation}
and the asymptotic expansion~\eqref{eq:AE-parabolic-cylinder-function},
one can check that the asymptotic condition for $E (\zeta)$,
see equation~\eqref{eq:matrix-E-AE},
is satisfied in the rest of the regions,
i.e., in the regions $\text{III}, \text{IV}^- \text{IV}^+$.

Returning back to the solution
of the matrix Riemann--Hilbert Problem~\ref{RHp:D},
we obtain that
\begin{equation}
	D (\zeta)
	= E_0 (\zeta)
	\rmL_v
	\rme^{\rmi \zeta^2 \sigma^z / 2} \zeta^{\tau \sigma^z},
\end{equation}
where the matrix $E_0 (\zeta)$ is given by
\begin{equation}
	E_0 (\zeta) =
	\begin{pmatrix}
		D_{-\tau} \big( \sqrt{2} \rme^{\frac{ \pi \rmi }{ 4 }} \zeta \big)
		&
		\sqrt{2} \rme^{- \flatfrac{ \pi \rmi }{4} } b
		\ D_{\tau - 1} \big(
			\sqrt{2} \rme^{- \frac{ \pi \rmi }{ 4 }} \zeta
		\big)
		\\
		\sqrt{2} \rme^{\flatfrac{ \pi \rmi }{4} } c
		\ D_{- \tau + 1} \big(
			\sqrt{2} \rme^{\frac{ \pi \rmi }{ 4 }} \zeta
		\big)
		&
		D_{\tau} \big( \sqrt{2} \rme^{- \frac{ \pi \rmi }{ 4 }} \zeta \big)
	\end{pmatrix}
\end{equation}
with $b$ and $c$ given by
\begin{equation}
	b
	= \frac{
		\rmi \sqrt{\pi}
		\rme^{\pi \rmi / 4}
		2^{\tau} }{
		\rme^{ \flatfrac{ \pi \rmi \tau }{2} } n \Gamma (\tau) },
	\qquad
	c
	=  \frac{
		\rme^{ \flatfrac{ \pi \rmi \tau }{2} }
		n \Gamma (\tau + 1) }{
		\rme^{\pi \rmi / 4}
		\sqrt{\pi} 2^{\tau + 1} },
\end{equation}
which we denote $b_{12}$ and $b_{21}$, respectively,
see equation~\eqref{eq:b12-and-b21} in the main text.
The matrix $\rmL_v$
in regions $v = \text{I}, \text{II}, \text{III}, \text{IV}^- \text{IV}^+$
is given by expressions~\eqref{eq:matrix-L-I}, \eqref{eq:matrix-L-II},
\eqref{eq:matrix-L-III}, \eqref{eq:matrix-L-IV-minus-and-plus}, respectively.
In the main text the matrix $\rmL_v$
is given by equation~\eqref{eq:piece-wise-matrix-L}.

We remark
that there is no coefficient $m$ in the expression for $D (\zeta)$,
due to relation \eqref{eq:rel_mn_tau}. Taking this into account
we write $D(\zeta|n, \tau)$ to indicate the dependence of $D$ on
its parameters $n$ and $\tau$. Then we can define
\begin{equation}
    \Pcal (\lambda) = D(\zeta(\lambda)|n(\lambda),\tau(\lambda))
\end{equation}
and straightforwardly verify that $\Pcal$ is the solution of
the Riemann--Hilbert Problem~\ref{RHp:Pcal}. First of all, setting
$\Gamma_\Pcal = \omega^{-1} (D_{0, \epsilon} \cap \gamma_D)$, we
see that $\Gamma_\Pcal$ is the jump contour of the Riemann--Hilbert
Problem~\ref{RHp:Pcal}. Then, since the parabolic cylinder functions
are analytic in their index, $\Pcal$ is analytic in 
$\Ucal_{\lambda_0} \setminus \Gamma_\Pcal$, and, by construction,
satisfies the correct jump condition on $\Gamma_\Pcal$. Likewise
it obviously has the required behaviour close to $\lambda_0$.
For the behaviour on $\partial \Ucal_{\lambda_0}$ (cf.\ condition
(iii) of the Riemann--Hilbert Problem~\ref{RHp:Pcal}) we note
that $\zeta(\partial \Ucal_{\lambda_0}) = \epsilon \sqrt{x}$
which is large for fixed $\epsilon > 0$ and large $x$. Hence, we
can use the asymptotic series \eqref{eq:AE-parabolic-cylinder-function}
for the behaviour of the parabolic cylinder functions for large
modulus of their arguments in order to verify that $\Pcal$ satisfies
condition (iii) of the Riemann--Hilbert Problem~\ref{RHp:Pcal}
(compare Section~\ref{sec:AE_parametrix}).

\bibliographystyle{acm}
\bibliography{bibliography.bib}

\end{document}